%% file: letter.tex
\documentclass[aps,prl,twocolumn,superscriptaddress,reprint,nofootinbib]{revtex4-2}
\usepackage{amsmath,amssymb,amsthm,bm}
\usepackage{graphicx,booktabs}
\usepackage[hidelinks]{hyperref}
\newcommand{\bp}{\mathbf{p}}
\newcommand{\dd}{\mathrm{d}}
\newcommand{\Zg}{Z}
\newtheorem*{prop}{Proposition}
\begin{document}
\title{Time-Reversal Selection Rule for Twist Disorder}
\newcommand{\affMSE}{School of Materials Science and Engineering, Beihang University, Beijing 100191, China}
\newcommand{\affAI}{National Key Laboratory of Artificial Intelligence for Material Science, Beihang University, Beijing 100191, China}
\newcommand{\affMcGill}{Center for the Physics of Materials and Department of Physics, McGill University, Montreal, Quebec H3A 2T8, Canada}
\newcommand{\affNano}{Nanoacademic Technologies Inc., Suite 802, 666 Sherbrooke West, Montreal, Quebec H3A 1E7, Canada}
\author{Peng~Kang}\email[Contact author: ]{pengkang@buaa.edu.cn}\affiliation{\affMSE}\affiliation{\affAI}\affiliation{\affMcGill}\affiliation{\affNano}
\author{Da~Wan}\affiliation{\affMSE}\affiliation{\affAI}
\author{Shulin~Bai}\affiliation{\affMSE}\affiliation{\affAI}
\author{Pengfei~Zhang}\affiliation{\affMSE}\affiliation{\affAI}
\author{Vincent~Michaud-Rioux}\affiliation{\affNano}
\author{Zhen~Li}\affiliation{\affMSE}\affiliation{\affAI}
\author{Yu~Liu}\affiliation{\affMSE}\affiliation{\affAI}
\author{Lei~Zheng}\email[Contact author: ]{zhenglei@buaa.edu.cn}\affiliation{\affMSE}\affiliation{\affAI}
\author{Li-Dong~Zhao}\affiliation{\affMSE}\affiliation{\affAI}
\author{Huibin~Xu}\affiliation{\affMSE}\affiliation{\affAI}
\begin{abstract}
Random layer rotations are ubiquitous in layered matter, from turbostratic films to rotationally disordered crystals, and strongly suppress transport across the layers. Generated by a symmetry operation, this disorder is absent on the rotation axis, so the thickness laws of conduction are set by how fast backscattering vanishes there. We show that a time-reversal selection rule decides this: each twist harmonic of the interlayer bond is either a frame rotation, whose backscattering cancels, or a coupling change, which scatters. A zero-set theorem turns the rule into universal thickness laws for electrons and phonons, fixed by band-edge symmetry; random twist increments rescale the backscattering but keep its order. Where the axis is unprotected, band nodes provide a second route: a transparent energy with heavy-tailed disorder and an $N^{-3}$ law. A first-principles forward-channel model of black phosphorus shows both routes; its random-twist films act as a symmetry filter that, relative to aligned stacks, retains a hundred times more of the electron than of the hole conductance at 64 layers. The symmetry of band edges and phonon branches thus decides what crosses a disordered stack, making stacking disorder a design element.
\end{abstract}
\maketitle

\emph{Introduction}---Random stacking angles are a generic disorder of layered matter, from turbostratic films to rotationally disordered crystals. Films of randomly twisted layers conduct heat across the layers almost three orders of magnitude worse than along them~\cite{Kim2021}; stacking and interface disorder suppress phonon transport~\cite{Chiritescu2007,Eriksson2023,Juntunen2019,Luckyanova2018,Wang2025}, as do dislocation-driven stacking changes~\cite{Liu2025}, and twist controls vertical tunnelling~\cite{Srivastava2021}, carrier mobility~\cite{Kang2017,Kang2020} and phonon velocities~\cite{KangSnSe2026}. How, then, does transport across $N$ randomly rotated layers scale with $N$? One-dimensional disorder localizes every state~\cite{Anderson1958}, but a conductance averages over channels and energies, and it can decay algebraically if backscattering vanishes, exactly or to leading order, on some set of them: at low phonon frequencies~\cite{Dhar2001}, at special angles~\cite{Sipe1988,Joy2020}, at resonances and in spectral windows of correlated disorder~\cite{Dunlap1990,Flores1989,IzrailevKrokhin1999,IzrailevRev2012,Vanoni2026}, or in symmetry-protected channels~\cite{EggarterRiedinger1978,Brouwer1998,Wakabayashi2007}.

Rotational disorder is special [Fig.~\ref{fig:matrix}(a,b)]. Interlayer coupling that conserves the in-plane momentum $\bp$ turns each channel into a chain~\cite{SiriviboonVolkov2026}, and a rotation enters only through $R_\theta^{-1}\bp$: on the rotation axis, $\bp=0$, it is no disorder at all. Every channel with $p=|\bp|>0$ still localizes, but its localization length can diverge as $p\to0$, so the thickness law is set by how fast rotational backscattering vanishes on the axis. Which twists scatter, and at what order? The twist dependence of a single interface cannot tell, because the chain is a matrix chain and part of that dependence merely rotates the band-edge frame.

Here we show that time reversal sorts the twist harmonics of the interlayer coupling into frame rotations, whose backscattering cancels, and coupling changes, which scatter [Fig.~\ref{fig:matrix}(c)]. This parity rule, a selection rule for twist scattering, fixes the scattering order $\kappa$ of a band edge from the point group of the layer and its monolayer wavefunctions, and a zero-set theorem turns $\kappa$ into a thickness law. Where the axis is not protected, band nodes open a second route: given a factorized coupling and an open band, rotational disorder cancels at a transparent energy, the nodes make the residual disorder heavy tailed, and the thermally averaged conductance falls as $N^{-3}$. Symmetry fixes the exponents; the material decides only which band edges it has and how strongly its layers couple. The rule holds for electrons and phonons in C$_n$ layers ($n\ge2$) and survives random twist increments, whereas a strictly regular twist makes the chain periodic or quasiperiodic~\cite{SiriviboonVolkov2026}. A first-principles forward-channel model of black phosphorus realizes both routes; its random-twist films act as a symmetry filter, with a hundredfold electron--hole contrast in retained conductance at 64 layers.

\begin{figure*}[t]\centering
\includegraphics[width=\textwidth]{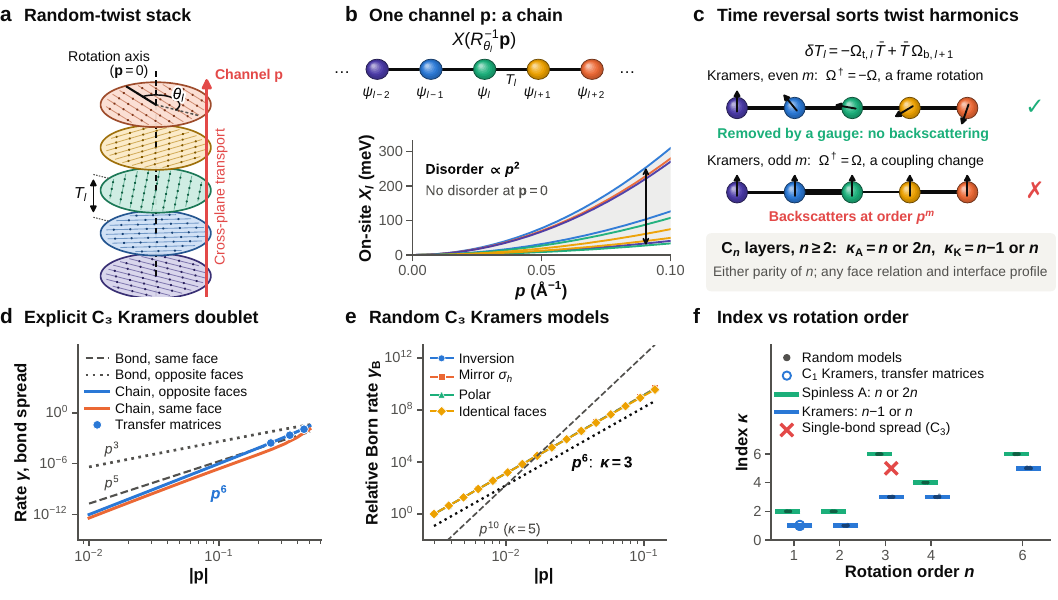}
\caption{Time reversal decides which twists scatter. (a)~At fixed in-plane momentum $\mathbf p$, cross-plane transport through a random-twist stack is a chain with bonds $T_l$. (b)~Its on-site energies $X(R_{\theta_l}^{-1}\mathbf p)$ for the PBE phosphorene conduction band: rotation is disorder that vanishes at $\mathbf p=0$. (c)~Time reversal sorts the twist harmonics of the bond into frame rotations (anti-Hermitian $\Omega$, a gauge) and coupling changes (Hermitian $\Omega$), which backscatter. (d)~C$_3$ Kramers doublet: the single-bond spread grows as $p^5$ or $p^3$, the Born rate of the chain as $p^6$. (e)~Born rate of random C$_3$ Kramers models for four face relations (coincident). (f)~Index against rotation order; bars: the Proposition.}\label{fig:matrix}
\end{figure*}

\emph{Channel chains}---For a band-edge multiplet with on-site block $X(\mathbf q)$ in the layer frame, channel $\bp$ obeys
$E\psi_l=X^{\rm lab}_l\psi_l+T_l\psi_{l+1}+T_{l-1}^\dagger\psi_{l-1}$,
with $X^{\rm lab}_l=D(\theta_l)X(R_{\theta_l}^{-1}\bp)D(\theta_l)^\dagger$ and $D(\theta)$ the rotation of the multiplet. Between states at the same $\bp$ an incommensurate interface couples only their planar Fourier components at $\bp$~\cite{Bistritzer2010,Koshino2015}. Umklapp components couple distinct channels and, when their partners are off shell, correct this forward coupling only at second order~\cite{SM}. The forward bond is bilinear in the face components,
\begin{equation}
T_l=G^{\rm t}_l(\bp)^\dagger\,\tau\,G^{\rm b}_{l+1}(\bp),\qquad G_l(\bp)=U_s(\theta_l)G(R_{\theta_l}^{-1}\bp),\label{eq:T}
\end{equation}
where $G^{\rm t,b}$ are the components near the top and bottom faces (spin $\times$ multiplet index, on one or several interface profiles) and $\tau$ is a real lab-frame kernel. From plane-wave wavefunctions, $G$ and its slope are read in the van der Waals gap and $\tau$ is Bardeen's kernel, so $T$ is in eV~\cite{Bardeen1961,SM}. At $\bp=0$ only components with $J_z=s_z$ couple: the averaged coupling $|\bar T|\propto p^r$ has $r=0$ if the C$_n$ class of the multiplet contains $J_z\equiv s_z$, and $r=2\mu_0$ otherwise, with $\mu_0$ the smallest $|J_z-s_z|$ in the class. We define the index $\kappa$ by the Born backscattering rate of the full matrix chain between its open channels, $\gamma_B\propto p^{2(\kappa-r)}$.

\emph{Thickness law}---A typical transmission decays as $e^{-2N\gamma}$, with $\gamma$ the Lyapunov exponent of the effective chain, but the normalized conductance $g(N)$, an average over the spectral measure, is carried by the channels where $\gamma$ is small. We prove~\cite{SM} that if $\gamma$ vanishes on a set $\Zg$ of codimension $c$ as $\rho^{2m}$ in the distance $\rho$ from $\Zg$, and the measure grows as $\rho^{s+c-1}$, then $N^\alpha g(N)$ tends to an explicit amplitude with $\alpha=(c+s)/2m$, provided that near $\Zg$ the mean transmission is a controlled function of $N\gamma$, a kernel set by the class of the disorder (below). On the axis $c=2$ and $2m=2(\kappa-r)$~\cite{DerridaGardner1984}:
\begin{equation}
g(N)\simeq A\,N^{-(2+s)/2(\kappa-r)} .\label{eq:law}
\end{equation}
The theorem concerns the effective disorder: an orbital side-coupled to a level of random strength has disorder of order one, yet its eliminated on-site term vanishes at one energy ($c=1$), where the theorem gives $g\simeq7.474\,N^{-1/2}$ and exact transport $g(8192)=0.0825$ against 0.0826 [Fig.~\ref{fig:laws}(a)].

\emph{Parity rule}---In the lab gauge every entry of $G$ carries twist harmonics $e^{i(J_z-s_z-m)\theta}$ from its in-plane harmonics $m\equiv J_z-s_z\pmod n$, of order $p^{|m|}$. Time reversal fixes the character of each harmonic in the bond:

\begin{prop}
Let the faces couple through any number of real interface profiles with any real kernel. For a $J_z=\pm\frac12$ Kramers pair of a C$_n$ layer ($n\ge2$) the leading twist dependence of the bond is its spin-flip harmonic, of in-plane order $m=n-1$, with the form $\delta T_l=-\Omega_{{\rm t},l}\bar T+\bar T\,\Omega_{{\rm b},l+1}$, where $\Omega_{{\rm t},l}$ is set by the twist of layer $l$ and $\Omega_{{\rm b},l+1}$ by that of layer $l+1$. Time reversal makes $\Omega$ anti-Hermitian for even $m$, a rotation of the multiplet frame whose Born backscattering amplitude vanishes identically, and Hermitian for odd $m$, a change of coupling strengths, which scatters. For spinless states even harmonics change the modulus of the bond and odd ones its phase. Hence, for $p_z$/$s$-type faces, any relation between the two faces and generic coefficients, spinless A-type states and Kramers pairs have the Born indices
\begin{equation}
\kappa_{\rm A}=\begin{cases}n & n\ \text{even}\\2n & n\ \text{odd}\end{cases},\qquad
\kappa_{\rm K}=\begin{cases}n-1 & n\ \text{even}\\ n & n\ \text{odd}\end{cases}.\label{eq:kappa}
\end{equation}
\end{prop}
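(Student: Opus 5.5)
I would start from the bilinear form of the forward bond, Eq.~(\ref{eq:T}), and expand each face component in twist harmonics. For a C$_n$ multiplet the in-plane expansion of $G(\mathbf q)$ contains only angular harmonics $m\equiv J_z-s_z\pmod n$, with amplitude of order $q^{|m|}$. Substituting $\mathbf q=R_\theta^{-1}\bp$ and applying $U_s(\theta)$ gives the lab-frame harmonics $e^{i(J_z-s_z-m)\theta}$.

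For a $J_z=\pm\frac12$ Kramers pair the spin-conserving component ($m=0$) is the untwisted face vector $\bar G$, which fixes $\bar T$. The first $\theta$-dependent entries then come from the smallest nonzero $|m|$ compatible with the class. That is $m=\pm(n-1)$, which connects $J_z=\frac12$ to $s_z=-\frac12$ and flips the spin. To leading order I would write $G_l=(1+\Omega_l)\bar G$ on each face. Here $\Omega_l$ is the $2\times2$ matrix carrying the $p^{n-1}e^{\pm i(\cdots)\theta_l}$ harmonic, and it depends only on $\theta_l$. Inserting this into $T_l=G^{\rm t\dagger}_l\tau G^{\rm b}_{l+1}$ and keeping first order gives $\delta T_l=-\Omega_{{\rm t},l}\bar T+\bar T\Omega_{{\rm b},l+1}$, with $-\Omega_{\rm t}$ standing for $\Omega_{\rm t}^\dagger$ up to its Hermitian part. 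The point is that every profile and kernel enters only through $\bar T$ and the face matrices, because $\tau$ is real and sits between them.

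\emph{Time-reversal step.} I would impose $\Theta=i\sigma_y K$ on $G$. Because the interface profiles are real, time reversal acts on $G(\mathbf q)$ as $G(-\mathbf q)$ conjugated by $i\sigma_y$. A harmonic of order $m$ picks up $(-1)^m$ under $\mathbf q\to-\mathbf q$. Combining this with $i\sigma_y(\cdot)^*(-i\sigma_y)$ on the spin-flip matrix gives $\Omega^\dagger=-(-1)^m\Omega$. So $\Omega$ is anti-Hermitian for even $m$ and Hermitian for odd $m$.

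In the anti-Hermitian case, $1+\Omega\approx e^{\Omega}$ is a unitary frame change on layer $l$. It can be absorbed into a site gauge $\psi_l\to e^{\Omega_l}\psi_l$. The top and bottom faces of the same layer must carry the same $\Omega_l$ for the bond shifts to telescope. I expect this to follow because both faces see the same $D(\theta_l)$ and C$_n$ harmonic, but it must be checked for arbitrary face relations. The $\Omega_l$ are then pure gauge, and the Born backscattering amplitude $\propto\sum_l e^{2ikl}(\ldots)$ cancels identically.

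In the Hermitian case the shift changes $|T|$ and scatters at amplitude $p^{m}$, so the rate is $p^{2m}$ relative to $\bar T$.

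\emph{Counting indices.} For Kramers pairs, $m=n-1$. For odd $n$, $m$ is even, so this harmonic is gauge. The next nonvanishing scatterer is then either the second-order term of the $(n-1)$ harmonic, at $p^{2(n-1)}$ in amplitude, or the Hermitian harmonic at $m=n$ or $n+1$. Here I would need to argue that the effective order is $\kappa-r=n$, using $r$ from the $\bar T$ scaling. For even $n$, $m=n-1$ is odd and scatters directly, so $\kappa=n-1$.

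For spinless A states, $s_z=0$ and $J_z=0$, so the harmonics are $m=kn$. The lowest is $m=n$. For even $n$ it changes the modulus, which is scattering, so $\kappa=n$. For odd $n$ it is only a phase, which is gauge on a chain. The first modulus change then arises at $m=2n$ (or from second-order terms), giving $\kappa=2n$.

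\emph{Main obstacle.} The hardest step is the odd-$n$ Kramers and A cases. There I must show that the second-order products of gauge harmonics do not produce a lower-order scatterer, and that the generic-coefficient Hermitian harmonic actually dominates. I would first try an exact gauge transformation to second order, so that the residual becomes a commutator term whose order I can count. I would then verify the counting on the random C$_3$ models of Fig.~\ref{fig:matrix}(e).
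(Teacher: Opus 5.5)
Your harmonic bookkeeping and the parity $\Omega^\dagger=-(-1)^m\Omega$ agree with the paper. The step that disposes of the anti-Hermitian harmonic fails, however.

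You absorb it into a site gauge $\psi_l\to e^{\Omega_l}\psi_l$, which needs $\Omega_{{\rm t},l}=\Omega_{{\rm b},l}$. That equality is false in general. The two generators come from different contractions: $A_{\rm t}=\sum_{\alpha\beta}a_{{\rm t}\alpha}\tau_{\alpha\beta}c_{{\rm b}\beta}$ divided by $t_0^*$, and $A_{\rm b}=\sum_{\alpha\beta}c_{{\rm t}\alpha}\tau_{\alpha\beta}a_{{\rm b}\beta}$ divided by $t_0$. They therefore differ for polar faces, for a non-symmetric kernel such as Bardeen's, and for complex $t_0$. Even with a symmetric kernel, a horizontal mirror (D$_{3h}$ layers, $n=3$) flips the sign of the spin-flip face amplitudes, $a_{\rm b}=-a_{\rm t}$, so $\Omega_{{\rm b},l}=-\Omega_{{\rm t},l}$. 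When the faces rotate differently, the removing gauge accumulates along the chain, and the paper deliberately makes no claim that rests on it.

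The missing idea is that no gauge is needed. Suppose the averaged chain is a scalar times a unitary, $\bar T=cU$ and $\bar X=\varepsilon\mathbb 1$; this holds at $p=0$ by Kramers degeneracy, and your gauge also needs it, though you never state it. Then a forward channel $(k_a,\phi_a)$ and a backward channel $(k_b,\phi_b)$ at the same energy, with $\phi$ eigenphases of $U$, obey $k_b+\phi_b=-(k_a+\phi_a)$. Insert $\delta T_l=-\Omega_{{\rm t},l}\bar T+\bar T\Omega_{{\rm b},l+1}$ into $u_b^\dagger[e^{ik_a}\delta T_l+e^{-ik_b}\delta T_l^\dagger]u_a$. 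You get two terms, each proportional to $e^{-i(k_b+\phi_b)}-e^{i(k_a+\phi_a)}$. They vanish bond by bond for \emph{any} two anti-Hermitian generators, equal or not. For Hermitian generators the same algebra leaves $-2c\,e^{i(k_a+\phi_a)}[(\Omega_{\rm t})_{ba}-e^{i(\phi_b-\phi_a)}(\Omega_{\rm b})_{ba}]$, which is generically nonzero.

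The odd-$n$ counting, which you leave open, needs three further facts.

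(a) The leading twist-dependent harmonic is not the smallest nonzero $|m|$. The spin-flip entries also carry the lab-constant harmonic $m=J_z-s_z=\pm1$, so $\bar T$ is non-scalar at $O(p)$. The $O(p^{n-1})$ rotation acting on this part backscatters at $O(p^n)$. So do the spin-conserving harmonics $|m|=n$: time reversal relates the two diagonal entries to each other rather than each to itself, so these harmonics keep a Hermitian part. Together they give a generic $O(p^n)$ amplitude, hence $\kappa_{\rm K}=n$. Products of fluctuations are $O(p^{2n-2})$ and never lower the order, so the obstacle you single out is not the real one.

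(b) You never treat $\delta X_l$, which enters the Born amplitude on the same footing as the bond. Time reversal and Hermiticity make the on-site spin-flip entry odd in $\mathbf q$. Its twist-dependent harmonic of order $n-1$ therefore exists only for even $n$, where it adds to the Hermitian scattering; for odd $n$, on-site disorder starts at $p^n$. For spinless A states the real on-site block is even in $\mathbf q$, so for odd $n$ its harmonics lie only in $2n\mathbb Z$. Without these facts, an on-site term could undercut both $\kappa_{\rm K}=n$ and $\kappa_{\rm A}=2n$ for odd $n$.

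(c) The ansatz $G_l=(1+\Omega_l)\bar G$ does not exist in general with several profiles. The part of $\delta G$ outside the range of the stacked $\bar G$ still couples through a general kernel. Work on the contraction $\delta G_{\rm t}^\dagger\tau\bar G_{\rm b}$ directly. The reality of $\tau$ carries $G_{+-}(\mathbf q)=-G_{-+}(-\mathbf q)^*$ through the profile sum, and you then define $\Omega$ by dividing by the invertible $\bar T(0)={\rm diag}(t_0,t_0^*)$.
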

\noindent Appendix~A gives the proof and its conditions; exact Lyapunov exponents support the Proposition beyond Born order~\cite{SM}.

The rule does not need independent twists. For random, independent increments $\theta_{l+1}-\theta_l$ the frame-rotation cancellation holds bond by bond, for every sequence, and each remaining twist harmonic adds a finite, non-negative spectral weight of the sequence to the Born rate: correlations rescale the rate but generically keep $\kappa$. In a spinless C$_2$ and a Kramers C$_3$ model, helices with fluctuating pitch lower the rate up to 250-fold with the index unchanged within 0.1, the C$_3$ frame-rotation cancellation included~\cite{SM}.

Three natural proxies for twist scattering fail, and the rule shows why. First, the twist dependence of one bond does not measure it: the chain joins the top face of one layer to the bottom face of the next and removes frame rotations (Appendix~B), so a C$_3$ Kramers doublet whose single-bond spread grows as $p^5$ or $p^3$ backscatters as $p^6$ [Fig.~\ref{fig:matrix}(d)]. Second, parity, not the size of an admixture, decides whether a harmonic scatters. Third, a projected spin splitting~\cite{ZhangZunger2014,LiAppelbaum2018} is not the material object: Kramers indices do not depend on inversion, and a PT-symmetric C$_2$ pair has $\kappa=1$ with no on-site splitting. What matters is the face-projected density matrix; C$_1$ layers are protected only if all profiles share one spinor frame~\cite{SM}. Random models with one to three profiles [Fig.~\ref{fig:matrix}(e,f)] and Slater--Koster MX$_2$ layers with spin--orbit coupling ($\kappa=2.94$ and 3.00 for D$_{3h}$ and D$_{3d}$) confirm Eq.~\eqref{eq:kappa}~\cite{SM}; Table~\ref{tab:mat} lists examples.

\begin{figure}[t]\centering
\includegraphics[width=\columnwidth]{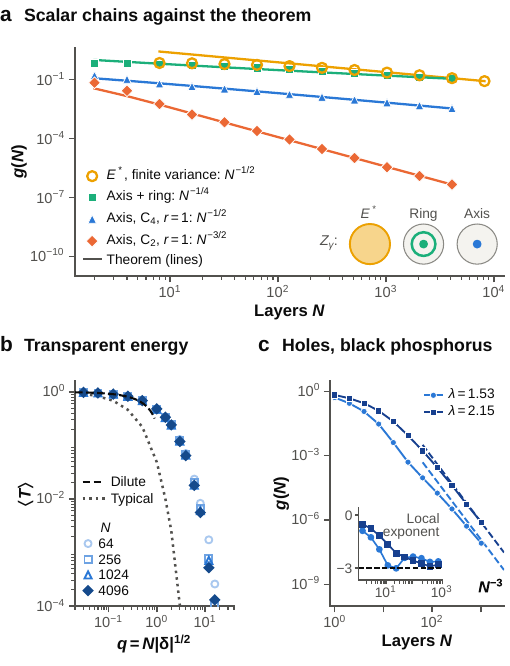}
\caption{Thickness laws from zero sets. (a)~Mean conductance of scalar chains (markers, exact transport) against the theorem with closed-form amplitudes (lines); insets: zero sets in the channel plane. (b)~Transparent energy at nodes: mean transmission of the factorized chain [Eq.~\eqref{eq:dimer}] against $q=N|\delta|^{1/2}$ ($2^{14}$ sequences per $N$; $\cos k=-0.90$). (c)~Thermal hole means of black phosphorus (300\,K) and $A_{\rm th}N^{-3}$ with the kernel amplitude at $N=16384$ (dashed); inset: local exponents.}\label{fig:laws}
\end{figure}

\begin{table}[b]\centering\small
\caption{Classes and laws on the rotation axis ($c=2$; bandwidth measure $s=r$ for electrons). ``exp.'': no protected axis. BP from first-principles face data (this work); phonons and MoS$_2$ from the models of~\cite{SM}. $^\dagger$Kramers value 1 below $p^*\approx(2$--$5)\times10^{-4}$\,\AA$^{-1}$ (thermal energies). $^\ddagger$Above $\lambda_c$ the transparent energy $E^*$ gives $N^{-3}$ instead (forward-channel model, outside the spin--orbit window $p\lesssim p_h$).}\label{tab:mat}
\begin{ruledtabular}
\begin{tabular}{llccc}
material & carrier / branch & $r$ & $\kappa$ & $g\propto$\\
\hline
BP & $\Gamma$ conduction edge (A) & 0 & 2$^\dagger$ & $N^{-1/2}$\\
 & $\Gamma$ valence edge (B) & 2 & 2 & exp.$^\ddagger$\\
 & flexural phonons & 0 & 4 & $N^{-1/4}$\\
 & in-plane phonons & 2 & 2 & exp.\\
MoS$_2$ & $\Gamma$ valence ($\pm\frac12$) & 0 & 3 & $N^{-1/3}$\\
 & out-of-plane phonons & 0 & 6 & $N^{-1/6}$\\
 & in-plane phonons & 2 & 4 & $N^{-3/4}$\\
\end{tabular}
\end{ruledtabular}
\end{table}

\begin{figure*}[t]\centering
\includegraphics[width=\textwidth]{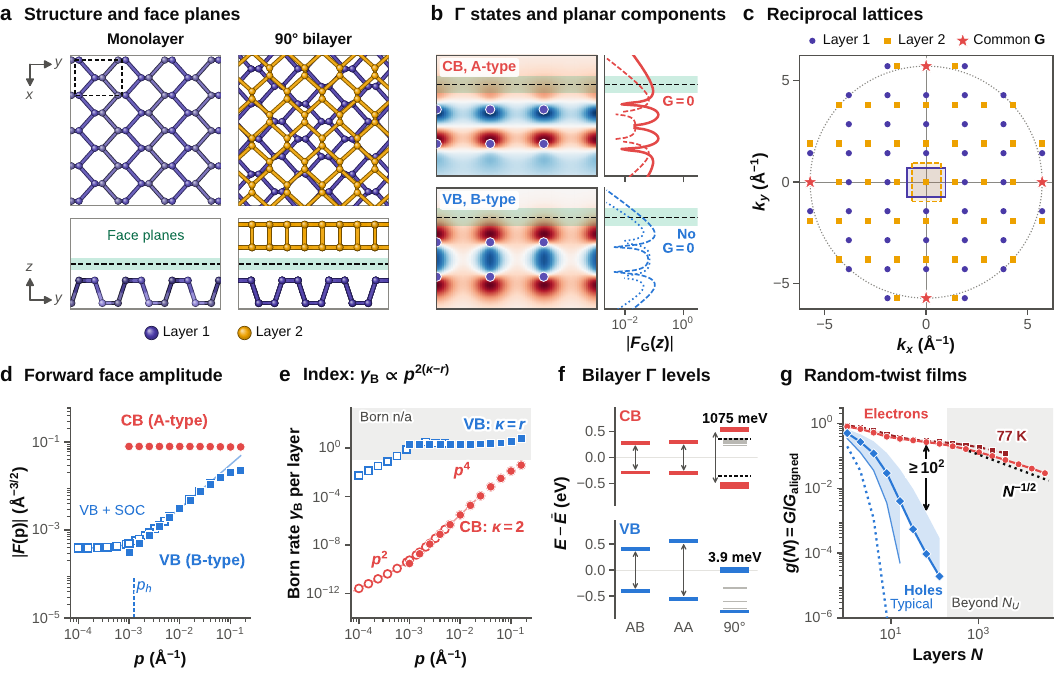}
\caption{Black phosphorus from first principles (PBE): twist disorder as a symmetry filter. (a)~Monolayer and rigid $90^\circ$ bilayer (layer distance $D=5.239$\,\AA; $x$: zigzag, $y$: armchair; dashed box: unit cell), with the admissible face planes (green) and midplane (dashed); structural data in~\cite{SM}. (b)~Monolayer $\Gamma$ states (Re$\,\psi$) and planar components: $\mathbf G=0$ (solid), first armchair (dashed) and zigzag (dotted; zero for the conduction edge) shells. (c)~Reciprocal lattices of the two layers; stars: first common vectors. (d)~Forward face amplitude and (e)~Born rate per layer (monolayer kernel); open: with spin--orbit coupling; lines: fits; grey: Born not applicable. (f)~$\Gamma$ levels of rigid bilayers about each pair's centroid; at $90^\circ$ line width: weight of the isolated-layer pair (grey: below 2\%); dashed: $\pm|\bar T_c(0)|$, monolayer kernel. (g)~Mean conductance of random-twist films relative to aligned stacks (exact forward-channel transport): electrons (300, 77\,K) and holes (300\,K) at $\lambda=1.53$ (markers), bands over $\lambda=1$--2.15 (arrow: band edges); thin line: holes at $\lambda=1$, to 16 layers; dotted: typical hole channel and guide $N^{-1/2}$; grey: beyond the electron umklapp estimate $N_U$.}\label{fig:bp}
\end{figure*}

\emph{Transparent energies at nodes}---A band edge whose forward face is odd under a mirror (B-type) couples as $p$, so bond and rotational on-site disorder both scale as $p^2$ and the axis is not protected: $\kappa=r$. Yet at small $p$ its bond factorizes, $t_l=t_0s_ls_{l+1}$, with $s_l=|\sin\alpha_l|$ and $\alpha_l$ the angle of $\bp$ in the frame of layer $l$, and its on-site energy is $e_l=E^*+2\Delta s_l^2$, with $E^*$ that of a layer whose node faces $\bp$. With $\phi_l=s_l\psi_l$,
\begin{equation}
\Big[\frac{E-E^*}{s_l^2}-2\Delta\Big]\phi_l=t_0\,(\phi_{l+1}+\phi_{l-1}).\label{eq:dimer}
\end{equation}
At $E^*$ on-site and bond disorder cancel through a local transformation known for correlated disorder~\cite{Flores1989}, and if the mapped band is open, $\Delta<t_0$, a film between matched leads transmits perfectly for every twist sequence. Near $E^*$ the nodes make the disorder $V_l=\delta/s_l^2$, $\delta=(E-E^*)/t_0$, heavy tailed, $P(V>v)=\frac2\pi\arcsin\sqrt{\delta/v}$, with infinite variance, outside the finite-variance class of one-dimensional localization~\cite{TitovSchomerus2003,Bienaime2008,Texier2020}. Hence $\gamma\propto|\delta|^{1/2}$, and the mean transmission becomes a function of $q=N|\delta|^{1/2}$ alone, a kernel of its own: dilute strong scatterers give $1-\langle T\rangle\simeq q/2\sqrt{\sin k}$ and $-\langle\ln T\rangle\simeq2q/\sqrt{\sin k}$ ($k$: wave number at $E^*$), a ratio of $\frac14$ instead of the finite-variance value 1~\cite{SM}. Exact transport shows this collapse from $N=256$ to 4096 [Fig.~\ref{fig:laws}(b)], and each channel transmits $\int\langle T\rangle\,\dd E=Jt_0N^{-2}$, with $J$ constant to 1\%. In a real band, higher angular harmonics leave a residue $\gamma(E^*)\simeq C_\lambda p^4$ (Appendix~C). The thermal zero set is then a mixed-order point, $p=0$, $E=E^*$, with $\gamma\propto p^4$ along momentum and $\propto|E-E^*|^{1/2}/p$ along energy; exact transport shows the mean transmission approaching a limit at fixed $u=C_\lambda Np^4$ and $v=N|\delta|^{1/2}$~\cite{SM}, and the change of variables gives $g\propto N^{-3}$ [Fig.~\ref{fig:laws}(c)].

\emph{Black phosphorus}---Because the index is a property of monolayer wavefunctions, a monolayer calculation predicts which band edges are protected, and at what order. For black phosphorus [PBE; Fig.~\ref{fig:bp}(a)] we read the planar components of the band-edge states around $\Gamma$ on five planes in the van der Waals gap. The conduction edge is A-type: its planar average is finite [Fig.~\ref{fig:bp}(b)] and its Born rate grows as $p^4$, so $r=0$ and $\kappa=2.00$ on every plane. The valence edge is B-type: its forward amplitude grows linearly and $\kappa=r=2$ [Fig.~\ref{fig:bp}(d,e)]. These parities agree with band representations~\cite{RudenkoKatsnelson2014,FariaJunior2019} and with the electron--hole contrast of twisted bilayers~\cite{Sevik2017,Zhao2021,Cao2016,Kang2017}; in thick random stacks they become thickness laws. Spin--orbit coupling matters only at the smallest momenta (Table~\ref{tab:mat}; Appendix~C).

A rigid $90^\circ$ bilayer, whose coupling is forward [Fig.~\ref{fig:bp}(c)], gives coupling scales 1.5 to 2.15 times the monolayer estimate [Fig.~\ref{fig:bp}(f); Appendix~C]; scanning this ratio over $\lambda=1$--2.15 crosses the hole threshold $\lambda_c=1.37$ and leaves the indices unchanged~\cite{SM}.

Exact transport of random-twist films uses the full face Hamiltonian from PBE, with aligned leads and a thermal measure [Fig.~\ref{fig:bp}(g); Appendix~C]. Electrons keep $g=0.32$--0.36 of the aligned conductance at 16 layers and 0.23--0.24 at 128 (300\,K, all $\lambda$), with a local exponent approaching $-1/2$ ($-0.47$ at $2.3\times10^4$ layers). A typical electron channel transmits only about $2\times10^{-3}$ at 16 layers: the channels near the axis carry the mean, as the theorem predicts. Holes decay exponentially for $\lambda<\lambda_c$ and algebraically above it, with local exponents reaching $-2.6$ ($\lambda=1.53$) and $-2.8$ (2.15) at 700 layers [Fig.~\ref{fig:laws}(c)]. Twist disorder thus acts as a symmetry filter: it retains at least 7 times more of the electron than of the hole conductance at 16 layers, and $10^2$ times more at 64.

These results hold in the forward channel. First-shell umklapp partners couple with 0.34--0.37 of the forward strength but are on shell mainly within about $\pm3.6^\circ$ of alignment, so electrons keep their momentum over an estimated $N_U\sim2\times10^2$ (300\,K) to $4\times10^2$ layers (77\,K)~\cite{SM}; gated vertical transport~\cite{Srivastava2021} through such films can measure the electrons' retained conductance.

\emph{Phonons}---With the transverse wave vector as $\bp$ the analysis carries over. In a stress-free, mechanically stable paracrystal~\cite{Hosemann}, out-of-plane phonons decay as $N^{-1/\kappa_z}$ ($\kappa_z=6$, 4 and 4 for C$_6$, C$_4$ and flexural C$_2$ layers), and in-plane phonons, whose interlayer coupling is rank one and $O(p^2)$~\cite{Maity2020}, as $N^{-3/4}$ in C$_3$ and C$_6$ layers and exponentially in C$_2$ and C$_4$ layers, where $\kappa=r$~\cite{SM}.

\emph{Outlook}---Rotational disorder thus has a structure of its own: generated by a symmetry operation, it vanishes on the rotation axis, and time reversal decides, harmonic by harmonic, how fast its backscattering vanishes. The exponents that follow are set by the symmetry of the band edge, while material details change only the amplitudes and random twist increments only the rates; no natural proxy---single-bond spectra, admixture sizes, projected spin splittings or the attenuation of a typical channel---captures them. Three directions open. Stacking disorder becomes a design element for thermoelectrics~\cite{Bai2026} and beyond: choosing layers by the parity of their band edges and phonon branches selects what crosses a disordered stack, as the hundredfold electron--hole contrast of the black-phosphorus model illustrates, and rotated photonic or acoustic slabs offer clean tests, layer by layer. Twist statistics become a control knob for the rates: between random stacks and the helical crystals grown around screw dislocations~\cite{Liu2019,Sutter2019}, random increments keep the scattering orders, and strictly regular twists make the chain periodic or quasiperiodic~\cite{SiriviboonVolkov2026}. And disorder generated by random elements of other symmetry groups, such as random twinning, may likewise fall silent at the momenta the group fixes---a route to classifying disorder by the symmetry that generates it.

\begin{acknowledgments}
We acknowledge support from the National Science Fund for Distinguished Young Scholars (Grant No.~51925101), the National Natural Science Foundation of China (Grants No.~52450001 and No.~12104370), the Tianmushan Laboratory Research Project (Projects No.~TK2024D006 and No.~TK2023C021), and the ``Pioneer'' and ``Leading Goose'' R\&D Program of Zhejiang (Project No.~2024SSYS0084). P.K. is grateful to Compute Canada and the High Performance Computing Center of McGill University for computing resources and support during the early stages of this work. The authors take full responsibility for the ideas, reasoning and rigor of this work; the large language model Claude Opus 5.5 (Anthropic) served as a supporting tool, under the authors' direction and review, for derivations, code, data analysis and writing.
\end{acknowledgments}

\emph{Data availability}---The first-principles face data are in the Supplemental Material package; the scripts and validation kit that reproduce every number and figure are at \url{https://github.com/kpleo/twist_theory}.

\section*{End Matter}

\section*{Appendix A: Proof of the parity rule}\label{app:parity}
The Proposition requires $\bar T\neq0$ on the axis and an energy away from band edges, crossings of the averaged chain and channel thresholds. Take the lab momentum along $x$ and let $a=\pm$ label $s_z,J_z=\pm\frac12$. (i)~At $p=0$ only the $m=0$ harmonics survive, so every profile block is diagonal, $G_\alpha(0)={\rm diag}(c_\alpha,c_\alpha^*)$, since time reversal maps the $(+,+)$ entry to the conjugate of the $(-,-)$ one. (ii)~The smallest twist-dependent harmonics are the spin-flip ones, of in-plane order $n-1$; time reversal, $G_{+-}(\mathbf q)=-G_{-+}(-\mathbf q)^*$, fixes their form,
\begin{equation}
\delta G_\alpha(\theta)=p^{\,n-1}\begin{pmatrix}0&(-1)^na_\alpha^*e^{-in\theta}\\ a_\alpha e^{in\theta}&0\end{pmatrix}+O(p^n).\label{eq:dG}
\end{equation}
(iii)~Summing over profiles with a real kernel gives $\bar T={\rm diag}(t_0,t_0^*)+O(p)$ and $\delta T_l=-\Omega_{{\rm t},l}\bar T+\bar T\,\Omega_{{\rm b},l+1}+O(p^n)$, where the purely spin-flip generators $\Omega$, of order $p^{n-1}$, obey $\Omega^\dagger=-\Omega$ for odd $n$ and $\Omega^\dagger=\Omega$ for even $n$. (iv)~Let the averaged chain be $\bar T=cU$ ($U$ unitary) and $\bar X=\varepsilon\mathbb 1$ (Kramers degeneracy). A forward channel $(k_a,\phi_a)$ and a backward channel $(k_b,\phi_b)$ of eigenphases $\phi$ of $U$ at the same energy obey $k_b+\phi_b=-(k_a+\phi_a)$, and for anti-Hermitian $\Omega$ the Born amplitude of $\delta T_l$,
\begin{multline}
m_l=c\,(\Omega_{{\rm t},l})_{ba}\big[e^{-i(k_b+\phi_b)}-e^{i(k_a+\phi_a)}\big]\\+c\,e^{i(\phi_b-\phi_a)}(\Omega_{{\rm b},l+1})_{ba}\big[e^{i(k_a+\phi_a)}-e^{-i(k_b+\phi_b)}\big],\label{eq:ml}
\end{multline}
vanishes identically; for Hermitian $\Omega$ the same algebra gives $m_l=-2c\,e^{i(k_a+\phi_a)}[(\Omega_{{\rm t},l})_{ba}-e^{i(\phi_b-\phi_a)}(\Omega_{{\rm b},l+1})_{ba}]$, generically non-zero. For odd $n$ the $O(p^{n-1})$ amplitude therefore cancels, and scattering starts at $O(p^n)$, from the spin-conserving harmonics, from the rotation acting on the $O(p)$ non-scalar part of the averaged chain, and from products of fluctuations: $\kappa=n$. For even $n$ the harmonic scatters at its own order: $\kappa=n-1$. For spinless A-type states the harmonic of order $n$ is real for even $n$, a modulus change ($\kappa=n$), and imaginary for odd $n$, a phase that is an exact gauge of a scalar chain, so the modulus first changes at order $2n$: $\kappa=2n$. Neither a face relation nor the number of profiles enters~\cite{SM}.

\section*{Appendix B: Why one bond is not enough}\label{app:onebond}
Single-bond quantities, such as the twist spread of the singular values of $T_l$, miss two features of the chain. First, an interface couples the top face of one layer to the bottom face of the next. For square invertible faces and $\tau=1$ the substitution $\phi_l=G^{\rm b}_l\psi_l$ gives the exact face chain
\begin{equation}
Y_l\phi_l=K_l^\dagger\phi_{l+1}+K_{l-1}\phi_{l-1},\label{eq:facechain}
\end{equation}
with $K_l=G^{\rm t}_l(G^{\rm b}_l)^{-1}$ and $Y_l=(G^{\rm b}_l)^{-\dagger}(E-X_l)(G^{\rm b}_l)^{-1}$: lab-frame matrices whose eigenbases rotate from layer to layer, which singular values of one bond cannot see. Second, part of the twist dependence is a rotation of the multiplet frame, which the chain removes by a gauge transformation. The centrosymmetric C$_3$ Kramers doublet of Fig.~\ref{fig:matrix}(d) shows both: single-bond spread $\propto p^5$ (same face) or $p^3$ (opposite faces), chain rate $p^6/(1-p^2)^2$, $\kappa=3$, confirmed by transfer matrices.

\section*{Appendix C: The transparent energy and first-principles transport}\label{app:transport}
\emph{First-principles inputs.} The PBE monolayer (projector-augmented waves, experimental lattice) gives the planar components and slopes of the band-edge states at 577 momenta around $\Gamma$ on five gap planes. The rigid $90^\circ$ bilayer ($D=5.239$\,\AA, 96 atoms) shares no reciprocal-lattice vector shorter than 5.72\,\AA$^{-1}$, so its coupling is forward. In the projection of its $\Gamma$ states on those of the isolated layers, two states 1.075\,eV apart carry 81\% of the conduction pair, with a coupling of 0.54\,eV in a two-state model and 0.60--0.76\,eV downfolded, 1.5 to 2.15 times the monolayer's midplane Bardeen estimate, 0.35\,eV~\cite{SM}. Spin--orbit coupling lowers the conduction index to the Kramers value 1 only below $p^*$ (Table~\ref{tab:mat}), which matters beyond about $10^{10}$ layers, and makes the holes a Kramers pair below $p_h=(1.2$--$1.4)\times10^{-3}$\,\AA$^{-1}$, with $(3$--$5)\times10^{-5}$ of the electrons' forward coupling.

\emph{Residue.} At $E^*$ the mapped chain of Eq.~\eqref{eq:dimer} is exact for any band: $\varepsilon_l=(e_l-E^*)/s_l^2$ and $\tau_l=t_l/s_ls_{l+1}$ are smooth in the angles, because the band is even in $q_y$ and the B-type face odd. Expanding the band, $E=E_0+e_{10}q_x^2+e_{01}q_y^2+e_{20}q_x^4+e_{11}q_x^2q_y^2+e_{02}q_y^4$, and the faces, $G=q_y(c_0+c_xq_x^2+c_yq_y^2)$, with $K_0=c_0^{{\rm t}\,T}\tau c_0^{\rm b}$ and $C=\lambda|K_0|$, gives the clean chain $\cos k=-(e_{01}-e_{10})/2C$ and, from the angle dependence of on-site and bond corrections, correlated disorder of relative order $p^2$~\cite{Tessieri2001}, with the Born residue
\begin{equation}
\begin{split}
\gamma(E^*)&=\frac{|B|^2p^4}{64\sin^2k}+o(p^4),\\
B&=A+2e^{ik}(f_{{\rm L}y}-f_{{\rm L}x})+2e^{-ik}(f_{{\rm R}y}-f_{{\rm R}x}),
\end{split}\label{eq:residue}
\end{equation}
where $A=(e_{20}-e_{11}+e_{02})/C$, $f_{{\rm L}a}=c_a^{{\rm t}\,T}\tau c_0^{\rm b}/K_0$ and $f_{{\rm R}a}=c_0^{{\rm t}\,T}\tau c_a^{\rm b}/K_0$. For black phosphorus $C_\lambda=|B|^2/64\sin^2k=3.55\times10^3$\,\AA$^4$ ($\lambda=1.53$) and $5.76\times10^2$\,\AA$^4$ ($\lambda=2.15$); exact Born sums and chains of up to $5\times10^8$ layers ($N\gamma\ge190$) agree with each other within about 2\% for $p\le0.02$ (0.057)\,\AA$^{-1}$ and tend to it as $p\to0$; as $p\to0$ the transparent energy exists for $\lambda>\lambda_c=|e_{01}-e_{10}|/2|K_0|=1.373$.

\emph{Mixed-order law.} With $u=C_\lambda Np^4$, $v=N|\delta|^{1/2}$ and $\dd^2p\,\dd E=p\,\dd p\,\dd\varphi\,t_0\,\dd\delta$, $t_0=Cp^2$, a scaling limit $\langle T\rangle\to\Psi(u,v,\pm)$ gives
\begin{equation}
g\simeq A_{\rm th}N^{-3},\qquad A_{\rm th}=\frac{\pi Cw^*}{2C_\lambda\mathcal Z}\sum_\pm\int\dd u\int2v\,\dd v\,\Psi,\label{eq:Ath}
\end{equation}
where $w^*$ is the Boltzmann density at $E^*$ and $\mathcal Z$ the aligned normalization; the azimuthal integral assumes that $\Psi$ does not depend on the direction of $\bp$ relative to the leads, which a second lead azimuth supports within the errors~\cite{SM}. Exact transport from $N=256$ to 65536, with independent random streams per point, tests this scaling limit. At $\lambda=2.15$ the means at fixed $(u,v)$ change by a median of 2.2\% from $N=1024$ to 4096 and 0.6\% from 4096 to 16384 (points with $\langle T\rangle>10^{-4}$); between 4096 and 16384 they agree point by point within their errors ($\chi^2=210$ for 207 points, 153 for the 152 not dominated by a few sequences), as do 15 points at 65536 ($\chi^2=18$), although a coherent rise of up to 3\% per factor 4 in $N$ remains. At $\lambda=1.53$, where $E^*$ lies near the band edge, $O(p^2)$ corrections decay slowly. For $\lambda=2.15$, $A_{\rm th}=846\pm4$ and $868\pm8$ at $N=4096$ and 16384 on the base grid, and $834\pm5$ and $837\pm6$ with the quadrature error removed to leading order (Richardson, from a grid refined by all midpoints; other rules differ by at most 0.5\%); the tails beyond the grid add an extrapolated 1.1--1.4\%. The extrapolated amplitude is thus stable to a few per cent over the tested thicknesses, with residual finite-size and tail-extrapolation uncertainty. The direct $N^3g(1024)=841\pm26$ (about 3\% lower with a finer energy grid) is consistent, and the kernel integrated at the same $N$ reproduces the direct thermal means within 5\%.

\emph{Faces and chain.} At each grid momentum the midplane face values and slopes of an edge state are fixed to a real gauge with the sign set by the symmetry of the edge, and represented, with the band $E(\mathbf q)$, by even polynomials of degree 10 in $(q_x,q_y)$, times $q_y$ for B-type edges~\cite{SM}. Channel $\bp$ has on-site energies $E(R_{\theta_l}^{-1}\bp)$ and bonds $t_l=\lambda|T(R_{\theta_l}^{-1}\bp,R_{\theta_{l+1}}^{-1}\bp)|$; twists are continuous, i.i.d.\ and uniform, and the leads are semi-infinite aligned stacks. We sample $|\bp|\le0.16$\,\AA$^{-1}$ with 32 energies per channel at equal Boltzmann probability in the lead band, plus 40 geometric points per side around $E^*$ for holes, and report the mean over 16--512 twist sequences and the typical channel $e^{\langle\ln T\rangle}$. Refining grids and polynomials changes the electron means by less than 0.2\% and the hole means by at most 3\%~\cite{SM}.
\end{document}


\maketitle
\vspace{-6pt}
\begin{abstract}
\noindent This supplement to the letter gives the proofs (the thickness theorem with class-dependent kernels, transparent energies at nodes with their heavy-tailed kernel, and mixed-order zero sets), the matrix-index classification with its validation and its robustness to random twist increments, the stable phonon model, the held-out tests, the thermoelectric window, the material protocol with its validation kit, and the first-principles calculations for black phosphorus: face data, the $90^\circ$ bilayer and its effective model, exact transport of random-twist films with the full face Hamiltonian, the transparent energy of the holes with its residue and two-variable kernel, and the umklapp and spin--orbit limits. Section~\ref{sec:summary} states the scope and the status of every claim.
\end{abstract}

\tableofcontents
\newpage

\input{sec_summary}
\input{sec_framework}
\input{sec_index}
\input{sec_theorem}
\input{sec_phonon}
\input{sec_spacing}
\input{sec_predict}
\input{sec_thermo}
\input{sec_beyond}
\input{sec_validation}
\input{sec_dft}
\appendix
\input{sec_appendix}

\input{sec_refs}

%% file: sec_summary.tex
\section{Overview and claims ledger}\label{sec:summary}

\paragraph{The question and the answer.} How fast does the cross-plane conductance of a random-twist stack decay with the number of layers, and what decides it? The letter answers in three steps, each proved or tested here. (i)~\emph{A thickness theorem}: the mean conductance decays algebraically near the zero set $\Zg$ of the effective Lyapunov exponent, with an exponent fixed by the codimension, the local order and the measure, and an amplitude fixed by a mean-transmission kernel that belongs to the class of the disorder, not to the Lyapunov exponent (Sec.~\ref{sec:theorem}). (ii)~\emph{A parity rule}: on the rotation axis the order of the zero set is a property of the full matrix chain of the facing wavefunctions; time reversal sorts the twist harmonics into frame rotations, whose backscattering cancels, and coupling changes, which scatter. This fixes the index $\kappa$ in closed form for spinless and Kramers band edges with $r=0$ in C$_n$ layers ($n\ge2$), and through random models or face-resolved material data for the other classes (Sec.~\ref{sec:index}). Random twist increments rescale the Born rate through non-negative spectral weights of the twist sequence and, generically, leave $\kappa$ unchanged; a strictly regular twist makes the chain periodic or quasiperiodic (Sec.~\ref{sec:corrtwist}). (iii)~\emph{Transparent energies at nodes}: where the axis is not protected, a nodal (B-type) band edge with a factorized coupling and an open mapped band transmits perfectly at an energy $E^*$; near it the residual disorder is heavy tailed, and at the resulting mixed-order zero set the thermally averaged conductance falls as $N^{-3}$ (Corollaries~\ref{cor:heavy} and~\ref{cor:mixed}, Sec.~\ref{sec:thicknessdft}). Black phosphorus realizes (ii) and (iii) from first principles within the forward-channel model (Sec.~\ref{sec:dft}). The remaining sections test the framework in a stable, stress-consistent model (Secs.~\ref{sec:phonon}--\ref{sec:thermo}), treat what lies beyond the coherent forward channel (Sec.~\ref{sec:beyond}) and list the material calculations that identify each input (Sec.~\ref{sec:validation}).

\paragraph{Scope.} All results are derived for incommensurate interfaces in the forward-channel model (Assumption~\ref{as:cont}): in-plane momentum is conserved in the projected forward block, which contains the couplings between states at the same lab-frame momentum when no reciprocal-lattice vectors of adjacent layers coincide. Generalized umklapp couplings connect distinct momentum channels already at first order; eliminated when their partner states are off shell ($\|H_{PQ}\|/\Delta\ll1$), they correct the retained block only at second order. On-shell umklapp partners and moir\'e processes near alignment lie outside the model; Sec.~\ref{sec:umklapp} estimates their range by an umklapp thickness $N_U$, which needs the umklapp matrix elements and the partner energies, not only the planar Fourier ratios of a monolayer calculation (Sec.~\ref{sec:validation}). The matrix index is a Born-order statement (Sec.~\ref{sec:parity}); exact transfer-matrix exponents are compared with it over accessible windows (Sec.~\ref{sec:matrixval}). Twists are independent and uniform unless stated; Sec.~\ref{sec:corrtwist} treats correlated sequences.

\paragraph{Relation to earlier work.} The ingredients have precedents; what this work adds is the connection between them. (i)~\emph{Channel chains of rotated layers.} At fixed in-plane momentum a stack of rotated anisotropic layers is a one-dimensional chain; for a constant twist increment the chain is quasiperiodic (Aubry--Andr\'e type), and channels below a critical momentum stay extended through the stack~\cite{SiriviboonVolkov2026}. For independent random twists every channel with $p>0$ localizes, and the thickness law is set by how fast backscattering vanishes on the axis, which the parity rule fixes from the symmetry of the layer (Sec.~\ref{sec:index}). (ii)~\emph{Direction-selective transparency.} Brewster anomalies of randomly stratified optical media~\cite{Sipe1988} and normal incidence in twist-angle-disordered bilayer graphene~\cite{Joy2020} are zero sets, at least to leading order, in the sense of Sec.~\ref{sec:theorem}; in the graphene case the angle-integrated typical transmission across lateral domain walls decays as a power of their number, and a shift of the Dirac points cuts the power law off exponentially. The theorem concerns the ensemble-mean transmission across layers, whose kernel depends on the class of the disorder (H4), so such typical-value laws are not its amplitudes. (iii)~\emph{Correlated and heavy-tailed disorder.} Correlated disorder can cancel at one energy~\cite{Flores1989,Tessieri2001,Hilke1997}, and infinite-variance disorder changes the scaling of one-dimensional localization~\cite{TitovSchomerus2003,Bienaime2008,Texier2020}. Here the nodes of the band-edge faces, together with the factorized coupling and an open band, generate a one-sided tail of index $\frac12$ at fixed positions, the residue $\propto p^4$ cuts it off, and the joint scaling in momentum and energy gives the thermal law $N^{-3}$ (Corollaries~\ref{cor:heavy} and~\ref{cor:mixed}). Stealthy disorder removes the leading orders of backscattering through a gap in its spectrum and delocalizes effectively over long but finite lengths~\cite{Vanoni2026}; our zero sets follow from the layer symmetry instead. (iv)~\emph{Phonons.} Disorder in the layer sequence~\cite{Juntunen2019} or at the interfaces~\cite{Luckyanova2018} suppresses coherent cross-plane phonon transport in superlattices, and so do disordered twist sequences in multilayer graphene~\cite{Wang2025}; the index predicts which branches keep algebraic tails in random-twist stacks (Sec.~\ref{sec:phonon}). (v)~\emph{Ordered and correlated twists.} Helical van der Waals crystals grow around axial screw dislocations with a fixed mean twist rate, continuous~\cite{Sutter2019} or in discrete jumps between nanoplates~\cite{Liu2019}, and a strictly constant increment makes the channel chain quasiperiodic~\cite{SiriviboonVolkov2026}. In Eshelby-twisted GeS nanowires the layers rotate by less than $0.01^\circ$ per layer, and the measured changes of the cross-plane conductivity are attributed to a strain-stabilized stacking change and to compression near the dislocation core, not to the rotations~\cite{Liu2025}. Between such ordered stacks and the independent twists of turbostratic films, random increments rescale the backscattering rate through the spectral weights of the twist sequence and generically leave the index unchanged (Sec.~\ref{sec:corrtwist}).

\begin{table}[!htbp]\centering\footnotesize\setlength{\tabcolsep}{4pt}
\caption{Claims ledger. Status: \textbf{T} theorem or Born-order result with checkable hypotheses; \textbf{P} held-out prediction in the stable model, with uncertainty; \textbf{C} conditional on named material parameters; \textbf{S} supplementary (evidence limited to a stated regime).}\label{tab:ledger}
\begin{tabular}{L{4.8cm}L{8.1cm}L{2.6cm}}
\toprule
claim & evidence & status\\
\midrule
$g\propto N^{-(c+s)/2m}$ on $\Zg$ of the effective exponent, with amplitude; the kernel belongs to the disorder class ($\Phi_1$ for finite variance) & proof (App.~\ref{app:theorem}; rate under a global remainder bound); $r=1$, zero-ring and transparent-energy tests & T\\
exponential decay if $\gamma\ge\gamma_{\min}>0$ (uniform large-deviation bound); algebraic if $\Zg\neq\emptyset$ & corollary; C$_2$/C$_4$ in-plane chains decay exponentially & T + P\\
matrix index of $r=0$ band edges of C$_n$ layers, $n\ge2$: $n$ / $2n$ (spinless A), $n-1$ / $n$ (Kramers), any face relation and number of interface profiles & Proposition~\ref{prop:matrix} at Born order (bond-parity and frame-rotation lemmas on the full bond contraction, App.~\ref{app:matrix}); random models, 4 face relations $\times$ 5 seeds, and with two and three interface profiles; exact transfer matrices over accessible windows & T (Born order; generic coefficients) + P\\
C$_1$ Kramers pairs: $\kappa=1$ & single interface profile (common spinor frame) only; with two profiles the axis is not protected (explicit counterexample, Sec.~\ref{sec:multiprofile}) & T (restricted)\\
index of $r>0$ classes and doublets & random-model classification (Table~\ref{tab:kappa}); per material from face-resolved data (kit) & P + C\\
index under correlated twists (random increments) & Born rate as a sum of non-negative spectral weights of the twist sequence, frame-rotation cancellation for every sequence (Sec.~\ref{sec:corrtwist}); random C$_2$ and C$_3$ Kramers models, three statistics: rates up to 250-fold lower, local index unchanged; exact exponents at larger $p$ & T (Born order; generic increments) + P\\
forward-channel coupling: bilinear kernel on forward amplitudes, $T(0)$ twist independent for $n\ge2$ & proof (App.~\ref{app:coupling}) under a weak-overlap L\"owdin convention; T2 test specified & T (stated convention)\\
branch exponents (z $1/\kappa_z$; in-plane $3/4$ or exponential) & stable $3\times3$ chains, $N\le2048$, gated & T + P\\
spacing-noise transparency surface $\omega^*(\bp)$ & Born derivation (Corollary~\ref{cor:transparent}); transmission peaks within the frequency grid of $\omega^*$ (within 0.22\,rad/ps for $|\bp|\ge0.56$\,\AA$^{-1}$; every third channel checked) & T (Born order) + P\\
joint disorder from separate runs (additivity) & $\sigma_d$, $N$ and $\beta_{\rm eq}$ hold-out (five families; each interior family left out in turn); paired bootstrap bands and sampling floor & P\\
electron conductance without parameters & 5 spectra $\times$ 4 noise levels, smooth excepted; rms 5.8\% against a sampling floor of 3.1\% & P\\
Seebeck change (Fermi--Dirac) without parameters & 0.04--0.06\,$k_B/e$ rms (sampling floor 0.02--0.03), $\mu=-0.10$ to $+0.05$\,eV & P\\
phonon spacing response set by $\beta_{\rm eq}$, not $\beta_{\rm load}$ & five stiffness families & P + C\\
black phosphorus: $\Gamma$ electrons protected (axis law); $\Gamma$ holes not protected ($\kappa=r$) & PBE face data on five admissible gap planes (Sec.~\ref{sec:dft}): conduction $r=0$, $\kappa=2.00$; valence $\kappa=r=2$; spin--orbit crossovers $p^*\approx(2$--$5)\times10^{-4}$ (thermal) and $p_h=(1.2$--$1.4)\times10^{-3}$\,\AA$^{-1}$; $90^\circ$ bilayer (pseudo-wavefunction surrogates): no single constant calibrates a stack, so the scale is scanned ($\lambda=1$--2.15); exact transport of the full face Hamiltonian: electrons $g(16)=0.32$--0.36, exponent $-0.47$ at $2.3\times10^4$ layers; retained fraction of holes $\ge7$ and $10^2$ times lower at 16 and 64 layers; estimates: electron umklapp $N_U\sim2$--$4\times10^2$, $N^*\approx10^{10}$--$10^{11}$; hole umklapp unbounded; non-forward residue weight $\approx8\%$ & C (inputs computed; forward channel; scale scanned)\\
transparent energy at nodes (B-type chains): exact cancellation at $E^*$ for $\Delta<t_0$; heavy tail of index $\frac12$, $\gamma\propto|E-E^*|^{1/2}$, kernel of its own; $N^{-2}$ per channel (ideal chain); residue $C_\lambda p^4$; thermal $N^{-3}$ & mapping and dilute kernel (Corollary~\ref{cor:heavy}); ideal-chain collapse, $J$ constant to 1\%; residue analytic ($C_\lambda=3.55\times10^3$, $5.76\times10^2$\,\AA$^4$), Born sums and long chains agree within 2\% ($p\le0.02$\,\AA$^{-1}$); two-variable kernel consistent at $\lambda=2.15$ from $N=4096$ to 65536 (slow residual rise; $A_{\rm th}$ stable to a few \% over the tested $N$), slower at 1.53, reproduces thermal means within 5\%; local exponents $-2.6$ to $-2.9$ at 360--720 layers & T (mapping, dilute limit, residue) + C (kernels supported numerically; $N^{-3}$ by Corollary~\ref{cor:mixed})\\
mixed-order zero sets: $\alpha=\sum_j\nu_j(c_j+s_j)$ & change of variables under a joint scaling limit of the mean transmission (Corollary~\ref{cor:mixed}) & T (given the limit)\\
selectivity / optimal-ZT window & Fermi--Dirac, inside 0--0.09\,\AA; thresholds in $\beta_{\rm eq}/\beta_e$ as right-censored bootstrap intervals with the fraction of draws not reaching them; direct brackets as an internal consistency check; needs $B$, $g_{\rm reg}$ & C\\
incoherent-limit exponent & Drude scaling argument; B\"uttiker-probe tests not conclusive; scope: channel-conserving dephasing & C + S\\
umklapp floor $N_U$ & second-order effective chain; near-aligned treatment assumed; black phosphorus: first shell from computed components ($\rho_U=0.34$--0.37, order-of-magnitude estimate $N_U\sim2$--$4\times10^2$; transmission factors, velocities and fluxes not resolved) & S + C\\
registry loss $g_{\rm reg}$ & definition; model value shows the sign is material specific & C\\
\bottomrule
\end{tabular}
\end{table}

\paragraph{Open material inputs.} The class and face-resolved amplitudes of each band-edge multiplet, the spin--orbit crossover momenta $p^*$ and $p_h$ and the umklapp ratio (T0-m); $\bar T$, $\beta_e$, $\beta_{\rm eq}$, $F_{zz}(0)$, $\ell_{\rm ph}$ (T1); $g_{\rm reg}$ (T1-d); the parameter-free twisted-bilayer and trilayer tests (T2). Sec.~\ref{sec:validation} lists each calculation with the quantities it identifies. For black phosphorus the T0-m inputs of the $\Gamma$ edges, the scale and distance dependence of the conduction coupling at rigid twisted interfaces, the first-shell umklapp ratio and the finite-$N$ transport of the full face Hamiltonian are computed in Sec.~\ref{sec:dft}; a constant coupling valid in the multilayer window, the umklapp transmission factors, relaxed twisted interfaces and the phonon inputs are not.

%% file: sec_framework.tex
\section{Framework: channels, couplings and stress-free interfaces}\label{sec:framework}

\subsection{Channel decomposition}
Layer $l$ of an $N$-layer stack is rotated by $\theta_l$ about the stacking axis $z$, sits at height $z_l$ with spacing $d_l=z_{l+1}-z_l$ to the next layer, and is shifted laterally by the registry $\bm\tau_l$. Four assumptions define the model; each is relaxed later or tested by a material calculation.
\begin{assumption}[forward channel]\label{as:cont} Interfaces are incommensurate, and only the part of the interlayer coupling that conserves the lab-frame in-plane momentum $\bp$ is kept (the momentum-conserving part of the general coupling rule $\mathbf k+\mathbf G=\mathbf k'+\mathbf G'$ between incommensurate layers~\cite{Bistritzer2010,Koshino2015}, as separated in Ref.~\cite{SiriviboonVolkov2026}). Between states at the same $\bp$ this is exact (Proposition~\ref{prop:coupling}); umklapp couplings $H_{PQ}$ to partner states $Q$ at $\bp+\mathbf G-R_\theta\mathbf G'$ enter the chain of channel $\bp$ at second order when the partners are off shell, $|E-\varepsilon_Q|\ge\Delta>0$ with $\|H_{PQ}\|/\Delta\ll1$ (Sec.~\ref{sec:umklapp}). The twist angle alone does not guarantee this spectral condition. Interfaces close to alignment, where $\mathbf G-R_\theta\mathbf G'$ is small (moir\'e), are outside the model.\end{assumption}
\begin{assumption}[coherence]\label{as:coh} Transport is elastic and phase coherent across the stack. Phase breaking is treated in Sec.~\ref{sec:coherence}.\end{assumption}
\begin{assumption}[independent twists]\label{as:iid} The $\theta_l$ are i.i.d.\ with a distribution that has a continuous part.\end{assumption}
\begin{assumption}[mechanical equilibrium]\label{as:eq} Every interface is at its own force-free spacing: spacing variations come from the interfaces themselves, not from unbalanced loads (Sec.~\ref{sec:stressfree}).\end{assumption}
No assumption on the orbital ($J_z$) purity of the band edges is needed; Proposition~\ref{prop:coupling} holds under a stated orthogonalization convention.

Under \ref{as:cont} the stack splits into independent one-dimensional problems, one per $\bp$. For a band-edge multiplet with on-site block $X(\bp)$ in the layer frame,
\begin{equation}
E\,\psi_l=X\!\left(R_{\theta_l}^{-1}\bp\right)\psi_l+T_{l,l+1}(\bp)\,\psi_{l+1}+T_{l-1,l}(\bp)^\dagger\,\psi_{l-1},\label{eq:echain}
\end{equation}
where each layer's basis is the rotated copy of the reference layer's basis. For phonons, with mass-scaled $3\times3$ blocks,
\begin{equation}
\omega^2 u_l=\Big[R_{\theta_l}D\!\left(R_{\theta_l}^{-1}\bp\right)R_{\theta_l}^{\!\top}+F_{l-1}(0)+F_l(0)\Big]u_l-F_l(\bp)\,u_{l+1}-F_{l-1}(\bp)^{\dagger}u_{l-1},\label{eq:phchain}
\end{equation}
with $D$ the monolayer dynamical matrix and $F_l(\bp)$ the continuum force-constant tensor of interface $l$. Conductances follow from the Landauer formula channel by channel: $G_e=\sum_{\bp}\int\dd E\,\mathcal T_{\bp}(E)(-\partial_Ef)$ and, classically, $G_{\rm ph}=(k_B/2\pi)\int\dd\omega\sum_{\bp}\mathcal T_{\bp}(\omega)$.

\paragraph{Channel domain.} Every layer mode must be counted once. A layer rotated by $\theta$ maps the lab momentum $\bp$ to its own crystal momentum $R_\theta^{-1}\bp$; inside the disk inscribed in the Brillouin zone, $|\bp|\le|\mathbf G_1|/2$, this is a first-zone momentum for every $\theta$. Beyond it, channels at $\bp$ and $\bp+\mathbf G$ describe the same layer mode, coupled by different Fourier components of the interlayer interaction; treating them as independent chains counts folded modes again. A grid to $|\bp|=1.6$\,\AA$^{-1}$ would cover 2.8 times the zone area of the model layers and put half of the aligned phonon conductance into re-counted channels. We use the inscribed disk (0.90\,\AA$^{-1}$ for the triangular model layers, 0.78\,\AA$^{-1}$ for the square ones). The corner regions left out hold 9\% of the zone for triangular and 21\% for square layers; they are where umklapp terms are largest and the forward-channel picture is least reliable anyway.

\subsection{The continuum coupling as a bilinear face kernel}\label{sec:coupling}
Let $\Psi_i(\bp)$ ($i=1,2$) be the band-edge multiplets of two isolated layers at lab momentum $\bp$, with $H_i\Psi_i=\Psi_iE_i$, $H_i=K+V_i$ and $H=K+V_1+V_2$ for the pair ($K$ kinetic, $V_i$ layer-periodic, local or non-local). With the overlap $S_{12}=\Psi_1^\dagger\Psi_2$ and $K_{12}=\Psi_1^\dagger K\Psi_2$, the two eigen-equations give exactly
\begin{equation}
H_{12}\equiv\Psi_1^\dagger H\Psi_2=E_1S_{12}+S_{12}E_2-K_{12}.\label{eq:H12}
\end{equation}
Isolated-layer states overlap, so $H_{12}$ is not yet the hopping of an orthonormal chain. We use symmetric (L\"owdin) orthogonalization to first order in the overlap, which gives
\begin{equation}
T_{12}=H_{12}-\tfrac12\big(E_1S_{12}+S_{12}E_2\big)=\tfrac12\big(E_1S_{12}+S_{12}E_2\big)-K_{12}+O(S^2).\label{eq:Torth}
\end{equation}
At larger overlap, orthogonalization adds longer-range hoppings and diagonal terms of order $S^2$. They are of the same symmetry type and are part of the disorder model below, not a new mechanism; the weak-overlap convention is an assumption, stated as such.
\begin{proposition}[continuum coupling]\label{prop:coupling}
Let the interface be incommensurate. Then (a) every term of Eq.~\eqref{eq:Torth} involves only the in-plane Fourier components of the two states at wavevector exactly $\bp$ (no reciprocal vector of either layer), and (b) the coupling is a bilinear form in these components,
\begin{equation}
T_{l,l+1}(\bp)=\sum_{\alpha\beta}G^{\rm t}_{l,\alpha}(\bp)^\dagger\,\tau_{\alpha\beta}(\bp;d_l)\,G^{\rm b}_{l+1,\beta}(\bp),\qquad G^{\rm t,b}_{l,\alpha}(\bp)=U_s(\theta_l)\,G^{\rm t,b}_\alpha\!\left(R_{\theta_l}^{-1}\bp\right),\label{eq:formfactor}
\end{equation}
where $\alpha,\beta$ label a finite set of $z$ profiles near the interface (for example the outward $p_z$ lobes of the outer atoms), $G^{\rm t,b}_\alpha(\mathbf q)$ is the projection of the planar Fourier component at in-plane wavevector $\mathbf q$ of the multiplet, near its top (bottom) face, on profile $\alpha$ (spin $\times$ multiplet index), $U_s$ is the spin rotation, and $\tau_{\alpha\beta}$ is the lab-frame kernel between profiles, spin conserving. A single plane-wave component carries no in-plane angular momentum of its own: all angular dependence sits in the $\mathbf q$ dependence of $G$. The one-profile case, $T=t\,G^{\rm t\dagger}G^{\rm b}$, is a separability approximation; cross-profile terms are kept in general.
\end{proposition}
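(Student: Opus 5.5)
The plan is to work entirely in a planar plane-wave representation. This way part (a) becomes a lattice-matching statement and part (b) reduces to bookkeeping of $z$ integrals.

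\textbf{Plane-wave expansion and part (a).} I expand each isolated-layer multiplet at lab momentum $\bp$ as a Bloch sum,
\[
\Psi_i(\mathbf r,z)=\sum_{\mathbf G_i}e^{i(\bp+\mathbf G_i)\cdot\mathbf r}\,\phi_{i,\mathbf G_i}(z),
\]
where $\mathbf G_i$ runs over the (rotated) reciprocal lattice of layer $i$. I then observe that Eq.~\eqref{eq:Torth} contains only $S_{12}$ and $K_{12}$; the energies $E_i$ are numbers (matrices on the multiplet), and the layer potentials $V_i$, local or not, have already been eliminated by Eq.~\eqref{eq:H12}. Both $S_{12}$ and $K_{12}$ are matrix elements of operators that commute with all in-plane translations. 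The planar integral $\int\dd^2r\,e^{i(\mathbf G_2-\mathbf G_1)\cdot\mathbf r}$ forces $\mathbf G_1=\mathbf G_2$, so the vector must belong to both lattices. Incommensurability means the two lattices share only $\mathbf G=0$. Hence
\[
S_{12}=\int\dd z\,\phi_{1,0}^\dagger\phi_{2,0},\qquad
K_{12}=\tfrac12\int\dd z\,\big(\partial_z\phi_{1,0}^\dagger\partial_z\phi_{2,0}+p^2\phi_{1,0}^\dagger\phi_{2,0}\big),
\]
which is (a). I would note explicitly that this exactness relies on the first-order L\"owdin convention. The $O(S^2)$ terms are also built from these components, but they generate longer-range terms, as stated before the Proposition.

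\textbf{Rotation covariance.} For the rotated layer $l$, its $\mathbf G=0$ component at lab momentum $\bp$ equals $U_s(\theta_l)$ applied to the reference layer's component at $R_{\theta_l}^{-1}\bp$. This holds because a plane wave is invariant under rotation except for its wavevector, and the spin part rotates by $U_s$. This yields the form of $G_l$ in Eq.~\eqref{eq:formfactor} and the statement that a single plane-wave component carries no intrinsic angular momentum. Since $K$ and the overlap are spin independent, the kernel is spin conserving.

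\textbf{Bilinear face kernel, part (b).} This is the step I expect to be the real obstacle. The integrals above run over all $z$, not only the faces, so I must show that they collapse to a finite set of profiles localized near the interface. My first attempt is the Bardeen route. The planar $\mathbf G=0$ components satisfy a one-dimensional equation whose potential is the $\mathbf G=0$ (planar-average) part of $V_i$. I combine $\tfrac12(E_1S_{12}+S_{12}E_2)-K_{12}$ using these equations and integrate by parts, choosing a separating plane $z_0$ in the van der Waals gap. The bulk of each layer then contributes only through terms proportional to the potential of the *other* layer at $z_0$. This leaves a surface Wronskian $\tfrac12[\phi_1^\dagger\partial_z\phi_2-\partial_z\phi_1^\dagger\phi_2]_{z_0}$ plus remainder integrals. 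The Wronskian is bilinear in the values and slopes at $z_0$, that is, in two real profiles (value and derivative) with a real, spin-conserving kernel.

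The remainders, where the potentials overlap, I would handle by expanding $\phi_{i,0}(z)$ on the face side in a finite real basis of profiles (for example the outward $p_z$ lobes, or the decaying gap solutions $e^{\mp\kappa z}$ with $\kappa^2=2(V_0-E)+p^2$). Substituting this expansion makes every term a finite sum $G^{\rm t\dagger}_\alpha\tau_{\alpha\beta}G^{\rm b}_\beta$. The coefficients $\tau_{\alpha\beta}$ are real integrals of real profiles, depending on $\bp$ only through $p^2$ and on the spacing $d_l$.

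Reality of $\tau$ is the property the later parity rule needs, so I would check it carefully. Its truncation to finitely many profiles is an approximation controlled by the profile basis, not an identity. I would state it that way, with the single-profile case as the separable limit.
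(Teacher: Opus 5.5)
Your proof of (a), of the rotation covariance and of the step that actually delivers (b) is the paper's. $S_{12}$ and $K_{12}$ conserve the in-plane wavevector, and incommensurability leaves only $\mathbf G=0$. A rotated layer enters only through $U_s(\theta_l)$ and the argument $R_{\theta_l}^{-1}\bp$. The face form then follows from expanding the $z$ dependence of the forward components in a finite set of real profiles: exact if the profiles span the relevant region, a separability approximation when truncated. The paper's proof stops there. It does not try to show that the integrals over all $z$ localize at the faces; Bardeen's value--slope form appears only later, as one particular two-profile kernel for the first-principles data.

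The Bardeen reduction you add to supply that localization is wrong as written in two places. Neither error affects your conclusion, because the profile expansion alone already gives (b) in the form the Proposition asserts.

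\emph{First error: closure of the projected equation.} The forward component does not satisfy a closed one-dimensional equation with the planar-averaged potential. Projecting $(K+V_i-E_i)\Psi_i=0$ on wavevector $\bp$ gives $(\hat k-E_i)\phi_{i,\mathbf 0}+[V_i\Psi_i]_{\mathbf 0}=0$, with $\hat k$ the one-dimensional kinetic operator at fixed $p$. Inside layer $i$, $[V_i\Psi_i]_{\mathbf 0}$ contains the corrugation terms $\sum_{\mathbf G\neq0}V_{i,-\mathbf G}\phi_{i,\mathbf G}$ (or non-local projector terms), and these are not small. Closure holds only where $V_i$ vanishes or is laterally uniform. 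That is all the argument needs: in each half space, use the free equation for the \emph{other} layer's tail and the exact projected equation for the layer's own component.

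\emph{Second error: a missing energy-mismatch term.} Put layer $l$ below and layer $l+1$ above the plane $z_0$, with non-overlapping potentials. The reduction then gives $T_{12}=W(z_0)+\tfrac12(E_2-E_1)(S^{>}_{12}-S^{<}_{12})$ for a scalar band. Here $W(z_0)=\tfrac12[\partial_z\phi_{1,\mathbf 0}^\dagger\,\phi_{2,\mathbf 0}-\phi_{1,\mathbf 0}^\dagger\,\partial_z\phi_{2,\mathbf 0}]_{z_0}$ is your Wronskian up to an overall sign convention, and $S^{\gtrless}_{12}=\int_{z\gtrless z_0}\phi_{1,\mathbf 0}^\dagger\phi_{2,\mathbf 0}\,\mathrm{d}z$.
\begin{itemize}
\item The second term is what makes $T_{12}$ independent of $z_0$ within the gap.
\item For rotated layers with an anisotropic band, $E_1=\varepsilon(R_{\theta_l}^{-1}\bp)\neq E_2=\varepsilon(R_{\theta_{l+1}}^{-1}\bp)$ at finite $p$. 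So this term is generic, not a potential-overlap remainder.
\item It is still bilinear in the forward components, but its kernel carries the layer energies (the paper's ``energies of the two layers included'').
\item Consequently, your claim that $\tau$ depends on $\bp$ only through $p^2$ holds for the Wronskian part only. Reality survives for scalar or Kramers-degenerate energies.
\end{itemize}

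Repaired this way, the detour buys something the paper's proof does not give. When the potentials do not overlap in the gap and $E_1=E_2$, the coupling is \emph{exactly} the two-profile Bardeen kernel at any plane in the gap. The failure of the non-overlap condition is why the paper later finds the Bardeen scale plane dependent across a 3.1\,\AA\ gap.
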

Consequences, as \emph{generic} statements (the symmetry allows a coefficient; it does not force it to be non-zero):
\begin{enumerate}
\item At $\bp=0$ only the in-plane average ($m_l=0$ component) of each state enters, so only multiplet components with $J_z=s_z$ couple. The coupling at $\bp=0$ is generically non-zero iff the C$_n$ class of the multiplet contains $J_z\equiv s_z$ ($J_z\equiv0$ spinless, $\pm1/2$ with spin); then $r=0$. Otherwise $r=2\mu_0$ generically, with $\mu_0$ the smallest $|J_z-s_z|$ in the class. A vanishing coefficient, an extra symmetry or a profile cancellation can raise $r$.
\item For $n\ge2$ the class fixes $s_z$ of the in-plane average, and $T(0)$ is independent of the twist up to basis phases. In C$_1$ layers the spinor of the in-plane average is not tied to $z$ and rotates with the layer: with one profile (or profiles sharing one spinor frame) $T(0)$ is twist dependent through $U_s$, a frame rotation (Sec.~\ref{sec:index}); profiles in different frames make $T(0)$ a sum of different rotations, which is not a rotation (Sec.~\ref{sec:multiprofile}).
\item Twist dependence of $T(0)$ beyond this comes only from exact coincidences $\mathbf G_1=R_\theta\mathbf G_2$ (commensurate cells), from interface potentials at mixed wavevectors (moir\'e terms) and from umklapp processes at second order, all beyond \ref{as:cont} (Sec.~\ref{sec:umklapp}).
\end{enumerate}
\noindent\emph{Proof} (App.~\ref{app:coupling}). Eq.~\eqref{eq:H12} follows from $\Psi_1^\dagger(K+V_1)\Psi_2=E_1S_{12}$ and $\Psi_1^\dagger(K+V_2)\Psi_2=S_{12}E_2$. A term periodic in one layer changes the in-plane wavevector by a reciprocal vector of that layer; for incommensurate layers only the components at $\bp$ of both states survive in $S_{12}$ and $K_{12}$. Expanding the $z$ dependence of these components in interface profiles gives Eq.~\eqref{eq:formfactor}; rotating a layer rotates its in-plane components only through $U_s$. $\square$

Twist-dependent $p=0$ couplings of the form $\sum_m|c_m|^2\tau_me^{im\theta}$ with $\tau_{m\neq0}\neq0$ do not occur: registry-averaged couplings between $m_l\neq0$ components vanish identically, because the in-plane integral of an $m_l\neq0$ function is zero. Two consequences are practical. First, the coupling of twisted interfaces is the registry average of the signed aligned coupling, $\bar T(d)=\langle t(\bm\tau,d)\rangle_{\bm\tau}$, and commensurate twisted cells test it with no fitted parameter (tier T2); the sign of $t(\bm\tau)$ must come from bond or phase data, not from amplitudes (Sec.~\ref{sec:validation}). Second, the $J_z$ content of the two faces matters at finite $p$, through the matrix chain (Sec.~\ref{sec:index}).

\subsection{Phonon interfaces: one transform, force balance and stability}\label{sec:stressfree}
For an interface built from pair interactions $\phi$ between atoms of areal densities $\rho_1,\rho_2$, define $U(\bp;d)=\rho_1\rho_2\int\dd^2r\,e^{i\bp\cdot\mathbf r}\phi(\sqrt{r^2+d^2})$. Integration by parts gives the whole continuum tensor from this one function:
\begin{equation}
F_{ij}(\bp)=-p_ip_j\,U,\qquad F_{iz}(\bp)=F_{zi}(\bp)=-ip_i\,\partial_dU,\qquad F_{zz}(\bp)=\partial_d^2U\qquad(i,j\in\{x,y\}).\label{eq:Ftensor}
\end{equation}
We evaluate $U$ and its $d$ derivatives by direct quadrature at every channel momentum and every interface spacing (\texttt{interface7.py}), so the identities hold to machine precision: $F_{ij}(0)=0$ exactly, $F_{TT}=0$, and no interpolation in $p$ or $d$ is used (a linearly interpolated table leaves a residual $F_{xx}(0)\approx-4\times10^{-8}$ and the wrong small-$p$ order in its first segment). Four properties follow.
\begin{enumerate}
\item In-plane coupling is rank~1 and $O(p^2)$: structural superlubricity, $r=2$.
\item $F_{iz}=-ip_i\,\partial_dU(0;d)+O(p^3)$. The coefficient is the normal force per unit density. It vanishes only at the force-free spacing, which for the model Morse continuum is $d_{\rm eq}=3.41569$\,\AA; at 3.40\,\AA\ it does not.
\item At the force-free spacing $F_{zz}(0)-F_{zz}(\bp)=-p^2U(0)/2+O(p^4)>0$ (App.~\ref{app:stress}): the in-phase bending of neighbouring layers costs positive energy, proportional to the adhesion energy, and the $O(p)$ normal--in-plane coupling vanishes. This makes the out-of-plane sector stable at small $p$. It does not make the whole stack stable. The in-plane coupling has no diagonal counterpart ($F_{ij}(0)=0$), and in-phase density waves of neighbouring layers attract: the aligned longitudinal branch is $\omega_L^2=p^2\big[c_L^2-2|u_0|\cos k_z\big]+O(p^4)$ with $u_0=U(0;d)/(AM)$, which is stable only if $c_L^2>2|u_0|$. The production model satisfies this with $c_L^2/2|u_0|=19.7$ (\texttt{inplane\_margin8.py}), and the stability gate checks every chain numerically; the analytic statement is limited to the normal sector.
\item Imposing spacings by hand at a fixed potential leaves each layer with the unbalanced force $\partial_dU(0;d_{l-1})-\partial_dU(0;d_l)$ and an $O(p)$ coupling of in-plane and normal displacements. Chains built this way have negative modes.
\end{enumerate}

\paragraph{Stress-free paracrystal.} In a free-standing stack every interface sits at its own equilibrium. Spacing variations are then variations of the interfaces themselves (registry, intercalants, relaxation). We model interface $l$ as a Morse interface $a_l\phi_{\rm rep}+b_l\phi_{\rm att}$ (repulsive and attractive parts, each shifted-force at the cutoff), with $(a_l,b_l)$ fixed by two conditions: the interface is force free at $d_l$, and its normal stiffness lies on an equilibrium family,
\begin{equation}
\partial_dU_l(0;d_l)=0,\qquad k_l\equiv\partial_d^2U_l(0;d_l)=k_{\rm eq}\,e^{-\beta_{\rm eq}(d_l-d_{\rm eq})}.\label{eq:family}
\end{equation}
The slope $\beta_{\rm eq}$ of the equilibrium family is a material parameter. It is not the load slope $\beta_{\rm load}=-\partial_d\ln\partial_d^2U(0;d)$ of a fixed potential, which describes uniform compression. For the model potential $\beta_{\rm load}=3.045$\,\AA$^{-1}$, while shifting the Morse radius, a rigid displacement of the potential, gives $\beta_{\rm eq}=-0.25$\,\AA$^{-1}$: the stiffness hardly changes. A registry scan measures $\beta_{\rm eq}$ directly, from the pairs $(d_{\rm eq}(\bm\tau),k(\bm\tau))$ (Sec.~\ref{sec:validation}). We run five families, $\beta_{\rm eq}=\beta_{\rm load}$ (L), $3\beta_{\rm load}/4$ (Q$_3$), $\beta_{\rm load}/2$ (H), $\beta_{\rm load}/4$ (Q$_1$) and 0 (R). All amplitudes stay positive over the simulated range of spacings.

\subsection{Three kinds of structural disorder}\label{sec:kinds}
\begin{enumerate}
\item \textbf{Rotation is disorder that vanishes on the axis.} In the lab gauge the on-site block and the face amplitudes are twist independent at $\bp=0$ (for $n\ge2$) and twist dependent at finite $p$. Part of that dependence is a rotation of the multiplet frame, which a chain can undo; the rest scatters, with the index $\kappa$ of Sec.~\ref{sec:index}.
\item \textbf{Spacing is bond disorder}, with sensitivity $\beta_e=-\partial_d\ln|\bar T|$ for electrons and $\beta_{\rm eq}$ for phonons. For phonons it enters the bond and the sum-rule diagonal together, which protects $\bp\to0$, $\omega\to0$.
\item \textbf{Registry is a gauge for twisted interfaces and bond disorder for aligned ones.} Within \ref{as:cont} registry phases are removed by a gauge transformation. An aligned interface keeps all Fourier components, $t(\bm\tau)=\sum_{\mathbf G}c_{\mathbf G}e^{i\mathbf G\cdot\bm\tau}$ with $c_0=\bar T$; stacking faults of aligned stacks are bond disorder with $\sigma_{\rm reg}^2={\rm Var}_{\bm\tau}|t|/\langle|t|\rangle^2$. For phonons the aligned crystal also has a finite shear coupling that twisted interfaces lose entirely (Sec.~\ref{sec:registry}).
\end{enumerate}

\subsection{Notation}\label{sec:notation}
The rotational disorder of channel $\bp$ has variance $\sigma(p)^2={\rm Var}_\theta X_\theta(\bp)$ (Frobenius-normalized for blocks); $\sigma=v_\sigma p^\kappa$ and $\Veff=\sqrt2\,\sigma=v_Vp^\kappa$. The Lyapunov exponent $\gamma$ is defined by $\langle\ln\mathcal T\rangle=-2N\gamma$. The single-chain function $\Phi_1(s)$ is the disorder-averaged transmission of a weakly disordered chain at one energy as a function of $s=2N\gamma$; it is computed once and has no model parameter. For channels with closed (evanescent) components, $\gamma$ always means the exponent of the effective chain after those components are eliminated.

%% file: sec_index.tex
\section{The index from the matrix chain}\label{sec:index}

The thickness law of a band-edge multiplet is fixed by two orders: the order $r$ at which the interface coupling vanishes on the axis, and the order $\kappa$ at which the stack backscatters, $\gamma\propto p^{2(\kappa-r)}$. Neither is a single-bond quantity. An interface couples the top face of one layer to the bottom face of the next, and single-bond singular values do not see how the directions (polarizations) of a matrix chain rotate from layer to layer: an explicit centrosymmetric C$_3$ Kramers pair, whose single-bond spreads vary at order $p^5$, has $\kappa=3$ (\texttt{c3\_face\_check.py}; Sec.~\ref{sec:matrixval}). We therefore define $\kappa$ by the backscattering of the full matrix chain.

\subsection{Face-resolved chain and definition of the index}\label{sec:facechain}
A layer carries a band-edge multiplet of $d$ states with $J_z$ content $\{J\}$; its top and bottom faces have amplitudes $G^{\rm t}(\mathbf q)$, $G^{\rm b}(\mathbf q)$ on the face channels (spin $\times$ interface profile; Proposition~\ref{prop:coupling}), and its on-site block is $X(\mathbf q)$. Symmetry fixes their form: C$_n$ covariance $G(R\mathbf q)=U_s(t)G(\mathbf q)D(t)^\dagger$, $X(R\mathbf q)=DXD^\dagger$ with $D={\rm diag}(e^{-iJt})$; time reversal; and the relation between the faces,
\begin{center}\small
\begin{tabular}{ll}
centrosymmetric layer (inversion) & $G^{\rm b}(\mathbf q)=G^{\rm t}(-\mathbf q)$, $X(\mathbf q)=X(-\mathbf q)$\\
horizontal mirror $\sigma_h$ & $G^{\rm b}(\mathbf q)=U_s(\pi)G^{\rm t}(\mathbf q)D(\pi)^\dagger$, $[X,D(\pi)]=0$\\
polar layer & $G^{\rm b}$ independent of $G^{\rm t}$\\
identical faces (test case) & $G^{\rm b}=G^{\rm t}$
\end{tabular}
\end{center}
In the lab gauge, $G^{\rm lab}_l(\bp)=U_s(\theta_l)G(R_{\theta_l}^{-1}\bp)D(\theta_l)^\dagger$ and $X^{\rm lab}_l=D(\theta_l)X(R_{\theta_l}^{-1}\bp)D(\theta_l)^\dagger$, the chain of channel $\bp$ has bonds $T_l=t\,G^{\rm t\,lab\dagger}_lG^{\rm b\,lab}_{l+1}$ and on-site blocks $X^{\rm lab}_l$. Let $(\bar T,\bar X)$ be the twist averages and $(\delta T_l,\delta X_l)$ the fluctuations. For an open forward channel $a$ (wavenumber $k_a$, velocity $v_a$, polarization $u_a$) and backward channel $b$ of the averaged chain at energy $E$, the Born backscattering amplitude per layer is
\begin{equation}
m_l=u_b^\dagger\big[\delta X_l+e^{ik_a}\delta T_l+e^{-ik_b}\delta T_l^\dagger\big]u_a,\qquad
\gamma_B=\frac12\Big\langle\sum_b\frac{C_0+2\,{\rm Re}\,C_1e^{i(k_a-k_b)}}{|v_av_b|}\Big\rangle_a,\label{eq:matrixborn}
\end{equation}
with $C_j=\langle m_l^*m_{l+j}\rangle$ ($m_l$ depends on $\theta_l,\theta_{l+1}$ only, so $C_{j\ge2}=0$ for i.i.d.\ twists). We define $r$ by $|\bar T|\propto p^r$ and $\kappa$ by $\gamma_B\propto p^{2(\kappa-r)}$. Degenerate channels are treated in a basis that diagonalizes the velocity in the degenerate subspace. Because the Born expansion is around the averaged chain, it is accurate when the twist-dependent part of the blocks is small; it fails for an $O(1)$ frame rotation at $p=0$, which occurs only for C$_1$ Kramers pairs (below).

\paragraph{Why single bonds are not enough.} For square invertible faces the substitution $\phi_l=G^{\rm b}_l\psi_l$ gives the face chain of Eq.~\eqref{eq:facechain}: on-site blocks $Y_l=(G^{\rm b}_l)^{-\dagger}(E-X_l)(G^{\rm b}_l)^{-1}$ and bonds $tK_l^\dagger$, with $K_l=G^{\rm t}_l(G^{\rm b}_l)^{-1}$ the transfer from the bottom to the top face of a layer. Both are lab-frame matrices on the face channels; the multiplet basis has dropped out. For identical faces this is the inverse Gram matrix, $Y_l=(E-\varepsilon)(G_lG_l^\dagger)^{-1}$. The disorder that scatters is the twist dependence of $K_l$ and $Y_l$ that is not a rotation of the frame; singular values of $T_l$ miss the rotation of the eigenbasis of $Y_l$ relative to $K_l$. In the explicit C$_3$ pair $G_lG_l^\dagger$ has a random $p^3\sigma_z$ part, while the bond singular values of the same-face chain vary only at $p^5$.

\subsection{Removable and non-removable twist dependence}\label{sec:parity}
Three facts, proved in App.~\ref{app:matrix}, decide which twist dependence scatters at Born order.
\begin{enumerate}
\item \emph{Harmonic selection.} In the lab gauge the entry $(s_z,J_z)$ of a face amplitude carries twist phases $e^{i(J_z-s_z-m)\theta}$ from its in-plane harmonics $m\equiv J_z-s_z\pmod n$, $O(p^{|m|})$. For a $J_z=\pm\frac12$ Kramers pair ($n\ge2$) the twist-dependent part starts at $|m|=n$ in spin-conserving entries and at $|m|=n-1$ in spin-flip entries; the twist-independent spin-flip harmonic $m=\pm1$ makes the averaged chain non-scalar at order $p$.
\item \emph{Time-reversal parity at the bond.} The leading twist dependence of a Kramers bond is its spin-flip harmonic, of in-plane order $m=n-1$. For any number of interface profiles and any real profile kernel it has the form $\delta T_l=-\Omega_{{\rm t},l}\bar T+\bar T\Omega_{{\rm b},l+1}$, with $\Omega_{{\rm t},l}$ set by the twist of layer $l$ and $\Omega_{{\rm b},l+1}$ by that of layer $l+1$. Time reversal makes $\Omega$ anti-Hermitian for even $m$ (a rotation of the multiplet frame) and Hermitian for odd $m$ (a change of the coupling strengths) (Lemmas~\ref{lem:faces} and \ref{lem:bond}). Spin-conserving harmonics have a Hermitian part at either parity. For spinless states even harmonics change the modulus of the scalar bond and odd ones its phase.
\item \emph{Frame rotations do not backscatter at Born order.} When the averaged chain is a scalar times a unitary, the first-order Born amplitude of such a rotation vanishes identically between every forward and backward channel (Lemma~\ref{lem:frame}); the rotation backscatters only through the non-scalar $O(p^a)$ part of the averaged chain, at order $p^{m+a}$. A rotation common to both faces of a layer is also removed exactly, at all orders, by a local gauge transformation. When the faces rotate differently, the gauge that removes the rotations accumulates along the chain; we make no all-order statement for that case and compare the Born orders with exact exponents (Sec.~\ref{sec:matrixval}).
\end{enumerate}
\begin{proposition}[protected band edges, Born order]\label{prop:matrix}
For an A-type spinless state or a $J_z=\pm\frac12$ Kramers pair of a C$_n$ layer with $n\ge2$ ($r=0$), with $p_z$/$s$-type face channels on any number of interface profiles, any face relation and generic coefficients, the Born backscattering rate has the index
\begin{equation}
\kappa_{\rm A}=\begin{cases}n & n\ \text{even}\\ 2n & n\ \text{odd}\end{cases},\qquad
\kappa_{\rm K}=\begin{cases}n-1 & n\ \text{even}\\ n & n\ \text{odd}\end{cases}.\label{eq:kappamatrix}
\end{equation}
For C$_1$ layers the same construction gives $\kappa_{\rm A}=2$ and $\kappa_{\rm K}=1$ when all interface profiles of a face share one spinor frame at $p=0$, in particular for a single profile; with several profiles a C$_1$ Kramers pair is in general not protected (Sec.~\ref{sec:multiprofile}).
\end{proposition}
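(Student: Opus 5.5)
The proof works at Born order on the matrix chain of Eq.~\eqref{eq:matrixborn}. The aim is to find the lowest power of $p$ at which the per-layer amplitude $m_l$ has non-zero variance; then $\gamma_B\propto p^{2\kappa}$, since $r=0$. The first step is harmonic bookkeeping in the lab gauge, using C$_n$ covariance $G(R\mathbf q)=U_s(t)G(\mathbf q)D(t)^\dagger$. An entry $(s_z,J_z)$ of $G^{\rm lab}_l$ is a sum of terms $p^{|m|}e^{i(J_z-s_z-m)\theta_l}$ with $m\equiv J_z-s_z \pmod n$. The twist-independent part is the $m=J_z-s_z$ term.

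For A-type spinless states this gives a scalar face $g_0+p^n(a\,e^{-in\theta}+a'e^{in\theta})+\dots$. For a Kramers pair, spin-conserving entries first depend on twist at $|m|=n$, and spin-flip entries at $m=n-1$, with coefficients of order $p^{n-1}$. I then impose time reversal, $G_{+-}(\mathbf q)=-G_{-+}(-\mathbf q)^*$ for spinors and $G(\mathbf q)=G(-\mathbf q)^*$ for spinless states. This yields the form of Eq.~\eqref{eq:dG} for each profile $\alpha$: the two spin-flip entries are related by conjugation with a sign $(-1)^n$. Contracting with a real symmetric kernel $\tau_{\alpha\beta}$ gives $\bar T={\rm diag}(t_0,t_0^*)+O(p)$ and the first-order fluctuation
\[
\delta T_l=-\Omega_{{\rm t},l}\bar T+\bar T\,\Omega_{{\rm b},l+1},
\]
where the $\Omega$ are spin-flip generators. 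The sign $(-1)^n$ then becomes the statement that $\Omega$ is anti-Hermitian for odd $n$ (even $m=n-1$) and Hermitian for even $n$. I would check that the multi-profile sum does not spoil this. The point is that each profile contributes the same structure because $\tau$ is real, so the $\Omega$ are real-linear combinations of matrices with a common Hermiticity type. This is where "any number of profiles" enters.

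Next, I would take the averaged chain in the form $\bar T=cU$, $\bar X=\varepsilon\mathbb 1$ and diagonalize $U$. Forward and backward open channels at the same energy then satisfy $k_b+\phi_b=-(k_a+\phi_a)$, and I insert $\delta T_l$ into $m_l$. On-site fluctuations $\delta X_l$ are $O(p^n)$ or higher by the same harmonic count, so they can be dropped at order $p^{n-1}$. The two face terms then collect into brackets $e^{-i(k_b+\phi_b)}\mp e^{i(k_a+\phi_a)}$, as in Eq.~\eqref{eq:ml}. With anti-Hermitian $\Omega$ the brackets vanish identically for every twist. With Hermitian $\Omega$ they add, giving $-2c\,e^{i(k_a+\phi_a)}[(\Omega_{\rm t})_{ba}-e^{i(\phi_b-\phi_a)}(\Omega_{\rm b})_{ba}]$, which for generic coefficients has non-zero variance over independent twists.

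So for even $n$ one gets $\kappa_{\rm K}=n-1$. For odd $n$ one must show that scattering at order $p^n$ is generically non-zero. The order-$p^n$ sources are the spin-conserving $|m|=n$ harmonics, which have a Hermitian part, the cross term of the $O(p^{n-1})$ rotation with the $O(p)$ non-scalar part of $\bar T$, and the second-order products of fluctuations. Their sum should be generically non-vanishing, giving $\kappa_{\rm K}=n$.

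For spinless A-type states the same argument is run on a scalar chain. The order-$n$ harmonic of the bond is real for even $n$, so it changes the modulus and scatters, giving $\kappa=n$. For odd $n$ it is purely imaginary relative to $\bar T$, so it is a phase. A phase of a scalar bond is removed exactly by a local gauge. The first modulus change then comes from squaring the harmonic, at order $p^{2n}$, which gives $\kappa=2n$; the on-site fluctuation enters at the same order.

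For the C$_1$ clause, I would note that $U_s(\theta)$ enters at $p=0$. If every profile shares one spinor frame, this is a pure frame rotation and the computation above goes through with $n=1$. Otherwise no such reduction exists, as shown in Sec.~\ref{sec:multiprofile}.

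The main obstacle is genericity at odd $n$ for Kramers pairs. Several $O(p^n)$ contributions must be shown not to cancel for generic coefficients under every face relation. I would first isolate the spin-conserving $|m|=n$ harmonic and show it has a Hermitian part whose coefficient is independent of the others. One explicit non-degenerate example for each face relation then suffices, because the Born rate is a real-analytic function of the coefficients and so is non-zero on a dense open set once it is non-zero at one point.
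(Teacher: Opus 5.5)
Your outline follows the paper's route: lab-gauge harmonic bookkeeping, the time-reversal form of the order-$(n-1)$ spin-flip harmonic, the decomposition $\delta T_l=-\Omega_{{\rm t},l}\bar T+\bar T\,\Omega_{{\rm b},l+1}$ with Hermiticity fixed by $(-1)^n$, the cancellation for anti-Hermitian generators, and the exact phase gauge of a scalar chain. Two steps fail as written, however.

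\emph{The on-site block.} You drop $\delta X_l$ at order $p^{n-1}$ ``by the same harmonic count'', but that count says the opposite. An on-site spin-flip entry has $J_z-J_z'=\pm1$, exactly like a spin-flip face entry, so its first twist-dependent harmonic has $|m|=n-1$. For odd $n$ this is the one term that would survive the frame-rotation cancellation and give $\kappa_{\rm K}=n-1$, so the step is load-bearing. What removes it is time reversal together with Hermiticity. Write $X=x_0\mathbb 1+\vec x\cdot\vec\sigma$ in the Kramers basis; then $\vec x(\mathbf q)=-\vec x(-\mathbf q)$, so $X_{+-}$ is odd in $\mathbf q$ and contains only odd $|m|$. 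Since $n-1$ is even for odd $n$, on-site twist dependence starts only at $p^n$; this is part (iii) of the paper's face lemma. For even $n$ your claim is false: a twist-dependent $O(p^{n-1})$ on-site term exists. It does not change $\kappa_{\rm K}=n-1$. It does enter the Born amplitude at the same order as the Hermitian bond generators, so your genericity argument must include it.

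\emph{The C$_1$ Kramers clause.} ``The computation above goes through with $n=1$'' does not hold for C$_1$ Kramers pairs. For $n=1$ every entry, spin flip included, has an $m=0$ component. So the $p=0$ face blocks are not diagonal, and the ``order-$(n-1)$'' rotation is $O(1)$. Your cancellation step is first order in small generators around the twist-averaged chain. With $O(1)$ twist-dependent unitary factors that expansion is invalid, and first-order Born theory turns what is a pure gauge into a spurious $p$-independent rate. You must remove the rotation exactly before expanding. With one spinor frame per face, each profile's $p=0$ block is a real scalar times that SU(2) frame, so every $p=0$ bond is a scalar times a unitary. A layer-by-layer gauge (recursive, if the two faces carry different frames) moves this unitary onto the Kramers-degenerate on-site block $\varepsilon\mathbb 1$, where it acts trivially. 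Only then is the first coupling change the Hermitian $|m|=1$ spin-flip harmonic, giving $\kappa_{\rm K}=1$. Because the recursive gauge accumulates and acts on the $O(p)$ non-scalar parts, the paper confirms this entry with transfer matrices rather than a first-order Born cancellation. The spinless C$_1$ case is fine as you have it.

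Two smaller points. The kernel need not be symmetric (the Bardeen kernel is antisymmetric between value and slope profiles), and your argument uses only its reality. Your real-analyticity route to genericity at odd $n$ is more explicit than the paper, which simply asserts that time reversal does not force the spin-conserving $|m|=n$ harmonics into anti-Hermitian form.
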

\noindent\emph{Conditions.} The Proposition is a statement about the Born order of the chain. It assumes (i)~a nonzero averaged bond on the axis, $\bar T(0)\neq0$ ($r=0$; the local frame lemma divides by its scale $t_0$); (ii)~an open energy that is not exceptional: away from the edges of the averaged chain's bands, where the velocity denominators of the rate vanish, from crossings of its channels, from energies where the number of open channels changes, and from special cancellations such as chiral or band-centre anomalies of off-diagonal disorder; (iii)~generic coefficients. At exceptional energies the power counting of the rate can change; these are conditions of the statement, not counterexamples. Exact transfer-matrix exponents (Sec.~\ref{sec:matrixval}) follow the Born rates over the accessible, partly pre-asymptotic windows; they support the index beyond Born order but do not prove it for every distribution, energy and channel.

The Kramers values follow from the two parity cases: for even $n$ the leading spin-flip harmonic ($|m|=n-1$, odd) is Hermitian and scatters at its own order; for odd $n$ it is a frame rotation, and scattering starts at order $n$, from the spin-conserving harmonic and from the rotation acting on the $O(p)$ spin-flip part of the averaged chain. The explicit C$_3$ pair is the case $n=3$: its $p^2$ term is exactly the anti-Hermitian frame rotation, and its $p^3[\cos3\theta_{l+1}-\cos3\theta_l]\sigma_z$ is the Hermitian remainder. The proof (App.~\ref{app:matrix}) works on the full bond contraction $G_{\rm t}^\dagger\tau G_{\rm b}$ and needs neither square nor invertible face amplitudes: for $n\ge2$ every profile's $p=0$ amplitude is diagonal in $s_z=J_z$, time reversal relates the two spin-flip entries of every profile in the same way, and a profile sum with a real kernel preserves both relations. Random models with two and three profiles and generic positive kernels confirm the bond structure: the part of $\Omega_{\rm t}$ of the wrong parity vanishes linearly in $p$ (relative size $1$--$4\times10^{-3}$ at $p=0.003$ for $n=2$, 3, 4, 6), and chains whose face fluctuations are restricted to the part outside the range of the averaged amplitude, which carries 8--250\% of the bond fluctuation and which a left inverse of that amplitude does not see, still give Eq.~\eqref{eq:kappamatrix} (\texttt{models/\allowbreak checks10.py}).

Three consequences. (i)~The Kramers indices are 1 (single profile), 1, 3, 3, 5 for C$_1$, C$_2$, C$_3$, C$_4$, C$_6$ \emph{whether or not the layer is centrosymmetric}: the face relation does not enter, and the hidden spin texture of centrosymmetric layers does not lower the index. (ii)~The relevant material object is not a projected spin splitting, which vanishes for a PT-degenerate pair~\cite{ZhangZunger2014,LiAppelbaum2018}, but the gauge-invariant face-projected density matrix $\Pi(\mathbf q)=\Phi\Phi^\dagger$, $\Phi=[G^{\rm t};G^{\rm b}]$, summed over the degenerate pair, together with $\Phi X\Phi^\dagger$; the kit's matrix tool checks their C$_n$ covariance and builds the chain from $\Phi$ and $X$ (Sec.~\ref{sec:validation}). (iii)~Parity, not the size of the admixture, decides whether a harmonic scatters: in C$_3$ layers a $p^2$ spin-flip admixture of any size is invisible to transport at order $p^2$.

\paragraph{C$_1$ Kramers pairs.} The spinor of the in-plane average is not tied to $z$, so each twisted copy is rotated by an $O(1)$ frame rotation at $p=0$ (Proposition~\ref{prop:coupling}, item 2). The rotation common to both faces can be removed layer by layer (the kit aligns each copy to the twist average), but a relative rotation of the two faces' frames is a gauge that first-order Born theory around the averaged chain turns into a spurious $p$-independent floor. For these pairs the index is taken from transfer matrices: with one interface profile $\gamma\propto p^{2}$, $\kappa=1$ (Sec.~\ref{sec:matrixval}).

\subsection{Several interface profiles}\label{sec:multiprofile}
A face couples through a few interface profiles (Proposition~\ref{prop:coupling}). For C$_1$ Kramers pairs this matters. Time reversal makes the $p=0$ amplitude of each profile quaternion real, i.e.\ a scalar times an SU(2) frame, but different profiles may have different frames, and a sum of different rotations is not a rotation. Example: two profiles with $G_1=\sqrt{0.4}\,\mathbb 1$ and $G_2=\sqrt{0.1}\,i\sigma_x$ on both faces and a unit profile kernel give, in a suitable basis, the $p=0$ bond $T_l=0.4\,\mathbb 1+0.1\,e^{i(\theta_{l+1}-\theta_l)\sigma_z}$, whose singular values $\sqrt{0.17+0.08\cos\Delta\theta}$ vary between 0.3 and 0.5 with the relative twist; no unitary change of basis removes this. The chain is disordered on the axis: with zero on-site energy its smallest Lyapunov exponent at $E=0.25$ is $0.0178\pm0.0001$ ($10^5$ layers, 12 chains; \texttt{models/\allowbreak c1\_profiles.py}). For $n\ge2$ the argument after Proposition~\ref{prop:matrix} applies instead.

We tested both statements with random multiplets carrying two interface profiles of in-plane symmetry $l=0$ per spin and a random positive profile kernel, otherwise built as in Sec.~\ref{sec:matrixtable} (\texttt{multiprofile8.py}):
\begin{itemize}
\item $n\ge2$: Kramers $\pm\frac12$ pairs of C$_2$, C$_3$, C$_4$, C$_6$ layers and spinless A states of C$_2$, C$_3$ layers, inversion-related and polar faces, three seeds (36 models). At the common-window energy all 36 give Eq.~\eqref{eq:kappamatrix} within 0.04 (median deviation 0.0004). Of the 60 fits at an energy inside one band, 54 agree within 0.15; four are larger (1.90, 3.53, 3.73, 5.44, all polar models of one seed, pre-asymptotic), and in two polar models of another seed the edge of the second band crosses the fit energy inside the fitted range of $p$, which changes the set of open channels and invalidates the fit.
\item C$_1$ Kramers pairs, exact exponents ($10^5$ layers, 8 chains): with one profile the exponent vanishes on the axis to within the estimator floor ($1.9\times10^{-6}$ at $p=0$) and grows as $p^{1.97}$ ($\kappa=1$); with two profiles it stays finite on the axis, $2.4\times10^{-4}$ for inversion-related faces and $4.5\times10^{-2}$ for polar faces, where the singular values of the $p=0$ bond vary by 5.6\% and 72.5\% over the relative twist.
\end{itemize}
The C$_1$ entries of Proposition~\ref{prop:matrix} and Table~\ref{tab:kappa} therefore hold only for faces whose profiles share one spinor frame; the entries for $n\ge2$ hold with any number of profiles.

\subsection{The remaining classes}\label{sec:matrixtable}
Classes with $r>0$ and spinless doublets are not covered by Proposition~\ref{prop:matrix}. Table~\ref{tab:kappa} gives their generic orders from random C$_n$-, time-reversal- and face-symmetric models with one interface profile (\texttt{matrix8.py}, \texttt{matrix8\_table.py}): top and bottom faces, an interior block that keeps the multiplet normalized, L\"owdin normalization at every $\mathbf q$, a random on-site block, five seeds, $p=0.004$--0.04 (in units of the model's inverse length), the common-window energy and one energy inside each band. The $r=0$ rows reproduce Eq.~\eqref{eq:kappamatrix} for every face relation (median over seeds and energies within 0.01; \texttt{index\_rule8.py}, which also checks the parity lemma harmonic by harmonic for C$_2$--C$_6$).

\begin{table}[!tbp]\centering\small
\caption{Matrix-chain index $\kappa$ (coupling and on-site disorder together) and coupling order $r$, for $p_z$/$s$-type face channels. Entries: centrosymmetric / horizontal mirror / polar layer when they differ. ``np'': not protected ($\kappa\le r$; $\gamma$ does not vanish on the axis). Rows with $r=0$ are Proposition~\ref{prop:matrix}; the others are random-model orders, the smallest fitted value over five seeds and all energies (the generic order; full ranges in \texttt{matrix8\_table.log}, where larger values at some seeds or energies reach, e.g., 6.6 for Kramers $\pm\frac52$ in centrosymmetric C$_6$ layers).}\label{tab:kappa}
\begin{tabular}{L{4.3cm}ccccc}
\toprule
multiplet (class), $r$ & C$_1$ & C$_2$ & C$_3$ & C$_4$ & C$_6$\\
\midrule
spinless A ($J_z\equiv0$), $r=0$ & 2 & 2 & 6 & 4 & 6\\
spinless B ($J_z\equiv n/2$), $r=n$ & -- & np & -- & np & np\\
Kramers $J_z\equiv\pm\frac12$, $r=0$ & 1$^a$ & 1 & 3 & 3 & 5\\
Kramers $\pm\frac32$, $r=2$ & -- & -- & np & 3 / 3 / np$^b$ & 5 / 5 / 3\\
Kramers $\pm\frac52$, $r=4$ & -- & -- & -- & -- & 5 / 5 / np$^b$\\
spinless doublet $L_z=\pm1$, $r=2^c$ & -- & -- & 4 & np & 4\\
spinless doublet $L_z=\pm2$, $r=4^c$ & -- & -- & -- & -- & np\\
flexural phonon, $r=0$ & 4 & 4 & 6 & 4 & 6\\
in-plane acoustic phonon, $r=2$ & np & np & 4 [3$^d$] & np & 4\\
\bottomrule
\end{tabular}
\\[2pt]\raggedright\footnotesize $^a$Transfer matrices (first-order Born is invalid, see text); for one interface profile or profiles sharing one spinor frame (Sec.~\ref{sec:multiprofile}). $^b$On-site Kramers mixing at $O(p)$ in polar layers ($\kappa=1<r$). $^c$With $p_z$-type faces only one combination of the doublet couples at order $r$ ($\bar T$ has rank 1); the index refers to that branch away from the energy of the other, near which scattering is resonantly enhanced. A polar C$_3$ doublet has $\kappa_{\rm on}=3$, but its $p^3\sigma_z$ term acts only between the two branches, so the active branch has 4. $^d$Gyrotropic L--T term of non-centrosymmetric C$_3$ layers, only with an open T channel (Sec.~\ref{sec:phonon}). Phonons: interlayer forces act on lab-frame displacements, so the on-site rule gives the index.
\end{table}

Readings. (i)~The C$_3$ Kramers $\pm\frac32$ class is not protected: its spin-down component has a twist-dependent harmonic of order $p$ against a mean of order $p^2$, so the bond fluctuates at its own order ($\kappa=r=2$). (ii)~Spinless doublets with $p_z$-type faces have a rank-one averaged coupling: the doublet splits into an active and a dark branch, and the active branch has $\kappa=4$ in C$_3$ and C$_6$ layers with or without inversion. (iii)~Polar layers lose protection of $J_z=\pm\frac32$ (C$_4$) and $\pm\frac52$ (C$_6$) pairs through an $O(p)$ on-site Kramers mixing, as the on-site rule (Sec.~\ref{sec:onsite}) predicts.

\subsection{Worked atomistic examples}\label{sec:tbexamples}
To test the rule on material-compatible layers we built Slater--Koster monolayers with spin--orbit coupling (\texttt{tb8.py}): a trigonal-prismatic MX$_2$ layer (D$_{3h}$, metal $d$ and chalcogen $p$ orbitals, MoS$_2$-type parameters and spin--orbit couplings of 0.075 and 0.052\,eV), the same layer in octahedral (centrosymmetric D$_{3d}$) coordination, and a phosphorene-type puckered layer (D$_{2h}$) as a $p_z$ model with Rudenko--Katsnelson-type hoppings and as an $sp^3$ model with spin--orbit coupling. For each, the $\Gamma$-derived multiplet is projected at every $\mathbf q$ of a polar grid on the outward $p_z$ lobes of the outer atoms (the forward, $\mathbf G=0$, component: the sum of the $p_z$ coefficients of a face), and the kit's matrix tool computes $r$ and $\kappa$ (\texttt{vk\_matrix\_index.py}). In these models the basis orbitals are atomic functions of fixed sign and shape, so the forward amplitude of a face is the sum of its outward-lobe coefficients. For first-principles input this is not so: a Wannier Hamiltonian does not fix the sign or shape of its orbitals, and a constant re-phasing of the basis leaves every band unchanged while swapping A- and B-type assignments of coefficient sums. The kit therefore takes the forward amplitude either from physical forward weights of the Wannier functions (\texttt{vk\_face\_from\_tb.py}; without them its output is labelled conditional) or directly from plane-wave wavefunctions as the planar Fourier component on a plane in the gap (\texttt{vk\_face\_from\_pw.py}, Sec.~\ref{sec:validation}), which involves no basis at all. Results (\texttt{tb8.log}):
\begin{center}\small
\begin{tabular}{llccl}
\toprule
model & multiplet at $\Gamma$ & $r$ & $\kappa$ & note\\
\midrule
MX$_2$, D$_{3h}$ (MoS$_2$-type) & $d_{z^2}$ Kramers pair & 0.0 & 2.94$^a$ & Prop.~\ref{prop:matrix}: 3\\
MX$_2$, D$_{3d}$ (centrosymmetric) & $d_{z^2}$ Kramers pair & 0.0 & 3.00 & Prop.~\ref{prop:matrix}: 3\\
same, without spin--orbit coupling & $d_{z^2}$ (A-type) & $-0.18$ / 0.00 & 5.7$^b$ / 5.97 & spinless: 6\\
phosphorene $p_z$ model & conduction edge (A) & 0.0 & 2.00 & protected\\
phosphorene $p_z$ model & valence edge (B) & 2.0 & 2.00 & not protected\\
phosphorene $sp^3$, with SOC & $p_z$-like A pair$^c$ & 0.0 & 1.1--2.0$^c$ & 2 at 0.02--0.1\,\AA$^{-1}$\\
phosphorene $sp^3$, with SOC & valence-edge Kramers pair & --$^d$ & --$^d$ & flagged: crossover\\
\bottomrule
\end{tabular}
\\[1pt]{\footnotesize $^a$Fit over $p=0.004$--0.04\,\AA$^{-1}$; the local index is 2.99 at 0.002\,\AA$^{-1}$ and falls to 2.6 at 0.1\,\AA$^{-1}$. $^b$D$_{3h}$ / D$_{3d}$; pre-asymptotic at $p\le0.1$\,\AA$^{-1}$ for D$_{3h}$ (local coupling order $-0.07$ to $-0.32$; the kit reports it as conditional). $^c$The analogue of the phosphorene conduction edge, which is $p_z$-derived~\cite{RudenkoKatsnelson2014}. With these generic $sp^3$ parameters a $p_x$-like pair lies 0.33\,eV below it at $\Gamma$; that pair has a spin--orbit-induced forward amplitude of 0.03 and is not used. Local index of the $p_z$-like pair 1.13, 1.26, 1.47, 1.69, 1.84, 1.93, 1.96, 1.96, 1.91 at $p=0.003$, 0.004, 0.006, 0.010, 0.016, 0.025, 0.040, 0.063, 0.10\,\AA$^{-1}$: toward the Kramers value 1 of Proposition~\ref{prop:matrix} as $p\to0$ (it crosses 1.5 near $p^*\approx0.007$\,\AA$^{-1}$), near the spinless value 2 between 0.02 and 0.1\,\AA$^{-1}$. $^d$A forward amplitude of 0.022 from spin--orbit mixing against a $p^2$ term: local orders drift ($r$ from $-0.07$ to $-0.73$ over 0.004--0.016\,\AA$^{-1}$); the kit prints the fitted numbers but labels them conditional, with the flags ``coupling order not a power law'' and ``index drifts''.}
\end{center}
Spin--orbit coupling thus sets the index of the MX$_2$ $\Gamma$ pair (3 instead of 6), through the face amplitudes as much as through the on-site term. In phosphorene, with in-plane $pp\pi$ hopping $t_1<0$ and a dominant inter-sublayer hopping $|t_2|>2|t_1|$, the four $\Gamma$ states are $E=\pm(2t_1\pm t_2)$ in the two-hopping approximation (further shells shift them to $-6.51$, $-1.65$, $+0.39$ and $+6.13$\,eV in the model without changing their symmetry); the valence edge is the combination in which the two atoms of a face have opposite $p_z$ amplitudes, odd under the C$_2$ rotation about the stacking axis. Its forward face amplitude vanishes, so a twisted interface does not couple $\Gamma$ holes at $p=0$ ($r=2$), and its twist disorder is of the same order ($\kappa=r$). The conduction edge is even and protected ($r=0$, $\kappa=2$). In the two-hopping approximation the ordering, and hence the conclusion, depends only on $t_1<0$ and $|t_2|>2|t_1|$; the parities agree with first-principles band representations of phosphorene~\cite{FariaJunior2019} and with our plane-wave face data, in which the valence edge's forward amplitude is below $10^{-10}$ and the conduction edge's is 0.080\,\AA$^{-3/2}$ (Sec.~\ref{sec:dft}).

Spin--orbit coupling mixes an $r=0$ component into the hole pair, of relative size $\sim\lambda_{\rm SO}/\Delta$. In the $sp^3$ model its forward face amplitude at $\Gamma$ is 0.022, against 1.225 for the $p_z$-like A-type pair, and the twist-averaged hole coupling $|\bar T|$ is $3\times10^{-4}$ of the electrons' for $p\lesssim0.01$\,\AA$^{-1}$ (\texttt{tb8.log}). The $p^2$ term of the B-type coupling reaches the admixed constant near $p_h\approx0.02$\,\AA$^{-1}$, where the two interfere ($|\bar T|$ has a minimum near 0.024\,\AA$^{-1}$); beyond about 0.04\,\AA$^{-1}$ the coupling grows as $p^2$, to $3\times10^{-3}$ of the electrons' at 0.08\,\AA$^{-1}$. Well inside the window the hole pair is a C$_2$ Kramers pair with a protected but $3\times10^{-4}$ times weaker coupling (local index 1.0--1.7 at 0.003--0.010\,\AA$^{-1}$); beyond $p_h$ its coupling is B-type and fluctuates at its own order, so these holes are localized within a few layers. Umklapp processes (Sec.~\ref{sec:umklapp}) add a random $p=0$ coupling of the holes that need not be small compared with this weak forward coupling. They can change the open channels, the effective bands and the interference that sets the window, and we do not know the sign of their net effect on the contrast; the $\Gamma$ splittings of the twisted bilayer (Sec.~\ref{sec:validation}) measure the combined coupling of the holes. In the models, random-twist phosphorene stacks therefore contrast electrons and holes: $\Gamma$-valley electrons cross with the axis law, while holes have no protected axis and decay algebraically only through a transparent energy (above a threshold coupling, Sec.~\ref{sec:thicknessdft}) or in the window $p\lesssim p_h$, where their coupling is more than three orders of magnitude weaker than the electrons'. The contrast is a statement about finite thicknesses and thermal carriers, not an asymptotic blocking: as $N\to\infty$ the holes of the window follow the Kramers law $N^{-1}$ with a small prefactor.

Spin--orbit coupling also makes the conduction edge a C$_2$ Kramers pair, whose index is 1 (Proposition~\ref{prop:matrix}); in the $sp^3$ model the local index falls toward it only below $p^*\approx0.007$\,\AA$^{-1}$ and is 1.84--1.96 between 0.016 and 0.1\,\AA$^{-1}$ (footnote $c$), which covers the thermal range (0.03--0.09\,\AA$^{-1}$ for the armchair and zigzag masses at 300\,K). The momenta that still transmit shrink as $p_N\propto N^{-1/4}$ under the spinless law, so the spinless law $N^{-1/2}$ holds until $p_N$ reaches $p^*$, a thickness $(p_{\rm th}/p^*)^4\approx4\times10^2$--$3\times10^4$ times the onset of the axis law in the model ($10^5$--$10^8$ with the first-principles $p^*$ of Sec.~\ref{sec:dft}); only beyond it does the Kramers law $N^{-1}$ take over. The magnitudes used here ($\lambda_{\rm SO}$, $p^*$, $p_h$, the hole coupling) come from generic tight-binding parameters with a unit coupling kernel, and they depend on that kernel: scaling it by 2 and 4 moves $p^*$ from 0.0068 to 0.0119 and 0.0194\,\AA$^{-1}$, while $r$ and the asymptotic $\kappa$ do not change (\texttt{models/\allowbreak checks10.py}). A monolayer plane-wave calculation identifies the parities and the face amplitudes; $p^*$, $p_h$ and the hole/electron coupling ratio need in addition the interface kernel in energy units (Sec.~\ref{sec:validation}), and an umklapp thickness needs umklapp matrix elements and partner energies (Sec.~\ref{sec:umklapp}).

\subsection{Validation}\label{sec:matrixval}
Four checks test the Born index against exact results (\texttt{validate8.py}, \texttt{validate8\_extra.py}, \texttt{validate8\_bc.py}, \texttt{validate8\_e.py}, \texttt{validate8\_parse.py}; Fig.~\ref{fig:matrixval}).

\emph{(A) Born rate against exact Lyapunov exponents.} For eleven representative classes and face relations we computed the Lyapunov spectrum of the same lab-gauge chain at the energy of the Born calculation from transfer matrices (QR re-orthogonalization, $N=2\times10^5$ layers, 12 chains), at three momenta where the Born rate lies between $3\times10^{-5}$ and $3\times10^{-3}$ (for the Kramers $\pm\frac32$ class, whose rate grows only as $p^2$, between $5\times10^{-5}$ and $2\times10^{-3}$ at smaller $p$; \texttt{validate8\_extra.py}). Below this window the finite-$N$ estimator floor dominates (measured on the twist-averaged chain, whose true exponent is zero: $\le1.1\times10^{-6}$); above it Born theory is not expected to hold.
\begin{center}\small
\begin{tabular}{llccccccc}
\toprule
class & faces & $p$ window & \multicolumn{2}{c}{local exponent} & $\gamma_{\min}/\gamma_B$ & $\gamma_{\max}/\gamma_B$ & $2(\kappa-r)$\\
 & & & Born & exact & & & asymptotic\\
\midrule
spinless A, C$_2$ & polar & 0.068--0.21 & 3.92 & 3.86 & 0.99--1.10 & -- & 4\\
spinless A, C$_3$ & inversion & 0.45--0.6 & 12.81 & 12.14 & 0.75--0.92 & -- & 12\\
spinless A, C$_4$ & polar & 0.26--0.41 & 7.98 & 7.67 & 0.99--1.20 & -- & 8\\
Kramers $\pm\frac12$, C$_2$ & inversion & 0.02--0.1 & 2.04 & 2.07 & 0.90--0.97 & 1.09--1.14 & 2\\
Kramers $\pm\frac12$, C$_2$ & polar & 0.02--0.052 & 1.99 & 2.06 & 0.46--0.51 & 1.51--1.52 & 2\\
Kramers $\pm\frac12$, C$_3$ & inversion & 0.19--0.34 & 6.74 & 6.65 & 0.86--0.90 & 0.99--1.04 & 6\\
Kramers $\pm\frac12$, C$_3$ & mirror $\sigma_h$ & 0.31--0.55 & 7.02 & 6.83 & 0.58--0.68 & 1.32--1.38 & 6\\
Kramers $\pm\frac12$, C$_3$ & polar & 0.16--0.31 & 5.93 & 5.68 & 0.46--0.59 & 1.45--1.46 & 6\\
Kramers $\pm\frac12$, C$_4$ & inversion & 0.28--0.45 & 8.17 & 8.23 & 0.81--0.85 & 1.14--1.22 & 6\\
Kramers $\pm\frac12$, C$_6$ & polar & 0.31--0.41 & 11.14 & 11.87 & 0.55--0.67 & 1.27--1.36 & 10\\
Kramers $\pm\frac32$, C$_4$ & inversion & 0.0032--0.019 & 2.00 & 2.11 & 0.81--0.98 & 1.03--1.15 & 2\\
\bottomrule
\end{tabular}
\end{center}
Readings. (i)~Over every window the local exponents of the Born rate and of the smallest exact exponent agree within 0.73 (median 0.11), so the Born rate gives the order. Several windows lie at momenta where the local exponent still exceeds its asymptotic value (last column); the asymptotic orders are those of the small-$p$ Born fits of Table~\ref{tab:kappa}. (ii)~The ratio of the smallest exact exponent to the channel-averaged Born rate is an $O(1)$, class-dependent factor: 0.75--1.20 for one open channel (spinless A), 0.81--0.98 for two open channels with inversion-related faces, and 0.46--0.68 for two open channels with other face relations (largest exponent 1.27--1.52 times the Born rate). With two open channels the two exponents split, by up to a factor 1.5 when the faces are related by inversion and by a factor 2--3 otherwise; the channel-averaged Born rate lies between them. The factor enters amplitudes, not exponents, which is why amplitudes of multichannel classes need the multichannel kernel (Sec.~\ref{sec:theorem}).

\emph{(B) C$_1$ Kramers pairs} with one interface profile (transfer matrices only, Sec.~\ref{sec:index}): for inversion-related, polar and identical faces, two seeds each, the smallest exponent grows as $p^{1.96}$--$p^{2.11}$ between $p=0.05$ and 0.2, i.e.\ $\kappa=0.98$--1.06, against 1 from Proposition~\ref{prop:matrix}. With two profiles the axis is not protected (Sec.~\ref{sec:multiprofile}).

\emph{(C) The explicit C$_3$ doublet.} For opposite faces the Born rate and the exact exponent agree with its closed form $p^6/(1-p^2)^2$: Born $+2\%$, $+4\%$, $+7\%$ and exact $-0.2\%$, $-4\%$, $-11\%$ at $p=0.25$, 0.35, 0.45. For identical faces the twist-averaged band narrows with $p$ (to half its width at $p=0.25$) and the fixed-fraction energy approaches its edges. The Born local exponent there is 5.6--5.9 for $p\le0.25$ and reaches 6 only as $p\to0$ ($\gamma_B/p^6$ falls from 0.37 at $p=0.02$ to 0.18 at $p=0.25$). At $p=0.25$ the two exact exponents are 0.35 and 1.03 times the Born rate, the same two-channel pattern as in (A); at $p=0.35$ and 0.45 the Born rate exceeds both exact exponents by up to a factor of 10, because the energy is then close to a band edge of the narrowed band. The index is 3 for both face relations; only the identical-face amplitude at $p\gtrsim0.3$ is outside first-order Born theory.

\emph{(D) The polar C$_3$ doublet} (spinless, $L_z=\pm1$, $r=2$; footnote $c$ of Table~\ref{tab:kappa}). Its averaged coupling has rank one, so the bond is singular and the doublet chain has no transfer matrix. We eliminated the doublet exactly at every layer: with the face scalars $t_l=g^{\rm t}_l\cdot\psi_l$, $b_l=g^{\rm b}_l\cdot\psi_l$ the chain becomes a one-channel recursion for $(b_l,t_{l-1})$ with $2\times2$ transfer matrices built from $g(E-X_l)^{-1}g^\dagger$ (\texttt{validate8\_e.py}). At the energy of the first band the exact exponent equals the Born rate within 4\% at $p=0.34$ and 0.53 but is only 0.41 of it at $p=0.24$ ($1.34\pm0.27\times10^{-5}$ against $3.2\times10^{-5}$). At the energy of the second band, which lies near the level of the uncoupled combination, the exact exponent exceeds the Born rate by a factor 3.3--4.9: the eliminated on-site term is resonant there and first-order Born theory does not apply. The local exponents over these windows (Born 5.5 and exact 6.6 in band 1, 5.2 and 5.6 in band 2) differ by more than in check (A) and are far from the asymptotic 4. For this class the table value is a Born order that the exact exponents neither confirm nor exclude at accessible rates; it is not used for any material statement.

\begin{figure}[!tbp]\centering
\includegraphics[width=\textwidth]{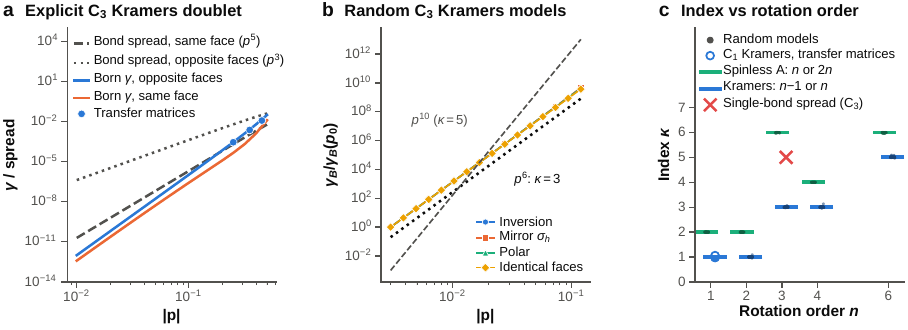}
\caption{The index from the matrix chain (letter Fig.~1). (a)~Explicit centrosymmetric C$_3$ Kramers doublet: twist spread of single-bond singular values for the same face ($p^5$) and for opposite faces ($p^3$); Born backscattering rate of the chain for opposite and identical faces (lines, $\propto p^6$ as $p\to0$); exact transfer-matrix exponents for opposite faces (circles); closed form $p^6/(1-p^2)^2$ (thin). (b)~Born rate of random C$_3$ Kramers models for four face relations, all $\propto p^6$ ($\kappa=3$). (c)~Index against rotation order: fitted values of random models (five seeds, all face relations, all energies; dots), Eq.~\eqref{eq:kappamatrix} (bars), C$_1$ Kramers pairs with one interface profile from transfer matrices (open circles), and the value suggested by single-bond spreads (cross).}\label{fig:matrix}
\end{figure}
\begin{figure}[!tbp]\centering
\includegraphics[width=\textwidth]{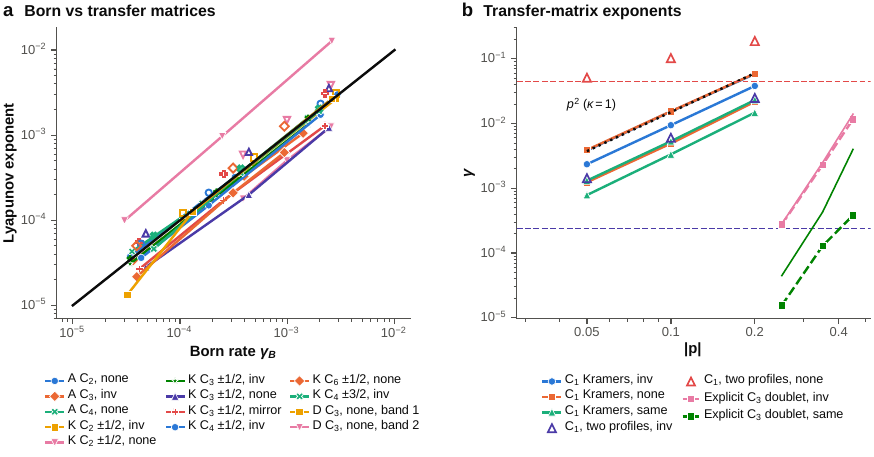}
\caption{Born theory against exact exponents. (a)~Smallest exact Lyapunov exponent (filled) and largest (open, Kramers classes) against the Born rate of the matrix chain at the same momentum and energy, for the eleven classes of check (A) and the two bands of the doublet of check (D); line: equality. (b)~Exact exponents of C$_1$ Kramers pairs with one interface profile (three face relations, two seeds; dotted: $p^2$, $\kappa=1$), with two profiles (open triangles; dashed lines: their values on the axis, $p=0$), and of the explicit C$_3$ doublet for opposite and identical faces (squares), with their Born rates (thin lines).}\label{fig:matrixval}
\end{figure}

\subsection{Correlated twist sequences}\label{sec:corrtwist}
The chains above take independent, uniform twists. Real stacks can carry correlated ones: a helix whose pitch fluctuates, twist diffusion, or rotations concentrated at interfaces between thicker blocks. Let the increments $\Delta_l=\theta_{l+1}-\theta_l$ be independent, with $|f_m|<1$ for every harmonic $m\neq0$ that enters (defined below; this holds for any distribution not confined to a translated lattice of angles $\Delta_0+2\pi j/m$, e.g.\ a continuous one or a point mass plus a continuous part, as for rotations concentrated at block interfaces), and $\theta_0$ uniform, so that the sequence is stationary with a uniform one-point distribution. The averaged on-site block $\bar X$ is unchanged; the averaged bond $\bar T=\langle T(\theta_l,\theta_l+\Delta_l)\rangle$ changes at the order of the twist dependence. Between a forward channel $a$ and a backward channel $b$ of the averaged chain at the same energy, the Born rate~\eqref{eq:matrixborn} becomes
\begin{equation}
\gamma_B=\frac12\Big\langle\sum_b\frac{S_{ab}(k_a-k_b)}{|v_av_b|}\Big\rangle_a,\qquad S_{ab}(q)=\sum_{r}\langle m_l^*m_{l+r}\rangle\,e^{iqr},\label{eq:corrrate}
\end{equation}
with the backscattering amplitude $m_l=\langle b|\delta X_l|a\rangle+e^{ik_a}\langle b|\delta T_l|a\rangle+e^{-ik_b}\langle b|\delta T_l^\dagger|a\rangle$ of layer $l$ and bond $l$ ($\delta X_l=X_l-\bar X$, $\delta T_l=T_l-\bar T$). For independent twists only $r=0,\pm1$ contribute, and Eq.~\eqref{eq:corrrate} reduces to Eq.~\eqref{eq:matrixborn}. A harmonic of one angle, $e^{im\theta_l}$, enters through the structure factor of the sequence,
\begin{equation}
S_m(q)=\sum_r\big\langle e^{-im\theta_l}e^{im\theta_{l+r}}\big\rangle e^{iqr}=\frac{1-|f_m|^2}{|1-f_m e^{iq}|^2},\qquad f_m=\langle e^{im\Delta}\rangle,\label{eq:structurefactor}
\end{equation}
A bond harmonic depends on two adjacent angles, $e^{im_1\theta_l+im_2\theta_{l+1}}=e^{iM\theta_l}e^{im_2\Delta_l}$ with $M=m_1+m_2$, so the part of $m_l$ that carries $e^{iM\theta_l}$ has the form $e^{iM\theta_l}g_M(\Delta_l)$, with the spectrum
\begin{equation}
S_M(q)=\big\langle|g_M|^2\big\rangle+2\,{\rm Re}\,\frac{\langle g_M\rangle\big\langle g_M^*e^{iM\Delta}\big\rangle e^{iq}}{1-f_Me^{iq}},\label{eq:bondspectrum}
\end{equation}
which reduces to Eq.~\eqref{eq:structurefactor} for constant $g_M$. Because $\theta_l$ is uniform, different $M$ do not interfere: $S_{ab}(q)=\sum_MS_M(q)$, and each $S_M$, the power spectrum of a stationary sequence, is finite, non-negative and smooth in $q$ for $|f_M|<1$. Correlations therefore rescale the leading spectrum, of order $p^{2(\kappa-r)}$, by weights that depend on $p$ only through the slow change of $k_a-k_b$, and they keep its order unless every contributing $S_M$ vanishes at $q=k_a-k_b$, which requires coefficients or an increment support tuned to that wave number. The frame-rotation cancellation of Proposition~\ref{prop:matrix} is of another kind: it holds bond by bond, for every sequence, so frame rotations do not enter $m_l$ at all. The correlations of harmonic $m$ decay as $|f_m|^{|r|}$, over $-1/\ln|f_m|$ layers ($\simeq1/(1-|f_m|)$ for $|f_m|$ near 1); beyond this length the thickness theorem of Sec.~\ref{sec:theorem} applies with the rescaled rates wherever its hypotheses hold. A strictly regular twist, $|f_m|=1$, concentrates $S_m$ into Bragg peaks, and the chain becomes periodic or quasiperiodic~\cite{SiriviboonVolkov2026}.

\emph{Test} (\texttt{models/\allowbreak corr17.py}). Random face-resolved models (seed 1) of a spinless A-type C$_2$ edge ($\kappa=2$) and of a C$_3$ Kramers pair ($\kappa=3$: its order-2 spin-flip harmonic is a frame rotation that cancels), with twists on a $0.5^\circ$ grid: independent uniform angles, and random walks with Gaussian increments of mean $\mu$ and standard deviation $\sigma$, $(\mu,\sigma)=(10^\circ,3^\circ)$ (a helix with fluctuating pitch; correlation lengths 180 and 80 layers for $m=2$ and 3), $(0,20^\circ)$ (twist diffusion; 4 and 2) and $(30^\circ,10^\circ)$ (16 and 7). Born rates follow from the averaged periodogram of $m_l$ (1600 sequences of $2\times10^4$ layers, the same sequences at every $p$) at the common-window energy; for independent twists they reproduce Eq.~\eqref{eq:matrixborn} within 3\%.
\begin{center}\small
\begin{tabular}{llccccc}
\toprule
model & twists & \multicolumn{3}{c}{$\gamma_B/\gamma_B^{\rm ind}$ at $p=$} & \multicolumn{2}{c}{local exponent}\\
 & & 0.04 & 0.08 & 0.16 & 0.04--0.08 & 0.08--0.16\\
\midrule
C$_2$, spinless A & independent & 1 & 1 & 1 & 3.997 & 3.990\\
 & walk$(10^\circ,3^\circ)$ & 0.0040 & 0.0040 & 0.0041 & 4.001 & 4.010\\
 & walk$(0,20^\circ)$ & 0.173 & 0.173 & 0.173 & 3.998 & 3.995\\
 & walk$(30^\circ,10^\circ)$ & 0.081 & 0.081 & 0.085 & 4.002 & 4.048\\
C$_3$, Kramers $\pm\frac12$ & independent & 1 & 1 & 1 & 5.870 & 5.748\\
 & walk$(10^\circ,3^\circ)$ & 0.0066 & 0.0066 & 0.0066 & 5.859 & 5.751\\
 & walk$(0,20^\circ)$ & 0.325 & 0.307 & 0.298 & 5.789 & 5.705\\
 & walk$(30^\circ,10^\circ)$ & 0.117 & 0.103 & 0.107 & 5.680 & 5.804\\
\bottomrule
\end{tabular}
\end{center}
Readings. (i)~Correlations lower the rate by factors of 3 to 250 that change by at most 4\% over a factor 4 in $p$ for the C$_2$ edge and by at most 12\% for the C$_3$ pair, whose common-window energy moves with $p$ (sampling errors 2.4--2.7\%, from the scatter of the per-sequence totals over all channel pairs). (ii)~The local exponents agree with those of independent twists within 0.2 (paired sampling errors below 0.01 for the C$_2$ edge and 0.05 for the C$_3$ pair), i.e.\ the index within 0.1. The C$_3$ Kramers pair keeps $\kappa=3$ (local exponents 5.7--5.9, tending to 6); were its order-2 frame rotation not cancelled, the exponent would be 4. (iii)~Exact Lyapunov exponents at $p=0.3$ and 0.45 (transfer matrices along sampled sequences, 8 chains, up to $10^6$ layers) follow the correlated Born rates: 0.91--1.08 of them for the C$_2$ edge, except 0.69 at the largest correlated rate (walk$(30^\circ,10^\circ)$, $p=0.45$, $\gamma_B=0.011$); for the C$_3$ pair 0.37--0.64 of them, the two-channel factor of check (A) in Sec.~\ref{sec:matrixval} (0.49--0.54 here for independent twists), except 0.21 for the helix at $p=0.45$, where the common-window energy moves across a peak of the structure factor.

\subsection{On-site index: helicity counting}\label{sec:onsite}
Rotating a layer by $\theta$ turns its block into $X_\theta(\bp)=U(\theta)X(R_\theta^{-1}\bp)U(\theta)^\dagger$. Expand $X$ in terms $p^\mu e^{i\nu\varphi}E_h$ with $E_h$ an internal operator of helicity $h$, $UE_hU^\dagger=e^{ih\theta}E_h$. A rotation multiplies the term by $e^{iL\theta}$, $L=h-\nu$. C$_n$ symmetry allows $L\in n\mathbb Z$, analyticity requires $|\nu|\le\mu$ and $\nu\equiv\mu\pmod2$, and time reversal fixes the parity of each internal component; inversion removes the odd ones. The twist-dependent part has $L\neq0$, so
\begin{equation}
\kappa_{\rm on}=\min\big\{\mu\ge\mu_{\min}:\ \exists\,(h,\nu),\ |\nu|\le\mu,\ \nu\equiv\mu\ (2),\ (-1)^\mu=\pi_h,\ h-\nu\in n\mathbb Z\setminus\{0\}\big\},\label{eq:kon}
\end{equation}
with $\mu_{\min}=2$ for acoustic in-plane blocks (sum rule), 4 for flexural blocks (rotational invariance of a free layer) and 0 otherwise. This on-site rule is checked by random models for 70 entries (\texttt{index7.py}). It is part of the matrix index: the table above includes the on-site disorder, and $\kappa\le\kappa_{\rm on}$ except where an on-site term acts only between closed branches (footnote $c$).

\paragraph{Band edges off the axis.} An on-site term linear in $p$, a Rashba-type splitting $\lambda p$ allowed only in layers without inversion, splits the monolayer band and, if it is isotropic, moves the band minimum to a ring $p_0=\lambda m^*/\hbar^2$ of depth $E_R=\lambda^2m^*/2\hbar^2$; an anisotropic $\lambda$ gives isolated minima instead. The axis protects thermal carriers only if $E_R\ll k_BT$; for $\lambda=0.05$\,eV\,\AA\ and $m^*=0.5$, $E_R/k_BT=0.003$ at 300\,K. An index $\kappa=1$ does not imply such a term. In a layer with inversion and time reversal every band is Kramers degenerate at every $\mathbf q$, and the $\kappa=1$ of a C$_2$ Kramers pair comes from its face amplitudes alone: random centrosymmetric C$_2$ models give $\kappa=1.000$--$1.001$ with an on-site splitting that vanishes identically, while their polar counterparts split as $p^{1.00}$ (\texttt{models/\allowbreak checks10.py}). The kit therefore reads the linear splitting and its anisotropy from the monolayer bands (T0-a) and makes the ring check when a linear term is present, whatever the fitted $\kappa$.

\paragraph{The two materials.} Black phosphorus (D$_{2h}$, C$_2$ about the stacking axis): first-principles face data (Sec.~\ref{sec:dft}) give a conduction edge with $r=0$, $\kappa=2.00$ and a valence edge with $r=\kappa=2$ on every admissible gap plane (a scale invariance; its Born rates exceed 0.1 per layer, so the tool does not return a Born index), spin--orbit crossovers at $p^*=(0.9$--$1.7)\times10^{-3}$ (planes and coupling bracket, at an energy 23\% of the bandwidth above the band bottom; $(1.7$--$4.9)\times10^{-4}$\,\AA$^{-1}$ at thermal energies) and $p_h=(1.24$--$1.36)\times10^{-3}$\,\AA$^{-1}$ and a forward hole/electron coupling ratio $(3.2$--$4.5)\times10^{-5}$; exact transport of the full face Hamiltonian gives electrons a conductance approaching the axis law $N^{-1/2}$ and holes, in the forward channel, a fraction of their aligned conductance at least 7 times smaller than the electrons' at 16 layers and $10^2$ times smaller at 64 (exponential below the transparent-energy threshold $\lambda_c$, approaching $N^{-3}$ above; Sec.~\ref{sec:thicknessdft}); in the phosphorene models of Sec.~\ref{sec:tbexamples} the $\Gamma$ conduction edge is A-type ($r=0$, $\kappa\approx2$, 1.84--1.96 at 0.016--0.1\,\AA$^{-1}$; spin--orbit coupling drives it toward the Kramers value 1 only below $p^*\approx0.007$\,\AA$^{-1}$) and the valence edge B-type ($r=2$, not protected; spin--orbit coupling adds a forward coupling $3\times10^{-4}$ of the electrons' in a window $p\lesssim p_h\approx0.02$\,\AA$^{-1}$), a finite-thickness electron--hole contrast in the models; flexural $\kappa_z=4$; in-plane acoustic not protected. The MoS$_2$ $\Gamma$ valence band (D$_{3h}$, horizontal mirror) is a $J_z=\pm\frac12$ Kramers pair: $r=0$, $\kappa=3$ from both the on-site spin--orbit $p^3$ term and the face amplitudes (2.94 in the atomistic model), crossing over to the scalar value 6 where the spin--orbit amplitudes are negligible. The C$_3$ change of Table~\ref{tab:kappa} leaves MoS$_2$ at $\kappa=3$ because its on-site index was already 3.

%% file: sec_theorem.tex
\section{The thickness theorem on the zero set of the effective Lyapunov exponent}\label{sec:theorem}

\subsection{Statement}
Let $x=(\bp,E)$ label channels and energies (for phonons $E=\omega$), with measure $\dd\mu(x)=w(x)\,\dd^2p\,\dd E$ (bandwidth, Fermi--Dirac window, or classical $\dd\omega$). Write the normalized conductance as
\begin{equation}
g(N)=\frac{\int\dd\mu(x)\;t_N(x)}{\int\dd\mu(x)},
\end{equation}
where $t_N(x)$ is the disorder-averaged transmission of the open channel(s) at $x$, normalized to the aligned stack. Let $\gamma(x)$ be the Lyapunov exponent of the \emph{effective} chain at $x$, obtained after closed components are eliminated, and $\Zg=\{x\in{\rm supp}\,\mu:\gamma(x)=0\}$ its zero set.

\begin{theorem}[thickness law]\label{thm:general}
Assume:
\begin{description}
\item[H1 (geometry)] $\Zg$ is a finite union of compact $C^2$ submanifolds $Z_i$ of codimension $c_i\ge1$ in ${\rm supp}\,\mu$, each with a tubular neighbourhood in coordinates $(\zeta,\rho,\hat n)$: position $\zeta$ on $Z_i$, normal distance $\rho$, normal direction $\hat n$.
\item[H2 (local orders, positivity)] In the tube
\[
\gamma=\Gamma_i(\zeta,\hat n)\,\rho^{2m_i}\,[1+O(\rho)],\qquad \dd\mu=\omega_i(\zeta,\hat n)\,\rho^{s_i+c_i-1}\,\dd\rho\,\dd\hat n\,\dd\zeta\,[1+O(\rho)],
\]
with $\Gamma_i,\omega_i$ continuous, $0<\Gamma_{\min}\le\Gamma_i\le\Gamma_{\max}$, $m_i>0$ and $s_i>-c_i$.
\item[H3 (uniform decay away from $\Zg$)] For every $\delta>0$ there are $C,\mu_\delta>0$ with $t_N(x)\le Ce^{-\mu_\delta N}$ wherever ${\rm dist}(x,\Zg)\ge\delta$.
\item[H4 (kernel)] Locally uniformly in the scaling argument, i.e.\ for every $Q<\infty$ uniformly where $2N\gamma(x)\le Q$, $t_N(x)=\Phi_{\zeta\hat n}(2N\gamma(x))\,[1+O(\rho)]$, with a kernel $\Phi_{\zeta\hat n}=\Phi_i(\zeta,\hat n;\cdot)$ continuous in its argument and in $(\zeta,\hat n)$; in the tubes $t_N\le\bar\Phi(2N\gamma)$ with a non-increasing $\bar\Phi$ and $\int_0^\infty q^{\alpha_i-1}\bar\Phi(q)\,\dd q<\infty$. The kernel belongs to the disorder class of the effective chain, not to $\gamma$, and may depend on the component, the point and the normal direction (e.g.\ differ on the two sides of $\Zg$): for one open channel with weak finite-variance disorder and random phases $\Phi=\Phi_1$ (anomalous energies such as the band centre are excluded~\cite{SchomerusTitov2003}); effective disorder with a heavy tail has its own one-channel kernel (Corollary~\ref{cor:heavy}); for several open channels $\Phi$ is the multichannel kernel of the effective chain, which need not be characterized by one Lyapunov exponent (Remark~\ref{rem:multi}).
\item[H4$'$ (rate)] A remainder bound valid \emph{throughout} the tubes, $|t_N(x)-\Phi_{\zeta\hat n}(2N\gamma(x))|\le C\rho\,R(2N\gamma(x))$ with a non-increasing $R$, and $\int_0^\infty q^{\alpha_i-1+1/2m_i}\,[R(q)+\bar\Phi(q)+q\sup_{\zeta,\hat n}\sup_{q/2\le\xi\le2q}|\Phi_{\zeta\hat n}'(\xi)|]\,\dd q<\infty$.
\end{description}
Then, with $\alpha_i=(c_i+s_i)/2m_i$ and $\alpha=\min_i\alpha_i$, $N^\alpha g(N)\to\sum_{\alpha_i=\alpha}A_i$ under H1--H4. Under H4$'$ in addition,
\begin{equation}
\begin{gathered}
g(N)=\sum_{i:\,\alpha_i=\alpha}A_i\,N^{-\alpha}\left[1+O\!\left(N^{-1/2m_i}\right)\right]+O\!\left(N^{-\alpha'}\right),\\
A_i=\frac{1}{2m_i\int\dd\mu}\int_{Z_i}\dd\zeta\oint\dd\hat n\;\omega_i\,(2\Gamma_i)^{-\alpha}J_{\alpha}(\zeta,\hat n),\qquad J_\alpha(\zeta,\hat n)=\int_0^\infty q^{\alpha-1}\Phi_{\zeta\hat n}(q)\,\dd q.
\end{gathered}\label{eq:general}
\end{equation}
where $\alpha'$ is the next larger exponent among the $\alpha_i$ (the term is absent if there is none).
\end{theorem}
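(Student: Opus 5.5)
The plan is a Laplace-type asymptotic evaluation of $\int\dd\mu\,t_N$, localized to the tubes around the components $Z_i$. First fix $\delta>0$ small enough that the tubes of radius $\delta$ around the $Z_i$ are disjoint and carry the coordinates of H1. Split the numerator of $g(N)$ into the contribution from outside all tubes and the contributions of the individual tubes. By H3 and the finiteness of $\int\dd\mu$, the outside part is $O(e^{-\mu_\delta N})$, which is negligible against any power of $N$. All the work is therefore local to one tube, in which I write $\dd\mu=\omega_i\rho^{s_i+c_i-1}\,\dd\rho\,\dd\hat n\,\dd\zeta\,[1+O(\rho)]$.

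\emph{Change of variables.} In a tube, substitute $q=2N\gamma(x)=2N\Gamma_i\rho^{2m_i}[1+O(\rho)]$ at fixed $(\zeta,\hat n)$. For small $\delta$, H2 makes $\rho\mapsto q$ a monotone $C^1$ bijection, because $\partial_\rho\gamma>0$ for small $\rho>0$; this needs the $O(\rho)$ remainder to be differentiable, or else an argument through its integrated form. Inverting gives $\rho=(q/2N\Gamma_i)^{1/2m_i}[1+O(\rho)]$ and
\[
\rho^{s_i+c_i-1}\,\dd\rho=\frac{1}{2m_i}(2N\Gamma_i)^{-\alpha_i}q^{\alpha_i-1}\,\dd q\,\bigl[1+O\bigl((q/N)^{1/2m_i}\bigr)\bigr].
\]
The upper limit in $q$ is of order $N\delta^{2m_i}$, which tends to infinity.

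\emph{Limit (H1--H4).} Multiply the tube integral by $N^{\alpha_i}$. This gives an integral over $(\zeta,\hat n,q)$ of $\frac{1}{2m_i}\omega_i(2\Gamma_i)^{-\alpha_i}q^{\alpha_i-1}t_N$. Pointwise in $(\zeta,\hat n,q)$, H4 gives $t_N\to\Phi_{\zeta\hat n}(q)$, because at fixed $q$ we have $\rho\to0$ and $2N\gamma=q\le Q$. Domination comes from $t_N\le\bar\Phi(q)$ together with $\int q^{\alpha_i-1}\bar\Phi<\infty$, compactness of $Z_i$ and of the sphere of normals, and the bounds $\Gamma_{\min}\le\Gamma_i$ and $\omega_i$ continuous. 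Dominated convergence then yields $N^{\alpha_i}\cdot(\text{tube }i)\to A_i\int\dd\mu$, with $A_i$ as in Eq.~\eqref{eq:general}. Components with $\alpha_i>\alpha$ contribute $o(N^{-\alpha})$, so $N^\alpha g\to\sum_{\alpha_i=\alpha}A_i$.

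\emph{Rate (H4$'$).} Write $t_N=\Phi_{\zeta\hat n}(q)+[t_N-\Phi_{\zeta\hat n}(q)]$, but with the kernel evaluated at the exact $q=2N\gamma(x)$, so that the change of variables is exact and the only error comes from the Jacobian. The remainder term contributes at most $C\rho R(q)\sim C(q/N)^{1/2m_i}R(q)$, which is integrable against $q^{\alpha_i-1}$ by H4$'$. This gives the relative correction $O(N^{-1/2m_i})$. The Jacobian corrections $[1+O(\rho)]$ from $\omega$ and from $\gamma$ produce the same factor $(q/N)^{1/2m_i}$, multiplying $\bar\Phi(q)$. Where the expansion of $\gamma$ must be compared with its leading term inside the kernel's argument, a term $q\,|\Phi'|$ on the range $q/2\le\xi\le2q$ appears. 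This is exactly why H4$'$ includes that quantity: it controls $\Phi(q[1+O(\rho)])-\Phi(q)$ by the mean value theorem. The truncation of the $q$ integral at $N\delta^{2m_i}$ costs a tail $\int_{N\delta^{2m}}^\infty q^{\alpha-1}\bar\Phi$. This tail is $O(N^{-1/2m_i})$ relative, by the same moment condition via a Chebyshev-type bound $\int_X^\infty q^{\alpha-1}\bar\Phi\le X^{-1/2m}\int q^{\alpha-1+1/2m}\bar\Phi$. The components with the next exponent $\alpha'$ give the $O(N^{-\alpha'})$ term.

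\emph{Main obstacle.} I expect the delicate step to be the uniformity in H4. The kernel approximation holds only where $2N\gamma\le Q$, while the tube extends to $q\sim N$. My first attempt is to split at $q=Q$: use H4 on $q\le Q$ and the majorant $\bar\Phi$ on $q>Q$, then let $Q\to\infty$ after $N\to\infty$. This is the standard $\varepsilon/3$ argument.

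A second subtlety is that the change of variables must be a bijection uniformly in $(\zeta,\hat n)$. I will secure this by shrinking $\delta$ using the continuity and positivity of $\Gamma_i$ on the compact set.
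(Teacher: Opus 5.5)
Your architecture matches the paper's. You localize to the tubes, dispose of the complement with H3, change variables to a scaling variable $q$, pass to the limit by dominated convergence, and read the rate off the moment condition of H4$'$. Two of your details are fine. Your Chebyshev bound for the truncation at $Q_N\sim N\delta^{2m_i}$ makes explicit something the paper leaves implicit. Your split at $q=Q$ is harmless but not needed, since a fixed $q$ eventually lies where $2N\gamma\le 2q$.

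The step that fails under the stated hypotheses is the exact substitution $q=2N\gamma(x)$. H2 gives only $\gamma=\Gamma_i\rho^{2m_i}[1+h]$ with $|h|\le C\rho$, and no control of $h'$. For example, $h=\rho\sin(\rho^{-2})$ is admissible. It makes $\rho h'\sim-2\rho^{-1}\cos(\rho^{-2})$ dominate $\partial_\rho\gamma$, which then changes sign infinitely often as $\rho\to0$. So $\rho\mapsto q$ is not a bijection, and your Jacobian factor $[1+O((q/N)^{1/2m_i})]$ is unjustified. Shrinking $\delta$ does not help because the oscillation accumulates at $\rho=0$, and the ``integrated form'' fallback is not developed.

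The paper avoids this by substituting only the leading term, $q=2N\Gamma_i(\zeta,\hat n)\rho^{2m_i}$. Its Jacobian $\rho^{s_i+c_i-1}\dd\rho=(2N\Gamma_i)^{-\alpha_i}q^{\alpha_i-1}\dd q/2m_i$ is exact. It then chooses $\delta$ so that the factor $1+O(\rho)$ of H2 lies in $[\frac12,2]$, hence $2N\gamma\in[q/2,2q]$. The price is that the kernel is evaluated at $2N\gamma=q[1+O(\rho)]$ rather than at $q$:
\begin{itemize}
\item For the limit, use the continuity of $\Phi_{\zeta\hat n}$ in its argument (which is why H4 assumes it). Dominate with $\bar\Phi(q/2)$, which is integrable against $q^{\alpha_i-1}$ because $\bar\Phi$ is non-increasing.
\item For the rate, split $t_N-\Phi(q)=[t_N-\Phi(2N\gamma)]+[\Phi(2N\gamma)-\Phi(q)]$. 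Bound the first bracket by $C\rho R(q/2)$ and the second by $C\rho\,q\sup_{q/2\le\xi\le2q}|\Phi'(\xi)|$.
\end{itemize}
Your rate paragraph already invokes this second bracket, yet it has nothing to control if $q=2N\gamma$ exactly. In effect the proposal mixes the two substitutions. Use the leading-term substitution throughout: this repairs the argument and also removes your concern about a uniform bijection in $(\zeta,\hat n)$.
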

\noindent\emph{Proof} (App.~\ref{app:theorem}). By H3 the complement of the tubes contributes $O(e^{-\mu_\delta N})$. In a tube substitute $q=2N\Gamma_i\rho^{2m_i}$ at fixed $(\zeta,\hat n)$; then $\rho^{c_i+s_i-1}\dd\rho=(2N\Gamma_i)^{-\alpha_i}q^{\alpha_i-1}\dd q/2m_i$. At fixed $q$, $\rho=O(N^{-1/2m_i})$; dominated convergence (majorant of H4) gives the limit, and with H4$'$ the $O(\rho)$ remainders of H2 and H4 integrate to the stated relative error. Components with $\alpha_i>\alpha$ contribute $O(N^{-\alpha_i})$. $\square$ The exponent does not depend on the form of $\Phi$, only on its integrability; the amplitude does, through $J_\alpha(\zeta,\hat n)$, which stays inside the direction integral unless the kernel is the same everywhere on $Z_i$. The exponent does depend on the measure, through $s$: contact weights and the density of channels at the zero set enter there, as boundary conditions and low-frequency weights change the length exponent of heat conduction through disordered harmonic chains and crystals~\cite{Dhar2001,Kundu2010}.

\emph{Why H4$'$ needs a global remainder.} A bound that holds only at fixed $N\gamma$ does not control the tail of the $q$ integral, which moves with $N$. Example: $\rho\in[0,1]$, $\gamma=\rho^2/2$, $\alpha=1/2$, $\Phi(q)=e^{-q}$ and $t_N=e^{-q}+\frac12e^{-q/2}\,\mathbf 1\{q>\ln\ln N\}$. At every fixed $q$, $t_N=\Phi(q)$ for large $N$, and the majorant has every moment; yet the relative correction to the leading term is ${\rm erfc}\big(\sqrt{\ln\ln N/2}\big)/\sqrt2=0.096$, 0.062, 0.041 at $N=10^4$, $10^8$, $10^{16}$, not $O(N^{-1/2})$. The global bound of H4$'$ excludes it; the limit statement (H1--H4) is unaffected.

For channels whose band shrinks on the axis ($r>0$) the tube in $(\bp,E)$ is not an ordinary tubular neighbourhood, because the band collapses there. The theorem is then applied in resolved (blown-up) coordinates $(\bp,\varepsilon)$ with $\varepsilon=(E-E_{\rm edge}(\bp))/W(\bp)\in[0,1]$, and to the pushed-forward measure, in which the tube is a product; the bandwidth enters $s$ ($s=r$ for electrons, whose band width is $\propto p^r$; $s=1$ for acoustic in-plane phonons, whose frequency width is $\propto p^{r-1}$).

\begin{remark}[several open channels]\label{rem:multi} With $M$ open channels at $x$ the transmission is governed by $M$ Lyapunov exponents, not one, and the kernel $\Phi$ of H4 is that of the effective $M$-channel chain. The exponent $\alpha$ of Eq.~\eqref{eq:general} depends only on how the exponents vanish on $\Zg$ (all generically with the same order $2m$), so it is unchanged; the amplitude $A$ requires the multichannel kernel. In this note amplitudes are quoted only for one-channel cases. For the matrix index (Sec.~\ref{sec:index}) the channel-averaged Born rate and the smallest Lyapunov exponent differ by an $O(1)$, class-dependent factor, so only exponents are claimed there.\end{remark}

For a scalar channel with on-site disorder $\Veff\propto\rho^{\kappa}$ and coupling $T\propto\rho^r$, Born theory gives $\gamma\propto\Veff^2/T^2$, so $2m=2(\kappa-r)$ and Eq.~\eqref{eq:general} is the law $N^{-(c+s)/2(\kappa-r)}$. What the theorem requires is the zero set and local order of the \emph{effective} exponent, including its positivity away from $\Zg$, not of the bare disorder.

\subsection{Corollaries and named exceptions}
\begin{corollary}[rotation axis]\label{cor:point} For the axis $\{\bp=0\}\times$(energy window), $c=2$ and $\zeta=E$. With $r=s=0$ this is $N^{-1/\kappa}$; with the classical phonon measure and a coupling vanishing as $p^2$ it gives the in-plane law of Sec.~\ref{sec:phonon}.\end{corollary}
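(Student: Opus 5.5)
We obtain Corollary~\ref{cor:point} by specializing Theorem~\ref{thm:general} to the axis component and reading off $(c,m,s)$; no new analysis is needed beyond checking H1--H4.

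\emph{Geometry (H1).} The zero set is taken as $\Zg=\{\bp=0\}\times I$, with $I$ a compact energy window in the support of $\mu$ that avoids the exceptional energies listed after Proposition~\ref{prop:matrix}. This is a $C^2$ submanifold of $(\bp,E)$-space. Its codimension is $c=2$, the tangential coordinate is $\zeta=E$, and the normal coordinates are $\rho=p$ and $\hat n=\hat{\bp}$. The tubular neighbourhood is simply a disk in $\bp$ times $I$.

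\emph{Local orders (H2).} For a protected edge ($r=0$), the index of Sec.~\ref{sec:index} gives $\gamma=\Gamma(E,\hat{\bp})\,p^{2\kappa}[1+O(p)]$, so $2m=2\kappa$. Positivity of $\Gamma$ on $I$ is the genericity condition (iii) of Proposition~\ref{prop:matrix}.

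The measure $w\,\dd^2p\,\dd E = w\,p\,\dd p\,\dd\hat n\,\dd E$ has $\rho^{s+c-1}$ with $s=0$ whenever $w$ is continuous and nonzero on the axis, as it is for a bandwidth or Fermi window of a band with $r=0$. Then $\alpha=(2+0)/2\kappa=1/\kappa$, which is the stated law $N^{-1/\kappa}$, with the amplitude $A$ of Eq.~\eqref{eq:general} at $\alpha=1/\kappa$.

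\emph{Phonon case.} Here $r=2$ and the band collapses on the axis, so we pass to the blown-up coordinates $(\bp,\varepsilon)$ described after the theorem. In these coordinates the tube is a product. The classical measure $\dd\omega$ carries the Jacobian of the frequency width, which is $\propto p^{r-1}=p$, so $s=1$. With $2m=2(\kappa-r)$ this gives $\alpha=3/2(\kappa-r)$. For C$_3$/C$_6$ in-plane branches, where $\kappa=4$, this is $3/4$, matching Sec.~\ref{sec:phonon}. Where $\kappa=r$ the zero set is empty and H3 gives exponential decay instead.

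\emph{Main obstacle: H3 and H4.} The remaining hypotheses are the real work, since H3 and H4 are properties of the disorder ensemble rather than of the symmetry.

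For H3, I would invoke Furstenberg positivity for i.i.d.\ twists with a continuous part (Assumption~\ref{as:iid}). Continuity of $\gamma$ then gives $\gamma\ge\gamma_{\min}>0$ on the compact set ${\rm dist}\ge\delta$, and a uniform large-deviation bound turns this into $t_N\le Ce^{-\mu_\delta N}$.

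For H4, in the one-channel case I would cite the weak finite-variance kernel $\Phi_1$, which holds because the disorder near the axis is weak ($\propto p^\kappa$) and phases randomize away from anomalous energies. Its tail decays fast enough for the integrability condition at $\alpha=1/\kappa$. In the multichannel Kramers case I would rely on Remark~\ref{rem:multi}: the exponent is unchanged and only the amplitude needs the multichannel kernel. That amplitude is the step I expect to be least controlled, so I would state the corollary's exponent unconditionally and its amplitude conditionally on that kernel.
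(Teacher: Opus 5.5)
Your specialization is correct and is the paper's own argument. Take the axis component of Theorem~\ref{thm:general} with $c=2$, $\zeta=E$, $\rho=p$, $\hat n=\hat\bp$ and $2m=2(\kappa-r)$, so that $\alpha=(2+s)/2(\kappa-r)$. Then $r=s=0$ gives $1/\kappa$. For in-plane phonons, the blown-up coordinates with a frequency width $\propto p^{r-1}$ give $s=1$, hence $N^{-3/4}$ for $\kappa=4$, $r=2$. When $\kappa=r$ the decay is exponential (Corollary~\ref{cor:exp}). In the paper that is the whole proof: H1--H4 are hypotheses of the theorem, and the corollary inherits them.

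The gap is in the extra step, where you try to discharge H3 and then claim the exponent unconditionally. A continuous twist distribution does not give Furstenberg positivity at every $(\bp,E)$. Positivity needs non-compactness and strong irreducibility of the group generated by the transfer matrices at that point, and these can fail at special energies no matter how random the disorder is.

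\begin{itemize}
\item In the paper's side-coupled chain the strengths $c_l$ are random with an $O(1)$ spread, yet at $E^*$ every effective transfer matrix is the same, so $\gamma(E^*)=0$.
\item The ideal factorized chain of Corollary~\ref{cor:heavy}, with continuous uniform twists, transmits perfectly at $E^*$ for every sequence.
\item The transfer matrices are not i.i.d.\ either, since each bond depends on both $\theta_l$ and $\theta_{l+1}$.
\end{itemize}

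An off-axis zero ring or transparent window is exactly what H3 excludes, and if present it can change the exponent (Corollary~\ref{cor:sum}). Whether $\Zg$ is the axis alone has to be checked model by model. The paper does this numerically for the phonon model and states that narrower windows between grid points are not excluded.

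Several other steps you treat as verified are also assumptions:
\begin{itemize}
\item The uniform large-deviation bound on the mean transmission is assumed; Corollary~\ref{cor:exp} states it as a hypothesis.
\item H2 for the exact Lyapunov exponent rests on identifying the Born order with the true order. For odd-$n$ Kramers pairs the paper explicitly makes no claim beyond Born order.
\item Generic coefficients give $\Gamma\neq0$ at generic energies, not $\Gamma\ge\Gamma_{\min}$ on all of $I$.
\end{itemize}

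So the exponent $1/\kappa$ holds given H1--H4. It is independent of the form of the kernel, since only the kernel's integrability enters, and that is what the paper claims. Finally, if you excise exceptional energies from $I$, you must also remove them from the support of $\mu$ or bound their neighbourhoods separately. Otherwise, because $\gamma$ vanishes on the whole clean axis, H3 fails at the axis points you left out of $\Zg$.
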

\begin{corollary}[exponential decay]\label{cor:exp} If $\gamma\ge\gamma_{\min}>0$ on ${\rm supp}\,\mu$ and the channels obey a uniform large-deviation bound $t_N(x)\le Ce^{-cN\gamma(x)}$, $c>0$ (at weak disorder the mean transmission decays at the rate $\gamma/2$ up to algebraic factors, since $\Phi_1(s)\sim s^{-3/2}e^{-s/4}$), then $g(N)\le Ce^{-c\gamma_{\min}N}$. Conversely, under H1--H4 a non-empty $\Zg$ gives algebraic decay. The bare condition $\kappa\le r$ implies neither: eliminating closed components can create zeros of $\gamma$ elsewhere (Corollary~\ref{cor:transparent}).\end{corollary}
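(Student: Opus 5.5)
The statement is Corollary~\ref{cor:exp}, which has two directions plus a remark. The plan is to treat the exponential bound as a one-line integration, the converse as a direct application of Theorem~\ref{thm:general}, and the remark by pointing to a counterexample.

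\textbf{Exponential bound.} The assumed large-deviation bound $t_N(x)\le Ce^{-cN\gamma(x)}$ combines with $\gamma\ge\gamma_{\min}$ on ${\rm supp}\,\mu$ to give $t_N(x)\le Ce^{-c\gamma_{\min}N}$ pointwise. Since $g(N)$ is a $\mu$-average of $t_N$, normalized by $\int\dd\mu$, the same bound holds for $g$ with the same constants. To justify the parenthetical that such a bound is natural at weak disorder, I would cite the asymptotics $\Phi_1(s)\sim s^{-3/2}e^{-s/4}$ with $s=2N\gamma$. This gives decay at rate $\gamma/2$ up to algebraic factors, so any $c<1/2$ works once the prefactor is absorbed. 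The one point to note is uniformity: the constant $C$ must not depend on $x$, and that is exactly what the hypothesis states.

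\textbf{Converse.} Assume H1--H4 and $\Zg\neq\emptyset$. Theorem~\ref{thm:general} then gives $N^\alpha g(N)\to\sum_{\alpha_i=\alpha}A_i$, and what remains is to show that this limit is strictly positive, so that the decay is genuinely algebraic. The exponent $\alpha$ is finite because $m_i>0$ and $c_i+s_i>0$. For each component I would check that $\omega_i>0$ on a set of positive measure, so that the component lies in ${\rm supp}\,\mu$, that $\Gamma_i$ is bounded, and that $J_\alpha=\int q^{\alpha-1}\Phi\,\dd q>0$. The last condition holds because $\Phi$ is continuous, nonnegative and not identically zero. For this I would use $\Phi(0)=1$: at $\gamma=0$ the channel is unscattered relative to the aligned stack, so its normalized transmission is $1$. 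This step is where some care is needed. If a kernel could vanish identically, the leading amplitude would drop out, so I would state positivity of $\Phi$ near $0$ as part of H4.

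\textbf{Why $\kappa\le r$ decides neither.} I would argue by pointer. Eliminating closed components produces an effective on-site term of the form $|T|^2/(E-\varepsilon_{\rm closed})$, which can cancel the bare disorder at an isolated energy. The side-coupled-level example of the letter and Corollary~\ref{cor:transparent} show a zero of $\gamma$ of codimension $1$ even though the bare disorder is of order one. Conversely, $\kappa>r$ gives no large-deviation bound off the axis unless H3 holds. This direction therefore needs only a counterexample, not a proof.
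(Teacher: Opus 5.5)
Your proposal is correct and follows the paper's inline reasoning: you integrate the uniform large-deviation bound against $\mu$, read the converse off Theorem~\ref{thm:general}, and point to Corollary~\ref{cor:transparent} for the remark. Requiring a strictly positive amplitude ($\omega_i>0$ somewhere, $\Phi>0$ near $0$) is a useful explicit addition that the paper leaves implicit. One small aside: the other half of ``neither'' is that $\kappa\le r$ also does not force exponential behaviour to fail, e.g.\ the C$_2$/C$_4$ in-plane chains with $\kappa=r$ do decay exponentially, so your $\kappa>r$ remark addresses a different claim than the one stated.
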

\begin{corollary}[transparent energies]\label{cor:transparent} Suppose the effective disorder of a channel vanishes at an energy $E^*$ because the bare disorder cancels, against the energy-dependent self-energy of an eliminated component or between correlated on-site and bond terms. If (i) $E^*$ lies in the interior of the clean band ($\sin k^*$ bounded away from zero, so that $\Gamma$ stays bounded as H2 requires), (ii) the effective disorder has finite variance near $E^*$, and (iii) no eliminated component has a pole (Schur resonance) near $E^*$, then generically $\gamma\propto(E-E^*)^2$: $\Zg$ has a component of codimension 1 in energy with $m=1$, and $g\propto N^{-(1+s)/2}$, i.e.\ $N^{-1/2}$ for $s=0$, however large the bare disorder is (as for the window of states whose localization length exceeds the sample around the delocalized energies of dimer and correlated continuous models~\cite{Dunlap1990,Hilke1997}). Condition (ii) fails when the cancelling transformation is singular at a node; that case is Corollary~\ref{cor:heavy}.\end{corollary}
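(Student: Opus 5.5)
The plan is to reduce the statement to Theorem~\ref{thm:general} with $c=1$, $m=1$. The work is in checking H1--H4 for the effective chain near $E^*$.

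\emph{Effective chain.} First I would eliminate the closed components of the channel exactly, layer by layer, by a Schur complement. This gives a one-channel chain with energy-dependent on-site terms $V_l(E)$ and bonds $t_l(E)$. By hypothesis (iii) no eliminated component has a pole near $E^*$, so every $V_l(E)$ and $t_l(E)$ is analytic in $E$ on a neighbourhood of $E^*$, uniformly in the twists. Hypothesis (ii) makes their moments finite. The cancellation assumption says that the effective disorder vanishes at $E^*$. When the cancellation is between correlated on-site and bond terms, I would first apply the local transformation that maps the chain to one with only on-site disorder (as for Eq.~\eqref{eq:dimer}). That transformation is regular here, because (ii) excludes a node. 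In both cases the result is a chain whose disorder can be written
\[
\delta V_l(E)=(E-E^*)\,W_l+O\big((E-E^*)^2\big),
\]
where the $W_l$ are finite-variance random variables with short-range correlations (each depends on one or two twists). Genericity means $\mathrm{Var}\,W_l\neq0$, or more precisely that the spectral weight of $W_l$ at $2k^*$ is nonzero.

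\emph{Local order.} Next I would expand the Lyapunov exponent in weak disorder, in the Born/Derrida--Gardner form~\cite{DerridaGardner1984}, as in Eq.~\eqref{eq:corrrate}. This gives
\[
\gamma=\frac{S_W(2k)}{8t^2\sin^2k}\,(E-E^*)^2+O\big(|E-E^*|^3\big).
\]
Hypothesis (i) keeps $\sin k$ bounded away from zero, so the prefactor $\Gamma$ is finite and positive, as H2 requires. I also need the excluded anomalous energies (band centre and the like) to be avoided; this is an extra condition on $k^*$, stated as in H4.

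If $E^*$ depends on $\bp$, the zero set is the graph $E=E^*(\bp)$. I would show it is $C^2$ by applying the implicit function theorem to the cancellation condition, which is analytic in $(\bp,E)$. This gives H1 with $c=1$, and H2 with $2m=2$; the exponent $s$ is read off from the measure $w$ transverse to the graph.

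\emph{Remaining hypotheses.} For H3 I would use the Furstenberg/Kesten positivity of $\gamma$ for non-degenerate i.i.d.\ (or finitely dependent) transfer matrices off $Z$. Continuity of $\gamma$ in $x$ and compactness then give $\gamma\ge\gamma_\delta>0$ at distance at least $\delta$ from $Z$. A uniform large-deviation bound for products of random matrices then turns this into the exponential bound on $t_N$.

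For H4, near $E^*$ the effective chain is a single open channel with weak finite-variance disorder and random phases. So the standard weak-disorder scaling limit $t_N\to\Phi_1(2N\gamma)$ applies. Its majorant decays as $s^{-3/2}e^{-s/4}$, which gives every moment needed. Theorem~\ref{thm:general} then yields $g\propto N^{-(1+s)/2}$.

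\emph{Main obstacle.} I expect the difficulty to be the uniformity in H4: the scaling limit to $\Phi_1$ must hold uniformly along the moving tube where the disorder strength $\propto|E-E^*|$ is tied to $N$. The bare disorder is $O(1)$ and only its effective image is small. I would try to establish this by proving the one-channel weak-disorder limit for the transformed chain, with error bounds that depend only on the variance and on $\sin k^*$. This is where conditions (i)--(iii) are all used, and I expect it may need to be left as an assumption checked numerically rather than proved in full.
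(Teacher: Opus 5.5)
Your proposal is correct and follows the paper's argument. The corollary is Theorem~\ref{thm:general} applied with $c=1$ and $2m=2$: (iii) makes the eliminated terms analytic, so the effective disorder is linear in $E-E^*$; the Born rate then gives $\gamma\propto(E-E^*)^2$, with $\Gamma$ kept finite by (i); and (ii) puts the channel in the finite-variance class, whose kernel $\Phi_1$ supplies H4. The paper's side-coupled test evaluates exactly your $S_W(2k)/8t^2\sin^2k$, giving $\Gamma_2={\rm Var}(c^2)/8t^2\sin^2k^*(E^*-e_b)^4$. Your explicit checks of H1 and H3 go beyond what the paper writes. The uniform scaling limit you single out as the main obstacle is not proved in the paper either: it is taken as hypothesis H4 and supported numerically.
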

\begin{corollary}[transparent energies at nodes: heavy tails]\label{cor:heavy} Suppose that at $E^*$, inside the clean band, a local transformation $\phi_l=s_l\psi_l$ removes the bare disorder exactly, as for correlated on-site and bond disorder of the factorized form $t_l=t_0s_ls_{l+1}$, $e_l=E^*+2\Delta s_l^2$ (Sec.~\ref{sec:thicknessdft}; cf.~\cite{Flores1989}), so that near $E^*$ the only disorder of the transformed chain is $V_l=\delta\,Y_l$ with $\delta=(E-E^*)/t_0$, i.i.d.\ $Y_l=1/s_l^2\ge1$ and a tail $P(Y>y)\simeq a\,y^{-\beta}$, $0<\beta<2$ (infinite variance). Strong scatterers ($V_l$ of order one) then occur with density $\propto|\delta|^\beta$, which makes $q=N|\delta|^\beta$ the natural scaling variable and suggests $\gamma\propto|\delta|^{\beta}$ ($2m=\beta$) for weak, non-resonant disorder; we prove neither for general $\beta$. If the mean transmission has a scaling limit $\Psi_\beta(q,\operatorname{sgn}\delta)$ with an integrable majorant (H4 with this kernel), the change of variables alone gives $g\propto N^{-(1+s)/\beta}$ for one channel; that kernel is not $\Phi_1$ (below). For a linear node and uniform twists, $Y=1/\sin^2\alpha$ with $P(Y>y)=(2/\pi)\arcsin y^{-1/2}\simeq(2/\pi)y^{-1/2}$ exactly: $\beta=\frac12$, the domain of attraction of the one-sided stable law of index $\frac12$, and $g\propto N^{-2}$ for $s=0$. In the dilute limit $q\to0$ a single strong scatterer, $T_1(V)=[1+V^2/4\sin^2k]^{-1}$ with $k$ the clean wave number at $E^*$, gives
\begin{equation}
1-\langle T\rangle=\frac{q}{2\sqrt{\sin k}}+o(q),\qquad -\langle\ln T\rangle=\frac{2q}{\sqrt{\sin k}}+o(q),\label{eq:dilute}
\end{equation}
a ratio $\frac14$ (for a tail of index $\beta$ the same integrals give $\beta/2$), whereas the weak-scattering kernel $\Phi_1$ has ratio 1 at small $q$.\end{corollary}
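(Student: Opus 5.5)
The plan is to prove the explicit parts of the Corollary directly and to reduce the conditional part to Theorem~\ref{thm:general}. That splits the work into three pieces: the tail of $Y$, the change of variables giving $g\propto N^{-(1+s)/\beta}$, and the dilute-limit expansions~\eqref{eq:dilute} with their ratio. The scaling of $\gamma$ and the existence of $\Psi_\beta$ are not proved, as the statement says.

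\emph{Tail and scaling law.} For uniform twists $\alpha$ is uniform on the circle, so
\[
P(Y>y)=P\big(|\sin\alpha|<y^{-1/2}\big)=\tfrac{2}{\pi}\arcsin y^{-1/2},
\]
which behaves as $(2/\pi)y^{-1/2}$ for large $y$. This gives $\beta=\tfrac12$ with $a=2/\pi$, and places $Y$ in the domain of attraction of the one-sided stable law of index $\tfrac12$. For the thickness law I would assume the stated scaling limit $\langle T\rangle\to\Psi_\beta(q,\operatorname{sgn}\delta)$ with $q=N|\delta|^\beta$ and an integrable majorant. I then rerun the proof of Theorem~\ref{thm:general} (App.~\ref{app:theorem}) on the energy-codimension-one component with $\rho=|\delta|$. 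In place of $2N\gamma$ the scaling variable is $q=N\rho^\beta$, so formally $2m=\beta$. With the measure $\propto\rho^{s}\,\dd\rho$ one has $\rho^{s}\dd\rho=N^{-(1+s)/\beta}\beta^{-1}q^{(1+s)/\beta-1}\dd q$. Dominated convergence then gives $g\simeq N^{-(1+s)/\beta}$ times $\beta^{-1}\sum_\pm\int q^{(1+s)/\beta-1}\Psi_\beta\,\dd q$, and hence $N^{-2}$ for $s=0$, $\beta=\tfrac12$. The two signs of $\delta$ enter as the two sides of $\Zg$, which H4 already allows.

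\emph{Dilute limit.} I would write $1-\langle T\rangle=N\langle 1-T_1(\delta Y)\rangle+\mathcal R$ and $-\langle\ln T\rangle=N\langle-\ln T_1(\delta Y)\rangle+\mathcal R'$, with $T_1(V)=[1+V^2/4\sin^2k]^{-1}$ the single-scatterer transmission in the clean chain. For the single-scatterer terms, use the tail density $(a\beta)y^{-\beta-1}$ and substitute $v=|\delta|y$. Both averages become $|\delta|^\beta a\beta\int_0^\infty v^{-\beta-1}f(v)\,\dd v\,[1+o(1)]$. The $o(1)$ comes from the region $y=O(1)$, where the density differs from its tail form; there $f=O(\delta^2)$, which is $o(|\delta|^\beta)$ since $\beta<2$. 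For $\beta=\tfrac12$, $a=2/\pi$ and $v=2w\sin k$, the $T$ integral is
\[
(2\sin k)^{-1/2}\int_0^\infty \frac{w^{1/2}}{1+w^2}\,\dd w=\frac{\pi}{2\sqrt{\sin k}},
\]
so the first term is $q/(2\sqrt{\sin k})$. The log integral, after one integration by parts, is four times the same Mellin integral, which gives $2q/\sqrt{\sin k}$. For general $\beta$ the two Mellin integrals $\int w^{1-\beta}/(1+w^2)$ and its log counterpart have ratio $\beta/2$, as claimed.

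\emph{Main obstacle.} The hard step is bounding $\mathcal R$ and $\mathcal R'$ by $o(q)$, i.e.\ showing that interference between scatterers does not contribute at order $q$. I would split each sequence at a threshold $|V|>\epsilon$. Strong scatterers are Poisson-like with mean number $\propto q$, so events with two or more of them cost $O(q^2)$. The weak scatterers' cumulative effect on $\ln T$ is bounded in mean by $N\langle V^2\mathbf 1\{|V|<\epsilon\}\rangle\propto q\,\epsilon^{2-\beta}$. This tends to zero as $\epsilon\to0$ after $q\to0$, because $\langle V^2\mathbf 1\{|V|<\epsilon\}\rangle\propto|\delta|^\beta\epsilon^{2-\beta}$ for $\beta<2$. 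The phase-averaged transfer-matrix bound I would use here must hold uniformly in $N$ at fixed $q$, with $\sin k$ bounded away from zero. I would try to obtain it first from the product of single-scatterer transfer matrices with random intervening clean phases.
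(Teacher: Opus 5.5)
Your proposal is correct and follows the paper's own argument. The common steps are:
\begin{itemize}
\item the arcsine tail of $Y$ and the resulting tail density $\pi^{-1}|\delta|^{1/2}V^{-3/2}$ of $V$;
\item the change of variables $\rho^s\,\dd\rho\propto N^{-(1+s)/\beta}q^{(1+s)/\beta-1}\dd q$ under the assumed scaling limit;
\item the dilute limit as $N$ times the single-scatterer average, evaluated through the two Mellin integrals. One integration by parts relates them, which gives the ratio $\beta/2$, and the bulk $V\sim\delta$ contributes only $O(N\delta^2)$.
\end{itemize}
The paper justifies the first-order-in-$q$ step only by the independence of the scatterers and never bounds the interference remainders. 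Your $\epsilon$-splitting sketch for $\mathcal R,\mathcal R'$ therefore goes beyond what the paper proves; it does not fill a gap in the paper's route.
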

\noindent\emph{Proof of the dilute limit.} The tail density of $V$ is $p(V)=\pi^{-1}|\delta|^{1/2}V^{-3/2}$ for $V\gg|\delta|$ (one sign, that of $\delta$); scatterers are independent, so to first order in $q$, $1-\langle T\rangle=N\int p(V)[1-T_1(V)]\,\dd V$ and $-\langle\ln T\rangle=N\int p(V)\ln[1+V^2/4\sin^2k]\,\dd V$. With $V=2w\sin k$, $\int_0^\infty w^{1/2}(1+w^2)^{-1}\dd w=\pi/\sqrt2$ and $\int_0^\infty w^{-3/2}\ln(1+w^2)\,\dd w=2\sqrt2\,\pi$ give Eq.~\eqref{eq:dilute} (the limit is taken at fixed $q$ as $N\to\infty$, then $q\to0$); the bulk of the distribution ($V\sim\delta$) contributes $O(N\delta^2)=O(q^4/N^3)$. For a tail of index $\beta$, $\int_0^\infty w^{1-\beta}(1+w^2)^{-1}\dd w=\pi/[2\sin(\pi\beta/2)]$ and $\int_0^\infty w^{-1-\beta}\ln(1+w^2)\,\dd w$ is $2/\beta$ times it, so the ratio is $\beta/2$: the value $\frac14$ is a property of the linear node, not of heavy tails in general. Beyond the dilute limit the kernel is the mean transmission of a Poisson set of strong scatterers with random phases between them; we do not derive it. Its existence and an integrable tail, hence H4, are supported numerically over the sampled range (Tests below), as is the law $g\propto N^{-2}$; they are not proved. Heavy-tailed disorder is known to take one-dimensional localization out of the finite-variance class of single-parameter scaling~\cite{TitovSchomerus2003}, with scaling classes set by the tail index for symmetric, uncorrelated tails at weak disorder~\cite{Texier2020}; these results do not cover the one-sided, node-generated tail of index $\frac12$ here, whose dilute limit Eq.~\eqref{eq:dilute} we derive directly. As for random potentials with large local fluctuations~\cite{Bienaime2008}, the derivation averages the full single-scatterer response, not its second moment, which diverges. The heavy tail is in the strength of scatterers at fixed positions; it differs from L\'evy-distributed spacings, for which the number of scatterers in a given length is itself random~\cite{Falceto2010}.
\begin{corollary}[mixed-order zero sets]\label{cor:mixed} Let $\Zg$ contain an isolated point $x_0$ near which, in coordinates $y=(y_1,\dots,y_J)$ with $y_j\in\mathbb R^{c_j}$, $\dd\mu=w_0\prod_j|y_j|^{s_j}\dd y_j\,[1+o(1)]$, and suppose that the mean transmission has a scaling limit: there are $\nu_j>0$ such that, with $y_j=N^{-\nu_j}\eta_j$, $t_N\to\Psi(\eta)$ as $N\to\infty$ at fixed $\eta$, dominated by $\bar\Psi$ with $\int\bar\Psi\prod_j|\eta_j|^{s_j}\dd\eta_j<\infty$, and that H3 holds away from $x_0$. Then
\begin{equation}
N^{\alpha}g(N)\to\frac{w_0}{\int\dd\mu}\int\Psi(\eta)\prod_j|\eta_j|^{s_j}\dd\eta_j,\qquad\alpha=\sum_j\nu_j(c_j+s_j).\label{eq:mixed}
\end{equation}
If $\gamma\simeq\Gamma\sum_j|y_j|^{2m_j}$ and $\Psi$ depends on $N\gamma$ alone, $\nu_j=1/2m_j$ and $\alpha=\sum_j(c_j+s_j)/2m_j$.\end{corollary}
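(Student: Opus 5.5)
The plan is to reduce Corollary~\ref{cor:mixed} to a single change of variables plus dominated convergence, the same mechanism as the proof of Theorem~\ref{thm:general} but with independent scalings of the blocks $y_j$ in place of one radial scaling.

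First, localize. Fix a neighbourhood $B$ of $x_0$ small enough that the measure expansion holds there with a relative error $\epsilon$. By H3, the part of ${\rm supp}\,\mu$ outside $B$ contributes $O(e^{-\mu_\delta N})$ to the numerator of $g(N)$. This is negligible against any power of $N$, so only $\int_B t_N\,\dd\mu$ matters.

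Second, inside $B$ substitute $y_j=N^{-\nu_j}\eta_j$ block by block. Each block $y_j\in\mathbb R^{c_j}$ carries the weight $|y_j|^{s_j}\dd y_j$, which becomes $N^{-\nu_j(c_j+s_j)}|\eta_j|^{s_j}\dd\eta_j$: the $c_j$ from the Jacobian and the $s_j$ from the weight. The product therefore gives $N^{-\alpha}$ with $\alpha=\sum_j\nu_j(c_j+s_j)$. This yields
\[
N^{\alpha}\int_B t_N\,\dd\mu=w_0\int_{B_N}t_N(N^{-\nu}\eta)\prod_j|\eta_j|^{s_j}\dd\eta_j\,[1+o(1)],
\]
where $B_N$ is the rescaled neighbourhood and exhausts $\prod_j\mathbb R^{c_j}$ as $N\to\infty$. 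The $[1+o(1)]$ factor of the measure is uniform on $B$ up to $\epsilon$, so it can be pulled out. Pointwise $t_N\to\Psi(\eta)$, and the majorant $\bar\Psi$ is integrable against $\prod_j|\eta_j|^{s_j}$. Dominated convergence then gives the limit $w_0\int\Psi\prod_j|\eta_j|^{s_j}\dd\eta_j$. Letting $\epsilon\to0$ and dividing by $\int\dd\mu$ gives Eq.~\eqref{eq:mixed}.

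Third, treat the special case. If $\gamma\simeq\Gamma\sum_j|y_j|^{2m_j}$, the choice $\nu_j=1/2m_j$ makes $N\gamma\to\Gamma\sum_j|\eta_j|^{2m_j}$ at fixed $\eta$. A kernel depending on $N\gamma$ alone then has a well-defined scaling limit $\Psi(\eta)=\Phi(2\Gamma\sum_j|\eta_j|^{2m_j})$, and substituting $\nu_j$ gives $\alpha=\sum_j(c_j+s_j)/2m_j$.

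The main obstacle is the domination step, because it must hold on all of $B_N$ and not only at fixed $\eta$. The hypothesis states the majorant as dominating $t_N$ for every $N$ in rescaled coordinates, so I would read it in that uniform sense. I would also note that, as the counterexample after Theorem~\ref{thm:general} shows, mere pointwise convergence with a majorant for the limit would not suffice. No rate is claimed, so no analogue of H4$'$ is needed.
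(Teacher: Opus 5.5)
Your proposal is correct and follows the paper's own proof: localize with H3, substitute $y_j=N^{-\nu_j}\eta_j$ so that the weight contributes $N^{-\alpha}$, then take the limit by dominated convergence. As the hypothesis intends, you read $\bar\Psi$ as a majorant of $t_N$ in rescaled coordinates that holds uniformly in $N$.

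One side remark is off. The counterexample after the thickness theorem has a uniform integrable majorant, and its limit statement holds, so it shows why a \emph{rate} needs the global bound H4$'$, not why the limit needs uniform domination. That point is true, but the right illustration is mass escaping to $|\eta|\to\infty$ at intermediate scales $N^{-\nu_j}\ll|y_j|\ll1$, which neither pointwise convergence nor H3 controls.
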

\noindent\emph{Proof.} Substitute $y_j=N^{-\nu_j}\eta_j$: $\prod_j|y_j|^{s_j}\dd y_j=N^{-\alpha}\prod_j|\eta_j|^{s_j}\dd\eta_j$; dominated convergence gives the limit, and H3 the complement. $\square$ If the support of $\mu$ near $x_0$ is a cusp, as for the holes, whose band restricts $|E-E^*|\lesssim p^2$, the restriction becomes $|\eta_E|\lesssim N^{2}|\eta_p|^2$ in the scaled coordinates ($\nu_p=\frac14$, $\nu_E=\frac52$) and disappears in the limit, so the statement applies with $\Psi$ extended by 0 outside the band. The hypothesis is a joint scaling limit of the mean transmission, not the form of $\gamma$; it is what the hole transparent energy of black phosphorus requires (Sec.~\ref{sec:thicknessdft}), where the two directions are the residue $\gamma(E^*)\propto p^4$ and the heavy-tailed window $|E-E^*|^{1/2}/p$.
\begin{corollary}[several components]\label{cor:sum} The slowest component wins; a zero ring ($c=1$) beats the axis if $(1+s_1)/2m_1<(2+s_0)/2m_0$.\end{corollary}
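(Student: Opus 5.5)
The statement immediately above is Corollary~\ref{cor:sum}: when $\Zg$ has several components, the slowest one sets the law. In particular a zero ring ($c=1$) dominates the rotation axis exactly when $(1+s_1)/2m_1<(2+s_0)/2m_0$. I would obtain it as a direct reading of Theorem~\ref{thm:general}, whose conclusion already sums over components. The plan is to check H1--H4 separately for each component $Z_i$, compute $\alpha_i=(c_i+s_i)/2m_i$ for each, and invoke Eq.~\eqref{eq:general}. The leading term is then $\sum_{\alpha_i=\alpha}A_iN^{-\alpha}$ with $\alpha=\min_i\alpha_i$, and the other components enter only at $O(N^{-\alpha'})$. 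So "the slowest component wins" is the statement that $\alpha$ is a minimum and that each $A_i>0$.

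The key steps, in order, are these. First, choose disjoint tubular neighbourhoods of the components. This is possible because they are a finite union of compact submanifolds. If the ring and the axis could touch, I would instead require that they be separated in the support of $\mu$: a ring of radius $p_0>0$ is at positive distance from $\{\bp=0\}$. Second, split $g(N)$ over these tubes and the complement, and use H3 to make the complement exponentially small. Third, apply the tube estimate of the theorem's proof to each tube. The substitution $q=2N\Gamma_i\rho^{2m_i}$ gives a contribution $A_iN^{-\alpha_i}[1+o(1)]$, with $A_i$ strictly positive because $\omega_i>0$, $\Gamma_i$ is bounded, and the kernel is nonnegative and not identically zero. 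Fourth, compare exponents. For the axis, Corollary~\ref{cor:point} gives $c_0=2$, so $\alpha_0=(2+s_0)/2m_0$. For a ring in momentum space, $c_1=1$ (the normal direction is radial), so $\alpha_1=(1+s_1)/2m_1$. The ring dominates iff $\alpha_1<\alpha_0$, which is the stated inequality.

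I expect the only real obstacle to be the positivity of the amplitudes and the bookkeeping of borderline cases. If $\alpha_1=\alpha_0$, both amplitudes add, and the corollary should then be read as "both contribute". I also need $J_\alpha(\zeta,\hat n)>0$. That follows because $\Phi\ge0$ and $\Phi(q)\to1$ as $q\to0$, the latter being transmission normalized to the aligned stack with vanishing disorder. A second subtlety arises if the ring and the axis components have $r>0$ and require blown-up coordinates. There I would apply the theorem to the pushed-forward measure described after Remark~\ref{rem:multi}, component by component. The resolved tubes remain disjoint, so the summation argument is unchanged.
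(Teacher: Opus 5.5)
Your proposal is correct and is essentially the paper's argument: the corollary is a direct reading of Theorem~\ref{thm:general}, whose limit keeps only the components with $\alpha_i=\min_j\alpha_j$ while the others contribute $O(N^{-\alpha_i})$, with $c=2$ on the axis (Corollary~\ref{cor:point}) and $c=1$ for the ring. Your added checks (disjoint tubes, the tie case $\alpha_1=\alpha_0$, and positivity of the $A_i$) only make explicit what the paper leaves implicit; for positivity, $\Phi\ge0$, $\Phi\not\equiv0$ and $\omega_i\not\equiv0$ suffice, so you do not need $\Phi(0)=1$, which can fail at a zero set whose ordered chain is not matched to the leads.
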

\begin{remark}[band centre of bond disorder] At the band centre of pure off-diagonal disorder $\gamma\to0$ without a power law: $\langle\ln\mathcal T\rangle\propto-\sqrt N$, while rare, nearly transparent sequences make the mean transmission decay as $N^{-1/2}$~\cite{Steiner1999}, so that mean and typical transmission follow different laws at one energy; H2 and H4 fail. This point is excluded; it is irrelevant for band-edge carriers and relevant for mid-band Fermi levels (Sec.~\ref{sec:spacing}).\end{remark}

\subsection{Tests}
\paragraph{Discriminating tests.} A coupling that vanishes on the axis ($r=1$) changes the exponent from $1/\kappa$ to $(2+s)/2(\kappa-1)$: fitted slopes $-1.491$ and $-0.483$ against $-3/2$ and $-1/2$ for C$_2$ and C$_4$ layers (bandwidth measure). An added zero ring ($c=1$, $m=2$) makes $N^{-1/4}$ win over the axis law $N^{-1/2}$; the two-term form with both amplitudes from Eq.~\eqref{eq:general} matches the chain within 0--1.5\% at $N=1024$--4096. These tests use scalar chains.

\paragraph{Transparent energies: bare disorder of order one as a covered case.} A propagating orbital $a$ (interlayer hopping $t=1$) is side-coupled with random strength $c_l\in[0.6,1.6]$ to a local orbital $b$ at $e_b=2.4$; the bare on-site of $a$ is $v_l=-c_l^2/(E^*-e_b)$, so the eliminated on-site $v_l+c_l^2/(E-e_b)$ is exactly constant at $E^*=0.3$. The bare disorder is $O(1)$, yet Corollary~\ref{cor:transparent} predicts $g=A N^{-1/2}$ with
\begin{equation}
A=\frac{J_{1/2}}{|W|}(2\Gamma_2)^{-1/2},\qquad \Gamma_2=\frac{{\rm Var}(c^2)}{8t^2\sin^2k^*\,(E^*-e_b)^4}=2.689\times10^{-3},\qquad A=7.474,
\end{equation}
for the uniform energy window $|W|=3.98$. The example meets (i)--(iii): $E^*=0.3$ is inside the band and $e_b=2.4$ lies outside the tested interval, so the Schur term is analytic there. The simulation (16 sequences, $N\le8192$, 16001 energies; \texttt{transparent7.py}; the simulation uses the eliminated one-component chain, which is an exact reduction: the full two-orbital Green function agrees with it to $10^{-15}$) gives $g(8192)=0.0825$ against $AN^{-1/2}=0.0826$ (Fig.~\ref{fig:zero}a); the full Born curve $\int\Phi_1(2N\gamma_{\rm Born})$ follows the data within 2.2\% at every $N$ from 8 to 8192, and the local exponent reaches $-0.496$. A control with the same bare spread but no cancellation has decayed to $3.5\times10^{-104}$. The theorem thus predicts the exponent and the amplitude of a chain whose bare disorder is of order one; what it needs is the effective exponent, not the bare disorder.

\paragraph{Transparent energy at nodes: the heavy-tailed kernel.} The ideal factorized chain of Corollary~\ref{cor:heavy} ($t_l=t_0s_ls_{l+1}$, $e_l=E^*+2\Delta s_l^2$, continuous uniform twists, aligned leads) is computed exactly with the recursion of \texttt{vk\_channel\_fast.py} (no absorption, $\eta=0$) for the two positions of $E^*$ in the band that black phosphorus has at the coupling scales 1.53 and 2.15, $\cos k=-\Delta/t_0=-0.900$ and $-0.640$; $2^{14}$ twist sequences per point, each point with its own random streams, $N=64$, 256, 1024 and 4096, $q=N|\delta|^{1/2}$ from 0.03 to 32 on both sides of $E^*$ (\texttt{blackphosphorus/\allowbreak kernel14.py}; Fig.~\ref{fig:kernel}a). (i)~Matched leads: $T(E^*)=1$ to $10^{-13}$ for every sequence. (ii)~Collapse: at fixed $q\le8$ the means for $N=256$ and 1024 agree with those for $N=4096$ within their standard errors (rms of the normalized differences 1.21 and 1.08 over 48 comparisons for $\cos k=-0.90$ and $-0.64$, largest 2.6; the points are independent, but the $N=4096$ reference enters twice, so this is a description, not a test with 48 degrees of freedom); at $N=64$ finite-size deviations remain (rms 2.7 standard errors at $\cos k=-0.90$). (iii)~Mean and typical differ: at $q=1$, $\langle T\rangle=0.48$ (0.57) but $e^{\langle\ln T\rangle}=0.05$ (0.10), and at $q=8$ the largest 1\% of sequences carry 38--86\% of the mean. (iv)~Dilute limit ($q=0.03$--0.25, $N=1024$ and 4096): $(1-\langle T\rangle)/q=0.68$--0.76 against 0.757 of Eq.~\eqref{eq:dilute} and $-\langle\ln T\rangle/q=2.7$--3.1 against 3.03 for $\cos k=-0.900$; 0.53--0.58 against 0.570 and 2.1--2.4 against 2.28 for $\cos k=-0.640$. The ratio is 0.22--0.26, approaching $\frac14$ as $q\to0$; the weak-scattering kernel $\Phi_1$ gives 1. (v)~Tail: $\langle T\rangle\propto e^{-0.52q}$ ($e^{-0.44q}$) for $4\le q\le16$, consistent with an integrable exponential envelope over the sampled range; this is not a proof for all $q$, nor uniformly in $N$. (vi)~The integrated transmission of one channel, $N^2\int\langle T\rangle\,\dd\delta=\sum_\pm\int2q\,\Psi(q)\,\dd q$, is $J=8.73$--8.82 (15.42--15.50), with standard errors 0.05 (0.09), for $N=64$ to 4096, constant within 1\% (trapezoidal rule on the $q$ grid, which overestimates $J$ by about 2\%; the tail beyond $q=16$ adds at most 1.4\%): each channel transmits $Jt_0N^{-2}$, the law of Corollary~\ref{cor:heavy}, with a kernel whose amplitude is not that of $\Phi_1$. The scaling limit is thus supported numerically for $\beta=\frac12$ over $N=256$--4096; it is not proved.

\begin{figure}[!tbp]\centering
\includegraphics[width=\textwidth]{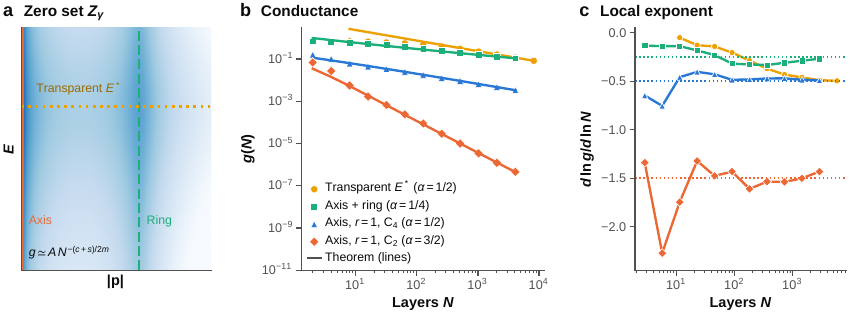}
\caption{Thickness law from the zero set. (a)~Zero sets of the effective Lyapunov exponent in momentum--energy space: the rotation axis, a ring, a transparent energy. (b)~Mean conductance of scalar chains (markers) against the theorem with closed-form amplitudes (lines): transparent energy (circles), axis plus ring (squares), and axis laws with $r=1$ in C$_4$ and C$_2$ layers (triangles, diamonds); panel (a) of letter Fig.~2. (c)~Local exponents approaching the predicted $-\alpha$ (dotted).}\label{fig:theoremtests}
\end{figure}
\begin{figure}[!tbp]\centering
\includegraphics[width=0.74\textwidth]{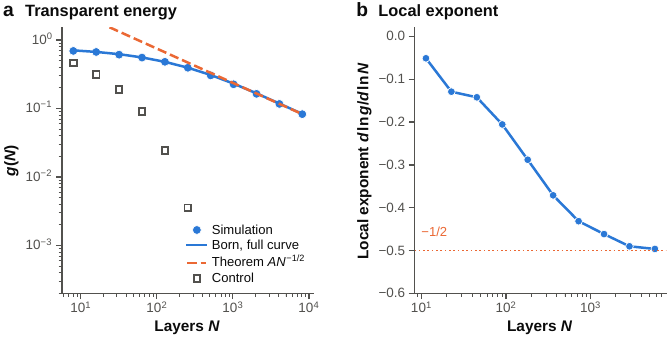}
\caption{Zero-scattering channels. (a)~Bare disorder of order one with a transparent energy: mean conductance of the side-coupled chain (markers, 16 sequences), the Born curve $\int\Phi_1(2N\gamma_{\rm Born})$ (line), the theorem's asymptote $AN^{-1/2}$ with $A=7.47$ computed in closed form (dashed), and a control with the same bare spread but no cancellation (open squares). (b)~Local exponent $\dd\ln g/\dd\ln N$, approaching $-1/2$.}\label{fig:zero}
\end{figure}

\paragraph{Physical models: phonon rotation has no transparent window, spacing noise and nodal edges have one.} Transparent energies require a cancellation that survives disorder averaging. For the rotation of phonons, eliminating the decoupled transverse component would cancel the in-plane disorder only if the rotated monolayer tensor kept its eigenvalues, which it does not in C$_2$ and C$_4$ layers. A scan of the full stress-free $3\times3$ model with rotational disorder (24 momenta $\times$ 160 frequencies per lattice, $N=4096$; Sec.~\ref{sec:phonon}, Fig.~\ref{fig:phonon}d) finds the transmission decreasing monotonically away from the protected region around the axis, with no isolated window at the scanned resolution: for rotation, $\Zg$ is the axis within that resolution (a narrower window between grid points is not excluded). Spacing noise is different. The bond and the sum-rule diagonal of an out-of-plane phonon respond to the same spacing, and their backscattering amplitudes cancel on a surface $\omega^*(\bp)$ inside the band (Sec.~\ref{sec:joint_asym}). This is a transparent window of the kind of Corollary~\ref{cor:transparent}, present in the stable model and visible in its transmission. For electrons, rotation itself produces one at nodal (B-type) edges: the factorized chain of Corollary~\ref{cor:heavy}, tested next and realized by the holes of black phosphorus (Sec.~\ref{sec:thicknessdft}).

%% file: sec_phonon.tex
\section{Phonons in the stable model}\label{sec:phonon}

\subsection{Stability}
Normal-force balance makes the out-of-plane sector stable at small $p$, but full stability also needs the in-plane condition $c_L^2>2|u_0|$ (Sec.~\ref{sec:stressfree}); it is checked numerically. Every production chain passes the gate of App.~\ref{app:stress}: 5 stiffness families $\times$ 657 samples (42 configurations, 16 sequences each except the noise-free aligned stack), all checkpoints $N=4$--256 and all 120 channels, plus every sequence of the branch-exponent runs. No negative eigenvalue occurs; the lowest eigenvalue of the dense Hessian of the smallest channel is $+3.52\times10^{-6}$\,eV\,\AA$^{-2}$\,amu$^{-1}$, the in-plane acoustic value $\simeq c_Tp^2$ at $|\bp|=0.01$\,\AA$^{-1}$. For comparison, pre-stressed chains (spacings imposed at a fixed potential) at 0.06\,\AA\ have negative modes in 2, 15, 15 and 14 of 16 samples (i.i.d., positional, curvature, alternating), down to $-8.2\times10^{-6}$ (Fig.~\ref{fig:phonon}a). On the square lattices the gate also finds negative modes at $|\bp|\approx|\mathbf G|$ on a grid extended beyond the zone disk (C$_4$ full model: 94 negative eigenvalues at $|\bp|=1.51$--1.60\,\AA$^{-1}$ over 4 sequences, none on the zone disk; \texttt{domain\_check7.py}), a further reason for the channel domain of Sec.~\ref{sec:framework}.

\subsection{Branch exponents}
Theorem~\ref{thm:general} at the axis ($c=2$) gives:
\begin{itemize}
\item \textbf{out-of-plane branch}: $r=0$, $s=0$, $G_z\propto N^{-1/\kappa_z}$, $\kappa_z=\max(4,\kappa_{\rm scalar})$ for a free layer;
\item \textbf{in-plane L branch}: $r=2$, $s=1$ (channel bandwidth $\propto p$), $G_L\propto N^{-3/2(\kappa_t-2)}$: $N^{-3/4}$ for C$_3$ and C$_6$ layers; C$_2$ and C$_4$ layers have $\kappa_t=r$: on-site disorder and coupling scale alike, $\gamma$ tends to a positive, $p$-independent limit on the axis, and the in-plane chains decay exponentially (Corollary~\ref{cor:exp});
\item \textbf{in-plane T branch}: decoupled for central forces; non-central forces make it behave like L.
\end{itemize}
In non-centrosymmetric C$_3$ layers (D$_{3h}$, e.g.\ MoS$_2$) Table~\ref{tab:kappa} also allows a gyrotropic in-plane term $i\epsilon_{ij}f(\bp)$ with $f\propto p^3\cos3\varphi$ ($\kappa_{\rm on}=3$). With central interlayer forces only the L projection couples, and $\hat p^{\top}\epsilon\,\hat p=0$, so this term enters the L channel only through L--T mixing: after the closed T component is eliminated it is of order $p^6/p^2=p^4$, and the L exponent stays $3/4$. With non-central forces ($F_{TT}\neq0$) a T band exists; the mixing scatters between open channels, and the law becomes $N^{-3/2}$, only if that band is open at the frequency of the L channel. A T band that does not overlap the L band in frequency leaves T closed, the $p^3$ term acts through it at second order, and the central-force exponent $3/4$ stands. The size of $f$ follows from ${\rm Im}\,D_{xy}(\bp)$ of the T0-b phonon calculation; the kit does not fit it, and the L-channel law does not need it for central forces.
The stable, zone-consistent chains confirm all of them (Fig.~\ref{fig:phonon}; 4 sequences, $N\le2048$). Local slopes at the largest $N$: C$_6$ $z$ $-0.10$ and full $-0.13$ (asymptote $-1/6$, reached as slowly as for C$_6$ electrons), in-plane $-0.65$ to $-0.72$ ($-3/4$); C$_4$ $z$ $-0.22$ and full $-0.21$ ($-1/4$), in-plane exponential (below $10^{-190}$ at $N=256$); C$_2$ with a tensioned-membrane $z$ block $-0.52$ ($-1/2$), full $-0.46$, in-plane exponential; flexural C$_2$ $z$ $-0.24$ to $-0.25$ ($-1/4$). Pre-stressed chains (spacings imposed at a fixed potential) give the same exponents within the sampling spread; their conductance ratios differ.

\subsection{Zero set of the effective exponent}
A transparent window would keep a channel transmitting at some $(\bp,\omega)$ away from the axis however thick the stack. We scanned the full stress-free $3\times3$ model on 24 momenta from 0.02\,\AA$^{-1}$ to the zone disk and 160 frequencies spanning each channel's band, with rotational disorder only and 4 sequences of $N=4096$ layers (\texttt{phys7.py scan}; Fig.~\ref{fig:phonon}d). The maximum over frequency of $\langle\mathcal T\rangle$ per open polarization is 1 within 1\% in the protected region near the axis ($|\bp|\le0.12$, 0.14, 0.30 and 0.55\,\AA$^{-1}$ for tensioned C$_2$, flexural C$_2$, C$_4$ and C$_6$ layers). Beyond it the maximum decreases monotonically to $3\times10^{-9}$, $3\times10^{-156}$, $10^{-11}$ and 0.2 at the zone disk, with no isolated maximum at finite $p$. The Lyapunov exponent obtained from the growth of $-\langle\ln\mathcal T\rangle$ between $N=1024$ and 4096 is below the resolution ($\sim10^{-4}$) in the protected region and grows monotonically beyond it. Within this resolution, $\Zg$ of rotational disorder is the axis. Spacing noise adds a zero surface of its own (Sec.~\ref{sec:joint_asym}).

\subsection{What the continuum filter does, and what registry does}\label{sec:registry}
In the zone-consistent C$_6$ model, rotation alone reduces the phonon conductance only modestly at moderate thickness: $g_{\rm ph}=0.95$, 0.86 and 0.78 at $N=4$, 64 and 256. The in-plane channels are removed quickly, but the out-of-plane channels near the axis, which carry most of the conductance, are protected to high order. On a grid extended to 1.6\,\AA$^{-1}$ the ratio would be 0.39 at $N=256$, because the re-counted extended-zone channels are strongly rotation disordered (Fig.~\ref{fig:phonon}b). For C$_2$ and C$_4$ layers the filter is stronger (full model 0.52 and 0.69 at $N=64$, 0.13 and 0.33 at $N=2048$).

A twisted interface also loses all $\mathbf G\neq0$ components of the interlayer coupling, which an aligned (registered) crystal keeps: above all its shear stiffness, which is zero at an incommensurate interface. This is an $N$-independent factor
\begin{equation}
g_{\rm reg}=\frac{G_{\rm ph}(\text{aligned, continuum coupling})}{G_{\rm ph}(\text{aligned, registered coupling})},
\end{equation}
which multiplies every twisted-stack phonon ratio when the reference is the real aligned crystal. It is a material quantity: the registered crystal's interlayer force constants at all in-plane $\bp$ follow from aligned bulk phonons (T1-d), the continuum ones from the registry scan. The sign is not universal. In the minimal model the registered crystal nests its atoms into hollow sites at a distance comparable to the Morse radius, with a corrugation of 170\,meV per atom, an order of magnitude above van der Waals crystals; its breathing stiffness is less than half the continuum value and $g_{\rm reg}=1.60$ (\texttt{registry7.py}). In van der Waals crystals the breathing stiffness changes little on twisting and the shear branches are lost, which gives $g_{\rm reg}<1$. The large reductions of cross-plane thermal conductivity measured in random-twist films~\cite{Kim2021} are therefore expected to come mainly from $g_{\rm reg}$ and from anharmonic scattering, with the continuum rotational filter adding a thickness-dependent factor on top.

\begin{figure}[!tbp]\centering
\includegraphics[width=\textwidth]{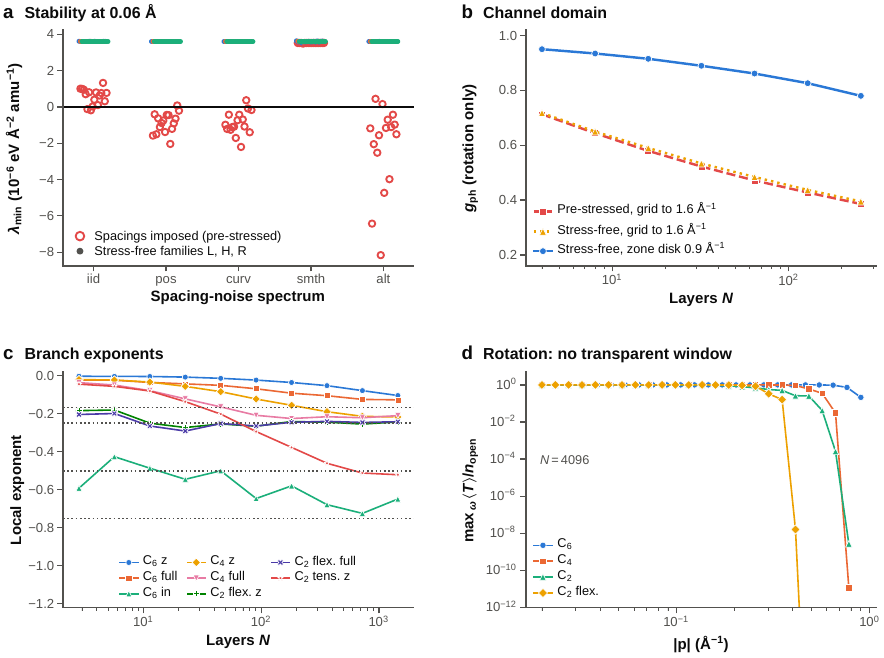}
\caption{The stable phonon model. (a)~Stability of the production chains at 0.06\,\AA: lowest eigenvalue of the smallest channel for each sequence and spacing-noise spectrum (i.i.d., positional, curvature, smooth, alternating), for chains with spacings imposed at a fixed potential (pre-stressed, open markers) and for the stress-free families (filled). (b)~Rotation-only phonon ratio of C$_6$ layers: pre-stressed chains on a grid extended to 1.6\,\AA$^{-1}$, the stress-free model on the same grid, and on the zone-consistent grid; the difference is the channel domain, not the stress. (c)~Local exponents of the branch conductances, C$_2$ (flexural and tensioned), C$_4$ and C$_6$; dotted lines: predicted asymptotes. (d)~Maximum over frequency of the mean transmission per open polarization at $N=4096$ versus channel momentum, full stress-free $3\times3$ model with rotational disorder only (4 sequences): 1 in the protected region around the axis, monotonic decrease beyond it, no isolated window.}\label{fig:phonon}
\end{figure}

%% file: sec_spacing.tex
\section{Spacing noise}\label{sec:spacing}

\subsection{Born exponents and the two sensitivities}
Let $t_l=\bar t(1+\delta_l)$ be the bond of channel $\bp$, with $\delta_l$ a stationary process of power spectrum $S_\delta(q)$. At weak disorder a state of wavenumber $k$ along the stack is backscattered at $2k$ (App.~\ref{app:born}):
\begin{equation}
\gamma_e(k)=\frac{S_\delta(2k)}{2\sin^2k},\qquad \gamma_{\rm ph}(k;\bp)=\frac{S_d(2k)\big[\beta_0F_0\cos k-\beta_pF_p\big]^2}{2F_p^2\sin^2k},\label{eq:gam_e}
\end{equation}
where for electrons $\delta_l$ follows from $\beta_e=-\partial_d\ln|\bar T|$, and for phonons the sum-rule diagonal $F_0$ and the bond $F_p$ vary together along the \emph{equilibrium family} of Eq.~\eqref{eq:family}, with $\beta_0=\beta_{\rm eq}$ and $\beta_p(\bp)=\beta_{\rm eq}+O(p^2)$. The phonon exponent at $\bp=0$ is $S_d(2k)\beta_{\rm eq}^2\tan^2(k/2)/2$: long-wavelength phonons are protected by the sum rule, and the overall scale is $\beta_{\rm eq}^2$. The phonon sensitivity is not the load slope of a fixed potential but a material parameter of its own, and it can be small: for a rigid shift of the interlayer potential $\beta_{\rm eq}=-0.25$\,\AA$^{-1}$ against $\beta_{\rm load}=3.05$\,\AA$^{-1}$ and $\beta_e=1.42$\,\AA$^{-1}$ in the model. The families confirm the scaling: at $\sigma_d=0.03$\,\AA\ (i.i.d., aligned twists, $N=256$) the phonon ratio is 0.39 (L), 0.63 (H) and 0.997 (R), while the electrons are the same in all three, 0.21.

\subsection{Where Born theory fails}\label{sec:anomalies}
\begin{enumerate}
\item \emph{Band centre of pure bond disorder.} At $E=0$, $\ln|\psi|$ performs a random walk with zero drift~\cite{EggarterRiedinger1978,Inui1994}; transfer matrices give $\gamma(0)=1.6\times10^{-4}$ against the Born value $5.0\times10^{-3}$ ($\sigma_\delta=0.1$), and Born is recovered by $|E|/t=0.05$ (Fig.~\ref{fig:spacing}a). Excluded from the theorem (Remark after Corollary~\ref{cor:sum}); irrelevant for band-edge carriers.
\item \emph{Band edge.} Born theory overestimates $\gamma$ within $\varepsilon_a\approx t\sigma_\delta^{4/3}$ of the edge for white noise~\cite{DerridaGardner1984,Deych2000}. For noise correlated beyond the thermal wavelength the scale is the landscape of local band edges, $\varepsilon_a\approx2t\beta_e\sigma_d$ (9--27\,meV at 0.03--0.09\,\AA), comparable to $k_BT$; carriers below it are reflected quasi-classically. This is the one failure of the parameter-free electron prediction (smooth spectrum, Sec.~\ref{sec:predict}). The measured deviation of the Born prediction of $g_e$ for smooth noise ranges from $-11\%$ to $+59\%$ (where $g_e>0.1$) and changes sign with $N$ and $\sigma_d$, so Born is not a bound.
\end{enumerate}

\subsection{Exact bond spectrum}
For $t\propto e^{-\beta d}$ and a Gaussian spacing process $d_l=d_{\rm eq}+\sigma_dx_l$ with correlation $C(m)$, the relative bond fluctuation has the exact spectrum
\begin{equation}
S_\delta(q)=\sum_m\big(e^{s^2C(m)}-1\big)e^{iqm},\qquad s=\beta\sigma_d.\label{eq:lognormal}
\end{equation}
Positional (first-kind) noise is protected only at leading order: $S_\delta(0)=e^{s^2}+2e^{-s^2/2}-3=\frac34s^4+O(s^6)$ (Fig.~\ref{fig:spacing}b). The same formula with $\beta=\beta_{\rm eq}$ gives the lognormal stiffness spectrum of the stress-free phonon model.

\begin{figure}[!tbp]\centering
\includegraphics[width=0.86\textwidth]{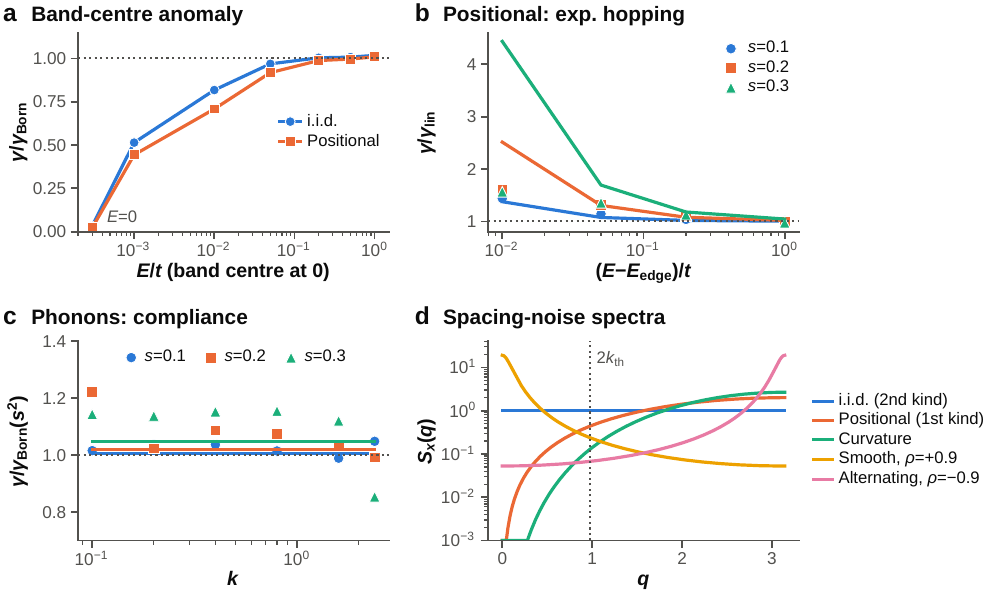}
\caption{Spacing noise beyond leading order. (a)~Band-centre anomaly of bond disorder. (b)~Positional noise near the band edge: the $O(s^4)$ residual of Eq.~\eqref{eq:lognormal}. (c)~Sum-rule chain with lognormal springs. (d)~The five spectra at equal rms; dotted line: $2k_{\rm th}$ of band-edge carriers at 300\,K.}\label{fig:spacing}
\end{figure}

\subsection{Which spectra protect electrons}\label{sec:spectra}
Band-edge electrons probe $S_\delta(q\to0)$ weighted by $1/k^2$; phonons probe all $q$~\cite{IzrailevRev2012}. For $S_\delta\propto q^{2a}$ the band-edge exponent scales as $\gamma_e\propto\varepsilon^{a-1}$: divergent for paracrystalline noise ($a=0$), flat for positional ($a=1$), vanishing for curvature-type ($a=2$). Spectra that vanish at $q\to0$ and keep weight at intermediate $q$ therefore favour phonon scattering. With random twists at $N=256$ (stable model, family L, $\beta_{\rm eq}=\beta_{\rm load}$) the selectivity $S=g_e/g_{\rm ph}$ is
\begin{center}\small
\begin{tabular}{lccccc}
\toprule
$\sigma_d$ (\AA) & i.i.d. & positional & curvature & smooth & alternating\\
\midrule
0.03 & 0.60 & 1.72 & 2.73 & 0.60 & 1.90\\
0.045 & 0.40 & 1.18 & 2.78 & 0.45 & 1.99\\
0.06 & 0.18 & 0.56 & 2.44 & 0.37 & 1.86\\
0.09 & 0.05 & 0.10 & 1.35 & 0.27 & 1.49\\
\bottomrule
\end{tabular}
\end{center}
against 1.28 for rotation alone. For family H ($\beta_{\rm eq}=\beta_{\rm load}/2$) curvature noise gives 1.53--1.71 and alternating noise 1.20--1.35 up to 0.06\,\AA; for family R ($\beta_{\rm eq}=0$) spacing noise lowers $S$ for every spectrum, because only the electrons respond. A paracrystal is phonon selective only if $\beta_{\rm eq}/\beta_e$ is large enough and its spectrum spares the band edge; Sec.~\ref{sec:thermo_cond} gives the thresholds.

\paragraph{Which spectra occur in practice?} Random twists make twist-driven spacings i.i.d.\ (a paracrystal of the second kind~\cite{Hosemann}): with uniform independent $\theta_l$, the steps $\Delta\theta_l$ mod $2\pi/n$ are i.i.d. Suppressing $S_\delta(q\to0)$ needs a reference, such as a substrate or intercalant template (positional) or compensating layer sequences (anti-correlated). The $00l$ diffraction line shapes identify the class: second-kind noise broadens lines $\propto l^2$, first-kind noise gives a Debye--Waller intensity loss.

\subsection{The spacing-noise transparency surface}\label{sec:joint_asym}
The bond $F_p(d_l)$ and the sum-rule diagonal $F_0(d_{l-1})+F_0(d_l)$ of an out-of-plane phonon respond to the same spacing, and in Eq.~\eqref{eq:gam_e} their backscattering amplitudes enter with opposite signs. In the scalar out-of-plane ($zz$) branch they cancel where
\begin{equation}
\cos k^*(\bp)=\frac{\beta_pF_p}{\beta_0F_0}=\frac{\partial_dF_{zz}(\bp;d)}{\partial_dF_{zz}(0;d)}\bigg|_{\rm family}.\label{eq:cone}
\end{equation}
Eq.~\eqref{eq:cone} is the scalar $z$-branch result. In the full $3\times3$ chain the normal--in-plane coupling enters only at higher order. Along the equilibrium family every interface is force free, so $F_{iz}=-ip_i[\partial_dU_l(\bp;d_l)-\partial_dU_l(0;d_l)]=O(p^3)$, and so is its variation along the family, $D_dF_{iz}=O(p^3)$ (local slopes 3.000 for families L, H and R; \texttt{models/\allowbreak checks10.py}). The partial derivative at a fixed potential, $\partial_dF_{iz}=-ip_i\partial_d^2U=O(p)$, describes a load, not the family, and is not the relevant quantity. The $O(p^3)$ terms and the mixing of branches move the cancellation slightly; an independent full-Born check at five momenta found it within 0.07\,rad/ps of Eq.~\eqref{eq:cone}, well inside the frequency grid (0.44\,rad/ps). At $\bp=0$ this is the acoustic point $k^*=0$. For $\bp\neq0$ the ratio is below 1 in families L and H and for the rigid shift, so every channel has a frequency $\omega^*(\bp)$ inside its out-of-plane band at which spacing noise does not backscatter at Born level: $k^*=3.9^\circ$, $12^\circ$ and $41^\circ$ at $|\bp|=0.1$, 0.3 and 0.9\,\AA$^{-1}$ in family L. (In family R, $\partial_dF_{zz}(0)=0$ and the zero lies instead at the momentum where $\partial_dF_{zz}(\bp)$ changes sign, near 0.36\,\AA$^{-1}$.) This is the phonon counterpart of the transparent energies of correlated diagonal and off-diagonal disorder~\cite{Tessieri2001}, and a case of Corollary~\ref{cor:transparent}: $\Zg$ of spacing noise has a component of codimension 1. The paired runs show it without any fit. For aligned stacks with i.i.d.\ noise of 0.06\,\AA\ (family L, $N=256$), every sampled channel (every third) with $|\bp|\ge0.56$\,\AA$^{-1}$ transmits 0.78--0.96 at a frequency within 0.22\,rad/ps of $\omega^*$ (frequency grid 0.44\,rad/ps), while its band bottom transmits as little as zero; below 0.56\,\AA$^{-1}$ the peaks lie within 0.32\,rad/ps of $\omega^*$, but the band bottom transmits nearly as well (\texttt{transparency8.py}). At four sample momenta the peaks are 0.97, 0.93, 0.93 and 0.80 at 9.1, 10.0, 10.9 and 11.8\,rad/ps, against $\omega^*=9.06$, 10.13, 11.03 and 11.87\,rad/ps from Eq.~\eqref{eq:cone}.

\begin{figure}[!tbp]\centering
\includegraphics[width=0.5\textwidth]{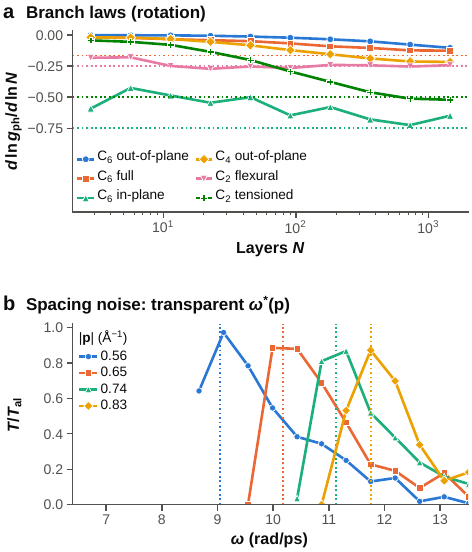}
\caption{Phonons in the stable model. (a)~Local exponents of branch conductances under rotation (four sequences, $N\le2048$; Fig.~\ref{fig:phonon}c); dotted: $-1/\kappa_z$, $-1/2$ (tensioned C$_2$) and $-3/4$. (b)~Spacing noise alone (aligned stack, i.i.d.\ noise of 0.06\,\AA, $N=256$): transmission of four channels relative to the noiseless stack, $T/T_{\rm al}$, against frequency; dotted: predicted transparent frequencies $\omega^*(\bp)$ of Eq.~\eqref{eq:cone}.}\label{fig:transparent}
\end{figure}
Three consequences follow.
\begin{enumerate}
\item \emph{Aligned stacks.} At Born level spacing noise gives $g_{\rm ph}\propto N^{-1/2}$ (Corollary~\ref{cor:transparent} with $s=0$). Terms of order $\sigma_d^4$ lift the surface, so this law holds at intermediate $N$. (The surface ends in a vertex at the acoustic point, where the tube hypothesis H1 fails; the vertex is a set of lower dimension and does not change the leading law.) Our runs are pre-asymptotic (local exponents between $-0.1$ and $-0.65$ at $N=128$--256, depending on spectrum and $\sigma_d$) and do not test it.
\item \emph{Random twists.} Rotation adds $\gamma_{\rm rot}>0$ at every $\bp\neq0$ and lifts the surface; the zero set shrinks to the acoustic point, but a valley of small $\gamma$ follows $\omega^*(\bp)$, lifted near the axis only as the rotational disorder $\propto p^{2\kappa_z}$. The asymptotic law of the joint problem depends on how the two disorders share this valley. It is not derived here, and all numerical statements about joint disorder are finite-$N$ held-out predictions (Sec.~\ref{sec:predict}).
\item \emph{What is unaffected.} The held-out predictions are per channel, and the measured spacing exponents contain the surface. An interpolation in $\beta_{\rm eq}$ of $\sqrt s$, linear between $\beta_{\rm eq}=0$ and $\beta_{\rm load}$, assumes that the bracket of Eq.~\eqref{eq:gam_e} keeps its sign; it fails near the surface and can flip threshold decisions there. We interpolate the signed amplitude instead and test it out of sample (Sec.~\ref{sec:betainterp}).
\end{enumerate}

%% file: sec_predict.tex
\section{Held-out predictions of joint disorder}\label{sec:predict}

Every channel $(\bp,k)$ receives a Lyapunov exponent from each independent disorder, and exponents of independent disorders add. The channel's mean transmission is $\Phi_1(2N\gamma)$ and the measure performs the channel sum. We test this against paired simulations of the stable model: triangular layers; $\Gamma$-valley electrons ($\bar T=0.106$\,eV at $d_{\rm eq}$, $\beta_e=1.42$\,\AA$^{-1}$); stress-free $3\times3$ phonons in five stiffness families ($\beta_{\rm eq}=0$, $\beta_{\rm load}/4$, $\beta_{\rm load}/2$, $3\beta_{\rm load}/4$, $\beta_{\rm load}$: R, Q$_1$, H, Q$_3$, L); five spacing spectra at 0.03, 0.045, 0.06 and 0.09\,\AA; aligned or random twists; 16 paired sequences each, $N=4$--256 (\texttt{pair\_v7.py}, \texttt{pair8.py}). Families share their random streams: sequence $i$ of a configuration has the same twists and unit spacing noise in every family.

\subsection{Electrons: no fitted parameter}
Born exponents with the exact lognormal spectrum, Eq.~\eqref{eq:lognormal}, plus the rotation variance, give $g_e(N)$ with no input from the three-dimensional runs. Where $g_e>0.1$ the prediction deviates by rms 2.1\% (alternating), 5.0\% (positional), 5.3\% (curvature) and 9.5\% (i.i.d.) over all $N$ and $\sigma_d$; for $N\ge32$ the rms deviation is 2.4, 5.2, 4.8 and 7.2\% at 0.03, 0.045, 0.06 and 0.09\,\AA. The largest errors are at $N\le16$, where single-parameter scaling is not yet reached, and grow with $\sigma_d$ (i.i.d.\ at 0.09\,\AA: $-23\%$ at $N=8$). Over all spectra but the smooth one the rms deviation is 5.8\%, against a sampling floor of 3.1\% from the 16 sequences (bootstrap, \texttt{boot7.py}). The smooth spectrum fails for the stated reason (Sec.~\ref{sec:anomalies}), with errors from $-11\%$ to $+59\%$. Rotation alone leaves the model electrons at $g_e=1.000$ ($\kappa=6$ at thermal momenta), so this test probes spacing noise; rotation enters the electron laws through the tests of Secs.~\ref{sec:index} and~\ref{sec:theorem}.

\subsection{Phonons: separate runs predict the joint runs}
For phonons we calibrate on separate runs: the per-$(\bp,\omega)$ Lyapunov variables $s=\Phi_1^{-1}(\mathcal T/\mathcal T_\al)$ of rotation-only and spacing-only stacks. The joint channel is then predicted as $\Phi_1(s_{\rm rot}+s_{\rm sp})$. Four tests, each on data not used for calibration, are summarized in Table~\ref{tab:holdout} (family L has the strongest phonon response).
\begin{table}[!htbp]\centering\small
\caption{Held-out prediction of the joint (rotation $+$ spacing) phonon conductance: relative error, rms (max) over all spectra, noise levels and $N$ with $g_{\rm ph}>0.05$, and the sampling floor of the rms from 300 bootstrap draws over calibration and test sequences (\texttt{boot7.py}).}\label{tab:holdout}
\begin{tabular}{L{5.4cm}cccc}
\toprule
 & \multicolumn{2}{c}{family L ($\beta_{\rm eq}=\beta_{\rm load}$)} & \multicolumn{2}{c}{family H ($\beta_{\rm load}/2$)}\\
test & rms (max) & floor & rms (max) & floor\\
\midrule
additivity, same $\sigma_d$ and $N$ & 3.9\% (17.5\%) & 3.8\% & 2.0\% (8.2\%) & 2.0\%\\
per-channel product $\tau_{\rm rot}\tau_{\rm sp}$ (baseline) & 3.8\% (16.5\%) & & 2.0\% (7.6\%) & \\
product of channel sums $g_{\rm rot}g_{\rm sp}$ (baseline) & 6.0\% (14.4\%) & & 3.8\% (12.4\%) & \\
$\sigma_d$ hold-out inside 0.03--0.06\,\AA\ ($s\propto\sigma_d^2$) & 3.3--6.4\% & 2.1--3.6\% & 1.4--2.4\% & 1.0--1.9\%\\
$\sigma_d$ hold-out, 0.06 $\to$ 0.09\,\AA & 7.6\% (17.6\%) & 5.6\% & 2.6\% (6.6\%) & 2.9\%\\
$\sigma_d$ hold-out, 0.03 $\to$ 0.09\,\AA & 14.0\% (27.0\%) & 5.4\% & 5.1\% (10.2\%) & 2.8\%\\
$N$ hold-out, $\gamma$ from $N_{\rm cal}=16$ only & 6.1\% (20.3\%) & 5.0\% & 3.6\% (17.8\%) & 2.9\%\\
\hspace{1em}$N_{\rm cal}=32$ & 4.2\% (12.3\%) & 4.2\% & 2.3\% (7.2\%) & 2.3\%\\
\hspace{1em}$N_{\rm cal}=64$ & 4.6\% (19.0\%) & 3.4\% & 2.2\% (7.0\%) & 1.9\%\\
$\beta_{\rm eq}$ hold-out: family H from families R and L ($\sqrt s$ rule; Table~\ref{tab:beta}) & & & 4.1\% (15.6\%) & 2.1\%\\
\bottomrule
\end{tabular}
\end{table}

\paragraph{Readings.} (i)~The additivity of Lyapunov exponents holds to a few percent. (ii)~Resolving channels matters little: in the zone-consistent grid rotation removes few channels, and the product of channel sums is within 6\%. (iii)~Born scaling $s\propto\sigma_d^2$ transfers the calibration across a factor 2 in $\sigma_d$ within 3--6\%. Extrapolating by a factor 3 to 0.09\,\AA\ costs accuracy in family L, where $\beta_{\rm eq}\sigma_d=0.27$ and the lognormal and beyond-Born terms grow. (iv)~One thickness suffices: exponents extracted at a single $N_{\rm cal}$ predict $N=4$--256 within 4.2--6.1\% rms ($N_{\rm cal}=16$, 32, 64; all three are reported, and none is selected). (v)~The material parameter $\beta_{\rm eq}$ needs an interpolation between families, the subject of Sec.~\ref{sec:betainterp}.

\paragraph{Uncertainty.} Resampling the calibration sequences and the test sequences (300 bootstrap draws) gives the sampling floor of each rms error in Table~\ref{tab:holdout}. Where calibration draws come from several families (the $\beta_{\rm eq}$ tests), one index vector per configuration is shared by all families, because they share random streams (\texttt{boot8.py}); configurations that do not share streams are resampled independently. Pairing changes the floors by at most 0.3 percentage points for the square-root and signed rules; for the quadratic rule, which takes a second difference of the families, it halves the floor on family Q$_1$ (2.48\% unpaired, 1.26\% paired: the shared noise cancels) and raises it slightly on Q$_3$ (2.62\% to 2.94\%); per point, the 68\% bands of the additivity error at $N=64$ and 256 are $[-4,+8]\%$ and $[-3,+8]\%$ (family L, i.i.d.\ and positional, 0.03 and 0.06\,\AA) and within $\pm4\%$ for family H. Inside the simulated range the rms errors are 1.0--2.0 times the floor: much of the deviation is the sampling noise of 16 sequences, and the systematic part is a few percent. The extrapolation from 0.03 to 0.09\,\AA\ in family L (14.0\% against 5.4\%) is clearly systematic; it is the strong-disorder end, where $\beta_{\rm eq}\sigma_d=0.27$.

\subsection{Interpolation in the stiffness slope \texorpdfstring{$\beta_{\rm eq}$}{beta\_eq}}\label{sec:betainterp}
The thermoelectric window (Sec.~\ref{sec:thermo_cond}) is a function of $\beta_{\rm eq}$, a material parameter, and is mapped by interpolating each channel's spacing variable $s(\beta_{\rm eq})$ between simulated families. At Born order the spacing amplitude of a channel is linear in $\beta_{\rm eq}$ along the equilibrium family (Eq.~\eqref{eq:gam_e}), so $s=A^2$ with $A$ linear, and $A$ can change sign between families. Taking $\sqrt s$ linear between R and L cannot follow a sign change and flips threshold decisions near it. The natural replacement, $s$ quadratic through R, H and L, is exact for Born-order exponents, but the exponents extracted from finite stacks at the strongest noise ($\beta_{\rm eq}\sigma_d$ up to 0.27) are not of Born order, and a quadratic through three points amplifies the difference between R and H.

\emph{Selection.} We therefore compared five rules on the two families Q$_1$ and Q$_3$, whose joint runs had not been used for any calibration: $\sqrt s$ linear between R and L, $s$ quadratic through R, H, L, the signed amplitude linear between R and L, $\sqrt s$ piecewise linear through R, H, L, and the signed amplitude piecewise linear through R, H, L, with the sign of each channel's amplitude chosen to make the calibration amplitudes most nearly collinear (\texttt{interp8\_test.py}). The last rule was best, and we adopt it (Table~\ref{tab:beta}). The quadratic rule fails on family Q$_1$ (10.9\% rms, 53\% at 0.09\,\AA), exactly where the strongest noise meets a sign change of the amplitude.

\emph{Out-of-sample test.} Because Q$_1$ and Q$_3$ selected the rule, their errors are validation errors, not test errors. We then ran the spacing-only calibration configurations of Q$_1$ and Q$_3$ as well (\texttt{pair8.py --aligned-only}), so that every family can serve as a calibration family, and tested the fixed rule on families it was not calibrated on (\texttt{loo8.py}): each interior family from the other four (leave one out), and pairs of families from only three. For $K$ calibration families the signs are chosen per channel to minimize the squared deviations of the interior amplitudes from the chords of their neighbours; for $K=3$ this is the rule above.
\begin{table}[!htbp]\centering\small
\caption{Interpolation in $\beta_{\rm eq}$: rms (max) relative error of the predicted phonon ratio of the joint runs (all five spectra, four noise levels, $N=4$--256, $g_{\rm ph}>0.05$) and the paired-bootstrap sampling floor (\texttt{boot8.py}, \texttt{loo8.py}).}\label{tab:beta}
\begin{tabular}{L{4.4cm}lccc}
\toprule
rule & calibration & target & rms (max) & floor\\
\midrule
$\sqrt s$ linear & R, L & H & 4.11\% (15.6\%) & 2.06\%\\
 & R, L & Q$_1$ & 4.68\% (24.6\%) & 1.11\%\\
 & R, L & Q$_3$ & 3.50\% (14.9\%) & 2.97\%\\
$s$ quadratic & R, H, L & Q$_1$ & 10.91\% (52.8\%) & 1.26\%\\
 & R, H, L & Q$_3$ & 2.97\% (13.4\%) & 2.94\%\\
signed piecewise (selection) & R, H, L & Q$_1$ & 1.39\% (6.2\%) & 0.98\%\\
 & R, H, L & Q$_3$ & 2.96\% (13.6\%) & 2.92\%\\
\midrule
signed piecewise (leave one out) & R, Q$_1$, Q$_3$, L & H & 2.17\% (8.9\%) & 1.97\%\\
 & R, H, Q$_3$, L & Q$_1$ & 1.41\% (6.3\%) & 0.98\%\\
 & R, Q$_1$, H, L & Q$_3$ & 2.96\% (13.6\%) & 2.93\%\\
signed piecewise (three families) & R, Q$_1$, L & H & 2.99\% (11.3\%) & 2.00\%\\
 & R, Q$_1$, L & Q$_3$ & 3.20\% (14.3\%) & 2.89\%\\
 & R, Q$_3$, L & Q$_1$ & 2.70\% (15.4\%) & 1.06\%\\
 & R, Q$_3$, L & H & 2.44\% (9.6\%) & 2.01\%\\
\bottomrule
\end{tabular}
\end{table}

\paragraph{Readings.} (i)~With calibration families $\beta_{\rm load}/2$ apart around the target, the rule predicts the left-out family at or near the sampling floor: 2.17\% against 1.97\% for H, 1.41\% against 0.98\% for Q$_1$, and 2.96\% against 2.93\% for Q$_3$. For Q$_1$ and Q$_3$ this nearly repeats the selection test, because the piecewise rule uses only the two neighbouring families (the fourth family changes a few sign choices). The prediction of H is out of sample in the sense that H was used neither to select nor to calibrate the rule; it is not a blind test, because the joint runs of H were known when the rule was selected, and it is not an extrapolation to another material. On the optimal-ZT ratio ($B=0.3$, 1; $N=128$, 256; four spectra, four noise levels) the leave-one-out errors are 1.9\%, 1.6\% and 1.9\% rms, with maxima of 4.7--5.6\%. (ii)~The error grows with the distance to the calibration families. With three families and gaps of $3\beta_{\rm load}/4$ around the target it is 2.4--3.2\% rms on $g_{\rm ph}$ and 2.2--3.9\% on the ZT ratio; the largest error is for Q$_1$ from R, Q$_3$ and L, where nine near-threshold discriminants flip (\texttt{loo8.log}). (iii)~The five-family map used for the thresholds (Sec.~\ref{sec:thermo_cond}) has gaps of $\beta_{\rm load}/4$. We use the leave-one-out rms of the ZT ratio, 1.8\%, as a sensitivity scale of this map, to widen the threshold intervals; it is an empirical scale from gaps twice as large, not an upper bound on the error. (iv)~The $\sqrt s$-linear rule is worse on every family (3.5--4.7\% rms, maxima 15--25\%); its largest systematic error is on Q$_1$ (4.7\% against a floor of 1.1\%).

\begin{figure}[!tbp]\centering
\includegraphics[width=\textwidth]{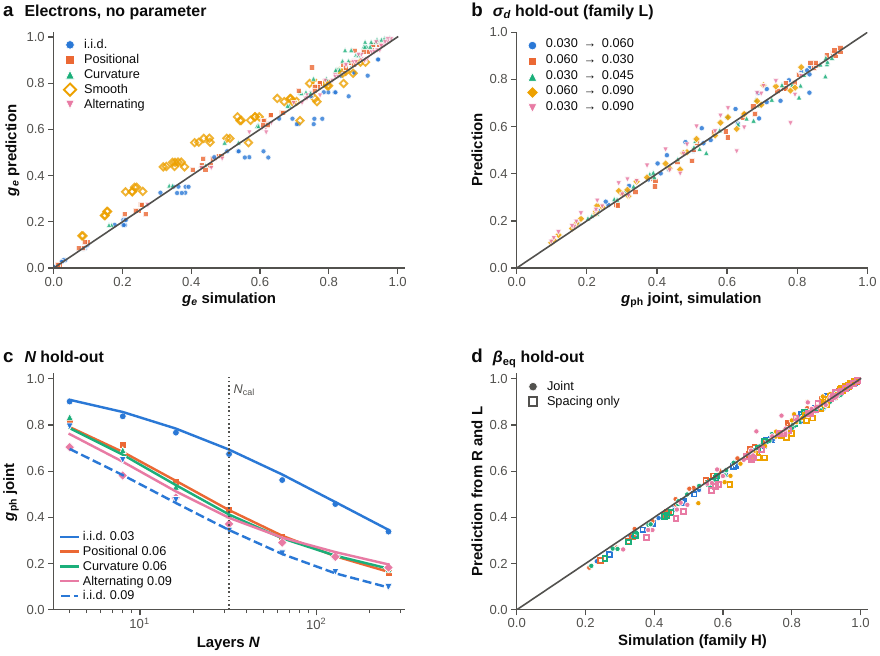}
\caption{Held-out predictions. (a)~Electrons: parameter-free prediction against simulation for all spectra, noise levels, twist states and $N$ (smooth spectrum open markers). (b)~Phonons, $\sigma_d$ hold-out: joint conductance predicted from spacing-only runs at another noise level. (c)~$N$ hold-out: curves predicted from exponents extracted at $N=32$ (lines) against the joint simulations (markers). (d)~$\beta_{\rm eq}$ hold-out with the $\sqrt s$-linear rule: family H predicted from families R and L. The tests of Table~\ref{tab:beta} with the signed rule are shown in Fig.~\ref{fig:outofsample}(a).}\label{fig:predict}
\end{figure}

%% file: sec_thermo.tex
\section{Thermoelectrics with Fermi--Dirac statistics}\label{sec:thermo}

All coefficients follow from the Onsager integrals $L_n(\mu)$ of the channel transmissions with the Fermi--Dirac window (App.~\ref{app:thermo}), at 300\,K and for chemical potentials from $-0.30$ to $+0.25$\,eV relative to the aligned band edge, i.e.\ from non-degenerate to degenerate doping (\texttt{thermo\_v7.py}, \texttt{thermo8.py}, \texttt{loo8.py}). A narrower window, $-0.15$ to $+0.10$\,eV, would change no optimal-ZT ratio at $B=1$ and 3, but up to 0.84\% at $B=0.3$ and 4.0\% at $B=0.1$ (strong i.i.d.\ noise, whose optimum lies above $+0.10$\,eV). All optima lie between $-70$ and $+125$\,meV, inside the window and at least 11\,$k_BT$ below the top of the electron energy grid; extending the window to $-0.45$\,eV changes nothing (\texttt{mu\_window8.py}). The electron transmissions are those of the paired runs of Sec.~\ref{sec:predict}; the phonon conductances are those of the five stiffness families. Two kinds of statement result. The Seebeck change is a parameter-free prediction. The ZT ratio is conditional on three material numbers, $\beta_{\rm eq}/\beta_e$, the quality factor $B$ and the registry factors, together with the spacing spectrum.

\subsection{Seebeck change: a parameter-free prediction}
The Born--$\Phi_1$ transmission of Sec.~\ref{sec:predict} gives every $L_n$, and therefore the Seebeck change $\Delta S(\mu)=S_{\rm stack}-S_\al$ at fixed $\mu$, with no input from the three-dimensional runs. Over the four spectra other than the smooth one, four noise levels, both twist states and all seven thicknesses (224 points per $\mu$), the rms deviation from the simulation is 0.059, 0.058, 0.051 and 0.039\,$k_B/e$ at $\mu=-0.10$, $-0.05$, 0 and $+0.05$\,eV (Fig.~\ref{fig:thermo}a), against sampling floors of 0.032, 0.031, 0.028 and 0.023\,$k_B/e$ from the 16 sequences (\texttt{boot7.py}).
\begin{itemize}
\item \emph{Sign by spectrum.} Paracrystalline (i.i.d.)\ noise raises $S$, by up to 2.4\,$k_B/e$ (0.09\,\AA, $N=256$), because its band-edge exponent diverges and the surviving carriers are the hotter ones. Curvature-type noise lowers $S$, by up to 0.5\,$k_B/e$, because it protects the band edge better than higher energies ($\gamma_e\propto\varepsilon^{a-1}$ with $a=2$). Positional noise ($a=1$) changes $S$ by at most 0.15\,$k_B/e$ up to 0.06\,\AA.
\item \emph{Degeneracy.} The change shrinks as the Fermi window moves into the band: for i.i.d.\ noise at 0.06\,\AA\ and $N=256$ it falls from 1.60 to 1.10\,$k_B/e$ between $\mu=-0.10$ and $+0.05$\,eV.
\item \emph{Smooth spectrum.} The prediction falls short by about 30\% (e.g.\ 0.63 against 0.88\,$k_B/e$ at 0.06\,\AA, $N=64$). This is the beyond-Born band-edge landscape of Sec.~\ref{sec:anomalies}: carriers below the local band edges are reflected quasi-classically, which raises $S$ further.
\end{itemize}

\paragraph{Fixed density.} At a fixed carrier density the disordered stack has a lower chemical potential, because disorder creates tail states below the band edge. From the eigenvalues of the finite stacks (localized tail included) the shift is $-0.4$ to $-0.5$, $-3.1$ to $-3.7$ and $-4.0$ to $-5.3$\,meV for i.i.d.\ noise of 0.03, 0.06 and 0.09\,\AA\ with random twists, $-1.5$ to $-1.6$\,meV for positional and $-1.2$ to $-3.4$\,meV for smooth noise at 0.06\,\AA, and zero for rotation alone. At fixed density $\Delta S$ is therefore larger than at fixed $\mu$ by at most 0.21\,$k_B/e$ (non-degenerate limit). The optimal-ZT comparison below does not depend on this choice, because each stack is optimized over $\mu$.

\subsection{Optimal ZT}
With the quality factor $B=\sigma_\al T(k_B/e)^2/\kappa_{{\rm ph},\al}$ of the aligned stack, $ZT(\mu)$ is Eq.~\eqref{eq:ztB}. It includes the electronic thermal conductance exactly. The aligned and the disordered stack are each taken at their own optimal $\mu$. For reference, the model's aligned stack has $ZT^*_\al=0.46$, 1.08, 2.37 and 4.29 at $B=0.1$, 0.3, 1 and 3, with the optimum moving from the band edge to $-55$\,meV. These values are not material predictions: the kit reports only the ratio, for a list of $B$. Table~\ref{tab:zt} gives the ratio at $B=1$.

\begin{table}[!htbp]\centering\small
\caption{Optimal-ZT ratio $ZT^*_{\rm stack}/ZT^*_\al$ for random twists at $B=1$; entries for $N=64/128/256$. Rotation alone gives 1.09/1.12/1.15 in every family.}\label{tab:zt}
\begin{tabular}{llccc}
\toprule
spectrum & $\sigma_d$ (\AA) & L ($\beta_{\rm eq}=\beta_{\rm load}$) & H ($\beta_{\rm load}/2$) & R (0)\\
\midrule
i.i.d. & 0.03 & 1.14/1.15/1.13 & 0.99/0.96/0.89 & 0.89/0.83/0.70\\
 & 0.06 & 1.03/1.00/0.84 & 0.81/0.73/0.57 & 0.64/0.52/0.34\\
 & 0.09 & 0.98/0.82/0.64 & 0.70/0.55/0.38 & 0.48/0.32/0.17\\
positional & 0.03 & 1.30/1.36/1.36 & 1.09/1.08/1.03 & 0.97/0.89/0.75\\
 & 0.06 & 1.24/1.09/0.76 & 0.94/0.80/0.53 & 0.68/0.49/0.25\\
 & 0.09 & 1.03/0.69/0.30 & 0.74/0.46/0.18 & 0.44/0.21/0.06\\
curvature & 0.03 & 1.40/1.54/1.70 & 1.16/1.22/1.30 & 1.04/1.01/0.97\\
 & 0.06 & 1.61/1.67/1.59 & 1.22/1.25/1.21 & 0.87/0.77/0.61\\
 & 0.09 & 1.62/1.45/1.12 & 1.19/1.04/0.74 & 0.71/0.51/0.27\\
alternating & 0.03 & 1.31/1.41/1.56 & 1.16/1.21/1.30 & 1.06/1.07/1.07\\
 & 0.06 & 1.54/1.64/1.70 & 1.24/1.30/1.35 & 1.01/0.97/0.91\\
 & 0.09 & 1.68/1.74/1.65 & 1.29/1.33/1.24 & 0.92/0.84/0.70\\
\bottomrule
\end{tabular}
\end{table}

\paragraph{Readings.} (i)~Rotation alone raises the optimal ZT by 7--23\% over $B=0.1$--3 and $N=64$--256: in the zone-consistent model the continuum filter is only mildly phonon selective. (ii)~Spacing noise adds most when the phonons respond to it ($\beta_{\rm eq}$) and the spectrum spares the band edge. Curvature and alternating spectra give the largest ratios, up to 1.74 at $B=1$ and 2.1 at $B=0.1$ in family L. i.i.d.\ noise adds at most 13\% to rotation alone, and only in family L at small $B$ or $N$; otherwise it lowers the ratio. (iii)~The gains shrink as $B$ grows, because the electronic thermal conductance then dominates the denominator: family L, curvature 0.03\,\AA, $N=256$ gives 2.01, 1.85, 1.70 and 1.59 at $B=0.1$, 0.3, 1 and 3.

\paragraph{Boundaries inside the simulated range.} For i.i.d.\ noise the ratio crosses 1 at $\sigma_d^*=0.080$, 0.060 and 0.047\,\AA\ (family L, $B=1$, $N=64$, 128, 256), 0.028--0.022\,\AA\ (H) and 0.020--0.016\,\AA\ (R). At $B=0.3$ the family-L values are $>0.09$, 0.076 and 0.055\,\AA. All are interpolated in $\sigma_d^2$ between simulated levels.

\paragraph{Numerical checks.} (i)~Bose--Einstein instead of classical weighting of the phonon channels at 300\,K ($\hbar\omega/k_BT\le0.54$ in the model) changes the phonon ratio by at most 0.54\% (rms 0.29\%) and the optimal-ZT ratio by at most 0.46\% over 340 ratios (\texttt{thermo8\_sens.py}). (ii)~The chemical-potential window is discussed above. (iii)~An independent spot check with a refined frequency grid changed one sample by about 1.5\%; we did not repeat it and count it as part of the sensitivity scale of the threshold intervals, which is of the same size.

\subsection{The material condition}\label{sec:thermo_cond}
How large must $\beta_{\rm eq}$ be? The electron data of a configuration do not depend on $\beta_{\rm eq}$; the phonon ratio does, through each channel's spacing variable, which is interpolated between the five simulated families by the signed piecewise rule of Sec.~\ref{sec:betainterp} and combined with the rotation-only variable. This gives the optimal-ZT ratio for any $0\le\beta_{\rm eq}\le\beta_{\rm load}$ (\texttt{loo8.py}). Two thresholds follow for each spectrum and noise level: break-even with the aligned stack, and net gain over rotation alone. Each is reported as follows (Table~\ref{tab:thr}).
\begin{itemize}
\item \emph{Point value} from the five-family map; ``$>2.14$'' if the level is not reached for $\beta_{\rm eq}\le\beta_{\rm load}$ (right-censored).
\item \emph{Interval}: the hull of the empirical 16th--84th percentiles of 200 bootstrap draws and of the crossings of the ratio curve scaled by $1\pm0.018$ (the sensitivity scale of the map, the leave-one-out rms of Sec.~\ref{sec:betainterp}). A draw that does not reach the level within the simulated range counts as $+\infty$; the percentiles are order statistics, not interpolations, so censored draws never produce an undefined endpoint. Configurations are resampled at the level at which they are independent: families that share random streams with a common index vector, the random-twist electron samples, the spacing-only calibration runs and the rotation-only reference each with their own indices.
\item $P_{\rm nr}$: the fraction of draws that do not reach the level.
\item \emph{Bracket}: the two simulated families between which the directly computed ratio crosses the level. It comes from the same simulated families as the map, so overlap of interval and bracket is an internal consistency check, not an independent coverage test.
\end{itemize}
Over $B=0.3$, 1, $N=128$, 256, four spectra, four noise levels and both thresholds (128 values), 37 point values are not reached within the simulated range; $P_{\rm nr}=0$ for 84 thresholds, 1 for 24, and in between for 20. All 128 intervals overlap their brackets. With the three families R, H and L alone (\texttt{thermo8.py}), 127 do; the exception is alternating noise of 0.09\,\AA\ at $B=1$, $N=256$, which breaks even at 0.50 [0.45, 0.53] against the bracket (0.54, 1.07). Where both maps reach the level, the five-family values differ from the three-family ones by at most 0.07 (median 0.003); every threshold is censored in both or in neither.

\begin{table}[!htbp]\centering\footnotesize
\caption{Thresholds of $\beta_{\rm eq}/\beta_e$ for $B=1$, $N=256$ ($\beta_e=1.42$\,\AA$^{-1}$), from the five-family interpolation: point value, right-censored bootstrap interval, and, in parentheses, the bracket from the directly simulated families (at $\beta_{\rm eq}/\beta_e=0$, 0.54, 1.07, 1.61, 2.14). ``$>2.14$'': not reached for $\beta_{\rm eq}\le\beta_{\rm load}$ (the interval is omitted when its lower end is not reached either, i.e.\ when more than 84\% of the draws do not reach the level); bracket (0.00): above the level already at $\beta_{\rm eq}=0$. Upper rows: spacing noise adds to the rotational gain; lower rows: the optimal ZT exceeds the aligned value. Fraction of draws not reaching the level where it is neither 0 nor 1: gain, curvature 0.09\,\AA: 0.83, positional 0.045\,\AA: 0.98, i.i.d.\ 0.03\,\AA: 0.78; break-even, curvature 0.09\,\AA: 0.02, i.i.d.\ 0.045\,\AA: 0.18.}\label{tab:thr}
\begin{tabular}{L{1.9cm}C{3.1cm}C{3.1cm}C{3.1cm}C{3.1cm}}
\toprule
spectrum & 0.03\,\AA & 0.045\,\AA & 0.06\,\AA & 0.09\,\AA\\
\midrule
\multicolumn{5}{l}{\emph{gain over rotation alone}}\\
alternating & 0.52 [0.43, 0.61]\newline (0.00, 0.54) & 0.56 [0.50, 0.62]\newline (0.54, 1.07) & 0.62 [0.58, 0.66]\newline (0.54, 1.07) & 0.86 [0.81, 0.91]\newline (0.54, 1.07)\\
curvature & 0.71 [0.66, 0.76]\newline (0.54, 1.07) & 0.82 [0.78, 0.85]\newline (0.54, 1.07) & 0.99 [0.94, 1.03]\newline (0.54, 1.07) & ${>}2.14$ [2.13, ${>}2.14$]\newline (2.14, ${>}2.14$)\\
positional & 1.45 [1.38, 1.52]\newline (1.07, 1.61) & ${>}2.14$\newline (2.14, ${>}2.14$) & ${>}2.14$\newline (2.14, ${>}2.14$) & ${>}2.14$\newline (2.14, ${>}2.14$)\\
i.i.d. & ${>}2.14$ [2.12, ${>}2.14$]\newline (2.14, ${>}2.14$) & ${>}2.14$\newline (2.14, ${>}2.14$) & ${>}2.14$\newline (2.14, ${>}2.14$) & ${>}2.14$\newline (2.14, ${>}2.14$)\\
\midrule
\multicolumn{5}{l}{\emph{break-even with the aligned stack}}\\
alternating & 0.00 [0.00, 0.00]\newline (0.00) & 0.04 [0.00, 0.15]\newline (0.00, 0.54) & 0.29 [0.25, 0.33]\newline (0.00, 0.54) & 0.57 [0.55, 0.60]\newline (0.54, 1.07)\\
curvature & 0.27 [0.18, 0.34]\newline (0.00, 0.54) & 0.56 [0.53, 0.59]\newline (0.54, 1.07) & 0.72 [0.68, 0.75]\newline (0.54, 1.07) & 1.84 [1.73, 1.98]\newline (1.61, 2.14)\\
positional & 0.98 [0.93, 1.03]\newline (0.54, 1.07) & 1.76 [1.66, 1.84]\newline (1.61, 2.14) & ${>}2.14$\newline (2.14, ${>}2.14$) & ${>}2.14$\newline (2.14, ${>}2.14$)\\
i.i.d. & 1.56 [1.48, 1.64]\newline (1.07, 1.61) & 2.01 [1.88, ${>}2.14$]\newline (1.61, 2.14) & ${>}2.14$\newline (2.14, ${>}2.14$) & ${>}2.14$\newline (2.14, ${>}2.14$)\\
\bottomrule
\end{tabular}
\end{table}

The thermoelectric statement is therefore conditional, with named inputs. Given the spectrum class and $\sigma_d$ (from $00l$ line shapes, Sec.~\ref{sec:spectra}), $\beta_e$ and $\beta_{\rm eq}$ (T1-a, T1-b) and $B$ of the aligned crystal, the optimal-ZT ratio is predicted. Over $B=0.3$--1 and $N=128$--256, spacing disorder adds to the rotational gain only if $\beta_{\rm eq}/\beta_e$ exceeds 0.5--0.6 for alternating noise and 0.7--1.1 for curvature-type noise (both up to 0.06\,\AA), 1.2--1.5 for positional noise of 0.03\,\AA\ (1.6 or more at 0.045\,\AA), and 1.8 or more for i.i.d.\ noise of 0.03\,\AA; each value carries the interval of Table~\ref{tab:thr}. The thresholds refer to the model's $\beta_e\sigma_d$; for a material, \texttt{vk\_predict} evaluates the optimal-ZT ratio directly from its parameters and spectrum.

\paragraph{The real aligned crystal.} All ratios above are relative to an aligned stack with the continuum (registry-averaged) coupling. The real aligned crystal differs in two identifiable ways: its phonon conductance carries the registry factor $g_{\rm reg}$ (Sec.~\ref{sec:registry}; T1-d), and its electrons couple with $t(\bm\tau_{\min})$ at the registry minimum instead of $\bar T$ (both from T1-a). A registered crystal with $g_{\rm reg}<1$, as expected for van der Waals crystals, lowers every phonon ratio by that factor. The electron reference can move either way, depending on whether $|t(\bm\tau_{\min})|$ is larger or smaller than $|\bar T|$. Both corrections are material outputs, not model results.

\begin{figure}[!tbp]\centering
\includegraphics[width=\textwidth]{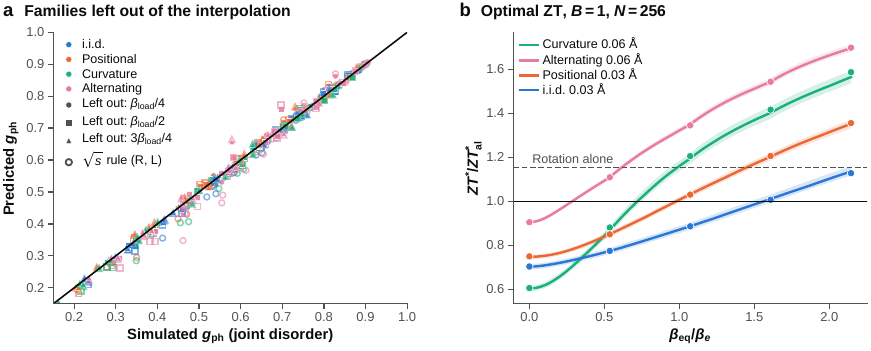}
\caption{Out-of-sample predictions and the conditional window. (a)~Predicted against simulated phonon conductance of joint disorder ($N=16$--256) for the three interior stiffness families, each predicted from the other four (filled) or by square-root interpolation between the outer two (open). (b)~Optimal-ZT ratio against $\beta_{\rm eq}/\beta_e$ ($B=1$, $N=256$; each stack at its optimal chemical potential): five-family interpolation with its 68\% bootstrap band, and the simulated families (markers); lines: break-even and rotation alone.}\label{fig:outofsample}
\end{figure}
\begin{figure}[!tbp]\centering
\includegraphics[width=\textwidth]{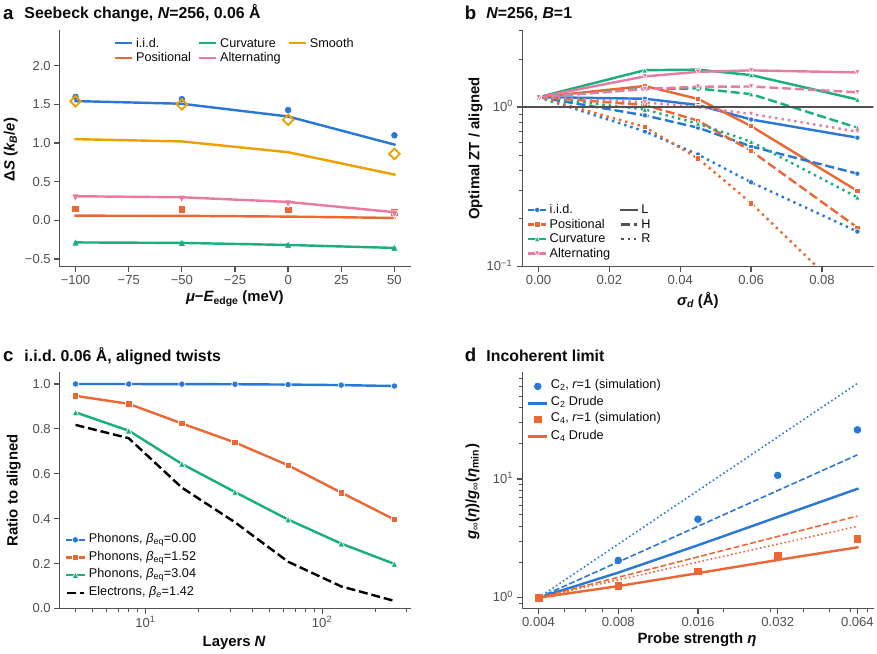}
\caption{A conditional window. (a)~Seebeck change at $N=256$ and 0.06\,\AA\ (random twists) against the chemical potential: simulation (markers; smooth spectrum open) and parameter-free Fermi--Dirac prediction (lines). (b)~Optimal-ZT ratio at $N=256$ and $B=1$ against the noise level, for four spectra and the stiffness families L (solid), H (dashed) and R (dotted). (c)~Spacing response of phonons for $\beta_{\rm eq}=0$, $\beta_{\rm load}/2$ and $\beta_{\rm load}$, and of electrons (i.i.d.\ noise, 0.06\,\AA, aligned twists). (d)~Incoherent limit of $r=1$ chains (Sec.~\ref{sec:coherence}): frozen ratio against the probe strength, normalized at the smallest $\eta$; B\"uttiker-probe simulations (circles C$_2$, squares C$_4$), Born--Drude curves (solid), predicted exponent $\alpha_{\rm inc}$ (dashed) and naive exponent $(c+s)/2(\kappa-r)$ (dotted).}\label{fig:thermo}
\end{figure}

%% file: sec_beyond.tex
\section{Beyond the coherent continuum}\label{sec:beyond}

\subsection{Phase breaking}\label{sec:coherence}
We model phase breaking with B\"uttiker voltage probes on every layer: strength $\eta$, zero net current at each energy, so the dephasing is elastic and conserves the channel $\bp$. The aligned reference carries the same probes. Beyond a few phase-breaking lengths the ratio $g=G/G_\al$ freezes at $g_\infty(\eta)$.

\begin{proposition}[incoherent limit, Drude scaling]\label{prop:incoh}
Let the dephasing rate per layer be the same for all channels, so that the dephasing resistivity of channel $\bp$ is $\rho_\phi\propto\eta/v(\bp)$ with $v\propto\rho^r$ near the axis, and let the rotational resistivity be $\rho_d=2\gamma\propto\rho^{2(\kappa-r)}$. In the Drude regime $g_\infty=\langle1/(1+\rho_d/\rho_\phi)\rangle$, weighted by the aligned conductance, and
\begin{equation}
g_\infty\propto\eta^{\alpha_{\rm inc}},\qquad \alpha_{\rm inc}=\min\!\left(\frac{c+s+r}{2\kappa-r},\,1\right),\label{eq:incoh}
\end{equation}
with a logarithmic factor when the first argument equals 1. For $r=0$, $\alpha_{\rm inc}=\min(\alpha,1)$ with $\alpha=(c+s)/2\kappa$.
\end{proposition}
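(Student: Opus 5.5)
The plan is to take the Drude form of $g_\infty$ as given, since the Proposition assumes the Drude regime, and to reduce the statement to a one-variable scaling integral near the rotation axis. The method is the same change of variables as in the proof of Theorem~\ref{thm:general}, but with the Drude response $1/(1+x)$ in place of the kernel $\Phi$.

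\emph{Setting up the integral.} First I would write the weight explicitly. Channel $x=(\bp,E)$ has incoherent aligned conductance proportional to $1/\rho_\phi\propto v(\bp)/\eta\propto\rho^r/\eta$. In the tube around $\Zg$ we have $\dd\mu\propto\rho^{s+c-1}\dd\rho\,\dd\hat n\,\dd\zeta$ (H1, H2). The aligned-conductance weight is therefore $w\,\dd\rho\propto\rho^{c+s+r-1}\dd\rho$ near the axis; the factor $1/\eta$ cancels in the normalized ratio. The ratio of resistivities is
\[
\rho_d/\rho_\phi\propto\rho^{2(\kappa-r)}\rho^{r}/\eta=\rho^{2\kappa-r}/\eta,
\]
which requires $2\kappa-r>0$; this holds for a protected axis, $\kappa>r$. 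Then
\[
g_\infty=\frac{\int w\,[1+\rho_d/\rho_\phi]^{-1}}{\int w}.
\]
The denominator is an $\eta$-independent constant.

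\emph{Splitting the integral.} I would split the numerator into the tube $\rho<\delta$ and its complement.
\begin{itemize}
\item Off the tube, $\gamma\ge\gamma_{\min}>0$, by the positivity that also underlies H3. There $\rho_d$ is bounded below, so $[1+\rho_d/\rho_\phi]^{-1}\le\rho_\phi/\rho_d=O(\eta)$. This region contributes $C\eta$ with a finite constant.
\item In the tube, substitute $\rho=\eta^{1/(2\kappa-r)}u$. The contribution becomes
\[
\eta^{(c+s+r)/(2\kappa-r)}\int_0^{\delta\eta^{-1/(2\kappa-r)}}\frac{u^{c+s+r-1}\,\dd u}{1+\Gamma u^{2\kappa-r}}\,[1+O(\rho)],
\]
with $\Gamma$ from H2 carried inside the direction and $\zeta$ integrals.
\end{itemize}

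\emph{Three cases.} The behaviour depends on the sign of $a-1$, where $a=(c+s+r)/(2\kappa-r)$. The integrand at large $u$ decays as $u^{(c+s+r)-(2\kappa-r)-1}$.
\begin{itemize}
\item If $a<1$, the $u$ integral converges as the upper limit tends to infinity. Dominated convergence gives $g_\infty\sim A\eta^{a}$, which dominates the $O(\eta)$ contribution from off the tube.
\item If $a>1$, the tube integral is dominated by its upper limit. Together with the outside region this gives order $\eta$; equivalently, the small-$x$ expansion $1/(1+x)\approx1/x$ is integrable against $w$ at $\rho\to0$. Hence $g_\infty\propto\eta$.
\item If $a=1$, the integrand behaves as $u^{-1}$ up to $u\sim\delta\eta^{-1/(2\kappa-r)}$, which produces the factor $\ln(1/\eta)$.
\end{itemize}
Together these give $\alpha_{\rm inc}=\min(a,1)$. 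Setting $r=0$ recovers $\min((c+s)/2\kappa,1)=\min(\alpha,1)$.

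\emph{Main obstacle.} I expect the main difficulty to be the uniformity needed for dominated convergence and for the $[1+O(\rho)]$ corrections: $\Gamma_i$ must stay bounded between positive constants on the tube, and $v\propto\rho^r$ must hold uniformly in $(\zeta,\hat n)$. For $r>0$ the tube in $(\bp,E)$ degenerates because the band collapses on the axis. There I would work in the blown-up coordinates $(\bp,\varepsilon)$ used after Theorem~\ref{thm:general}, in which the pushed-forward measure is a product and the bandwidth factor is already contained in $s$. I would then check carefully that the extra $\rho^r$ comes only from the velocity in $1/\rho_\phi$ and is not counted twice.
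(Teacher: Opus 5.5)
Your proposal is correct and follows essentially the paper's argument. The paper splits the channels at the crossover radius $\rho_c\propto\eta^{1/(2\kappa-r)}$, where $\rho_d=\rho_\phi$: the inner channels carry a share $\propto\rho_c^{c+s+r}$ of the aligned weight, and the outer ones contribute $\eta\int_{\rho_c}\rho^{c+s-1-2(\kappa-r)}\,\dd\rho$, which is of the same order if it diverges at small $\rho$ and of order $\eta$ otherwise. Your substitution $\rho=\eta^{1/(2\kappa-r)}u$ is the smooth version of this split; it additionally gives the amplitude for $a<1$ via dominated convergence, makes the logarithm at $a=1$ explicit and bounds the region away from the axis, none of which changes the exponent.
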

\noindent\emph{Proof.} Channels with $\rho_d<\rho_\phi$, i.e.\ $\rho<\rho_c\propto\eta^{1/(2\kappa-r)}$, conduct as in the aligned stack; their share of the aligned conductance, whose weight is $\propto\rho^{c+s+r-1}/\eta$, is $\propto\rho_c^{c+s+r}$. Channels outside conduct $\propto\rho_\phi/\rho_d$; their contribution $\propto\eta\int_{\rho_c}\rho^{c+s-1-2(\kappa-r)}\dd\rho$ is of the same order $\rho_c^{c+s+r}$ if $c+s<2(\kappa-r)$, and otherwise converges at small $\rho$ and is $\propto\eta$. $\square$

The law $\ell_\phi^{-\alpha}$ holds for $r=0$, $\alpha<1$; for $\alpha\ge1$ the Drude integral diverges, strongly scattered channels still conduct $\propto1/\ell_\phi$, and the exponent is capped at 1. The derivation is a Drude (Matthiessen) scaling argument, so the proposition is conditional on that description of channel-conserving dephasing.

\emph{Tests.} Chains with $r=0$ give $0.48$ and $0.22$ for C$_2$ and C$_4$ electrons (predictions $1/2$ and $1/4$). Further tests use chains whose coupling vanishes on the axis ($T=T_1p$, $r=1$, bandwidth measure $s=1$), where the prediction differs from the naive one: for C$_2$ layers $\alpha_{\rm inc}=1$ (naive $3/2$), for C$_4$ layers $4/7$ (naive $1/2$). B\"uttiker-probe simulations ($N\le256$, 4 sequences, $\eta=0.004$--0.064; \texttt{incoherent7.py}) give local slopes of $g(256)$ in $\eta$ of 1.04, 1.16, 1.23 and 1.28 for C$_2$ and 0.33, 0.40, 0.44 and 0.47 for C$_4$, from small to large $\eta$ (Fig.~\ref{fig:thermo}d). These tests are not conclusive.
\begin{itemize}
\item Freezing is incomplete at small $\eta$ ($g(256)/g(128)=0.77$ for C$_2$ and 0.85 for C$_4$ at $\eta=0.004$), which biases the small-$\eta$ slopes low. The C$_2$ slopes lie between the prediction and the naive value, and their approach to 1 may partly be this bias; the C$_4$ slopes move away from $4/7$ as $\eta$ decreases.
\item The Born--Drude curves are themselves pre-asymptotic in this range (slopes 0.70--0.79 for C$_2$, 0.33--0.37 for C$_4$). Their amplitude overestimates $g_\infty$ by a factor 4--12 (C$_2$) and 1.4--1.6 (C$_4$), because strongly scattered channels localize within a coherence length (\texttt{incoherent7\_drude.py}).
\end{itemize}
The exponents for $r>0$ therefore rest on the derivation. The cap at 1 does not depend on the Drude form: strongly scattered channels conduct $\propto\eta$ also in the sequential-tunnelling picture.

\emph{Scope.} The probes are elastic and channel conserving. If in-plane scattering randomizes $\bp$ faster than carriers cross a layer (sequential tunnelling, the incoherent limit of interlayer transport in twisted graphene~\cite{Bistritzer2010,Kim2013}), each channel's interlayer rate is a Lorentzian overlap of the rotated bands, and the ratio scales as $\Gamma_\parallel^{\min(2\alpha,2)}$ in the in-plane broadening, a different law. Inelastic energy relaxation is not included. The incoherent-limit exponent is therefore a statement about a named dephasing mechanism, not a universal constant.

\paragraph{Consequence: an Ohmic stack with a symmetry-indexed resistivity.} Beyond the phase-breaking length the resistance is linear in $N$. Within the Drude argument, for channel-conserving dephasing, the resistivity ratio to the aligned stack is $1/g_\infty\propto\ell_\phi^{\alpha_{\rm inc}}$: $\alpha_{\rm inc}=1/2$ for Fermi-level carriers of black phosphorus, $1/3$ ($1/6$) for the MoS$_2$ $\Gamma$ valence band when its spin--orbit (scalar) term dominates. Each species freezes at its own $\ell_\phi$. At room temperature phonon mean free paths are about one layer~\cite{Eriksson2023}, and the phonon side is then governed by $g_{\rm reg}$ (Sec.~\ref{sec:registry}) rather than by the coherent laws.

\subsection{Umklapp and nearly aligned interfaces (conditional)}\label{sec:umklapp}
\paragraph{Off-shell floor.} The $\mathbf G\neq0$ couplings take a carrier at $\bp\approx0$ to partner states $\bp+\mathbf G_l$ in the neighbouring, rotated layers. Second-order elimination is valid when these partners are off shell, $|E-\varepsilon_Q|\ge\Delta>0$ with $\|H_{PQ}\|/\Delta\ll1$; for a band edge that is a global extremum and an interface far from alignment they lie far from $E$, but the spectral condition, not the angle, is what must be checked. It gives the correlated on-site shift
\begin{equation}
\Sigma_l(\bp,E)=\sum_{\mathbf G\in{\rm shell}(l)}T(|\bp+\mathbf G|)^2\left[\frac{1}{E-\varepsilon_{l+1}(\bp+\mathbf G)}+\frac{1}{E-\varepsilon_{l-1}(\bp+\mathbf G)}\right],
\end{equation}
with $\Sigma_l=a(\Delta_{l-1})+b(\Delta_l)$, $C_1/C_0=1/2$ by mirror symmetry (0.48 measured), and band-edge protection up to
\begin{equation}
N_U=\frac{4\,\bar T\,k_BT}{S_V(0)},\qquad S_V(0)=C_0+2C_1 .\label{eq:NU}
\end{equation}
In the C$_6$ test model ($u=T(|\mathbf G|)/\bar T=0.31$, 0.44, 0.59) the effective chain gives $N_U=4009$, 943 and 288, and the parameter-free prediction follows it within 4.1\% down to $g\approx0.4$ (Fig.~\ref{fig:umklapp}).

\paragraph{Status.} This test validates the second-order effective chain, with interfaces within $\theta_c=10^\circ$ of alignment excluded. The floor is dominated by the interfaces just outside the cutoff (${\rm Var}\,\Sigma\propto\theta_c^{-3}$ above about $4^\circ$, where the shift saturates), and near-aligned interfaces are resonant. We treat them as registered interfaces with the T1 registry spread, a fraction $f=n\theta_c/\pi$, which gives bond disorder $\sigma_{\rm near}^2$ and a scale $N_{\rm near}=k_BT/\bar T\sigma_{\rm near}^2$. The umklapp floor is therefore a conditional result: valid in the off-shell regime, with the near-aligned treatment as a stated model assumption; the main-text claims do not use it. Away from the band edge, umklapp partners are on shell and scatter between channels; the stack turns diffusive beyond an umklapp length $\ell_U$, with $G\propto\ell_U^{1-\min(\alpha,1)}/N$.

\paragraph{High shells and commensurate angles (tail estimate).} Away from alignment the lowest shells are off shell, but an incommensurate twist brings some rotated vector $R_\theta\mathbf G'$ arbitrarily close to a vector $\mathbf G$ of the other layer (a Diophantine small mismatch), at the price of a large $|\mathbf G|$: for phosphorene lattices and a mismatch below 0.05\,\AA$^{-1}$, 66\% of random twists have no such pair with $|\mathbf G|\le16$\,\AA$^{-1}$. A partner is on shell when the mismatch lies in the thermal window, $\delta<\delta_{\rm th}$ (0.03--0.1\,\AA$^{-1}$ for band edges at 300\,K), and its coupling carries the evanescent decay of both planar components across the gap, $|T_{\mathbf G}|/|\bar T|\approx c\,e^{-(\kappa_G-\kappa_0)d_{\rm gap}}$ with $\kappa_G=(G^2+\kappa_0^2)^{1/2}$. For phosphorene the PBE planar components give $\kappa_0=0.80$\,\AA$^{-1}$ and 1.56--2.03\,\AA$^{-1}$ for the first shells, consistent with this form, and $d_{\rm gap}=3.1$\,\AA. With $c=1$ this form underestimates the computed armchair-shell coupling (below) fivefold and overestimates the zigzag one, so the ratios in this paragraph locate the on-shell interfaces rather than fix their amplitudes; the statement that 66\% of interfaces have no on-shell vector within 16\,\AA$^{-1}$ is purely geometric. The largest on-shell ratio at a given twist (all shells with $|\mathbf G|\le16$\,\AA$^{-1}$; \texttt{blackphosphorus/\allowbreak umklapp\_tail.py}) is shown in Fig.~\ref{fig:umklapptail}. It is set by the lowest shell that reaches the window. Near alignment the first shells are on shell, with ratios up to 0.07 (armchair) and 0.02 (zigzag); at the other commensurate angles, where two shells of equal length coincide (such as $2\arctan(b^*/a^*)\approx74^\circ$ for the $(1,1)$ shells), the ratio is at most $5\times10^{-3}$; high shells are on shell at many angles but with ratios below $10^{-7}$. For uniform random twists and $\delta_{\rm th}=0.05$\,\AA$^{-1}$ (a grid of $2\times10^5$ angles), the ratio exceeds $10^{-2}$ at 2.2\% of the interfaces, all within $2^\circ$ of alignment, $10^{-4}$ at 6.7\% and $10^{-6}$ at 9.7\%; at 90\% of them it is below $2\times10^{-7}$, and 66\% have no on-shell vector with $|\mathbf G|\le16$\,\AA$^{-1}$ at all. The first common vector of the $90^\circ$ cell, 5.716\,\AA$^{-1}$, gives $2\times10^{-7}$, the size of the forward-dominance truncation of Sec.~\ref{sec:t90}. The umklapp problem of a random stack is therefore one of rare nearly aligned interfaces, not of a dense set of small denominators. The twist average $\langle r^2\rangle$ is dominated by interfaces within $2^\circ$ of alignment, where the first armchair shell carries more than 99\% of it; its amplitude is therefore taken from the computed components, below.

\paragraph{First shell from the computed components.} The first armchair shell ($|\mathbf G|=1.436$\,\AA$^{-1}$) decides the umklapp risk, so we take its amplitude from the PBE planar components of the $\Gamma$ conduction edge itself (\texttt{blackphosphorus/\allowbreak umklapp12.py}). In Bardeen form with both faces on one plane, the ratio of the umklapp to the forward coupling is $\rho_U=|F_{\mathbf G}F'_{\mathbf G}|/|F_0F'_0|=0.341$ on the midplane and 0.367 on the pair of admissible planes 1.30/1.80\,\AA\ (the planar value ratio on the midplane is 0.355; the empirical form gave 0.073). The partner of a lab channel $\bp$ across an interface of twist $\theta$ is $\bp+\bm\delta$, $|\bm\delta|=2|\mathbf G|\sin(\theta/2)$, open for thermal carriers when $E(\bm\delta)\lesssim k_BT$, i.e.\ $|\theta|\lesssim\theta_U=\delta_{\rm th}/|\mathbf G|$ with $\delta_{\rm th}=(k_BT/c_{zz})^{1/2}$ and $c_{zz}=3.19$\,eV\,\AA$^2$ the zigzag curvature of the PBE band (the mismatch lies nearly along the zigzag direction). The C$_2$ shell coincides near $\theta=0$ and $\pi$, so a fraction $f_U=4\theta_U/2\pi$ of random interfaces has an open partner: $\theta_U=3.6^\circ$ and $f_U=0.040$ at 300\,K, $1.8^\circ$ and 0.020 at 77\,K. With a momentum-relaxation probability $\Gamma_U\sim f_U\rho_U^2$ per layer (transmission factors of order one not computed), $\Gamma_U=(4.6$--$5.4)\times10^{-3}$ at 300\,K and $(2.3$--$2.7)\times10^{-3}$ at 77\,K, and a rough crossover estimate for the forward-channel laws is $N_U\sim1/\Gamma_U\approx190$--220 layers at 300\,K and 370--430 at 77\,K. This is an order-of-magnitude estimate: $\theta_U$ is the characteristic on-shell angle of low-energy channels near the axis of a parabolic band, not a hard boundary for every thermal channel, and $f_U\rho_U^2$ does not resolve the velocities and fluxes of the partners; it places the crossover at a few hundred layers, above the window of 16--128 layers. Beyond $N_U$ umklapp moves carriers between thermal channels, many of which are themselves protected; that regime calls for a multichannel chain with these couplings.

\begin{figure}[!tbp]\centering
\includegraphics[width=\textwidth]{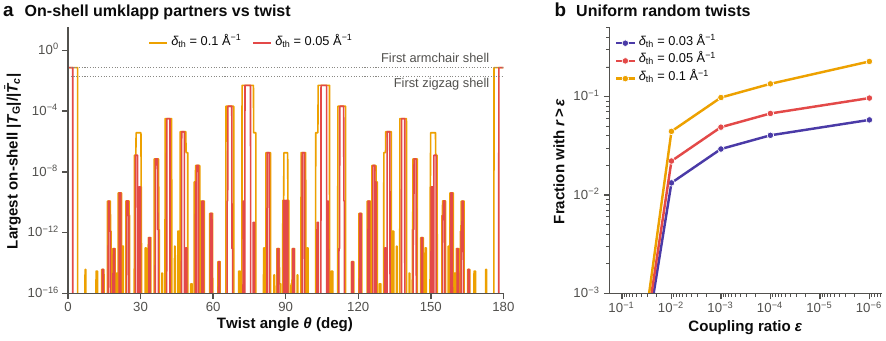}
\caption{Umklapp tail of a random-twist phosphorene interface. (a)~Largest on-shell umklapp coupling of a $\Gamma$-valley state relative to the forward one against the twist angle, for thermal windows 0.05 and 0.1\,\AA$^{-1}$ (all shells to 16\,\AA$^{-1}$; $c=1$); dotted: the first shells. (b)~Fraction of uniform random interfaces whose ratio exceeds $\epsilon$.}\label{fig:umklapptail}
\end{figure}

\begin{figure}[!tbp]\centering
\includegraphics[width=0.72\textwidth]{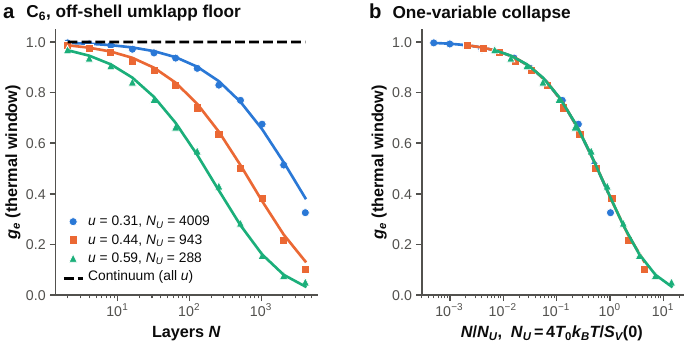}
\caption{Umklapp floor for band-edge carriers in the second-order effective chain: C$_6$ model, thermal window at 300\,K, interfaces within $10^\circ$ of alignment excluded. (a)~Simulation (markers) and parameter-free prediction (lines); dashed: continuum. (b)~Collapse against $N/N_U$.}\label{fig:umklapp}
\end{figure}

\subsection{Off-axis band edges}
Band edges away from the rotation axis (K valleys of dichalcogenides, L-derived edges of rock-salt slabs) have rotational disorder of the order of the valley depth; their carriers are blocked after a few layers. The protected carriers are those at $\Gamma$ with $r=0$.

%% file: sec_validation.tex
\section{Material validation: what is computed and what it identifies}\label{sec:validation}

Every material input of the theory is an identifiable quantity of a primitive-cell or small-cell calculation, with stated exceptions. The protocol (\texttt{kit/\allowbreak validation-protocol.md}) gives settings and file formats; this section lists each calculation with the quantities it identifies. Every kit output carries one of three categories: \emph{identified} (determined by the data alone), \emph{conditional} (determined under a named assumption, which is printed with the value), or \emph{consistency test} (a check that can fail, not an identification). Nothing that the data cannot determine is promised: in particular the material ZT is \emph{not} an output, because it needs the aligned crystal's absolute conductivities. The regression tests of the kit include seven inputs for which a simpler analysis reports a pass or an identified value that the data do not support (the tools' self-tests): five in the parameter tools, a spin--orbit-mixed hole pair whose weak $p$-independent forward amplitude competes with a $p^2$ term (a crossover, not an order), and face amplitudes taken as coefficient sums of Wannier functions without their physical forward weights, which a constant re-phasing of the basis turns from A- into B-type. All seven are rejected or labelled conditional.

\subsection{Materials}
\textbf{Black phosphorus} (D$_{2h}$, C$_2$; first-principles values in Sec.~\ref{sec:dft}): in phosphorene models the $\Gamma$ conduction edge is A-type ($r=0$, $\kappa\approx2$ at 0.02--0.1\,\AA$^{-1}$; spin--orbit coupling drives it toward the Kramers index 1 below $p^*\approx0.007$\,\AA$^{-1}$ in the $sp^3$ model) and the valence edge B-type ($r=2$, not protected; a spin--orbit window $p\lesssim p_h\approx0.02$\,\AA$^{-1}$ with a coupling $3\times10^{-4}$ of the electrons'), an electron--hole contrast in the forward-channel model whose parities T0-m identifies ($p_h$ and the hole/electron coupling ratio are ratios of face data on one plane, quoted with their spread over the admissible planes; $p^*$ depends in addition on the scale of $T$, which T2-a constrains for the bilayer but does not fix for a multilayer); flexural $\kappa_z=4$; in-plane phonons not protected ($\kappa_t=r=2$). The showcase for the thickness laws; its large mass anisotropy limits band-edge protection. \textbf{MoS$_2$} (D$_{3h}$, C$_3$): $\Gamma$ valence band $J_z=\pm1/2$, $r=0$, $\kappa=3$ (on-site spin--orbit $p^3$ term and face amplitudes; 6 where both are negligible); in-plane phonons $N^{-3/4}$ (central interlayer forces); K-valley conduction band off-axis, a built-in negative control. SnSe is optional (T0, T1 only).

\subsection{Calculations and identified quantities}
{\small
\begin{longtable}{L{0.9cm}L{5.0cm}L{2.9cm}L{5.8cm}}
\caption{Calculations of the validation protocol, the scripts that read them and the quantities they identify.}\label{tab:calcs}\\
\toprule
tier & calculation (cell) & script & identified quantities\\
\midrule
\endfirsthead
\toprule
tier & calculation (cell) & script & identified quantities\\
\midrule
\endhead
\bottomrule
\endfoot
T0-a & monolayer bands on a polar $k$ grid around $\Gamma$ (radii 0.005--0.25\,\AA$^{-1}$ $\times$ 48 angles); spin--orbit and spin textures for MoS$_2$; projections of the $\Gamma$ multiplet on $(l,m_l,s_z)$ or $(j,m_j)$ & \texttt{vk\_fit\_kappa} (scalar, spinor, \texttt{--rep}) & C$_n$ class, $r$, $\kappa_{\rm on}$ (checked against the rule), $v_V$, harmonics, masses; ring check if $\kappa=1$; admixture weights (size of $\kappa_T$ terms)\\
T0-m & the same polar grid, plus a fine grid bracketing the spin--orbit crossovers (black phosphorus: radii $10^{-4}$--$5\times10^{-3}$\,\AA$^{-1}$): wavefunctions of the $\Gamma$-derived multiplet at every $\mathbf q$ (spinors along the lab $z$ axis for spin--orbit runs) and its energies. Face vectors: the planar Fourier component $\mathbf G_\parallel=0$ of each Bloch state and its $z$ derivative at one gap plane per face (a height $h$ above the top atoms; $d_{\rm gap}-h$ below the bottom atoms of the next layer), normalized over the cell volume. Planes must lie outside the augmentation spheres of both layers (\texttt{rcore}); use several planes of that band. Interface coupling in eV from the Bardeen transfer form, $T=(\hbar^2/2m)A\,[F_{\rm top}^\dagger F'_{\rm bot}-F'^\dagger_{\rm top}F_{\rm bot}]$, written with the data. The reader is validated on the same calculation: planar density from the wavefunctions against the code's charge density in the vacuum & \texttt{vk\_face\_from\_pw} (VASP \texttt{WAVECAR}, Quantum ESPRESSO, generic arrays; \texttt{--check-density}), then \texttt{vk\_matrix\_index} & $\kappa$ and $r$ (identified if the gauge-invariant C$_n$ check passes, the Born expansion is valid, the local orders are stable over the fit radii and the projection is physical; otherwise conditional, with the reason, e.g.\ a crossover; C$_1$ Kramers: transfer matrices); the plane dependence of the Bardeen $T$ (printed spread; $\kappa$ and $r$ must not depend on the plane); the coupling scale in eV, constrained by T2-a for the bilayer (a multilayer constant is not identified); $p_h$ and the hole/electron coupling ratio as ratios of face data, with their spread over the admissible planes; $p^*$ with its dependence on the coupling scale. Planar Fourier ratios $\mathbf G_\parallel\neq0$ at the gap planes are printed as a diagnostic only: an umklapp coupling needs the coincidence set, the matrix elements and the partner energies\\
T0-b & monolayer phonons with rotational sum rules, on the polar $q$ grid & \texttt{vk\_fit\_kappa} (scalar ZA, \texttt{--tensor}) & $\kappa_z$ (ZA quadratic), $v_z$, $\kappa_t$\\
T0-c & monolayer band on the circles $|\mathbf k|=|\mathbf G|$ & \texttt{vk\_predict} & umklapp partner energies and the off-shell margin $\Delta$; with umklapp matrix elements (not the planar ratios of T0-m alone), the floor $N_U$ (conditional)\\
T1-a & aligned bilayer registry scan: uniform $6\times6$ shifts $\times$ $\ge5$ spacings, frozen layers; $\Gamma$ splitting, layer polarization $P$ and bonding character of each band-edge pair; total energy & \texttt{vk\_bilayer\_T} & with the bonding character (or phase-resolved layer projections): signed $t(\bm\tau)$, $\bar T=\langle t\rangle_{\bm\tau}$, $\beta_e$, umklapp $c_{\mathbf G}$ (identified); $\sigma_{\rm reg}$. Without it $\bar T$ is not identified, whatever the amplitudes: a sign change between registry samples is invisible in $|t|$ (example: $0.03+0.10\cos6\pi\tau_1$\,eV on $6\times6$ reads $-100$\,meV, true $+30$\,meV); a conditional value is printed\\
T1-b & same scan, energies & \texttt{vk\_interlayer\_\allowbreak phonon} & $d_{\rm tw}$, $F_{zz}(0)/M$ in (rad/ps)$^2$, $\omega_B$, $\beta_{\rm load}$, aligned shear stiffness; $\beta_{\rm eq}$ from the registry family (a proxy, checked in T2-b); the finite-$p$ continuum $F(\bp)$ and $\ell_{\rm ph}$ (\AA) are conditional on the assumed pair form, since the scan fixes only $U(0;d)$ and $U(\mathbf G;d)$\\
T1-c & registry-averaged splitting at $|\mathbf k|=0.05$, 0.1\,\AA$^{-1}$ in 4 directions (same scan, extra $k$ points) & \texttt{vk\_twist\_check --bands} & coupling range $\ell_T$ (conditional fit of a Gaussian range model); it does not determine $\kappa$ or the matrix coupling, which need T0-m\\
T1-d & aligned bulk phonons (primitive cell, DFPT, $6\times6\times2$ $q$ grid) & (direct) & registry-loss factor $g_{\rm reg}$ (Sec.~\ref{sec:registry})\\
T2-a & relaxed commensurate twisted bilayers: MoS$_2$ $\sqrt7$ at 21.79$^\circ$ and 38.21$^\circ$ (42 atoms) and the held-out $\sqrt{13}$ at 32.20$^\circ$ (78 atoms); black phosphorus 90$^\circ$ ($4\times3$ on $3\times4$ cells, 96 atoms, $\pm0.48\%$ strain at the experimental lattice: done, Sec.~\ref{sec:dft}; or $7\times5$ on $5\times7$ cells, 280 atoms), with the aligned bilayer at the same spacing as reference; $\Gamma$ splitting with layer polarization, unfolded as in Ref.~\cite{Tan2016} & \texttt{vk\_twist\_check --gamma} & parameter-free test $R=|t(0;\theta)|/|\bar T(d)|=1$; spread of $d(\theta)$ (``not identified'' from one angle); for black phosphorus the $\Gamma$ splittings of the conduction and of the valence pair (contrast test) and the scale of the forward coupling from the projection of the bilayer $\Gamma$ states on those of each layer computed alone (target subspace, $2\times2$ $H/S$, leakage), not from a selected level pair\\
T2-b & $\Gamma$ phonons of the twisted bilayers~\cite{Maity2020} & \texttt{vk\_twist\_check --shear}, \texttt{--breathing} & consistency tests: breathing frequency vs $\omega_B$ and, separately, the stiffness--spacing slope across twisted interfaces vs the registry $\beta_{\rm eq}$ (``not identified'' if the cells spread $d$ by less than 0.02\,\AA); shear pinning only from $\ge3$ cells approximating one target angle, strictly decreasing with cell size (equal or single values are not a convergence statement)\\
T2-c & MoS$_2$ trilayer $0^\circ/21.79^\circ/0^\circ$ ($\sqrt7$, 63 atoms); $\Gamma$ levels of the band-edge triplet and their layer-resolved amplitudes & \texttt{vk\_twist\_check --trilayer} & with the amplitudes: the $3\times3$ layer Hamiltonian, $t_{12}$, $t_{23}$, $t_{13}$ and the shifts (identified), and the held-out test $t_{12}=t_{23}=\bar T$, $t_{13}\approx0$; with levels only, $t$ is conditional on $t_{12}=t_{23}$, $t_{13}=0$ (three levels fit that chain exactly whatever the truth: $t_{12}=0.08$, $t_{23}=0.12$\,eV returns 0.102\,eV)\\
T3 & thickness curves with the DFT band (no new DFT) & \texttt{vk\_chain} (\texttt{--sz}, requires the $\sigma_z$ splitting column); \texttt{vk\_face\_chain} (full face Hamiltonian: PBE on-site band and random bonds from the T0-m faces) & exponents, $N_e$; mean, median and typical conductance at finite $N$\\
all & parameters $\to$ predictions & \texttt{vk\_predict} & $g_e(N)$ (Fermi--Dirac), $\Delta S$, $g_{\rm ph,z}(N)$, optimal-ZT ratio for a given quality factor $B$\\
\end{longtable}}
Every script reads named columns with units and fails on a missing column; every output carries its unit; under-determined inputs return a ``not identified'' marker that propagates to every prediction that needs it. An end-to-end synthetic test (\texttt{vk\_e2e.py}) generates T0--T2 data from the stable model (with the bonding column), recovers every parameter ($\bar T$ to $10^{-6}$, $\beta_e$ to 0.3\%, $d_{\rm tw}$ to $10^{-4}$\,\AA, $F_{zz}(0)/M$ to 0.1\%, $\beta_{\rm load}$ to 0.1\%, $\beta_{\rm eq}$ within 0.021\,\AA$^{-1}$ (8\%) of its exact value, $\ell_{\rm ph}$ and the Morse parameters exactly, the last two because the synthetic data obey the assumed pair form) and predicts $g_e(N)$ within 1.5\% of the direct Born prediction; both lie within the Born--$\Phi_1$ accuracy of the channel simulation of the same model. It also checks that a coupling without bonding information stops the pipeline. \texttt{vk\_matrix\_index} recovers $\kappa$ of random symmetric models in a smooth and in a completely scrambled multiplet gauge (7 classes; C$_1$ Kramers flagged and treated by transfer matrices). \texttt{vk\_face\_from\_pw} reads VASP (\texttt{vasp\_std}, \texttt{vasp\_ncl}) and Quantum ESPRESSO wavefunctions, or a generic array file, and writes its own $k$-point lists. Its self-test uses synthetic files in both layouts: Gaussian-orbital phosphorene states on VASP's plane-wave sphere (planar components and slopes reproduced to $1.4\times10^{-5}$ and $5.6\times10^{-5}$ at a 250\,eV cutoff; the VASP route equals the QE route to $4\times10^{-8}$); the invariances of the coupling (vacuum 20$\to$40\,\AA: face vectors unchanged to $7\times10^{-6}$; gap plane: the Bardeen $T$ of states with exact exponential tails is constant to $7\times10^{-8}$ over 0.5--2.5\,\AA\ and equals its closed form; a real change of the profile basis with the contragredient kernel leaves the Born rates unchanged to $3\times10^{-12}$); and $r=0$, $\kappa=2$ (conduction) and $r=\kappa=2$ (valence; the Born index of this scale-invariant chain is masked because its rates exceed 0.1 per layer, and $\kappa=r$ is read from the diagnostic slope) under random phases and random U(2) rotations of the Kramers pairs, with the kernel in eV. Real first-principles output is checked by the density test above.

\paragraph{Scope of the index tool and the transport modules.} (i)~\emph{Born applicability.} A Born rate is a small-disorder quantity; \texttt{vk\_matrix\_index} reports rates of 0.1 per layer or more as ``not applicable'' (\texttt{BORN\_MAX}), excludes them from the open-channel diagnostics and prints a flag. It also masks the index: where the rates at the fit radii reach the threshold, $\kappa$ is not returned (NaN, printed ``n/a''; category ``not applicable (Born)''), and the unmasked slope survives only as the diagnostic \texttt{kappa\_raw}. A self-test forces the masking, and a crossover model with a weak forward face falls into this category. The black-phosphorus valence rates (2--6 per layer on the midplane, up to 15 over the five planes) and the spin--orbit hole rates from $5.7\times10^{-4}$\,\AA$^{-1}$ up fall in this class (at $1.357\times10^{-3}$\,\AA$^{-1}$ the rate is not even defined on four of the five planes). A profile-basis counterexample (a real change of basis that keeps the bonds to $10^{-16}$ but changes the common open-channel count from 0 to 2 at $p=1.357\times10^{-3}$\,\AA$^{-1}$) lies there; the three-start gauge alignment is a heuristic that repairs the observed failures; basis invariance is checked on the data (Sec.~\ref{sec:dft}). (ii)~\emph{\texttt{vk\_face\_chain}} computes exact thermal or band-averaged conductances of scalar random-twist chains with the full face Hamiltonian (polynomial faces with the parity of the edge, continuous twists, aligned leads, recursive Green functions), with the mean, median and quantiles over twist sequences and the typical channel, and an optional quadrature refinement around the transparent energy of B-type edges; its self-test checks aligned stacks ($g=1$), dense Caroli inversions for one and two layers, the quadrature (weights, smooth integrands, a cusp at $E^*$) and $\langle T\rangle\ge e^{\langle\ln T\rangle}$. (iii)~\emph{\texttt{vk\_channel\_fast}} computes one channel of the same model in a C kernel (about 50\,ns per layer and sequence on two cores), with energies measured from $E^*$ so that the node region keeps full relative precision; it returns $\ln T$ of every sequence, and builds the ideal factorized chain. Every pair (seed, sequence) starts its own stream at a hashed state, and the calculations hash each grid point's parameters into its seed, so that different points are statistically independent. Its self-test checks the C kernel against a numpy reference to $10^{-14}$ (ideal, B- and A-type models), aligned films ($T=1$), $T(E^*)=1$ for every sequence of the ideal chain between matched leads, a positive exponent off $E^*$, and that the kernel draws exactly the angles of its stream, with neighbouring streams sharing none.

\subsection{What supports the theory, and what would falsify it}
\begin{center}\small
\begin{tabular}{L{3.2cm}L{5.8cm}L{5.4cm}}
\toprule
claim & supporting outcome & falsifying outcome\\
\midrule
matrix index (Prop.~\ref{prop:matrix}, Table~\ref{tab:kappa}) & \texttt{vk\_matrix\_index} on T0-m data gives $\kappa$ within $\pm0.15$ of the value for the computed class and face relation, over a factor $\ge3$ in $p$; on-site fit (T0-a) consistent & a different integer: the multiplet class, the face channels or the genericity assumption is wrong for the material\\
continuum coupling (Prop.~\ref{prop:coupling}) & $|R-1|<0.05$ at every twisted cell, including the held-out 32.20$^\circ$; trilayer $t/\bar T=1$ within 5\% & a twist-dependent $R$: moir\'e or commensurate terms are not negligible and $\bar T$ is not the twisted coupling\\
superlubric phonon coupling & shear frequency decreasing with cell size; breathing mode within 5\% of $\sqrt{2F_{zz}(0)}$ & a size-independent shear stiffness at generic angles\\
phonon spacing response & $\beta_{\rm eq}$ from the registry family agrees with the slope across twisted cells & the two slopes disagree (the registry family is not the relevant one), or registries do not spread $d_{\rm eq}$ ($\beta_{\rm eq}$ then needs another family, e.g.\ intercalation)\\
electron--hole contrast (black phosphorus) & T0-m: conduction pair A-type, valence pair B-type with a small spin--orbit forward amplitude; 90$^\circ$ bilayer: valence $\Gamma$ splitting $\ll$ conduction splitting & a valence pair with an $O(1)$ forward amplitude, or comparable splittings in the 90$^\circ$ bilayer\\
Bardeen coupling scale & the projected $90^\circ$ coupling within a factor of about 2 of the midplane $|\bar T(0)|$ (found: 1.5 two-state, 1.7--2.2 downfolded; 2.7--3.8 on the outermost admissible planes); $\kappa$, $r$ plane independent & a larger discrepancy, or no dominant pair in the projection: the scale stays conditional (found: the pair carries 81\% of the reference weight; the bilayer is consistent with the scanned range, but a multilayer scale is not identified)\\
umklapp (conditional) & $u\ll1$ and $N_U\gg$ target thickness (black phosphorus: first-shell $\rho_U=0.34$--0.37 on shell mainly near alignment, order-of-magnitude estimate $N_U\sim2$--$4\times10^2$ layers) & $N_U$ of a few layers\\
thickness law (real band) & local exponent $\to(c+s)/2(\kappa-r)$ in \texttt{vk\_face\_chain} with the full face Hamiltonian (black phosphorus electrons: $-0.47$ at $2.3\times10^4$ layers) & exponential decay at $N\ll N_U$\\
\bottomrule
\end{tabular}
\end{center}

\paragraph{Expected magnitudes (guidance only).} DFT masses of monolayer black phosphorus give strongly anisotropic edges, so $N_e$ for band-edge carriers is tens to about a hundred layers at 300\,K: black phosphorus shows the laws, not thermoelectric protection. The MoS$_2$ $\Gamma$ valence band has negligible $p^6$ disorder at thermal momenta; its protection is limited by the spin--orbit $p^3$ term and, conditionally, by umklapp. With Proposition~\ref{prop:coupling} the T2 test is parameter free and should pass unless moir\'e or umklapp terms are large.

\paragraph{Experimental route.} The coupling of a single interface can be constrained before thickness curves of random stacks are measured: angle- and momentum-resolved tunnelling between rotatable layers, as in the quantum twisting microscope~\cite{Inbar2023}, probes the twist and momentum dependence of $T$ (its scale and harmonics), and vertical transport through gated twisted black phosphorus~\cite{Srivastava2021} probes the coupling scale. Thickness curves of random stacks then test the laws; contact and area effects have to be separated from the stacking disorder, and a finite-thickness fit alone does not identify an asymptotic exponent.

%% file: sec_dft.tex
\section{First-principles test on black phosphorus}\label{sec:dft}

The monolayer face data, the index and the coupling scale were computed for black phosphorus with VASP 6.1.0 (PAW, PBE) on the Huadong A cluster. This section gives the methods, the checks of the toolchain and the results; the run-by-run record (job IDs, hashes, parallel shapes, timings, memory) is kept with the first-principles results (\texttt{results/bench.csv} and the run reports). Figure~\ref{fig:struct} shows every cell used.

\begin{figure}[!tbp]\centering
\includegraphics[width=\textwidth]{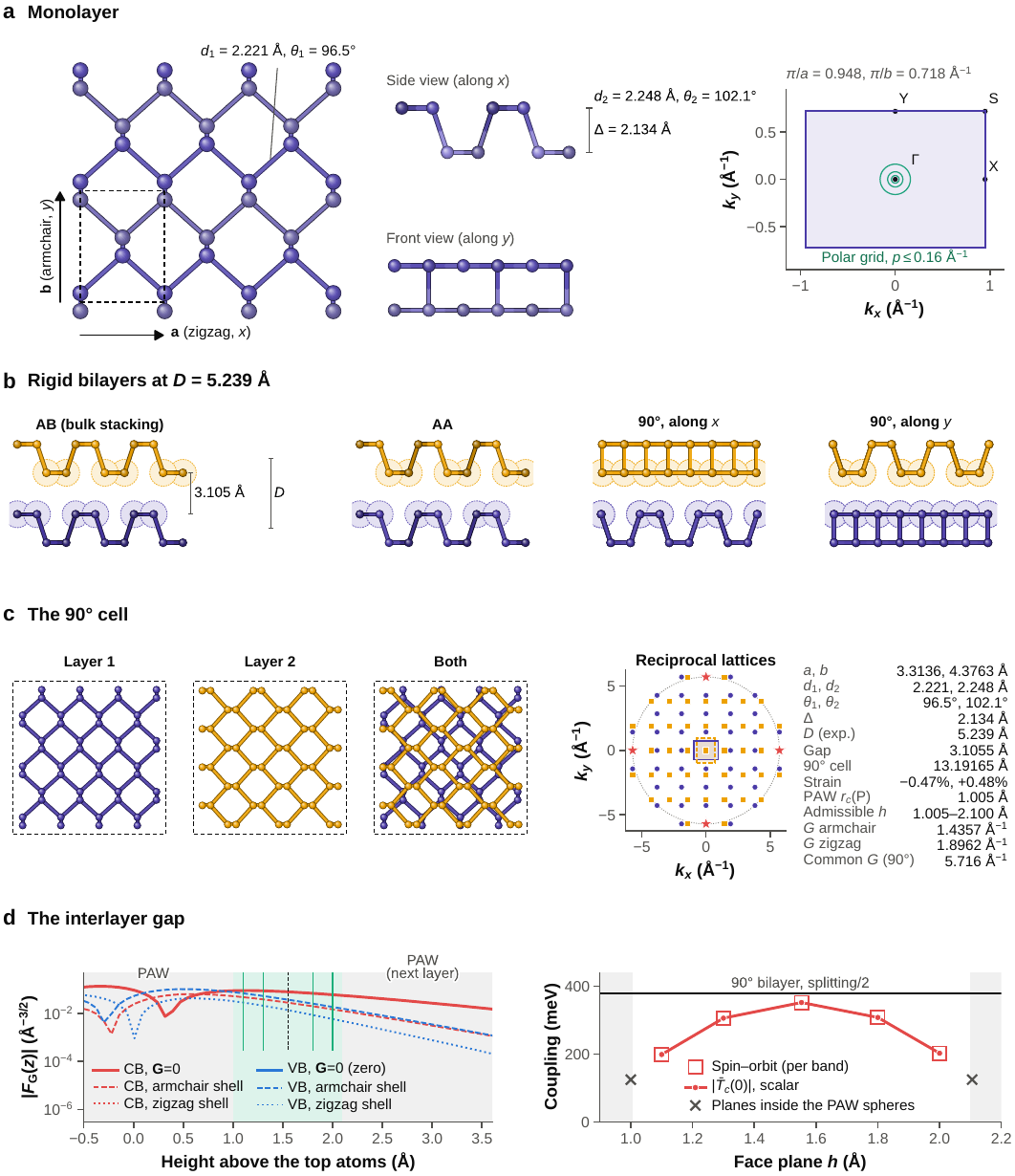}
\caption{Structures and zones of the first-principles calculations. (a)~Monolayer at the experimental lattice with PBE internal coordinates: top view (unit cell dashed), side view along zigzag (armchair profile) and front view along armchair (zigzag chains); first zone with the polar $\mathbf p$ grid. (b)~Rigid bilayers at $D=5.239$\,\AA: AB, AA and the $90^\circ$ cell viewed along both axes, with the P augmentation spheres (1.005\,\AA) of the facing atoms; markers in AB: the gap between the facing atoms and the layer distance $D$. (c)~The $90^\circ$ cell ($4\times3$ cells of layer 1 on $3\times4$ cells of layer 2; 96 atoms), its reciprocal lattices (stars: the first common vectors, 5.716\,\AA$^{-1}$) and the structure parameters. (d)~Planar Fourier components of the $\Gamma$ edges across the gap (left; grey: inside the augmentation spheres of either layer; green: the admissible band and the planes used) and the Bardeen coupling of the conduction edge on these planes against the $90^\circ$ bilayer (right; crosses: planes at 1.00 and 2.11\,\AA, inside the spheres).}\label{fig:struct}
\end{figure}

\subsection{Methods}
\paragraph{Settings.} PAW data set \texttt{PAW\_PBE P 06Sep2000}; 500\,eV cutoff (twice ENMAX; 650\,eV as a check), PREC = Accurate, LASPH, no real-space projection, Gaussian smearing 0.01\,eV, EDIFF $10^{-9}$\,eV (96-atom cells $10^{-8}$), no symmetrization of wavefunction runs (every $k$ point computed). Monolayer: $16\times12$ $\Gamma$-centred mesh (0.12\,\AA$^{-1}$), 20\,\AA\ vacuum (40\,\AA\ as a check), 24 bands for 10 occupied. Polar grids are non-self-consistent runs on the SCF density: $\Gamma$ and 12 radii from 0.001 to 0.16\,\AA$^{-1}$ $\times$ 48 angles (577 $k$ points), and, for spin--orbit coupling, 10 radii from $10^{-4}$ to $5\times10^{-3}$\,\AA$^{-1}$ $\times$ 48 angles, which bracket both spin--orbit crossovers.
\paragraph{Geometry.} A PBE+D3(BJ) relaxation of bulk black phosphorus compresses it (armchair 4.183\,\AA\ against 4.376, stacking 10.200\,\AA\ against 10.478), and the phosphorene bands are strain sensitive. We therefore use the experimental lattice ($a=3.3136$, armchair 4.3763\,\AA)~\cite{BrownRundqvist1965} and relax only the internal coordinates of the monolayer in PBE (forces below $10^{-3}$\,eV/\AA): bonds 2.221 and 2.248\,\AA\ (experiment 2.224 and 2.244), angles 96.5 and 102.1$^\circ$, puckering 2.1335\,\AA\ (2.1308). Bilayers are rigid pairs of these layers at the experimental layer distance $D=b/2=5.239$\,\AA\ (gap between the facing atomic planes 3.105\,\AA) and at $D\pm0.1$\,\AA. The $90^\circ$ bilayer is a $4\times3$ on $3\times4$ cell (96 atoms, 13.19\,\AA\ square) with a common C$_{2z}$ axis and strains $-0.47\%$ (zigzag) and $+0.48\%$ (armchair) in both layers; a monolayer with these strains isolates their effect, and each layer of the cell was also computed alone (48 atoms, same cell, positions, cutoff and mesh) as the reference of the projections below.
\paragraph{Face data, planes and coupling.} For every band-edge state the tool reads the planar Fourier component $\mathbf G_\parallel=0$ of the Bloch state and its $z$ derivative on a plane in the gap (Sec.~\ref{sec:validation}, T0-m), normalized as $\Omega^{-1/2}$, and forms the Bardeen coupling $T=(\hbar^2/2m)A[F_{\rm t}^\dagger F'_{\rm b}-F'^\dagger_{\rm t}F_{\rm b}]$ between the top face of one layer and the bottom face of the next at the same plane. The pseudo-wavefunction equals the all-electron one only outside the augmentation spheres (P: RCORE 1.90\,bohr $=1.005$\,\AA), so a face plane must lie more than 1.005\,\AA\ from the atoms of both layers: at the 3.1055\,\AA\ gap the admissible band is $1.005<h<2.100$\,\AA. We use five planes in it, $h=1.10$, 1.30, 1.55275 (the midplane), 1.80 and 2.00\,\AA; planes at 1.00 and 2.11\,\AA\ lie at or inside the spheres and are not used. The index and the coupling order follow from \texttt{vk\_matrix\_index} with this kernel in eV (gauge alignment: bond alignment from three starts, keeping the largest coherent bond, a profile-basis invariant).
\paragraph{Target subspace of the $90^\circ$ bilayer.} All 300 computed $\Gamma$ states of the bilayer are projected on the $\Gamma$ conduction and valence edge states of each layer computed alone. Both runs use the same cell, cutoff and plane-wave set, so the overlaps $o_{na}=\langle n|a\rangle$ are sums over common coefficients (pseudo-wavefunctions; the augmentation overlap is not included). The reference states are identified by their weight on their own layer's reciprocal lattice (unfolding weight above 0.9; in the $4\times3$ on $3\times4$ cell a $\Gamma$ state of layer 1 has Miller indices $m_1\equiv0$ mod 4, $m_2\equiv0$ mod 3, and one of layer 2 the reverse); the computed bilayer states carry 98\% of each. From the overlaps we find the bilayer states that carry the bonding and antibonding combinations of the reference pair and measure their coupling in four ways: (i) the $2\times2$ $H/S$ model of the pair restricted to these two states, $S_{ab}=\sum_n o_{na}^*o_{nb}$, $H_{ab}=\sum_nE_no_{na}^*o_{nb}$, with the orthogonalized hopping $t=(h_{ab}-\bar\varepsilon s_{ab})/(1-|s_{ab}|^2)$, which is half the splitting of the two states when they carry the pair symmetrically; (ii) the L\"owdin-downfolded effective Hamiltonian of the orthonormalized pair, $H_{\rm eff}(E)=E-G_{PP}(E)^{-1}$ with $G_{PP}(E)=\sum_nP|n\rangle\langle n|P/(E-E_n)$, at the energies $E_\pm$ of the two states, whose energy dependence measures how far the pair is from an isolated two-level system; (iii) the direct matrix element of the bilayer Hamiltonian between the orthonormalized reference states, over all computed states; (iv) half the distance between the spectral centroids of the bonding and antibonding combinations within $\pm2$\,eV. Separate runs have different energy zeros, so only bilayer energies enter.
\paragraph{Spin--orbit coupling.} We use the noncollinear binary, \texttt{vasp\_ncl}, with spin--orbit coupling on, the spin quantization axis along $z$, zero magnetic moments and the GGA compatibility switch off. The binary was not validated on the cluster; we validated it before use (below).

\subsection{Checks of the toolchain}
\begin{itemize}
\item \emph{Wavefunction reader on real data.} The planar density built from the wavefunction file agrees with the code's charge density on the 114 vacuum planes at least 1.5\,\AA\ from the atoms to $8.7\times10^{-4}$ (tolerance $2\times10^{-3}$); the charge density integrates to 20.0000 electrons.
\item \emph{\texttt{vasp\_ncl} without spin--orbit coupling against \texttt{vasp\_std}} (same inputs): total energies agree to $5\times10^{-5}$\,meV/atom, forces to $5\times10^{-6}$\,eV/\AA, the 16 lowest $\Gamma$ levels to 0.23\,meV, each scalar level appearing as a spinor pair degenerate to 0.007\,meV at all 192 $k$ points.
\item \emph{One against four ranks and a real spin--orbit reference} (MoS$_2$ monolayer, $6\times6$ mesh): identical energies, forces and levels; valence spin--orbit splitting at K of 148.0\,meV (PBE literature 146--148\,meV); Kramers pairs at the time-reversal-invariant momenta degenerate to 0.04\,meV.
\item \emph{Parallel shapes.} After a fixed number of electronic steps the energy is identical for 4 to 64 ranks in the monolayer (scalar and spin--orbit) and 8-atom bilayer classes. Converged runs on different rank counts agree to $2\times10^{-8}$\,eV in energy and $10^{-6}$\,eV in the $\Gamma$ levels (monolayer on 4 and 16 ranks; 96-atom bilayer on 8, 16 and 128 ranks). The repeat of the $90^\circ$ bilayer with wavefunctions kept follows the first run step by step (identical energy changes at every electronic step).
\item \emph{Profile-basis invariance on real data.} A non-orthogonal real change of the (value, slope) basis of the spin--orbit hole faces (condition number 5.2) with the contragredient kernel changes the Born rates by at most $|\Delta\ln\gamma|=4\times10^{-9}$ with the bond alignment, against 0.33 with a Euclidean alignment (up to a factor 5.4 for another basis change); the conduction data are insensitive to the choice.
\end{itemize}

\subsection{Monolayer: representations, index and coupling}
The PBE gap at $\Gamma$ is 0.664\,eV at the experimental lattice. Of the eight $\Gamma$ bands between $-5.9$ and $+1.2$\,eV only two have a forward face amplitude: the conduction edge ($|F|=0.0800$\,\AA$^{-3/2}$ on the midplane) and a valence band 1.9\,eV below the edge (0.054\,\AA$^{-3/2}$). The valence edge has $|F|<10^{-10}$: it is odd under the C$_{2z}$ rotation, as in the models of Sec.~\ref{sec:tbexamples}; its weight sits in the first armchair and zigzag shells (Fig.~\ref{fig:struct}d). On the polar grid, on each of the five admissible planes:
\begin{center}\small
\begin{tabular}{lccccc}
\toprule
plane $h$ (\AA) & 1.10 & 1.30 & 1.553 (mid) & 1.80 & 2.00\\
\midrule
conduction (A): $r$, $\kappa$ & 0.00, 2.00 & 0.00, 2.00 & 0.00, 2.00 & 0.00, 2.00 & 0.00, 2.00\\
valence (B): $r$, $\kappa$ & 2.00, 2.00 & 2.00, 2.00 & 2.00, 2.00 & 2.00, 2.00 & 2.00, 2.00\\
$|\bar T_c(0)|$ (eV) & 0.199 & 0.306 & 0.352 & 0.308 & 0.203\\
\bottomrule
\end{tabular}
\end{center}
The conduction index is identified (local $\kappa$ 1.99--2.00 below 0.008\,\AA$^{-1}$, 1.94 at 0.02 and 1.87 at 0.03\,\AA$^{-1}$, gauge-invariant C$_2$ check $6\times10^{-8}$) and the same on every plane to $10^{-3}$. The valence forward amplitude grows as $0.31\,p$\,\AA$^{-3/2}$ per \AA$^{-1}$ on the midplane (0.36--0.22 from 1.10 to 2.00\,\AA), with a node along zigzag; $r=\kappa=2$ is the coupling order of a B-type edge whose Born expansion has an $O(1)$ twist spread at $p\to0$, so this edge is not protected, a symmetry statement, while its transport is computed exactly (Sec.~\ref{sec:thicknessdft}). The contrast between the two edges at twisted interfaces was found in GW calculations of 90$^\circ$ bilayers and traced to the sign structure of the $\Gamma$ wavefunctions, whose values on a plane 1.6\,\AA\ above the atoms keep one sign along armchair for the conduction edge and alternate for the valence edge~\cite{Cao2016}; twisted bilayers and heterostructures show it in gating and optics~\cite{Sevik2017,Zhao2021}, and twist controls vertical tunnelling~\cite{Srivastava2021}. The first-principles monolayer thus gives the black-phosphorus classes of Table~I of the letter: conduction electrons $r=0$, $\kappa=2$ ($g\propto N^{-1/2}$); valence holes $r=\kappa=2$, no protected axis.

On the midplane the conduction coupling is $|\bar T_c(0)|=0.352$\,eV (valence: zero). Within the admissible band it falls to 0.199 and 0.203\,eV at 1.10 and 2.00\,\AA\ (symmetric about the midplane to 2\%): Bardeen's form is plane independent only for states that obey the same vacuum equation in the gap, and at a 3.1\,\AA\ gap they do not. The index, a symmetry property, does not depend on the plane; the scale does, by a factor of 1.77 over the admissible band, and is fixed by the $90^\circ$ bilayer below. At the midplane the first-shell armchair umklapp component of the conduction edge ($|\mathbf G|=1.436$\,\AA$^{-1}$) is 0.35 of the forward one, the zigzag one ($1.896$\,\AA$^{-1}$) below $10^{-3}$ at small $p$.

\paragraph{Convergence.} The $\Gamma$ face data change by less than 0.3\% between the baseline and runs with a 650\,eV cutoff ($|\bar T_c(0)|$ 0.3530 against 0.3521\,eV), a $22\times16$ mesh (0.3522\,eV) and a 40\,\AA\ vacuum (0.3521\,eV); the gap stays at 0.6644\,eV and the valence amplitude at zero. The strain of the $90^\circ$ cell changes the coupling by 0.7\% (0.3495\,eV) and the gap by 7\,meV. The same SCF on 4 and on 16 ranks gives identical energies, forces and levels. With spin--orbit coupling at 650\,eV (SCF and the fine polar grid on the midplane), the gap changes by 0.1\,meV, $|\bar T_c(0)|$ by 0.25\%, $|\bar T_v(0)|$ by 0.13\%, $p^*$ from $0.943$ to $0.939\times10^{-3}$\,\AA$^{-1}$, while $p_h$ ($1.263\times10^{-3}$\,\AA$^{-1}$) and the hole/electron ratio ($3.20\times10^{-5}$) do not change in the digits quoted; the conduction index at the smallest radius is 1.02 and 1.01.

\subsection{Spin--orbit coupling}\label{sec:socdft}
The spin--orbit polar grids resolve both crossovers from both sides (Fig.~\ref{fig:planes}). The conduction Kramers pair keeps $r=0$; its local index is 1.02--1.03 at $1.2\times10^{-4}$\,\AA$^{-1}$, 1.18--1.27 at $4.6\times10^{-4}$, 1.57--1.69 at $1.1\times10^{-3}$ and 1.94--1.97 at $4\times10^{-3}$\,\AA$^{-1}$ (ranges over the five planes), and 1.92--1.98 over 0.003--0.02\,\AA$^{-1}$ on the coarser grid. Its Born rate is the Kramers $p^2$ term plus the scalar $p^4$ term: a two-term fit over all ten radii (maximum relative residual 1.5--2.9\%) puts the crossover at $p^*=0.94\times10^{-3}$\,\AA$^{-1}$ on the midplane and $(0.73$--$0.94)\times10^{-3}$\,\AA$^{-1}$ over the admissible band; the coarse grid gives the same values to 0.2\%. The asymptotic index is the Kramers value 1 (Proposition), reached below about $2\times10^{-4}$\,\AA$^{-1}$; the scalar value 2 holds above about $3\times10^{-3}$\,\AA$^{-1}$, and in between the local slope is a crossover, not an index. The kit flags such a monotone drift as a crossover and reports the two-term fit.

The valence pair acquires a forward amplitude at $\Gamma$: a fit $|F|^2=F_{\rm so}^2+s^2p^2$ over the smallest radii gives $F_{\rm so}=3.9\times10^{-4}$\,\AA$^{-3/2}$ per band on the midplane ($(2.8$--$5.0)\times10^{-4}$ over the band) against the linear term $s=0.308$ (scalar: 0.307), so the spin--orbit window is $p\lesssim p_h=F_{\rm so}/s=1.26\times10^{-3}$\,\AA$^{-1}$ ($(1.24$--$1.36)\times10^{-3}$ over the band). Inside it the holes form a C$_2$ Kramers pair with $r=0.0$ and $\kappa=1.01$--1.02 from the four smallest radii (local index 1.01 at the smallest, 1.07--1.08 at the third): protected, $\kappa=1$, but with a Born rate $\gamma_B=5.6\times10^{-3}$ per layer already at $p=10^{-4}$\,\AA$^{-1}$, which reaches the Born limit near $p_h$. The hole coupling at $\Gamma$ is $|\bar T_v(0)|=1.13\times10^{-5}$\,eV on the midplane (the same from the SCF and from the grid), $3.2\times10^{-5}$ of the electrons'; over the admissible band the ratio is $(3.2$--$4.5)\times10^{-5}$. Kramers pairs are degenerate at every $k$ point to 0.013\,meV (inversion and time reversal). Both windows lie far below thermal momenta (a few $10^{-2}$\,\AA$^{-1}$ at 300\,K); how far below matters only through the thickness, since the transmitting momenta shrink with $N$ (Sec.~\ref{sec:thicknessdft}). These are properties of the monolayer faces with Bardeen's kernel. $p_h$ and the hole/electron ratio are ratios in which the kernel scale cancels, and their spread over the admissible band ($\pm5\%$ and $\pm17\%$) measures what the choice of plane leaves. $p^*$ is not a pure ratio: it balances on-site against bond terms and rises with the kernel scale. With $|\bar T_c(0)|$ on each admissible plane set to the couplings of the $90^\circ$ bilayer (0.352, 0.538 and 0.756\,eV: the monolayer midplane value, the two-state coupling and the largest downfolded one; Sec.~\ref{sec:t90}), $p^*=(0.94$--$1.65)\times10^{-3}$\,\AA$^{-1}$, and $1.22\times10^{-3}$ on the midplane with the two-state coupling; each plane's own kernel gives $(0.73$--$0.94)\times10^{-3}$ (\texttt{pstar\_scale} in \texttt{analyse\_dft11.py}). The classes ($r=0$, $\kappa\to1$ below $p^*$) do not depend on the scale.

\begin{figure}[!tbp]\centering
\includegraphics[width=\textwidth]{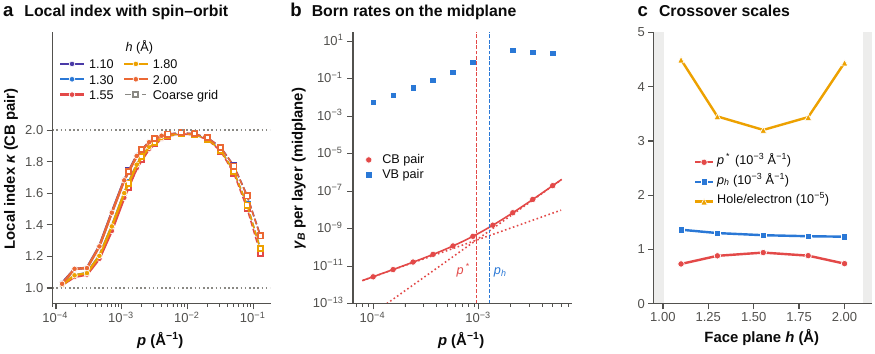}
\caption{Admissible face planes and the spin--orbit crossovers. (a)~Local index of the conduction pair with spin--orbit coupling on the fine and coarse polar grids, five planes; dotted: Kramers value 1 and scalar value 2. (b)~Born rates of the conduction pair (circles) and the valence pair (squares) on the midplane; lines: two-term fit $ap^2+bp^4$ and the hole window. (c)~$p^*$, $p_h$ and the hole/electron coupling ratio over the admissible band.}\label{fig:planes}
\end{figure}

\subsection{The \texorpdfstring{$90^\circ$}{90-degree} bilayer: target subspace and the scale of the coupling}\label{sec:t90}
In the $90^\circ$ bilayer the two layers share no reciprocal-lattice vector shorter than 5.716\,\AA$^{-1}$ (Fig.~\ref{fig:struct}c). On the midplane the planar components of the conduction edges on these vectors are $(2.0$--$3.3)\times10^{-4}$ of their forward ones (three layer distances), and those of the valence edges ($(4.5$--$8)\times10^{-6}$\,\AA$^{-3/2}$, which have no forward component) are equally small, so within this truncation the $\Gamma$ states couple through the forward channel or indirectly through other states; the statement is ``forward-dominated to this accuracy'', not an exact zero.

\paragraph{Identity of the pairs.} At $D=5.239$\,\AA\ the conduction reference pair lies on two bilayer states, 241 and 254, 1.075\,eV apart, which carry 45\% and 36\% of it (Fig.~\ref{fig:t90}a); the bonding combination has 85\% of its weight on 241, the antibonding one 79\% on 254. The other 17\% sits on states 1--16\,eV away (4--5\% each on 214 and 233), and 2\% lies above the 300 computed states. Between 241 and 254 lie twelve states with weights below $5\times10^{-5}$: states folded from other momenta of the two layers, with unfolding weights 0.002--0.06 on either layer's reciprocal lattice (241: 0.78). The lowest of them, 242, 0.760\,eV above 241, is the second of the two lowest empty states, which a selection by equal layer weights would pick; its weight is $2\times10^{-9}$ (Fig.~\ref{fig:t90}d). Equal layer weights cannot tell it apart, since every state of this cell has them. At 5.139 and 5.339\,\AA\ the conduction pairs are 241/256 (1.133\,eV; 78\%) and 241/254 (1.018\,eV; 84\%), and state 242 again carries $10^{-9}$--$10^{-8}$. The valence pair is the two highest occupied states: at 5.239\,\AA, 239 and 240, 3.9\,meV apart, carry 71\% of the reference pair, and 231 and 232, 0.80\,eV lower, another 24\% (5.139\,\AA: 5.9\,meV, 66\% and 25\%; 5.339\,\AA: 2.5\,meV, 75\% and 21\%).

\paragraph{Coupling.} The four measures of the Methods (Fig.~\ref{fig:t90}b), in meV:
\begin{center}\small
\begin{tabular}{lcccccc}
\toprule
$D$ (\AA), pair & $\Delta E$ & two-state $t$ & downfolded at $E_+$, $E_-$ & direct & centroids/2 & Bardeen midplane\\
\midrule
5.139, CB 241/256 & 1133.3 & 566.6 & 858.8, 637.8 & 182.1 & 384.2 & 340.4\\
5.239, CB 241/254 & 1075.0 & 537.5 & 756.1, 596.0 & 201.5 & 386.3 & 349.5\\
5.339, CB 241/254 & 1018.1 & 509.1 & 674.1, 556.9 & 214.5 & 383.4 & 353.8\\
5.139, VB 239/240 & 5.9 & 2.97 & 4.44, 4.34 & 0.12 & -- & 0\\
5.239, VB 239/240 & 3.9 & 1.93 & 2.71, 2.67 & 0.17 & -- & 0\\
5.339, VB 239/240 & 2.5 & 1.24 & 1.64, 1.62 & 0.06 & -- & 0\\
\bottomrule
\end{tabular}
\end{center}
The measures differ because the pair is not an isolated two-level system: a fifth of the reference weight lies on other states, and the reference states are not eigenstates of the bilayer. The direct matrix element between the isolated-layer states (0.20\,eV at 5.239\,\AA) is the smallest; the admixture of other states raises the effective coupling at the pair's energies to 0.60--0.76\,eV (downfolded); the two-state value, 0.54\,eV, is half the splitting of the dominant pair. Whether any of these values is the coupling of a multilayer is examined in Sec.~\ref{sec:effmodel}; the transport scans the coupling scale from the monolayer kernel (0.35\,eV) to the largest downfolded value (0.76\,eV) (Sec.~\ref{sec:thicknessdft}). The monolayer Bardeen kernel on the midplane (strained layers) underestimates the coupling by $\lambda=1.54$ (two-state; 1.71--2.16 downfolded) at 5.239\,\AA, by 1.66 (1.87--2.52) at 5.139\,\AA\ and by 1.44 (1.57--1.91) at 5.339\,\AA. Relative to the kernel of the unstrained monolayer, 0.352\,eV, which is $\lambda=1$ in the transport, the ratios at 5.239\,\AA\ are 1.53 and 1.69--2.15. This scale enters $p^*$ (Sec.~\ref{sec:socdft}) and the transport, not the classes. The valence couplings are smaller by more than two orders of magnitude, and their direct element (0.06--0.17\,meV) is a small part of them: the valence residue is an indirect coupling through other states, as the absence of a forward component requires.

\paragraph{Distance dependence.} The two-state coupling falls from 566.6 to 537.5 and 509.1\,meV over 5.139--5.339\,\AA, on one exponential to 0.1\% with $\beta_e=+0.54$\,\AA$^{-1}$ (downfolded at $E_+$: $+1.21$\,\AA$^{-1}$), the sign of tunnelling (Fig.~\ref{fig:t90}c). The folded state 242 would give the opposite sign, $\beta_e=-0.52$\,\AA$^{-1}$, because its distance from 241 grows with $D$ (715, 760 and 794\,meV at 5.139, 5.239 and 5.339\,\AA). The monolayer Bardeen value rises slightly with $D$ ($-0.19$\,\AA$^{-1}$): the planar average $|F_0(z)|$ of the conduction edge turns over about 1.1\,\AA\ above the outer atoms, and the product $|F_0||F_0'|$ that sets the midplane coupling is largest 1.6--1.7\,\AA\ above them, at the midplane of the 3.1\,\AA\ gap. The faces are there not in the tunnelling regime in which Bardeen's form is plane independent; the kernel sets a scale, not a distance law.

\begin{figure}[!tbp]\centering
\includegraphics[width=\textwidth]{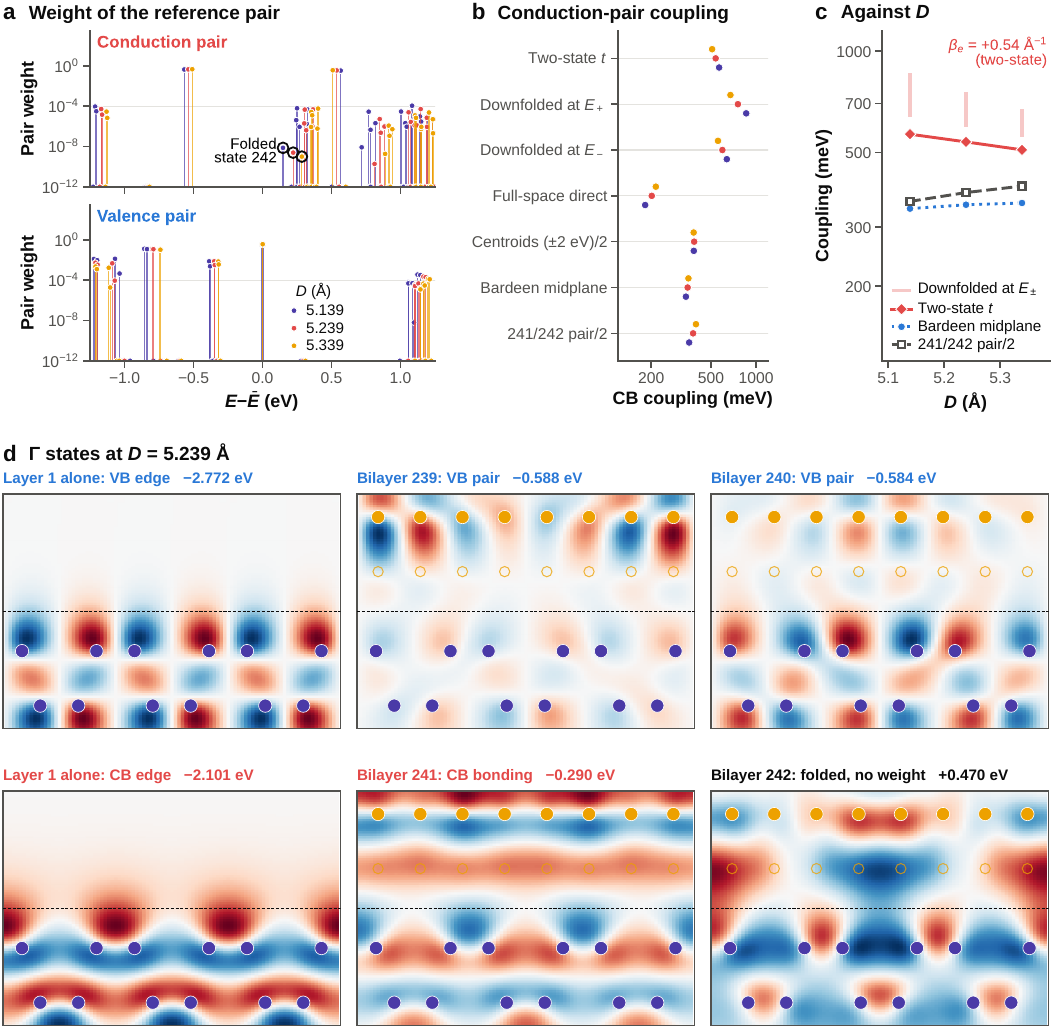}
\caption{Target subspace of the $90^\circ$ bilayer. (a)~Weight of the reference pair (the $\Gamma$ conduction or valence edge states of the two layers, each computed alone) on every bilayer $\Gamma$ state (log scale), against the energy from the centroid of the dominant pair, at the three layer distances; circles: band 242 at 5.239\,\AA, a zone-folded state without weight that equal layer weights would select. (b)~The coupling of the conduction pair measured four ways (two-state, downfolded at the energies $E_\pm$ of the bonding and antibonding states, direct, centroids), the monolayer Bardeen midplane value and half the 241/242 splitting. (c)~Two-state coupling (band: downfolded), Bardeen midplane value and half the 241/242 splitting against the layer distance. (d)~$\Gamma$ states on the $yz$ plane at $D=5.239$\,\AA\ (Re\,$\psi$; violet: layer 1, amber: layer 2; filled atoms within 0.9\,\AA\ of the plane, open within 2.2\,\AA; dashed: midplane): the valence and conduction edge states of layer 1 computed alone, the bilayer valence pair (239, 240), the conduction bonding state (241) and the folded state 242.}\label{fig:t90}
\end{figure}

\subsection{One coupling in the transport window? Effective model and PAW metric}\label{sec:effmodel}
\paragraph{Energy dependence.} The downfolded pair Hamiltonian $H_{\rm eff}(E)=E-G_{PP}(E)^{-1}$ is exact for the computed spectrum at every energy, so it can be evaluated where a multilayer's carriers sit (\texttt{blackphosphorus/\allowbreak effmodel12.py}; Fig.~\ref{fig:effmodel}). Between the two pair states it changes by 0.16\,eV in its hopping ($|t|=0.596$--0.756\,eV) and by 0.28\,eV in its on-site energy ($\varepsilon=0.19$--0.47\,eV on the bilayer's energy scale, 0.48--0.76\,eV above state 241) at 5.239\,\AA; the two-state model ($\varepsilon=0.25$\,eV on that scale, $|t|=0.538$\,eV) lies in between. A stack built by repeating this pair would have at $\bp=0$ the band $\varepsilon\pm2|t|$, with its bottom at $E_b=\varepsilon(E_b)-2|t(E_b)|$: 0.90, 0.91 and 0.67\,eV below the bonding state at 5.139, 5.239 and 5.339\,\AA, outside the spectrum that fixes the pair and 0.36--0.62\,eV below the bilayer's PBE valence maximum, among occupied states of the same symmetry that carry pair weight (band 233: 3.0--4.2\%; band 234: 0.4--0.5\%). There $H_{\rm eff}$ has a pole inside the window $E_b\ldots E_b+4k_BT$ (300\,K) at all three distances, and even away from the pole the two-state model misses the on-site energy by 0.17--0.62\,eV and the hopping by 0.25--0.44\,eV at $E_b$ (0.48--0.62 and 0.32--0.39\,eV across the window at 5.339\,\AA, more than 10\,meV from its pole). This periodic extrapolation is itself not controlled: a bilayer self-energy counts one neighbour of each layer where a stack has two, and it misses the paths through remote states of the next layer but one. In a two-orbital stack whose bilayer is reproduced exactly by its downfolded pair (remote level 1.5 below the target, couplings 0.5, 0.4 and 0.3 in the same units; \texttt{blackphosphorus/\allowbreak effmodel13\_toy.py}), the replicated chain puts the band bottom at $-1.31$, among the remote levels, whereas the target band of the true stack starts at $-0.58$. What the data establish is therefore negative: no single constant from the bilayer calibrates the coupling of a stack, and a unified multilayer model (a trilayer, or a periodic stack computed directly) is needed; whether a quasiparticle treatment is also needed cannot be decided from this extrapolation. The transport therefore treats the coupling scale $\lambda$ as a sensitivity parameter, $\lambda=1$, 1.526 and 2.147 (0.352, 0.538 and 0.756\,eV at $\bp=0$), not as a calibrated value with an error bar. The classes ($r$, $\kappa$) and the transparent-energy threshold of the holes (Sec.~\ref{sec:thicknessdft}) are statements about $\lambda$, not about one value.

\paragraph{Pseudo spectral surrogates.} All projection weights, $H/S$ models, downfolded and direct couplings of Secs.~\ref{sec:t90} and~\ref{sec:effmodel} are built from normalized pseudo-wavefunction overlaps in the common plane-wave set, without the PAW augmentation overlap between the two runs. The pseudo norms of the pair states are 1.052--1.058 (1.05--1.10 for all states in the window), so the augmentation changes norms by 5--10\%. The identity of the pairs, which rests on weight ratios of $10^8$, is not affected. The weights (81\%), the completeness (98\%) and the coupling values are those of the pseudo-wavefunction surrogate; the physical cross-overlap needs the projector coefficients of both runs. The monolayer face data are read outside the augmentation spheres, where pseudo and all-electron wavefunctions coincide (Sec.~\ref{sec:dft}).

\begin{figure}[!tbp]\centering
\includegraphics[width=\textwidth]{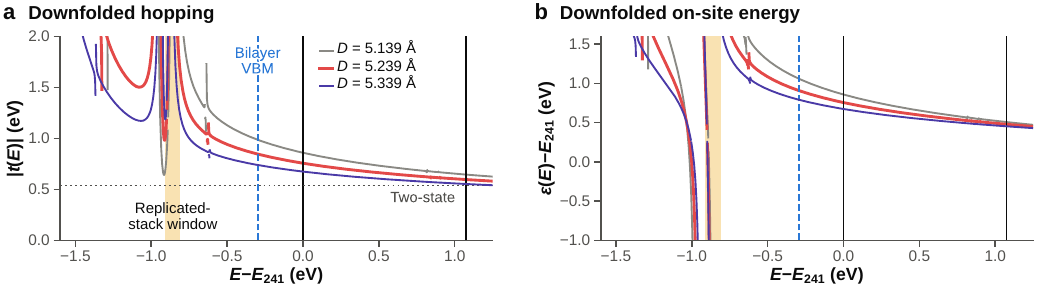}
\caption{Energy dependence of the downfolded conduction pair of the $90^\circ$ bilayer (\texttt{effmodel12.py}; pseudo-wavefunction overlaps). (a)~Hopping $|t(E)|$ and (b)~on-site energy $\varepsilon(E)$, relative to the bonding state 241, at the three layer distances; vertical lines: the pair states 241 and 254 and the bilayer PBE valence maximum (dashed; 5.239\,\AA); amber: the window $E_b\ldots E_b+4k_BT$ (300\,K) of the replicated-pair surrogate at 5.239\,\AA\ (not a calibrated multilayer); dotted: the two-state coupling. Spikes: poles of $H_{\rm eff}$ from occupied states that carry pair weight (bands 233, 234).}\label{fig:effmodel}
\end{figure}

\paragraph{Aligned bilayers.} Rigid bilayers at $D=5.239$\,\AA, with the $\Gamma$ pair taken as the two highest occupied and two lowest empty states (every state has weight 0.50 on each layer, since each stacking exchanges the layers by a symmetry):
\begin{center}\small
\begin{tabular}{lccc}
\toprule
bilayer & valence pair (meV) & conduction pair (meV) & coupled components\\
\midrule
AB (bulk stacking) & 817.8 & 570.6 & all $\mathbf G_\parallel$ (registry phases)\\
AA & 1100.0 & 601.0 & all $\mathbf G_\parallel$\\
$90^\circ$ (96 atoms) & 3.86 & 759.9 (projected pair: 1075.0) & forward-dominated\\
\bottomrule
\end{tabular}
\end{center}
Equal layer weights do not identify a pair: in the folded $90^\circ$ cell states from other momenta have them too (state 242 is one of them), so the pairs are identified by projection. (The AB and AA pairs were not projected; their cells are primitive, so no state is folded to $\Gamma$.) The valence splitting falls from 5.9 to 2.5\,meV over $D=5.139$--5.339\,\AA, a decay rate of 4.4\,\AA$^{-1}$, more than three times that of the first-order valence splitting of the AB bilayer (1.3\,\AA$^{-1}$) and eight times that of the forward coupling of the conduction pair ($\beta_e=0.54$\,\AA$^{-1}$). The residue is therefore compatible with the neglected reciprocal channels and with interface-potential corrections. The projection shows how it arises: the direct element between the isolated-layer valence states is only 3--9\% of the two-state and downfolded couplings, so the residue is carried by the admixture of other states (24\% of the reference weight lies on 231 and 232); which interlayer path dominates that admixture, and at which order, is not decomposed. In a model of isolated layers joined by interlayer hopping alone, two hops return to the same layer and give a diagonal shift, so an off-diagonal coupling needs odd orders or intralayer scattering by the interface potential. Half the residue, 1.93\,meV, is two orders of magnitude above the spin--orbit forward coupling of the holes (0.011\,meV with the monolayer kernel, 0.017--0.024\,meV with the bilayer scale): the hole/electron ratio of Sec.~\ref{sec:socdft} bounds the forward sector only, not all leakage channels of a real interface.
Aligned bilayers split the valence pair strongly (it decays with $D$ at 1.3\,\AA$^{-1}$ in AB, from 930 to 716\,meV over $D=5.139$--5.339\,\AA) because its first-shell armchair component at the gap midplane, 0.038\,\AA$^{-3/2}$, is half the conduction edge's forward amplitude. Their conduction pair (568--601\,meV) lies within 25--160\,meV of the next conduction states and is not a clean two-level splitting. Sampling: the $3\times3$ mesh of the 96-atom cell samples each layer like a $12\times9$ mesh of the primitive cell; the AB bilayer on that mesh reproduces its $16\times12$ splittings to 0.1\,meV.

\subsection{Random-twist films: exact transport of the full face Hamiltonian}\label{sec:thicknessdft}
The thickness dependence follows from one exact transport calculation of the full face Hamiltonian (\texttt{kit/\allowbreak vk\_face\_chain.py}, driver \texttt{blackphosphorus/\allowbreak thickness12.py}), not from Lyapunov exponents: $e^{-2\gamma N}$ is the typical decay, not the mean; continuous twists keep the two mirror nodes of the hole faces at zero probability, which a grid of angles would not; and the bond is random.

\paragraph{Model.} Each channel $\bp$ is a scalar chain (one state per layer; spin--orbit coupling is treated separately below) with on-site energies $E(R_{\theta_l}^{-1}\bp)$ from the PBE band and bonds
\[
t_l=\lambda\,\big|T(R_{\theta_l}^{-1}\bp,R_{\theta_{l+1}}^{-1}\bp)\big|,\qquad T(\mathbf q,\mathbf q')=\tfrac{\hbar^2}{2m}A\big[F_{\rm t}(\mathbf q)F'_{\rm b}(\mathbf q')-F'_{\rm t}(\mathbf q)F_{\rm b}(\mathbf q')\big],
\]
between the top face of layer $l$ and the bottom face of layer $l+1$, from the midplane face data of Sec.~\ref{sec:dft} (value and slope; top and bottom faces related by the mirror, $F_{\rm t}=-F_{\rm b}$, $F'_{\rm t}=F'_{\rm b}$). A scalar chain is gauge invariant, so $|T|$ suffices. At every grid momentum the faces are brought to a real gauge (phase of the larger bottom component; remainder $6\times10^{-8}$), with the sign fixed by the symmetry of the edge (A-type: the sign of the bottom value; B-type: that sign times ${\rm sgn}\,q_y$), and represented, like the band, by even polynomials of degree 10 in $(q_x,q_y)$, times $q_y$ for the B-type edge (degree 11, odd in $q_y$). Fit residuals: $(1.5$--$1.9)\times10^{-4}$ (electrons) and $(3.4$--$4.2)\times10^{-4}$ (holes) of the largest face amplitude, 0.06\,meV for the band. Twists are continuous: no symmetry node carries a finite probability, which removes the node artefact of the 48-angle grid. The leads are semi-infinite aligned stacks; twists are i.i.d.\ uniform. The coupling scale $\lambda$ takes the three values of Sec.~\ref{sec:effmodel}.

\paragraph{Measure and statistics.} Channels $|\bp|\le0.16$\,\AA$^{-1}$ (48 radii dense near 0, 16 angles over $[0,\pi)$, using the C$_2$ equivalence of $\pm\bp$). Beyond the grid lies about 1.4\% of the two-dimensional Boltzmann occupation of the electron band at 300\,K and none at 77\,K; the zigzag hole band is nearly flat (it rises 18\,meV to 0.16\,\AA$^{-1}$), so for holes an estimated 12\% (300\,K) and 0.7\% (77\,K) lies beyond it (\texttt{truncation12.py}; power-law extrapolation of the band per angle). These are occupation weights of the in-plane band, without the lead bandwidths that enter the conductance measure, not conductance errors. For the channels beyond the grid $E^*$ lies outside the averaged mapped band at $\lambda\le2.15$ (it enters only above $\lambda=2.75$ at 0.16\,\AA$^{-1}$; \texttt{threshold13.py}) and localize within a few layers. The polynomial faces are not used beyond the grid: extrapolated to 0.20\,\AA$^{-1}$ they destroy the conductance (\texttt{conv12.py}). Energies: 32 per channel at equal Boltzmann probability inside the aligned lead band (capped at $12k_BT$), normalized to the aligned stack ($T=1$ in every open channel). For holes a geometric grid of 40 points per side around the transparent energy $E^*$ (from $2\times10^{-2}$ to $10^{-10}$ of the band width, with exact Boltzmann weights) is added, because above $\lambda_c$ the hole mean is carried by a window around $E^*$ that shrinks as $N^{-2}$ and that the equal-probability points do not resolve (Table~\ref{tab:conv12}: 32, 128 and 256 such points give $g(128)=4.6$, 3.4 and $1.5\times10^{-5}$; the refined grid gives $1.99\times10^{-5}$ for every refinement). One recursive pass per sample gives $N=1,2,4,\dots$; a sample is one twist sequence summed over channels and energies. We report the mean over samples (electrons 32 sequences to $N=4096$ and 16 to $N=32768$; holes 512 to $N=128$), the median and the typical channel $e^{\langle\ln T\rangle}$ over channels and energies. The mean and the typical value are different statistics ($\langle T\rangle\ge e^{\langle\ln T\rangle}$), and $e^{-2\gamma N}$ is the typical decay, not a bound on the mean. The recursion reproduces dense Caroli inversions for one and two layers to $10^{-8}$, gives $g=1$ for aligned stacks, and the refined quadrature conserves every channel's Boltzmann weight and integrates a cusp at $E^*$ to 1\% (self-tests of \texttt{vk\_face\_chain.py}).

\input{tab_transport25}

\paragraph{Electrons.} Table~\ref{tab:e12} and Fig.~\ref{fig:transport}(a,b). At 300\,K the mean falls from 0.82--0.83 at one layer to 0.32--0.36 at 16 and 0.23--0.24 at 128, with little dependence on $\lambda$ (4\% at 128 layers), and its local exponent grows through $-0.2$ at 16--128 layers to $-0.40$ to $-0.43$ at 2900 layers and $-0.47$ at $2.3\times10^4$ layers (16 sequences): the approach to the axis law $N^{-1/2}$ of Eq.~(2) of the letter ($\kappa=2$, $r=s=0$). At 77\,K the decay is slower, with local exponents $-0.30$ to $-0.35$ at 2900 layers. The early drop (local exponents $-0.37$ to $-0.48$ at 3--6 layers) is the localization of the thermal channels at larger $p$; afterwards the conductance is carried by channels near the axis, which is why the typical channel transmits only $(1.6$--$2.5)\times10^{-3}$ at 16 layers (300\,K) while the mean is 0.32--0.36. The sample distribution is narrow (median/mean 0.98--0.99 at 16 layers; 16--84\% range within about $\pm10\%$ of the mean), because each sample sums many channels.

\paragraph{Holes and the transparent energy.} Table~\ref{tab:h12}, Figs.~\ref{fig:transport}(c,d), \ref{fig:holes} and~\ref{fig:kernel}. The valence faces are odd in $q_y$ (node along the layer's zigzag axis $x$), so both the bond and the rotational on-site disorder scale as $p^2$ and the chain has no protected axis ($\kappa=r$). For $\lambda=1$ the mean falls exponentially (typical and mean alike); at $N\ge32$ it is carried by single rare sequences and is not resolved (Table~\ref{tab:h12}; \texttt{lam1tail13.py}: with 64 sequences the largest holds 90--100\% of the mean at $N\ge32$, and $g(64)$ changes by two orders of magnitude between 112, 224 and 512 energies per channel). For $\lambda=1.53$ and 2.15 it decays algebraically instead, because of a second zero set. At small $p$ the bond factorizes, $t_l=t_0s_ls_{l+1}$ with $s_l=|\sin\alpha_l|$ ($\alpha_l$ the angle of $\bp$ in the frame of layer $l$), and the parabolic on-site energy is $e_l=E^*+2\Delta s_l^2$, where $E^*$ is the on-site energy of a layer whose node faces $\bp$ and $2\Delta$ the anisotropy of the band at $|\bp|$. With $\phi_l=s_l\psi_l$ the chain becomes $[(E-E^*)/s_l^2-2\Delta]\phi_l=t_0(\phi_{l+1}+\phi_{l-1})$: at $E=E^*$ the twist disorder of on-site energy and bond cancels exactly, through a local transformation of the kind known for correlated diagonal and off-diagonal disorder~\cite{Flores1989,Tessieri2001}, a relative of the random dimer~\cite{Dunlap1990}, and the chain is ordered; it transmits when its band contains $E^*$, $\Delta<t_0$, and between matched (factorized) leads $T(E^*)=1$ for every sequence. With $t_0\propto\lambda$ this defines a threshold: from the polynomial coefficients, $\lambda_c=|e_{01}-e_{10}|/2|K_0|=1.373$ as $p\to0$ (\texttt{residue14.py}). At finite $p$ the averaged mapped chain (below) contains $E^*$ for $\lambda$ above 1.405, 1.50, 1.56 and 1.81 at 0.02, 0.04, 0.05 and 0.08\,\AA$^{-1}$ (\texttt{threshold13.py}; Fig.~\ref{fig:holes}d): at $\lambda=1.53$, $E^*$ lies inside the propagating band of the averaged mapped chain for $p<0.045$\,\AA$^{-1}$, at 2.15 for $p<0.11$\,\AA$^{-1}$. This band is the reference of a near-transparent resonance, not an exact transparency: at finite $p$ the random chain keeps the residue below. The ratio $\Delta/t_0(\lambda{=}1)$ of the band extrema and the largest self-bond (1.38--1.44 for $p\le0.02$\,\AA$^{-1}$, 1.62 and 2.26 at 0.04 and 0.08) is a simpler diagnostic of the leading factorization, which overestimates the threshold at larger $p$. Near $E^*$ the disorder $V_l=(E-E^*)/t_0s_l^2$ has the exact tail $P(V>v)=(2/\pi)\arcsin(\delta/v)^{1/2}$: a heavy tail in the domain of attraction of the one-sided stable law of index $\frac12$ (Corollary~\ref{cor:heavy}), so $\gamma\propto|E-E^*|^{1/2}$: fitted exponents 0.50 and 0.51 (Fig.~\ref{fig:holes}b). Its mean-transmission kernel is not $\Phi_1$ (the ratio of $1-\langle T\rangle$ to $-\langle\ln T\rangle$ is $\frac14$ in the dilute limit), and for the ideal factorized chain the means collapse in $q=N|\delta|^{1/2}$ from 256 to 4096 layers with a constant integral, so that one channel transmits $\propto N^{-2}$ (Sec.~\ref{sec:theorem}, Tests; Fig.~\ref{fig:kernel}a).

\paragraph{The residue.} The cancellation is exact only for the parabolic band and the linear node. At $E^*$ the mapped chain is nevertheless exact for the full model: its on-site energy $(e_l-E^*)/s_l^2$ and bond $t_l/s_ls_{l+1}$ are smooth in the angles, because the band is even in $q_y$ and the face odd. Its disorder is correlated (each bond depends on two angles) and of relative order $p^2$, and its Born exponent~\cite{Tessieri2001} is $\gamma(E^*)=C_\lambda p^4[1+O(p^2)]$, with $C_\lambda=|B|^2/64\sin^2k$ from the coefficients of the band and face polynomials (letter, Eq.~(8)): $C_\lambda=3546$\,\AA$^4$ ($\lambda=1.53$, $\cos k=-0.900$) and 576\,\AA$^4$ ($\lambda=2.15$, $\cos k=-0.640$). The exact Born sum at finite $p$ and the Lyapunov exponent of chains long enough that $N\gamma=190$--630 (up to $5.4\times10^8$ layers, 16 sequences with independent random streams; \texttt{residue14.py}) agree with each other within 2.1\% for $p\le0.02$\,\AA$^{-1}$ at $\lambda=1.53$ (4.7\% at 0.028, where $E^*$ approaches the band edge) and within 1.3\% for $p\le0.057$\,\AA$^{-1}$ at $\lambda=2.15$, and $\gamma/p^4$ tends to $C_\lambda$ as $p\to0$ (Table~\ref{tab:residue}, Fig.~\ref{fig:holes}c). The corrections are strong at $\lambda=1.53$, where $E^*$ lies near the band edge: $\gamma/p^4$ rises to $1.2C_\lambda$ at 0.02, $1.6C_\lambda$ at 0.028 and 4--4.7$C_\lambda$ at 0.04\,\AA$^{-1}$ (chain and Born sum differ by 13\% there), and beyond 0.045\,\AA$^{-1}$ $E^*$ leaves the averaged mapped band; at $\lambda=2.15$ they stay within 3\% to 0.03\,\AA$^{-1}$.

\paragraph{The thermal law.} The zero set of the thermal problem is the point $p=0$, $E=E^*$, with the residue $\propto p^4$ in one direction and the heavy-tailed window $|E-E^*|^{1/2}/p$ in the other: a mixed-order zero set (Corollary~\ref{cor:mixed}). With $u=C_\lambda Np^4$ and $v=N|\delta|^{1/2}$, $\delta=(E-E^*)/Cp^2$, $C=\lambda|K_0|$, the scaling limit $\langle T_N\rangle\to\Psi(u,v,\pm)$ gives, through $\dd^2p\,\dd E=p\,\dd p\,\dd\varphi\,Cp^2\dd\delta$, $g\simeq A_{\rm th}N^{-3}$ with $A_{\rm th}=\pi Cw^*I/2C_\lambda Z$, $I=\sum_\pm\int\dd u\int2v\,\dd v\,\Psi$ (the azimuthal integral assumes that $\Psi$ does not depend on the direction of $\bp$ relative to the leads, which the check at a second lead azimuth below supports), $w^*$ the Boltzmann density at $E^*$ and $Z$ the aligned normalization of Sec.~\ref{sec:thicknessdft}. We computed $\langle T_N\rangle$ of the full face model at fixed $(u,v)$ with the exact recursion (Table~\ref{tab:collapse}, Fig.~\ref{fig:kernel}b--e): 9 values of $u$ from 0.01 to 8, 12 of $v$ from 0 to 16 and both signs, for $N=256$, 1024 and 4096 ($2^{14}$ sequences per point) and 16384 (4096 sequences; \texttt{collapse14.py}), and 15 points with $u\le4$ and $v\le4$ at $N=65536$ (4096 sequences; \texttt{collapse14b.py}).

\emph{Convergence in $N$.} Every point has its own random streams, so the means are independent: the standard error of a linear functional such as $I$ follows from theirs, which 16 independent batch means per point confirm (within about 30\%, the scatter of such an estimate being about 18\%), and the normalized differences between thicknesses are independent. At $\lambda=2.15$ the means change by a median of $+4.7$, $+2.2$ and $+0.6\%$ per factor 4 in $N$ from 256 to 16384 (points with $\langle T\rangle>10^{-4}$; normalized differences are approximately normal, less so where rare sequences dominate). Between 4096 and 16384 they agree point by point within their errors ($\chi^2=210$ for 207 points), with a small coherent rise (mean normalized difference $+0.23\pm0.07$). The agreement is not carried by poorly sampled points. Measuring how strongly a mean rests on few sequences by $n_{\rm eff}=(\sum T)^2/\sum T^2$ (a concentration measure, not an autocorrelation-based sample size; 55 points at $N=16384$ have $n_{\rm eff}<30$ and 28 have $n_{\rm eff}<10$), the 152 points with $n_{\rm eff}\ge30$ at both thicknesses give $\chi^2=153$, and the central region $u\le1$, $v\le4$ gives $\chi^2=91$ for 90 points; the rise is not carried by poorly sampled points either ($+0.28\pm0.08$ among the well-sampled points, $+0.17\pm0.10$ in the central region; \texttt{diag15.py}). 16384 and 65536 agree as well ($\chi^2=17.9$ for 15 points, all with $n_{\rm eff}>42$; median $+2.8\%$, against $+0.3\%$ from 4096 to 16384 at the same points; mean normalized difference $+0.8\pm0.3$). These $\chi^2$ values are consistency diagnostics against an approximately Gaussian reference, not proofs of a scaling limit: point-by-point agreement and a coherent finite-size drift coexist, the approach to the limit is slow, not monotone at this precision, and continuing, and we find no sign of a different limit. At $\lambda=1.53$, where $E^*$ lies near the band edge, it is slower (median changes $+11.6$, $+9.6$ and $+3.1\%$ from 256 to 16384; $\chi^2=830$ for 207 points between 4096 and 16384 and 45 for 15 between 16384 and 65536), because the $O(p^2)$ corrections of the residue and of the position of $E^*$ in the band decay only as $N^{-1/2}$. The lead azimuth does not matter within the errors ($\varphi=30^\circ$ against $73^\circ$: $\chi^2=8.5$ for 10 points at $N=1024$, $\lambda=2.15$).

\emph{The line $v=0$.} At $v=0$ only the residue scatters: at $\lambda=2.15$ the means lie within 0.4\% of the finite-variance kernel $\Phi_1(2u)$ for $u\le0.3$ and 2--9\% below it at $u=1$--2 for $N=4096$--65536 (at $\lambda=1.53$ the finite-$p$ deviations are larger, 2--3\% already at $u=0.3$); replacing $u$ by $u_{\rm eff}=N\gamma_{\rm Born}(E^*,p)$, the exact residue at the actual $p$, lowers the rms deviation from 0.011 to 0.005 ($\lambda=2.15$) and from 0.057 to 0.013 ($\lambda=1.53$), absolute. The corollary needs the existence of $\Psi$, not its form. As $u\to0$ the kernel tends to the ideal one, $I(u)\to J$.

\emph{Integral and amplitude.} The integral $I$ by the trapezoidal rule on this base grid has sampling errors of 0.5\% ($N\le4096$) and 1\% ($N=16384$). We measured its quadrature error by adding all midpoints of the grid (4096 sequences per added point at $N=4096$, both $\lambda$; 2048 at 16384, $\lambda=2.15$): the refined grid lowers $I$ by $1.1\pm0.5\%$ ($\lambda=2.15$) and $2.4\pm0.5\%$ ($\lambda=1.53$) at $N=4096$ and by $2.7\pm0.9\%$ at 16384, and the Richardson value $I_{\rm fine}+(I_{\rm fine}-I_{\rm base})/3$ removes the leading error. The tails beyond $v=16$ and $u=8$, extrapolated exponentially, would add another 1.1--1.4\% ($\lambda=2.15$) and 0.1--1\% ($\lambda=1.53$) of the base integral. On the base grid the amplitude of the kernel integrated at thickness $N$ is $A_{\rm th}=689$, 804, 846 and 868 at $N=256$, 1024, 4096 and 16384 for $\lambda=2.15$, still rising by $2.6\pm1.1\%$ from 4096 to 16384; the quadrature correction grows with $N$, and the Richardson values, $834\pm5$ and $837\pm6$, agree. Other rules on the refined grid differ from them by at most 0.5\% (tensor-product Simpson $-0.5$ and $+0.3\%$, monotone cubic interpolation $-0.1$ and $+0.5\%$ at $N=4096$ and 16384), and the strip $0<u<0.01$ below the base grid is taken as $0.01\,I(0.01)$, 1.0--1.2\% of $I$. Assuming that the decrease of $I(u)$ observed on the base grid extends to $0<u<0.01$, the endpoint values $I(0.01)$ and the ideal-chain moment $J$ at $u=0$ suggest a sensitivity of 0.02--0.03\% of $I$ at $N=4096$ (\texttt{diag15.py}). This estimate is conditional on that monotonic continuation and is not a rigorous error bound on the strip; the endpoint values also carry sampling and quadrature uncertainty (at $N=16384$, $I(0.01)=15.35\pm0.18$ and $J=15.31$ agree within their errors). Interior samples at $u=0.001$, 0.0025, 0.005 and 0.0075 (both $\lambda$, $N=4096$ and 16384, independent streams; \texttt{strip16.py}) lie within two standard errors of the endpoint range but do not decrease monotonically; the reversals (at most $1.6\sigma$) are within their errors (0.6--1.2\%), and integrating through them changes $I$ by $-0.01$ to $+0.02\%$. The sampled strip agrees with the estimate, which remains a sensitivity check, not a bound. The extrapolated amplitude is thus stable to a few per cent over the tested thicknesses, with residual finite-size and tail-extrapolation uncertainty: the Richardson value removes the grid error, not the finite-thickness drift found above, and the tails rest on an exponential extrapolation, so these numbers are not an error bound on the $N\to\infty$ amplitude. For $\lambda=1.53$ it is 54, 89, $109\pm1$ (refined) and 128 (base grid), still rising. Integrated at the same thickness, the kernel reproduces the directly computed thermal means (\texttt{thick13\_h.py}; 256 sequences to 1024 layers, 300\,K) within 5\%: $N^3g=698$ and $841\pm26$ against 689 and 804 at 256 and 1024 layers ($\lambda=2.15$), 55 and $90\pm3$ against 54 and 89 ($\lambda=1.53$), both on the base grid. The energy quadrature of these means (32 equal-probability energies plus 40 geometric points per side of $E^*$) is adequate to about 3\% at 1024 layers: on the same films, doubling the points around $E^*$ lowers $g(1024)$ by $3.1\pm1.4\%$ ($\lambda=2.15$, 64 films) and $2.2\pm0.9\%$ ($\lambda=1.53$, 32 films), a second doubling by a further $0.6\pm0.6\%$ ($\lambda=2.15$), and $g(N\le512)$ by at most 1.5\% (\texttt{thermref14.py}); with this correction $N^3g(1024)\approx815$ at $\lambda=2.15$, within 3\% of the kernel at the same thickness (base grid, or with the quadrature correction measured at 4096). At these thicknesses the thermal mean is thus carried by the neighbourhood of the mixed-order zero set. Accordingly the local exponents of the thermal means reach $-2.6$ ($\lambda=1.53$) and $-2.8$ to $-2.9$ ($\lambda=2.15$) at 360--720 layers (Table~\ref{tab:h12}, Fig.~\ref{fig:transport}d), approaching $-3$. The law needs the cancellation, and anything that breaks the factorization, such as hole umklapp or the non-forward residue below, cuts it off.

\begin{figure}[!tbp]\centering
\includegraphics[width=\textwidth]{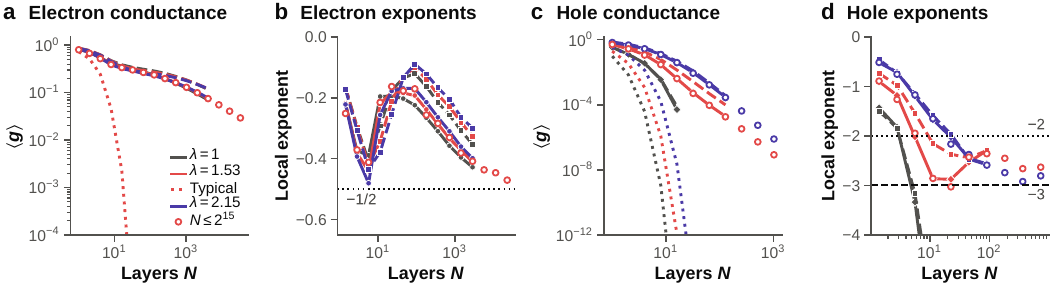}
\caption{Transport of random-twist black-phosphorus films with the full face Hamiltonian (\texttt{thickness12.py}, \texttt{thick13\_h.py}). (a)~Electron mean conductance for the three coupling scales at 300\,K (solid) and 77\,K (dashed), median (thin) and typical channel (dotted) for $\lambda=1.53$ at 300\,K; open circles: 16 sequences to $N=2^{15}$. (b)~Local exponents; dotted: $-1/2$. (c)~Hole mean conductance (diamonds: 300\,K; dashed: 77\,K; $\lambda=1$ only to 16 layers, where it is resolved) and typical channel (dotted); open circles: 256 sequences to 1024 layers (\texttt{thick13\_h.py}, 300\,K). (d)~Hole local exponents; dotted: $-2$ (one channel of the ideal chain), dashed: $-3$ (thermal law of the mixed-order zero set).}\label{fig:transport}
\end{figure}

\begin{figure}[!tbp]\centering
\includegraphics[width=\textwidth]{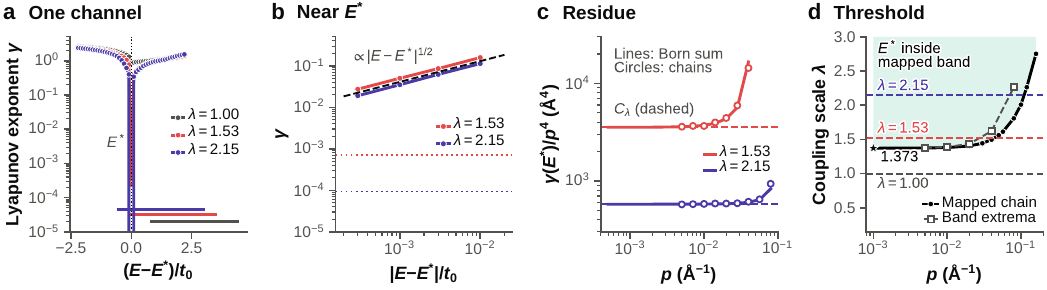}
\caption{Transparent energy of the hole chain (\texttt{holes12.py}, \texttt{residue14.py}, \texttt{threshold13.py}). (a)~Lyapunov exponent of one channel ($p=0.02$\,\AA$^{-1}$, 6000 layers) across the band, against $(E-E^*)/t_0$ for three coupling scales; bars: the aligned lead band. (b)~Near $E^*$ ($4\times10^4$ layers): $\gamma\propto|E-E^*|^{1/2}$; dotted: the converged residue $\gamma(E^*)$ at this $p$ (\texttt{residue14.py}). (c)~Residue $\gamma(E^*)/p^4$ against $p$: exact Born sum of the mapped chain (lines), Lyapunov exponents of chains with $N\gamma=190$--630 (circles; up to $5.4\times10^8$ layers) and the leading coefficient $C_\lambda$ (dashed). (d)~Threshold: the averaged mapped chain contains $E^*$ above the solid line (\texttt{threshold13.py}), with the limit 1.373 (star); squares: the band-extrema ratio $\Delta/t_0(\lambda{=}1)$; dashed: the three scales used.}\label{fig:holes}
\end{figure}
\begin{figure}[!tbp]\centering
\includegraphics[width=\textwidth]{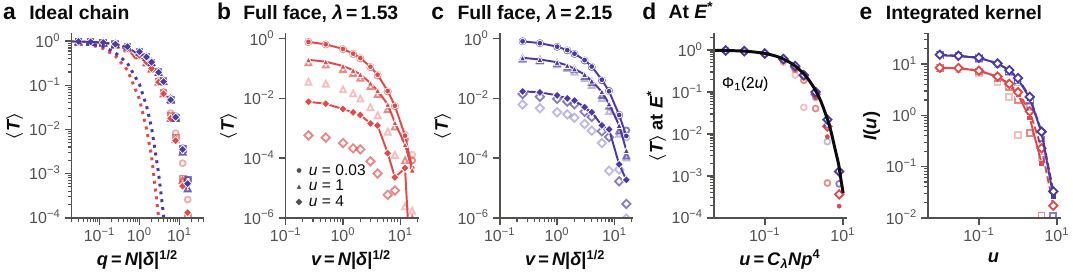}
\caption{Kernels of the transparent energy. (a)~Ideal factorized chain: mean transmission against $q=N|\delta|^{1/2}$ for $N=64$, 256, 1024 and 4096 (markers, light to dark), with $E^*$ at $\cos k=-0.90$ (red) and $-0.64$ (violet); dashed: the dilute limit $1-q/2\sqrt{\sin k}$; dotted: typical transmission ($N=4096$). (b,c)~Full face model: mean transmission at fixed $u=C_\lambda Np^4$ against $v=N|\delta|^{1/2}$ for $N=256$ (open, light), 1024 and 4096 (filled, line). (d)~At $E^*$ ($v=0$) against $u$, with the finite-variance kernel $\Phi_1(2u)$. (e)~The $v$-integrated kernel $I(u)=\sum_\pm\int2v\langle T\rangle\,\dd v$ for the three thicknesses; in (d,e) open diamonds: $N=16384$, 4096 sequences (\texttt{kernel14.py}, \texttt{collapse14.py}; independent random streams for every point).}\label{fig:kernel}
\end{figure}

\paragraph{Scope of the forward-channel results.} (i)~Umklapp: first-shell partners couple with $\rho_U=0.34$--0.37 of the forward strength and are on shell mainly within a characteristic angle of $\pm3.6^\circ$ (300\,K) or $\pm1.8^\circ$ (77\,K) of alignment; a rough estimate of the thickness over which they relax momentum is $N_U\approx190$--220 (300\,K) and 370--430 layers (77\,K) (Sec.~\ref{sec:umklapp}). This is an order-of-magnitude estimate: the on-shell window is the characteristic angle of low-energy channels near the axis of a parabolic band, not a hard boundary for every thermal channel, and the per-layer rate $f_U\rho_U^2$, whose inverse is $N_U$, does not resolve velocities and fluxes. The electron curves beyond $N_U$ are forward-channel results, not predictions for real films, and the window 16--128 layers is below the estimate rather than a controlled bound. (ii)~Spin--orbit coupling, electrons: the Kramers law $N^{-1}$ takes over when the transmitting momenta, which shrink as $N^{-1/4}$, reach $p^*$, i.e.\ at $N^*\approx1/\gamma_B(p^*)$. Evaluated at thermal energies $E_b+uk_BT$ ($u=0.5$--4) above the bottom of the averaged chain band with the two-state scale on the midplane (\texttt{soc12.py}), $p^*=(1.7$--$4.9)\times10^{-4}$\,\AA$^{-1}$ and $N^*=(0.8$--$6)\times10^{10}$ at 300\,K, and $N^*=(0.3$--$2.5)\times10^{11}$ at 77\,K. (iii)~Spin--orbit coupling, holes: channels $p<2p_h$, where the scalar faces vanish and the spin--orbit forward coupling sets the bandwidth, carry at most $5.0\times10^{-6}$ (300\,K) and $2.5\times10^{-5}$ (77\,K) of the aligned hole conductance; each channel transmits at most its aligned weight, so these are bounds on what the Kramers window can add, whatever its disorder. (iv)~Non-forward residue: half the valence splitting of the $90^\circ$ bilayer, 1.93\,meV, is a coupling that the forward faces do not contain. A momentum-independent coupling of this size would exceed the forward hole bond in channels holding 8.1\% (300\,K) and 32\% (77\,K) of the aligned hole conductance. This is a weight estimate for that replacement model, not a bound on what the residue does to the whole hole conductance; it would also break the factorization behind $E^*$. (v)~Umklapp, holes: no perturbative bound exists. The valence forward face vanishes on the axis while its first-shell component does not: the ratio of the first armchair shell to the forward face is 6.6 at 0.02, 3.4 at 0.04 and 1.7 at 0.16\,\AA$^{-1}$ (\texttt{umklapp12.py}). The nearly flat zigzag band (it rises only $0.68\,k_BT$ to 0.16\,\AA$^{-1}$ at 300\,K) opens partners within about $\pm7.7^\circ$ of alignment at 300\,K (at least $\pm6.4^\circ$; mean-curvature extrapolation beyond the grid) and $\pm4.7^\circ$ at 77\,K, i.e.\ at 8.6\% (5.2\%) of random interfaces. At those interfaces the hole coupling is dominated by umklapp; it mixes channels and can break the cancellation behind $E^*$ over a thickness that needs a multichannel treatment (8.6\% is a density of such interfaces, not a decay length). The hole results are therefore forward-channel results. (vi)~Coherence and rigid layers: phase breaking (Sec.~\ref{sec:coherence}) and relaxation at twisted interfaces are not included.

\paragraph{What the transport shows.} Within the forward channel, exactly for the full face Hamiltonian and for every coupling scale scanned, electrons keep an algebraic conductance approaching $N^{-1/2}$, while the hole conductance decays exponentially for $\lambda<\lambda_c$ and algebraically above, approaching $N^{-3}$. Relative to the aligned stacks, twist disorder retains at least 7 times more of the electron than of the hole conductance at 16 layers and $10^2$ times more at 64 (Table~\ref{tab:h12}): these are ratios $g_e/g_h$ of retained fractions, each normalized to its own aligned stack and band edge; an absolute ratio $G_e/G_h=(g_e/g_h)(G_{e,{\rm al}}/G_{h,{\rm al}})$ needs a common chemical-potential or density protocol and the aligned conductances, which we do not claim. The prediction for real films is limited to $N\lesssim N_U$ for electrons; for holes only the absence of a protected axis is a statement about real films, and their magnitudes are forward-channel results (iv, v).

\subsection{What the calculations identify}
The levels of evidence are:
\begin{itemize}
\item \emph{Identified from the monolayer} (symmetry statements, the same on every admissible plane): the representations of the $\Gamma$ edges (A-type conduction, B-type valence); $r=0$ and $\kappa=2$ of the conduction edge (Born regime, controlled); $\kappa=r=2$ of the valence edge, a scaling statement (bond and rotational on-site disorder both $\propto p^2$) that holds beyond Born order, while the Born rates themselves exceed 0.1 per layer and are not used (the kit reports them as not applicable); the spin--orbit classes (conduction $\kappa\to1$ below $p^*$; holes a Kramers pair below $p_h$).
\item \emph{Estimated with Bardeen's kernel on the admissible band} (spread over the band given): $p_h=(1.24$--$1.36)\times10^{-3}$\,\AA$^{-1}$ and the forward hole/electron ratio $(3.2$--$4.5)\times10^{-5}$ (ratios of monolayer data); $p^*=(0.73$--$0.94)\times10^{-3}$\,\AA$^{-1}$ with each plane's own kernel and $(0.94$--$1.65)\times10^{-3}$ with the couplings of the $90^\circ$ bilayer at an energy 23\% of the averaged bandwidth above the band bottom; at thermal energies $(1.7$--$4.9)\times10^{-4}$\,\AA$^{-1}$ (300\,K, two-state scale), with $N^*\approx10^{10}$--$10^{11}$.
\item \emph{Fixed by the $90^\circ$ bilayer} (projected target subspace; pseudo spectral surrogates): the identity of the $\Gamma$ pairs (conduction: the states split by 1.075\,eV at 5.239\,\AA, not the equal-weight pair 241/242; valence: the two highest occupied states), the two-state coupling (0.54\,eV, 1.5 times the monolayer kernel), its decay with $D$ ($\beta_e=+0.54$\,\AA$^{-1}$), the energy dependence of the downfolded pair and the indirect character of the valence residue. \emph{Not fixed}: a constant coupling of a multilayer (no single constant calibrates a stack; the periodic extrapolation of the bilayer pair is uncontrolled, Sec.~\ref{sec:effmodel}; the layer-dependent optical subbands of aligned few-layer phosphorene~\cite{Li2017} constrain a chain of aligned layers, although excitonic effects keep them from fixing electron and hole couplings separately), the PAW cross-overlap, and the path of the valence residue.
\item \emph{Computed exactly for the model} (forward channel, coherent, rigid layers, coupling scale scanned over $\lambda=1$--2.15): finite-$N$ mean conductances of electrons and holes with their medians, typical values and local exponents (Tables~\ref{tab:e12} and~\ref{tab:h12}); the hole transparent energy with its threshold ($\lambda_c=1.373$ as $p\to0$), its residue $C_\lambda p^4$ and its kernels; the thermal law $N^{-3}$ follows under a scaling limit that is supported numerically (at $\lambda=2.15$) but not proved (Corollary~\ref{cor:mixed}).
\item \emph{Estimated, order of magnitude}: the electrons' first-shell umklapp thickness $N_U\sim2\times10^2$ (300\,K) to $4\times10^2$ (77\,K), from computed first-shell components and an order-one transmission factor (for holes no such estimate exists); a bound on what the hole Kramers window can add to the aligned hole conductance ($\le5\times10^{-6}$ at 300\,K, $2.5\times10^{-5}$ at 77\,K), and a weight estimate for a 1.93\,meV non-forward residue (8\% at 300\,K, 32\% at 77\,K; a replacement model, not a bound).
\item \emph{Not identified}: the umklapp transmission factors and a multichannel calculation beyond $N_U$, relaxed twisted interfaces, the phonon inputs $\beta_{\rm eq}$, $F_{zz}(0)$ and $\ell_{\rm ph}$ (T0-b, T1-b), and measured thickness curves.
\end{itemize}

%% file: tab_transport25.tex
\begin{table}[!htbp]\centering\footnotesize
\caption{Electrons: mean conductance $\langle g(N)\rangle$ of random-twist black-phosphorus films relative to aligned stacks (exact transport of the full face Hamiltonian, 32 twist sequences; \texttt{thickness12.py}), local exponent near 2900 layers, median/mean and typical channel $e^{\langle\ln T\rangle}$ at 16 layers. Sampling errors of the mean are below 0.007 for $N\ge16$. With 16 sequences to $N=2^{15}$ ($\lambda=1.53$, 300\,K): $g(8192)=0.0553$, $g(16384)=0.0406$, $g(32768)=0.0293$; local exponents $-0.437$, $-0.446$, $-0.470$ at 5793, 11585, 23170 layers.}\label{tab:e12}
\begin{tabular}{llccccccc}
\toprule
$\lambda$ & $T$ & $g(1)$ & $g(16)$ & $g(128)$ & $g(1024)$ & $g(4096)$ & slope (2900) & median/mean, typical (16)\\
\midrule
1.00 & 300\,K & 0.822 & 0.363 & 0.236 & 0.124 & 0.070 & $-0.428$ & 0.990, $1.6\times10^{-3}$\\
1.00 & 77\,K & 0.864 & 0.386 & 0.286 & 0.184 & 0.117 & $-0.352$ & 0.991, 0.011\\
1.53 & 300\,K & 0.827 & 0.345 & 0.234 & 0.129 & 0.075 & $-0.410$ & 0.985, $2.2\times10^{-3}$\\
1.53 & 77\,K & 0.863 & 0.367 & 0.270 & 0.184 & 0.121 & $-0.327$ & 0.988, 0.013\\
2.15 & 300\,K & 0.825 & 0.323 & 0.226 & 0.131 & 0.077 & $-0.398$ & 0.990, $2.5\times10^{-3}$\\
2.15 & 77\,K & 0.857 & 0.350 & 0.252 & 0.179 & 0.121 & $-0.300$ & 0.985, 0.013\\
\bottomrule
\end{tabular}
\end{table}

\begin{table}[!htbp]\centering\footnotesize
\caption{Holes: mean conductance relative to aligned stacks (512 twist sequences; energies refined around the transparent energy $E^*$), local exponent between 64 and 128 layers, typical channel at 16 layers, and the ratio $g_e/g_h$ of the retained fractions of electrons and holes at 16 and 64 layers (same $\lambda$ and $T$; each normalized to its own aligned stack, not an absolute conductance ratio). At $\lambda=1$ the mean at $N\ge32$ is carried by single sequences (the largest of 64 holds 90--97\% of it at $N=32$ on the three energy grids, and $g(64)$ changes from $1.6\times10^{-17}$ to $2.7\times10^{-15}$ with the energy grid; \texttt{lam1tail13.py}): n.r., not resolved. Last two columns: 256 sequences to 1024 layers (\texttt{thick13\_h.py}, 300\,K); a finer energy grid around $E^*$ lowers $g(1024)$ by 2--3\% (\texttt{thermref14.py}).}\label{tab:h12}
\setlength{\tabcolsep}{2.2pt}
\begin{tabular}{llcccccccccc}
\toprule
$\lambda$ & $T$ & $g(1)$ & $g(16)$ & $g(64)$ & $g(128)$ & slope (90) & typ.\ (16) & $g_e/g_h$ (16) & $g_e/g_h$ (64) & $g(1024)$ & slope (720)\\
\midrule
1.00 & 300\,K & 0.342 & $5.0\times10^{-5}$ & n.r. & n.r. & n.r. & $8.1\times10^{-19}$ & $7.2\times10^{3}$ & n.r. & -- & --\\
1.00 & 77\,K & 0.348 & $7.7\times10^{-5}$ & n.r. & n.r. & n.r. & $4.9\times10^{-17}$ & $5.0\times10^{3}$ & n.r. & -- & --\\
1.53 & 300\,K & 0.504 & $4.0\times10^{-3}$ & $9.5\times10^{-5}$ & $1.9\times10^{-5}$ & $-2.29$ & $1.2\times10^{-12}$ & $8.6\times10^{1}$ & $2.8\times10^{3}$ & $8.4\times10^{-8}$ & $-2.63$\\
1.53 & 77\,K & 0.583 & 0.014 & $4.8\times10^{-4}$ & $9.9\times10^{-5}$ & $-2.30$ & $9.1\times10^{-10}$ & $2.7\times10^{1}$ & $6.0\times10^{2}$ & -- & --\\
2.15 & 300\,K & 0.666 & 0.039 & $1.7\times10^{-3}$ & $2.9\times10^{-4}$ & $-2.56$ & $2.2\times10^{-8}$ & 8.4 & $1.5\times10^{2}$ & $7.8\times10^{-7}$ & $-2.81$\\
2.15 & 77\,K & 0.722 & 0.047 & $2.2\times10^{-3}$ & $3.6\times10^{-4}$ & $-2.58$ & $1.1\times10^{-8}$ & 7.4 & $1.2\times10^{2}$ & -- & --\\
\bottomrule
\end{tabular}
\end{table}

\begin{table}[!htbp]\centering\footnotesize
\caption{Convergence of the transport (two-state scale, 300\,K; the same twist sequences within each block). Electrons: 8 sequences, production settings (48 radial points, 16 angles, 32 energies, polynomial degree 10) against refinements (\texttt{conv12\_electrons.py}). Holes: 64 sequences; equal-probability energies only, and with the geometric refinement around $E^*$ (\texttt{conv12\_holes.py}).}\label{tab:conv12}
\begin{tabular}{llccc}
\toprule
carrier & setting & \multicolumn{3}{c}{$\langle g(N)\rangle$}\\
\midrule
electrons & production & $N=16$: 0.3357 & $N=128$: 0.2360 & $N=1024$: 0.1275\\
electrons & 32 angles & $N=16$: 0.3355 & $N=128$: 0.2360 & $N=1024$: 0.1274\\
electrons & 64 energies & $N=16$: 0.3357 & $N=128$: 0.2363 & $N=1024$: 0.1277\\
electrons & 72 radial points & $N=16$: 0.3356 & $N=128$: 0.2362 & $N=1024$: 0.1276\\
electrons & degree 12 & $N=16$: 0.3357 & $N=128$: 0.2360 & $N=1024$: 0.1275\\
electrons & degree 8 & $N=16$: 0.3210 & $N=128$: 0.2260 & $N=1024$: 0.1223\\
\midrule
holes & equal probability, 32 energies & $N=16$: $4.4\times10^{-3}$ & $N=64$: $1.7\times10^{-4}$ & $N=128$: $4.6\times10^{-5}$\\
holes & equal probability, 128 energies & $N=16$: $4.2\times10^{-3}$ & $N=64$: $1.2\times10^{-4}$ & $N=128$: $3.4\times10^{-5}$\\
holes & equal probability, 256 energies & $N=16$: $4.1\times10^{-3}$ & $N=64$: $8.8\times10^{-5}$ & $N=128$: $1.5\times10^{-5}$\\
holes & refined, 40 per side (production) & $N=16$: $4.1\times10^{-3}$ & $N=64$: $9.5\times10^{-5}$ & $N=128$: $2.0\times10^{-5}$\\
holes & refined, 60 per side & $N=16$: $4.1\times10^{-3}$ & $N=64$: $9.4\times10^{-5}$ & $N=128$: $2.0\times10^{-5}$\\
holes & refined, 64 energies & $N=16$: $4.1\times10^{-3}$ & $N=64$: $9.5\times10^{-5}$ & $N=128$: $2.0\times10^{-5}$\\
holes & refined, 32 angles & $N=16$: $4.1\times10^{-3}$ & $N=64$: $9.5\times10^{-5}$ & $N=128$: $2.0\times10^{-5}$\\
holes & refined, 72 radial points & $N=16$: $4.1\times10^{-3}$ & $N=64$: $9.5\times10^{-5}$ & $N=128$: $2.0\times10^{-5}$\\
\bottomrule
\end{tabular}
\end{table}

\begin{table}[!htbp]\centering\footnotesize
\caption{Residue of the hole transparent energy, $\gamma(E^*)/p^4$ in \AA$^4$: exact Born sum of the mapped chain (correlated on-site and bond disorder) and the Lyapunov exponent of chains of $N$ layers (16 sequences with independent random streams, $N\gamma=193$--627; \texttt{residue14.py}). Leading coefficient $C_\lambda=|B|^2/64\sin^2k$: 3545.7\,\AA$^4$ ($\lambda=1.53$) and 576.4\,\AA$^4$ ($\lambda=2.15$); threshold $\lambda_c(p\to0)=1.3733$. --: $E^*$ outside the band of the averaged mapped chain.}\label{tab:residue}
\begin{tabular}{lcccccc}
\toprule
 & \multicolumn{3}{c}{$\lambda=1.53$} & \multicolumn{3}{c}{$\lambda=2.15$}\\
$p$ (\AA$^{-1}$) & Born & chains & $N$ & Born & chains & $N$\\
\midrule
0.0005 & 3546.1 &  &  & 576.4 &  & \\
0.001 & 3547.3 &  &  & 576.4 &  & \\
0.002 & 3552.1 &  &  & 576.5 &  & \\
0.005 & 3585.9 & 3595$\pm$64 & $2^{27}$ & 576.8 & 574$\pm$8 & $2^{29}$\\
0.00707 & 3627.0 & 3660$\pm$72 & $2^{25}$ & 577.2 & 577$\pm$7 & $2^{28}$\\
0.01 & 3712.6 & 3658$\pm$63 & $2^{23}$ & 577.9 & 581$\pm$8 & $2^{26}$\\
0.01414 & 3898.1 & 3978$\pm$55 & $2^{21}$ & 579.5 & 586$\pm$8 & $2^{24}$\\
0.02 & 4339.5 & 4418$\pm$54 & $2^{19}$ & 582.6 & 585$\pm$6 & $2^{22}$\\
0.02828 & 5667.5 & 5935$\pm$71 & $2^{17}$ & 588.9 & 590$\pm$9 & $2^{20}$\\
0.04 & 16576.7 & 14382$\pm$232 & $2^{13}$ & 602.9 & 611$\pm$9 & $2^{18}$\\
0.05657 & -- & -- & -- & 641.5 & 647$\pm$10 & $2^{16}$\\
0.08 & -- & -- & -- & 827.5 & 934$\pm$12 & $2^{14}$\\
\bottomrule
\end{tabular}
\end{table}

\begin{table}[!htbp]\centering\footnotesize
\caption{Two-variable kernel of the hole transparent energy in the full face model (\texttt{collapse14.py}; independent random streams for every point; $2^{14}$ sequences per point at $N\le4096$, 4096 at $N=16384$; aligned leads at $\varphi=73^\circ$): the mean transmission at fixed $u=C_\lambda Np^4$ and $v=N|\delta|^{1/2}$ ($\delta>0$) for four thicknesses, and the finite-variance kernel $\Phi_1(2u)$ at $v=0$. $I=\sum_\pm\int\dd u\int2v\,\dd v\,\langle T\rangle$ by the trapezoidal rule on the base grid (207 points: 9 values of $u\le8$, 12 of $v\le16$, both signs, $v=0$ shared), with its sampling error; Richardson: the grid refined by all midpoints (4096 sequences per added point at $N=4096$, 2048 at $N=16384$) gives $I_{\rm fine}$, and $I_{\rm fine}+(I_{\rm fine}-I_{\rm base})/3$ removes the leading quadrature error; $A_{\rm th}=\pi Cw^*I/2C_\lambda\mathcal Z$ of $g\simeq A_{\rm th}N^{-3}$ (300\,K) from the best $I$ at each $N$ (Richardson where refined).}\label{tab:collapse}
\setlength{\tabcolsep}{4pt}
\begin{tabular}{llccccc}
\toprule
$\lambda$ & & $N=256$ & $N=1024$ & $N=4096$ & $N=16384$ & $\Phi_1(2u)$\\
\midrule
1.53 & $(0.1, 0)$ & 0.8015$\pm$0.0012 & 0.8168$\pm$0.0011 & 0.8273$\pm$0.0011 & 0.8273$\pm$0.0022 & 0.8328\\
1.53 & $(1, 0)$ & 0.0431$\pm$0.0010 & 0.1906$\pm$0.0018 & 0.2335$\pm$0.0020 & 0.2464$\pm$0.0039 & 0.2703\\
1.53 & $(0.1, 1)$ & 0.3702$\pm$0.0028 & 0.3919$\pm$0.0028 & 0.4049$\pm$0.0029 & 0.3984$\pm$0.0057 & \\
1.53 & $(1, 1)$ & 0.0200$\pm$0.0007 & 0.0955$\pm$0.0015 & 0.1217$\pm$0.0016 & 0.1287$\pm$0.0033 & \\
1.53 & $(0.1, 4)$ & 0.0448$\pm$0.0011 & 0.0508$\pm$0.0012 & 0.0542$\pm$0.0012 & 0.0553$\pm$0.0025 & \\
1.53 & $(1, 4)$ & 0.0027$\pm$0.0002 & 0.0139$\pm$0.0006 & 0.0184$\pm$0.0007 & 0.0225$\pm$0.0014 & \\
1.53 & $(4, 2)$ & 0.0000$\pm$0.0000 & 0.0002$\pm$0.0001 & 0.0028$\pm$0.0002 & 0.0058$\pm$0.0007 & \\
1.53 & $I$, base & 3.50$\pm$0.01 & 5.85$\pm$0.03 & 7.36$\pm$0.04 & 8.38$\pm$0.10 & \\
1.53 & $I$, Richardson & & & 7.13$\pm$0.04 &  & \\
1.53 & $A_{\rm th}$ & 53.6$\pm$0.2 & 89.4$\pm$0.4 & 109.1$\pm$0.6 & 128.1$\pm$1.5 & \\
\midrule
2.15 & $(0.1, 0)$ & 0.8289$\pm$0.0011 & 0.8285$\pm$0.0011 & 0.8316$\pm$0.0011 & 0.8317$\pm$0.0021 & 0.8328\\
2.15 & $(1, 0)$ & 0.2369$\pm$0.0020 & 0.2582$\pm$0.0020 & 0.2587$\pm$0.0020 & 0.2533$\pm$0.0039 & 0.2703\\
2.15 & $(0.1, 1)$ & 0.4657$\pm$0.0029 & 0.4715$\pm$0.0029 & 0.4878$\pm$0.0029 & 0.4766$\pm$0.0058 & \\
2.15 & $(1, 1)$ & 0.1370$\pm$0.0017 & 0.1541$\pm$0.0018 & 0.1628$\pm$0.0018 & 0.1649$\pm$0.0037 & \\
2.15 & $(0.1, 4)$ & 0.0940$\pm$0.0016 & 0.1010$\pm$0.0017 & 0.1041$\pm$0.0017 & 0.1061$\pm$0.0035 & \\
2.15 & $(1, 4)$ & 0.0282$\pm$0.0008 & 0.0356$\pm$0.0009 & 0.0381$\pm$0.0010 & 0.0390$\pm$0.0019 & \\
2.15 & $(4, 2)$ & 0.0022$\pm$0.0002 & 0.0063$\pm$0.0004 & 0.0085$\pm$0.0004 & 0.0084$\pm$0.0008 & \\
2.15 & $I$, base & 12.45$\pm$0.06 & 14.51$\pm$0.07 & 15.28$\pm$0.07 & 15.67$\pm$0.15 & \\
2.15 & $I$, Richardson & & & 15.06$\pm$0.08 & 15.11$\pm$0.12 & \\
2.15 & $A_{\rm th}$ & 689$\pm$3 & 804$\pm$4 & 834$\pm$5 & 837$\pm$6 & \\
\bottomrule
\end{tabular}
\end{table}

%% file: sec_appendix.tex
\section{Proof of the thickness theorem}\label{app:theorem}
Let $\|\mu\|=\int\dd\mu$ and $U_\delta=\{x:{\rm dist}(x,\Zg)<\delta\}$, with $\delta$ small enough that the factor $1+O(\rho)$ of H2 lies in $[\frac12,2]$ in every tube, so that $2N\gamma\in[q/2,2q]$ below. By H3, $\int_{U_\delta^c}t_N\,\dd\mu\le C\|\mu\|e^{-\mu_\delta N}=o(N^{-\alpha})$. In the tube around $Z_i$ use the coordinates of H1; by H2,
\begin{equation}
\int_{U_\delta\cap{\rm tube}_i}t_N\,\dd\mu=\int\dd\zeta\oint\dd\hat n\int_0^\delta\omega_i(\zeta,\hat n)\,\rho^{s_i+c_i-1}\,t_N\,[1+O(\rho)]\,\dd\rho .
\end{equation}
Substitute $q=2N\Gamma_i(\zeta,\hat n)\rho^{2m_i}$. Then $\rho^{s_i+c_i}=(q/2N\Gamma_i)^{\alpha_i}$ and $\dd\rho/\rho=\dd q/(2m_iq)$, so
\begin{equation}
N^{\alpha_i}\!\int_{U_\delta\cap{\rm tube}_i}\!t_N\,\dd\mu=\int\dd\zeta\oint\dd\hat n\,\frac{\omega_i(\zeta,\hat n)}{2m_i}\big[2\Gamma_i(\zeta,\hat n)\big]^{-\alpha_i}\int_0^{Q_N}q^{\alpha_i-1}\,t_N\big(x(\zeta,\rho(q)\hat n)\big)[1+O(\rho(q))]\,\dd q,
\end{equation}
with $Q_N=2N\Gamma_i\delta^{2m_i}\to\infty$. At fixed $q$, $\rho(q)=(q/2N\Gamma_i)^{1/2m_i}\to0$. By H2 the argument of the kernel is $2N\gamma=q[1+O(\rho)]$, and by H4 $t_N=\Phi_{\zeta\hat n}(q[1+O(\rho)])[1+O(\rho)]$, with the kernel $\Phi_{\zeta\hat n}=\Phi_i(\zeta,\hat n;\cdot)$ of the point $\zeta$ and the normal direction $\hat n$. The integrand is dominated, independently of $N$, by $C\,q^{\alpha_i-1}\bar\Phi(q/2)$, which is integrable because $\bar\Phi$ is non-increasing and $t_N\le\bar\Phi(2N\gamma)\le\bar\Phi(q/2)$; the positivity bound $\Gamma_i\ge\Gamma_{\min}$ makes the constant uniform in $(\zeta,\hat n)$. Dominated convergence gives the limit
\begin{equation*}
\int\dd\zeta\oint\dd\hat n\,\frac{\omega_i(\zeta,\hat n)}{2m_i}\,\big[2\Gamma_i(\zeta,\hat n)\big]^{-\alpha_i}J_{\alpha_i}(\zeta,\hat n),\qquad J_{\alpha_i}(\zeta,\hat n)=\int_0^\infty q^{\alpha_i-1}\Phi_{\zeta\hat n}(q)\,\dd q,
\end{equation*}
which divided by $\|\mu\|$ is $A_i$ of Eq.~\eqref{eq:general}; the moment stays inside the integrals over $\zeta$ and $\hat n$ (it factors out only if the kernel is the same everywhere on $Z_i$). For the rate, write, at fixed $(\zeta,\hat n)$, $t_N-\Phi_{\zeta\hat n}(q)=[t_N-\Phi_{\zeta\hat n}(2N\gamma)]+[\Phi_{\zeta\hat n}(2N\gamma)-\Phi_{\zeta\hat n}(q)]$. The first bracket is bounded by $C\rho R(2N\gamma)\le C\rho R(q/2)$ throughout the tube (H4$'$, $R$ non-increasing); the second by $C\rho\,q|\Phi_{\zeta\hat n}'(\xi)|$ with $\xi$ between $q$ and $2N\gamma=q[1+O(\rho)]$, which H4$'$ bounds uniformly in $(\zeta,\hat n)$. With $\rho(q)\propto(q/N)^{1/2m_i}$ both integrate, by the moment condition of H4$'$, to relative corrections $O(N^{-1/2m_i})$. A remainder controlled only at fixed $q$ is not enough, because the region that matters moves with $N$ (counterexample in Sec.~\ref{sec:theorem}). Components with $\alpha_i>\alpha$ contribute $O(N^{-\alpha_i})$. $\square$

\emph{Uniform positivity is necessary.} Without $\Gamma_i\ge\Gamma_{\min}$ the normal direction in which $\Gamma_i\to0$ is a further zero set, and without H3 regions far from $\Zg$ can contribute algebraically. The transparent-energy construction violates the bare-disorder version of H3 but satisfies H1--H4 for the effective exponent, which is why the theorem covers it.

\emph{Scalar chains.} For on-site disorder of variance $\sigma^2$ and hopping $T$ at wavenumber $k$, $\gamma=\sigma^2/(8T^2\sin^2k)$ (App.~\ref{app:born}); with $\sigma\propto\rho^\kappa$, $T\propto\rho^r$, $2m=2(\kappa-r)$.

\section{The continuum coupling}\label{app:coupling}
\emph{Bilinear form.} For layers 1 and 2 with Bloch states at crystal momenta that correspond to the same lab momentum $\bp$, $\psi_1=\sum_{\mathbf G\in L_1^*}a_{\bp+\mathbf G}(z)e^{i(\bp+\mathbf G)\cdot\mathbf r}$ and $\psi_2=\sum_{\mathbf G'\in L_2^*}b_{\bp+R_\theta\mathbf G'}(z)e^{i(\bp+R_\theta\mathbf G')\cdot\mathbf r}$. The overlap and the kinetic matrix element conserve the in-plane wavevector, so between states at the same lab momentum of incommensurate layers ($R_\theta\mathbf G'\in L_1^*\Rightarrow\mathbf G'=0$)
\[
S_{12}=\int\dd z\;a_{\bp}(z)^\dagger b_{\bp}(z),\qquad K_{12}=\int\dd z\;a_{\bp}(z)^\dagger\Big[-\frac{\hbar^2}{2m}\big(\partial_z^2-p^2\big)\Big]b_{\bp}(z),
\]
both bilinear in the components at wavevector exactly $\bp$ (spinors in $z$). Eq.~\eqref{eq:H12} then expresses $H_{12}$ through $S_{12}$, $K_{12}$ and the layer energies alone, with no potential matrix element; Eq.~\eqref{eq:Torth} is its first-order L\"owdin form. Expanding $a_{\bp}(z)$ and $b_{\bp}(z)$ near the interface in a finite set of profiles $\chi_\alpha(z)$, $a_{\bp}(z)=\sum_\alpha\chi_\alpha(z)G^{\rm t}_\alpha(\bp)+\dots$, gives Eq.~\eqref{eq:formfactor} with $\tau_{\alpha\beta}$ the matrix elements of the kernel between profiles (energies of the two layers included). The expansion is exact if the profiles span the interface region; truncating it is the separability approximation. Rotating layer 2 by $\theta$ maps its component at $\bp$ to $U_s(\theta)b^{(0)}_{R_\theta^{-1}\bp}(z)$, which gives the lab-frame form. For a single profile concentrated near a plane $z_f$ in the gap, $G^{\rm t}(\bp)$ is proportional to the planar Fourier component $a_{\bp}(z_f)$ of the Bloch state, a property of the wavefunction that a plane-wave calculation gives directly (Sec.~\ref{sec:validation}).

\emph{Scope: forward channel.} The same matrix elements between $\psi_1$ at $\bp$ and $\psi_2$ at $\bp'=\bp+\mathbf G-R_\theta\mathbf G'$ ($\mathbf G,\mathbf G'\neq0$) do not vanish: they are the umklapp couplings, carried by the reciprocal-lattice components of the two states in the gap. The forward-channel model keeps only $\bp'=\bp$. Umklapp couplings $H_{PQ}$ enter the chain of channel $\bp$ at second order, as $H_{PQ}(E-H_{QQ})^{-1}H_{QP}$, only if the partner states are off shell: $|E-\varepsilon_Q|\ge\Delta>0$ with $\|H_{PQ}\|/\Delta\ll1$. For carriers at a band edge that is a global extremum of the monolayer bands (the $\Gamma$ edges of phosphorene) every partner at $|\bp'|\gtrsim|\mathbf G-R_\theta\mathbf G'|$ lies away from $E$ when the interface is far from alignment, and $\|H_{PQ}\|/\Delta$ must be checked; for other carriers, or near alignment, partners can be on shell and scatter between channels (Sec.~\ref{sec:umklapp}).

\emph{Consequences.} (a) At $\bp=0$ only $m_l=0$ components have a non-vanishing in-plane average; their $J_z$ equals $s_z$. (b) A C$_n$ class without $J_z\equiv s_z$ has no in-plane average, and the amplitude at small $\bp$ is generically $O(p^{\mu_0})$ from the lowest $|J_z-s_z|$ of the class; two such factors give $r=2\mu_0$. (c) Twist phases of pure-$J_z$ content are a gauge of the chain.

\emph{Relation to two-centre models.} A two-centre hopping $t_{ab}(\mathbf R)$ between orbitals of angular momenta $m_a=m_b=m\neq0$ can have a non-zero registry average if its Slater--Koster parameters are fitted freely. If it is derived from an atom-centred potential, $\int\dd^2R\,t_{ab}(\mathbf R)=\int\dd^3r\,\phi_a^*V_a\int\dd^2R\,\phi_b(\mathbf r-\mathbf R)=0$ for $m_b\neq0$. Fitted two-centre models can therefore introduce a spurious twist dependence of $T(0)$; the T2 test detects it.

\section{The matrix index}\label{app:matrix}
\emph{Lab-gauge harmonics.} Let layer $l$ be rotated by $\theta_l$. With $G(R\mathbf q)=U_s(t)G(\mathbf q)D(t)^\dagger$ for $t\in\frac{2\pi}{n}\mathbb Z$ and $G(\mathbf q)=\sum_mg_m(p)e^{im\varphi_{\mathbf q}}$ entrywise, the lab-gauge entry $(s_z,J_z)$ of $U_s(\theta)G(R_\theta^{-1}\bp)D(\theta)^\dagger$ is $\sum_mg_m(p)e^{im\varphi_{\bp}}e^{i(J_z-s_z-m)\theta}$, with $m\equiv J_z-s_z\pmod n$ and $g_m=O(p^{|m|})$. The twist-independent part is $m=J_z-s_z$; the twist-dependent part starts at $\mu_1=\min\{|m|:m\equiv J_z-s_z,\ m\neq J_z-s_z\}$. For a $J_z=\pm\frac12$ Kramers pair ($n\ge2$): spin-conserving entries have $\mu_1=n$, spin-flip entries $\mu_1=n-1$. Onsite entries follow with $s_z$ replaced by the column's $J_z$.

\emph{Conventions.} Face channels are $(\alpha,s)$: real interface profiles $\chi_\alpha(z)$, which time reversal maps to themselves, times the spin $s=\pm\frac12$ along the lab $z$ axis ($s=0$ for spinless states). Time reversal acts as $G(\mathbf q)=W_f^{\!\top}G(-\mathbf q)^*W_m$ with $W_f=\oplus_\alpha i\sigma_y$ and $W_m=i\sigma_y$ for a Kramers pair ($W=1$ spinless). The kernel $\tau_{\alpha\beta}(p)$ between the top-face profiles of one layer and the bottom-face profiles of the next is real: time reversal gives $\tau(\bp)^*=\tau(-\bp)$, and the forward kernel of a twisted interface is isotropic in the plane. It need not be symmetric (the Bardeen form of Sec.~\ref{sec:validation} is antisymmetric in value and derivative profiles). We take the lab momentum along $x$; $a=\pm$ labels $s,J_z=\pm\frac12$.

\begin{lemma}[face amplitudes near the axis]\label{lem:faces}
Let $n\ge2$ and let the multiplet be a $J_z=\pm\frac12$ Kramers pair with $p_z$/$s$-type face channels. (i)~At $p=0$ every profile block is $G_\alpha(0)={\rm diag}(c_\alpha,c_\alpha^*)$ in the $(s,J_z)$ basis. (ii)~In the lab gauge the twist-dependent part of $G_\alpha$ is, to leading order, the spin-flip harmonic of in-plane order $n-1$,
\begin{equation}
\delta G_\alpha(\theta)=p^{\,n-1}\begin{pmatrix}0&(-1)^na_\alpha^*e^{-in\theta}\\ a_\alpha e^{in\theta}&0\end{pmatrix}+O(p^n).\label{eq:dGlead}
\end{equation}
(iii)~Twist-dependent on-site entries start at order $p^{n-1}$ for even $n$ and at order $p^n$ for odd $n$.
\end{lemma}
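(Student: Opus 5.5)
The plan is to prove all three parts by harmonic bookkeeping: combine the lab-gauge harmonic expansion (paragraph ``Lab-gauge harmonics'') with the time-reversal convention $G(\mathbf q)=W_f^{\!\top}G(-\mathbf q)^*W_m$, $W=i\sigma_y$, for one real profile $\alpha$ at a time. Since the profiles are real and time reversal maps each to itself, the argument never mixes profiles, and the statement holds blockwise. Take $\bp=p\hat x$, so that $e^{im\varphi_{\bp}}=1$. Analyticity at $\mathbf q=0$ gives $g_m(p)=O(p^{|m|})$ with $g_m\propto p^{|m|}$ to leading order. Each entry $(s,J_z)$ is then $\sum_m g_m(p)e^{i(J_z-s-m)\theta}$ with $m\equiv J_z-s \pmod n$.

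\emph{Part (i).} At $p=0$ only $m=0$ survives. For the spin-conserving entries ($J_z-s=0$) the harmonic $m=0$ is allowed. For the spin-flip entries ($J_z-s=\pm1$), $m=0$ would require $n\mid 1$, which is impossible for $n\ge2$. So $G_\alpha(0)$ is diagonal. Writing out $W^{\!\top}X^*W$ with $W=i\sigma_y$ componentwise gives the two relations used below: $G_{++}(\mathbf q)=G_{--}(-\mathbf q)^*$ and $G_{+-}(\mathbf q)=-G_{-+}(-\mathbf q)^*$. The first, at $\mathbf q=0$, gives $G_\alpha(0)={\rm diag}(c_\alpha,c_\alpha^*)$.

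\emph{Part (ii).} For spin-conserving entries the twist-independent harmonic is $m=0$ and the first twist-dependent ones are $m=\pm n$, which are $O(p^n)$. For the entry $(s=-,J_z=+)$ we have $J_z-s=1$, so $m\equiv1\pmod n$. The harmonic $m=1$ is twist independent; it is the $O(p)$ non-scalar part of $\bar T$ and is not part of $\delta G$. The smallest twist-dependent harmonic is $m=1-n$, of order $p^{n-1}$ and phase $e^{i(1-m)\theta}=e^{in\theta}$; call its coefficient $a_\alpha p^{n-1}$. The entry $(s=+,J_z=-)$ mirrors this, with $m=n-1$ and phase $e^{-in\theta}$. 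The second time-reversal relation, $G_{+-}(\mathbf q)=-G_{-+}(-\mathbf q)^*$, fixes its coefficient: replacing $\mathbf q\to-\mathbf q$ multiplies a harmonic of order $m$ by $(-1)^m$, which here is $(-1)^{n-1}$. The coefficient is therefore $-(-1)^{n-1}a_\alpha^*=(-1)^na_\alpha^*$, which is Eq.~\eqref{eq:dGlead}. I must also check that every competing twist-dependent harmonic is at least $O(p^n)$. For $n=2$ the spin-flip harmonics $m=\pm1$ have equal modulus, but only one of them is twist dependent in each entry, so the claim still holds.

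\emph{Part (iii).} The on-site entries are handled the same way, with $s$ replaced by the column's $J_z$. Diagonal entries start at $|m|=n$. Off-diagonal entries have $m\equiv\pm1\pmod n$, so they could start at $|m|=n-1$. For a Kramers pair, time reversal $X(\mathbf q)=\sigma_yX(-\mathbf q)^*\sigma_y$ together with hermiticity forces $X_{+-}(\mathbf q)=-X_{+-}(-\mathbf q)$, so only odd $m$ occur in the off-diagonal entries. For even $n$ the harmonic $m=n-1$ is odd and allowed, giving order $p^{n-1}$. For odd $n$ it is even and forbidden, and the next allowed twist-dependent harmonic is $m=\pm n$ or $\pm(n+1)$, at least of order $p^n$.

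The main obstacle is sign and phase bookkeeping: checking that the $i\sigma_y$ convention gives exactly the signs $+$ (diagonal) and $-$ (spin flip), and that the $(-1)^m$ from $\mathbf q\to-\mathbf q$ combines into $(-1)^n$. If this does not come out cleanly, I would first verify the relations on an explicit C$_2$ and C$_3$ model and then generalize.
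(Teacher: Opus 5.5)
Your proposal is correct and follows the paper's proof essentially step by step. It uses the same three ingredients: the harmonic selection $m\equiv J_z-s\pmod n$ with lab phase $e^{i(J_z-s-m)\theta}$; the two time-reversal relations $G_{++}(\mathbf q)=G_{--}(-\mathbf q)^*$ and $G_{+-}(\mathbf q)=-G_{-+}(-\mathbf q)^*$, the latter giving $b_m=-(-1)^mg_{-m}^*$ and hence the coefficient $(-1)^na_\alpha^*$; and, for (iii), the oddness $X_{+-}(\mathbf q)=-X_{+-}(-\mathbf q)$ of the on-site spin-flip entry.

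One small slip in (iii): for odd $n$ the off-diagonal harmonics $\pm(n+1)$ are even and therefore also excluded, since the first allowed twist-dependent off-diagonal ones have $|m|=2n-1$. The order $p^n$ thus comes from the diagonal $|m|=n$ harmonics, and your conclusion is unchanged.
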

\noindent\emph{Proof.} The entry $(s,J_z)$ of $G(\mathbf q)$ contains harmonics $e^{im\varphi_{\mathbf q}}$ with $m\equiv J_z-s\pmod n$, of order $p^{|m|}$, and acquires the lab-gauge phase $e^{i(J_z-s-m)\theta}$. (i)~At $\mathbf q=0$ only $m=0$ survives, which for $n\ge2$ requires $J_z=s$; time reversal maps the entry $(+,+)$ to the conjugate of $(-,-)$. (ii)~Twist dependence requires $m\neq J_z-s$. The smallest such $|m|$ is $n-1$ in the spin-flip entries ($m=1-n$ in $(-,+)$, $m=n-1$ in $(+,-)$) and $n$ in the spin-conserving ones. Time reversal gives $G_{+-}(\mathbf q)=-G_{-+}(-\mathbf q)^*$; with $G_{-+}=\sum_mg_mp^{|m|}e^{im\varphi}$ and $G_{+-}=\sum_mb_mp^{|m|}e^{im\varphi}$ this is $b_m=-(-1)^mg_{-m}^*$, and for $m=n-1$, with $a_\alpha\equiv g_{\alpha,1-n}$, it is Eq.~\eqref{eq:dGlead}. (iii)~Time reversal and Hermiticity make the on-site spin-flip entry odd in $\mathbf q$, $X_{+-}(\mathbf q)=-X_{+-}(-\mathbf q)$, so only odd $|m|$ occur in it; its smallest twist-dependent order $n-1$ is odd only for even $n$. Spin-conserving on-site harmonics start at $|m|=n$. $\square$

\begin{lemma}[bond parity]\label{lem:bond}
Under the conditions of Lemma~\ref{lem:faces}, for any number of interface profiles and any real kernel $\tau$, the twist-averaged bond is $\bar T={\rm diag}(t_0,t_0^*)+O(p)$ with $t_0=\sum_{\alpha\beta}c^*_{{\rm t}\alpha}\tau_{\alpha\beta}c_{{\rm b}\beta}$, and its fluctuation is
\begin{equation}
\delta T_l=-\Omega_{{\rm t},l}\,\bar T+\bar T\,\Omega_{{\rm b},l+1}+O(p^n),\label{eq:bondrot}
\end{equation}
where $\Omega_{{\rm t},l}$ depends on $\theta_l$ only, $\Omega_{{\rm b},l+1}$ on $\theta_{l+1}$ only, and both are $O(p^{n-1})$, purely spin flip, \emph{anti-Hermitian for odd $n$ and Hermitian for even $n$.}
\end{lemma}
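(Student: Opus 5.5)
The plan is a direct expansion of the bond contraction $T_l=\sum_{\alpha\beta}G^{\rm t}_{l,\alpha}{}^{\dagger}\tau_{\alpha\beta}G^{\rm b}_{l+1,\beta}$ in the lab gauge. We split every profile block as in Lemma~\ref{lem:faces}: $G_\alpha=G_\alpha(0)+G^{(1)}_\alpha+\delta G_\alpha(\theta)+O(p^n)$. Here $G_\alpha(0)={\rm diag}(c_\alpha,c_\alpha^*)$, $G^{(1)}_\alpha$ is the twist-independent spin-flip harmonic of order $p$, and $\delta G_\alpha(\theta)$ is the twist-dependent spin-flip harmonic of Eq.~\eqref{eq:dGlead}.

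\emph{Averaged bond.} Multiplying out, the product of the two $p=0$ blocks gives ${\rm diag}(t_0,t_0^*)$. For the $(-,-)$ entry we get $\sum c_{{\rm t}\alpha}\tau_{\alpha\beta}c^*_{{\rm b}\beta}=t_0^*$, because $\tau$ is real. The products involving $G^{(1)}$ give the $O(p)$ non-scalar remainder. Terms linear in $\delta G$ carry $e^{\pm in\theta_l}$ or $e^{\pm in\theta_{l+1}}$ and vanish under the uniform twist average. Products of two fluctuations are $O(p^{2n-2})$, which is at least $O(p^n)$ for $n\ge2$. This gives $\bar T={\rm diag}(t_0,t_0^*)+O(p)$.

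\emph{Fluctuation and generators.} To order $p^{n-1}$ we have
\[
\delta T_l=\sum_{\alpha\beta}\delta G^{\rm t}_\alpha(\theta_l)^\dagger\tau_{\alpha\beta}G^{\rm b}_\beta(0)+\sum_{\alpha\beta}G^{\rm t}_\alpha(0)^\dagger\tau_{\alpha\beta}\,\delta G^{\rm b}_\beta(\theta_{l+1}).
\]
Cross terms between $\delta G$ and $G^{(1)}$, as well as the twist average of those products, are $O(p^n)$. The key step is to notice that the profile sum collapses to a single $2\times2$ spin-flip matrix, even though the individual $\delta G_\beta$ are not of the form $G_\beta(0)\Omega$ with a common $\Omega$. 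For the bottom term we set $y_{\rm b}=\sum_{\alpha\beta}c_{{\rm t}\alpha}\tau_{\alpha\beta}a_{{\rm b}\beta}$ and use Eq.~\eqref{eq:dGlead} together with the reality of $\tau$. The $(-,+)$ entry is then $p^{n-1}y_{\rm b}e^{in\theta}$, and the $(+,-)$ entry is $p^{n-1}(-1)^n\sum c^*_{{\rm t}\alpha}\tau_{\alpha\beta}a^*_{{\rm b}\beta}e^{-in\theta}=p^{n-1}(-1)^ny_{\rm b}^*e^{-in\theta}$.

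We write this as $\bar T^{(0)}\Omega_{{\rm b},l+1}$ with $\bar T^{(0)}={\rm diag}(t_0,t_0^*)$. This defines the entries $(\Omega_{\rm b})_{-+}=p^{n-1}y_{\rm b}e^{in\theta}/t_0^*$ and $(\Omega_{\rm b})_{+-}=(-1)^np^{n-1}y_{\rm b}^*e^{-in\theta}/t_0$. It follows that $(\Omega_{\rm b})_{+-}=(-1)^n\overline{(\Omega_{\rm b})_{-+}}$, so $\Omega_{\rm b}$ is Hermitian for even $n$ and anti-Hermitian for odd $n$. The top term is handled by the same computation applied to the Hermitian conjugate. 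We define $\Omega_{{\rm t},l}$ through $-\Omega_{{\rm t},l}\bar T^{(0)}$ and obtain the same parity relation, because time reversal relates the top-face spin-flip entries in the same way. By construction $\Omega_{{\rm t},l}$ depends only on $\theta_l$ and $\Omega_{{\rm b},l+1}$ only on $\theta_{l+1}$. Finally, replacing $\bar T^{(0)}$ by the full $\bar T$ changes $\delta T_l$ only by $\Omega\cdot O(p)=O(p^n)$, which yields Eq.~\eqref{eq:bondrot}.

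\emph{Main obstacle.} The main obstacle is the claim that the parity survives an arbitrary number of profiles with a non-symmetric kernel. The argument depends on three facts. First, every profile's spin-flip harmonic obeys the same time-reversal relation $b_m=-(-1)^mg^*_{-m}$. Second, $\tau$ is real, so that conjugating $y$ exchanges the two entries. Third, $t_0\neq0$, which is condition (i) of Proposition~\ref{prop:matrix} and is needed to divide. We will check carefully that no transpose of $\tau$ is required, since $\tau$ need not be symmetric. We will also check that the top-face sign convention produces the same $(-1)^n$ and not its opposite.
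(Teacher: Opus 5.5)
Your proposal is correct and follows the paper's proof. Both expand the contraction to order $p^{n-1}$, collapse the profile sums into the two scalars $A_{\rm b}=\sum c_{{\rm t}\alpha}\tau_{\alpha\beta}a_{{\rm b}\beta}$ and $A_{\rm t}=\sum a_{{\rm t}\alpha}\tau_{\alpha\beta}c_{{\rm b}\beta}$ using only the reality of $\tau$, divide by $t_0$ and $t_0^*$, and read off $(\Omega)_{+-}=(-1)^n\overline{(\Omega)_{-+}}$ for both generators. On your open check: the Hermitian-conjugate shortcut for the top term does bring in $\tau^{\top}$, but this is harmless because $\tau^{\top}$ is also real; a direct computation gives $x=-p^{n-1}A_{\rm t}^*e^{-in\theta_l}/t_0^*$ and $y=-(-1)^np^{n-1}A_{\rm t}e^{in\theta_l}/t_0$, which has the same parity. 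Your explicit treatment of $\bar T$ and of the replacement $\bar T^{(0)}\to\bar T$ only spells out what the paper leaves implicit.
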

\noindent\emph{Proof.} In $\delta T_l=\delta G_{{\rm t},l}^\dagger\tau\bar G_{\rm b}+\bar G_{\rm t}^\dagger\tau\,\delta G_{{\rm b},l+1}+\delta G_{{\rm t},l}^\dagger\tau\,\delta G_{{\rm b},l+1}$ the last term is $O(p^{2n-2})\subset O(p^n)$. Summing Eq.~\eqref{eq:dGlead} over profiles with a real $\tau$,
\[
\delta G_{{\rm t},l}^\dagger\tau\bar G_{\rm b}=p^{n-1}\!\begin{pmatrix}0&A_{\rm t}^*e^{-in\theta_l}\\(-1)^nA_{\rm t}e^{in\theta_l}&0\end{pmatrix}\!,\quad
\bar G_{\rm t}^\dagger\tau\,\delta G_{{\rm b},l+1}=p^{n-1}\!\begin{pmatrix}0&(-1)^nA_{\rm b}^*e^{-in\theta_{l+1}}\\A_{\rm b}e^{in\theta_{l+1}}&0\end{pmatrix}\!,
\]
with $A_{\rm t}=\sum_{\alpha\beta}a_{{\rm t}\alpha}\tau_{\alpha\beta}c_{{\rm b}\beta}$ and $A_{\rm b}=\sum_{\alpha\beta}c_{{\rm t}\alpha}\tau_{\alpha\beta}a_{{\rm b}\beta}$. Writing the first matrix as $-\Omega_{{\rm t},l}\bar T$ and the second as $\bar T\Omega_{{\rm b},l+1}$ gives the off-diagonal entries (times the phases $e^{\mp in\theta}$)
\[
(x,y)_{\Omega_{\rm t}}=\big(-A_{\rm t}^*/t_0^*,\,-(-1)^nA_{\rm t}/t_0\big)\,p^{n-1},\qquad(x,y)_{\Omega_{\rm b}}=\big((-1)^nA_{\rm b}^*/t_0,\,A_{\rm b}/t_0^*\big)\,p^{n-1}.
\]
 They obey $y=-x^*$ (anti-Hermitian) for odd $n$ and $y=x^*$ (Hermitian) for even $n$. $\square$

\noindent The lemma works on the full bond contraction. It needs no square or invertible face amplitude: a factorization $\delta G=\bar GA$ through a left inverse of the stacked average $\bar G$ would drop the part of $\delta G$ outside the range of $\bar G$, which does couple through a general kernel.

\begin{lemma}[frame rotations do not backscatter at Born order]\label{lem:frame}
Let the averaged chain have $\bar T=c\,U$ ($c>0$, $U$ unitary) and $\bar X=\varepsilon\,\mathbb 1$, and let $\delta T_l=-\Omega_{{\rm t},l}\bar T+\bar T\Omega_{{\rm b},l+1}$ with anti-Hermitian $\Omega$ and $\delta X_l=0$. Then the first-order Born amplitude $m_l$ of Eq.~\eqref{eq:matrixborn} vanishes for every forward channel $a$ and backward channel $b$ at the same energy.
\end{lemma}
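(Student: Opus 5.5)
The plan is to compute $m_l$ of Eq.~\eqref{eq:matrixborn} explicitly in the eigenbasis of the averaged chain. Once the channel kinematics is written in terms of the eigenphases of $U$, the two bond terms of $m_l$ should cancel pairwise.

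\emph{Step 1: channels of the averaged chain.} Since $U$ is unitary, hence normal, I choose an orthonormal eigenbasis $Uu=e^{i\phi}u$. If eigenphases are degenerate, I pick the basis inside each eigenspace that diagonalizes the velocity, as the definition of $m_l$ requires; it is still an eigenbasis of $U$. With $\bar X=\varepsilon\mathbb 1$, the ansatz $\psi_l=e^{ikl}u$ solves the averaged chain with
\[
E=\varepsilon+c\,e^{i(k+\phi)}+c\,e^{-i(k+\phi)}=\varepsilon+2c\cos(k+\phi),
\]
and the velocity is proportional to $-\sin(k+\phi)$. At a common energy a forward channel $a$ and a backward channel $b$ therefore satisfy $\cos(k_b+\phi_b)=\cos(k_a+\phi_a)$ with opposite signs of the sine. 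This gives $k_b+\phi_b=-(k_a+\phi_a)$ mod $2\pi$, the relation quoted in Appendix~A. Band edges, where $\sin(k+\phi)=0$, are excluded by the conditions of Proposition~\ref{prop:matrix}.

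\emph{Step 2: matrix elements.} Because $\delta X_l=0$, only the bond terms of $m_l$ remain. Using $U u_a=e^{i\phi_a}u_a$ and $u_b^\dagger U=e^{i\phi_b}u_b^\dagger$ (which follows from $U^\dagger u_b=e^{-i\phi_b}u_b$), I obtain
\[
u_b^\dagger\delta T_lu_a=c\big[-e^{i\phi_a}(\Omega_{{\rm t},l})_{ba}+e^{i\phi_b}(\Omega_{{\rm b},l+1})_{ba}\big].
\]
Anti-Hermiticity gives $\delta T_l^\dagger=c\,(U^\dagger\Omega_{{\rm t},l}-\Omega_{{\rm b},l+1}U^\dagger)$, and hence
\[
u_b^\dagger\delta T_l^\dagger u_a=c\big[e^{-i\phi_b}(\Omega_{{\rm t},l})_{ba}-e^{-i\phi_a}(\Omega_{{\rm b},l+1})_{ba}\big].
\]

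\emph{Step 3: assembly.} Inserting these into $m_l=e^{ik_a}u_b^\dagger\delta T_lu_a+e^{-ik_b}u_b^\dagger\delta T_l^\dagger u_a$ and collecting terms reproduces Eq.~\eqref{eq:ml}:
\[
m_l=c(\Omega_{{\rm t},l})_{ba}\big[e^{-i(k_b+\phi_b)}-e^{i(k_a+\phi_a)}\big]+c\,e^{i(\phi_b-\phi_a)}(\Omega_{{\rm b},l+1})_{ba}\big[e^{i(k_a+\phi_a)}-e^{-i(k_b+\phi_b)}\big].
\]
By Step~1 both brackets vanish identically, for every twist and every pair $(a,b)$.

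The main obstacle I expect is Step~1: the backward-channel relation must be exact, including the degenerate and multichannel cases. The key point is that the averaged chain is diagonal in the $U$-eigenbasis, so every open channel is a pure eigenvector of $U$ and the Kramers degeneracy does not mix eigenphases. As a sanity check I would redo the algebra with Hermitian $\Omega$. There, $\delta T^\dagger$ changes sign relative to the anti-Hermitian case, the brackets add instead of cancelling, and one should recover the non-zero amplitude quoted in Appendix~A.
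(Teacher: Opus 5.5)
Your proposal is correct and follows the paper's own proof. In both, the dispersion $E=\varepsilon+2c\cos(k+\phi)$ on the eigenvectors of $U$ forces $k_b+\phi_b=-(k_a+\phi_a)$, and the explicit matrix elements (using $\Omega^\dagger=-\Omega$ in $\delta T_l^\dagger$) assemble into two brackets $\big[e^{-i(k_b+\phi_b)}-e^{i(k_a+\phi_a)}\big]$ that vanish; your remark that velocity diagonalization keeps every channel a $U$-eigenvector spells out what the paper states in one line.
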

\noindent\emph{Proof.} Channels lie in eigenspaces of $U$, $Uu=e^{i\phi}u$, with $E=\varepsilon+2c\cos(k+\phi)$, so a forward channel $(k_a,\phi_a)$ and a backward channel $(k_b,\phi_b)$ at the same energy obey $k_b+\phi_b=-(k_a+\phi_a)$, whichever eigenspaces they belong to. With $\delta T_l^\dagger=\bar T^\dagger\Omega_{{\rm t},l}-\Omega_{{\rm b},l+1}\bar T^\dagger$,
\[
m_l=c\,(\Omega_{{\rm t},l})_{ba}\big[e^{-i(k_b+\phi_b)}-e^{i(k_a+\phi_a)}\big]+c\,e^{i(\phi_b-\phi_a)}(\Omega_{{\rm b},l+1})_{ba}\big[e^{i(k_a+\phi_a)}-e^{-i(k_b+\phi_b)}\big]=0.\qquad\square
\]
For Hermitian $\Omega$ the same algebra gives $m_l=-2c\,e^{i(k_a+\phi_a)}\big[(\Omega_{{\rm t},l})_{ba}-e^{i(\phi_b-\phi_a)}(\Omega_{{\rm b},l+1})_{ba}\big]$, generically non-zero. Non-scalar corrections of order $p^a$ to $\bar T$ and $\bar X$ spoil the cancellation at relative order $p^a$: a frame rotation of order $p^\mu$ backscatters at order $p^{\mu+a}$.

\emph{Proof of Proposition~\ref{prop:matrix}.} Kramers $J_z=\pm\frac12$, $n\ge2$: at order $p^{n-1}$ the twist dependence is Eq.~\eqref{eq:bondrot}, plus for even $n$ a Hermitian on-site term (Lemma~\ref{lem:faces}(iii)). For even $n$ the rotation generators are Hermitian and the Born amplitude is $O(p^{n-1})$ and generically non-zero: $\kappa=n-1$. For odd $n$, $\bar T(0)=|t_0|\,U$ with $U$ diagonal unitary and $\bar X(0)=\varepsilon\mathbb 1$ (Kramers degeneracy), so Lemma~\ref{lem:frame} removes the whole $O(p^{n-1})$ amplitude. At order $p^n$ remain (a)~the spin-conserving harmonics $|m|=n$ of faces and on-site block, which time reversal does not restrict to an anti-Hermitian form (it relates the two diagonal entries to each other, not each to itself); (b)~the $O(p^{n-1})$ rotation acting on the non-scalar part of the averaged chain, $O(p^{n-1+a})$ with $a=1$ when the lab-constant spin-flip harmonic $|m|=1$ is present; (c)~products of two fluctuations, $O(p^{2n-2})$. The amplitude is therefore $O(p^n)$ and generically non-zero: $\kappa=n$. No relation between the two faces, no invertibility of $\bar G$ and no restriction on the number of profiles enters. Spinless A-type ($d=1$): the bond is a scalar and, by time reversal with real profiles and a real kernel, $\bar T(0)$ is real. The twist harmonic of order $n$ is $p^n(g_ne^{-in\theta}+(-1)^ng_n^*e^{in\theta})$ per profile: real for even $n$, a modulus change ($\kappa=n$), and imaginary for odd $n$, a phase, which is an exact gauge of a scalar chain at every order; the modulus is then first twist dependent at $2n$ (the $|m|=2n$ harmonic, or the square of the phase term), $\kappa=2n$. C$_1$: $|m|=1$ is a phase, $\kappa=2$. The on-site block (real, even harmonics) gives the same orders. C$_1$ Kramers pairs: every entry is allowed at $p=0$, and time reversal makes each profile's $p=0$ block a scalar times an SU(2) frame. With one profile (or a common frame) the $p=0$ bond is a scalar times a unitary, a removable rotation, and the first coupling change is the $|m|=1$ spin-flip harmonic (Hermitian), $\kappa=1$; with profiles in different frames the $p=0$ bond is in general not a scalar times a unitary and the axis is not protected. $\square$

\emph{Interpretation: accumulated rotations.} A rotation common to both faces of a layer, $\Omega_{{\rm t},l}=\Omega_{{\rm b},l}$, is removed exactly by the local gauge $\psi_l\to e^{\Omega_l}\psi_l$ (with $X_l\to e^{-\Omega_l}X_le^{\Omega_l}$), at every order and for every $N$. If the faces rotate differently, the unitary bond factors are removed by a recursive gauge, which accumulates: the frame $V_l$ performs a random walk with steps $O(p^\mu)$. If the step generators do not all commute with the non-scalar $O(p^a)$ part of the averaged chain (a non-degeneracy condition), the rotated corrections decorrelate over $j$ layers as $e^{-D|j|}$, $D\propto p^{2\mu}$, and their backscattering density at $2k$ is $p^{2a}D/(2\sin^2k)$ for $\sin k\gg D$, of order $p^{2(\mu+a)}$: the order of Lemma~\ref{lem:frame}. If the generators commute with those corrections, nothing decorrelates and the order can only be higher. This picture is not used in the proof. We make no all-order statement for accumulated rotations and none within $D$ of a band edge; Sec.~\ref{sec:matrixval} compares the Born orders with exact exponents.

\emph{The face chain.} For square, invertible face amplitudes and $\tau=t$, the substitution $\phi_l=G^{\rm b}_l\psi_l$ turns the chain into
\begin{equation}
Y_l\,\phi_l=t\,K_l^\dagger\phi_{l+1}+t\,K_{l-1}\phi_{l-1},\qquad K_l=G^{\rm t}_l(G^{\rm b}_l)^{-1},\quad Y_l=(G^{\rm b}_l)^{-\dagger}(E-X_l)(G^{\rm b}_l)^{-1},\label{eq:facechain}
\end{equation}
in which the multiplet basis has disappeared: $K_l$ (the transfer from the bottom to the top face of layer $l$) and $Y_l$ are lab-frame matrices on the face channels. For identical faces and a scalar on-site block $K_l=1$ and $Y_l=(E-\varepsilon)(G_lG_l^\dagger)^{-1}$, the inverse Gram matrix. Single-bond singular values see neither the rotation of the eigenbasis of $Y_l$ nor the unitary part of $K_l$ relative to it; both scatter.

\section{Stress-free interfaces}\label{app:stress}
\emph{Identities.} Let $U(\bp;d)=\rho_1\rho_2\int\dd^2r\,e^{i\bp\cdot\mathbf r}\phi(\sqrt{r^2+d^2})$, with a cutoff at which $\phi$ and $\phi'$ vanish. Then $\int\dd^2r\,e^{i\bp\cdot\mathbf r}\partial_i\partial_j\phi=-p_ip_jU$, $\int\dd^2r\,e^{i\bp\cdot\mathbf r}\partial_i\partial_z\phi=-ip_i\partial_dU$ and $\partial_z^2\to\partial_d^2$, which is Eq.~\eqref{eq:Ftensor}.

\emph{Small-$p$ expansion.} $U(\bp;d)=\sum_{k\ge0}(-1)^k(p/2)^{2k}M_{2k}(d)/(k!)^2$ with moments $M_{2k}=2\pi\rho_1\rho_2\int r^{2k+1}\phi\,\dd r$. Writing $M_2$ with $s=r^2+d^2$ gives $\partial_dM_2=-2dM_0$, hence $\partial_d^2M_2=-2M_0-2d\,\partial_dM_0$. At a force-free spacing ($\partial_dM_0=\partial_dU(0;d)=0$) this gives, to leading order, $F_{zz}(0)-F_{zz}(\bp)=\frac{p^2}{4}\partial_d^2M_2+O(p^4)=-\frac{p^2}{2}U(0;d)+O(p^4)$, positive for a bound interface ($U(0)<0$) at small $p$. Under a normal load the term $-2d\,\partial_dM_0$ adds, and for stretched interfaces it can make the in-phase bending mode soft; together with the $O(p)$ coupling $F_{iz}=-ip_i\partial_dU(0;d)$ this is the mechanism of the instabilities of pre-stressed stacks. The quadrature reproduces these identities to $10^{-3}$ relative at $p=10^{-3}$\,\AA$^{-1}$ and $F_{xz}\propto p^{3.000}$ at $d_{\rm eq}$ (against $p^{1.001}$ at 3.40\,\AA) (\texttt{interface7.py --selftest}).

\emph{The family.} Morse $=$ repulsive $+$ attractive part, each shifted-force at $r_c$, so $U_l=a_lU_{\rm rep}+b_lU_{\rm att}$ is linear in $(a_l,b_l)$. The two conditions of Eq.~\eqref{eq:family} are a $2\times2$ linear system. The rigid-shift family $a=e^{2\alpha\delta}$, $b=e^{\alpha\delta}$ (Morse radius $r_0+\delta$) is the special case with $\beta_{\rm eq}=-0.2455$\,\AA$^{-1}$.

\emph{Stability gate.} The Hessian of a checkpoint system (sample of $N$ layers with both semi-infinite leads) is positive definite iff the Schur complement obtained by eliminating the leads is. The leads contribute their static self-energies $\Sigma_{L,R}(\omega^2=0)=-K^\dagger D_{\rm lead}^{-1}K$ (negative semidefinite), computed by decimation at $\omega=0$. The block-$LDL^\dagger$ pivots $S_l=A_l-K_{l-1}^\dagger S_{l-1}^{-1}K_{l-1}$ have, by Sylvester's law of inertia, as many negative eigenvalues in total as the Hessian. The gate counts them for every checkpoint and every channel, checks the lead dispersion $\omega^2(\bp,k_z)\ge0$, and computes the lowest eigenvalue of the dense $768\times768$ Hessian of the smallest channel at $N=256$ as an independent check. Production stops (exit code 3) on any negative eigenvalue.

\section{Born exponents}\label{app:born}
At weak disorder $\gamma(k)=L\langle|\langle-k|\delta H|k\rangle|^2\rangle/(2v^2)$, $v=\dd E/\dd k$. Diagonal disorder of spectrum $S_V$: $\gamma=S_V(2k)/(8t^2\sin^2k)$. Bond disorder of electrons: $\gamma_e=S_\delta(2k)/(2\sin^2k)$. Out-of-plane phonons at finite $\bp$ with on-site $F_0(d_{l-1})+F_0(d_l)$ and bond $F_p(d_l)$ that vary along the equilibrium family with slopes $\beta_0=-\partial\ln F_0$ and $\beta_p=-\partial\ln F_p$:
\begin{equation}
\gamma_{\rm ph}(k;\bp)=\frac{S_d(2k)\big[\beta_0F_0\cos k-\beta_pF_p\big]^2}{2F_p^2\sin^2k},\label{eq:gam_php}
\end{equation}
which reduces at $\bp=0$ to $S_d(2k)\beta_0^2\tan^2(k/2)/2$. For the stress-free family $\beta_0=\beta_{\rm eq}$, and $\beta_p(\bp)=\beta_{\rm eq}+O(p^2)$; for family R ($\beta_{\rm eq}=0$) $\beta_0=0$ and the phonon exponent is $O(p^4)$; the rigid shift has $\beta_0=\beta_{\rm eq}=-0.25$\,\AA$^{-1}$. The bracket vanishes on the surface of Eq.~\eqref{eq:cone} (Sec.~\ref{sec:joint_asym}). Lognormal spectra: for $\delta_l=e^{-sx_l-s^2/2}-1$ with Gaussian $x$, $S_\delta(q)=\sum_m(e^{s^2C(m)}-1)e^{iqm}$.

\section{Thermoelectric coefficients}\label{app:thermo}
With $f$ the Fermi--Dirac function and $x=(E-\mu)/k_BT$, $L_n(\mu)=\int\dd E\,\mathcal T(E)(-\partial_Ef)x^n$, $\sigma\propto L_0$, $|S|=(k_B/e)L_1/L_0$ and $\kappa_e=(k_B/e)^2T\sigma_{\rm u}(L_2-L_1^2/L_0)$ in units $\sigma_{\rm u}$ of conductance per unit $L_0$. Normalizing every $L_n$ by the aligned $L_0$ at $\mu$ at the band edge and defining the quality factor $B=\sigma_{\al}(\mu_{\rm edge})T(k_B/e)^2/\kappa_{\rm ph,\al}$,
\begin{equation}
ZT(\mu)=\frac{B\,L_1^2/L_0}{g_{\rm ph}+B\,(L_2-L_1^2/L_0)},\label{eq:ztB}
\end{equation}
with $g_{\rm ph}=\kappa_{\rm ph}/\kappa_{\rm ph,\al}$. The optimum over $\mu$ is taken for the aligned and the disordered stack separately. This includes the electronic thermal conductance exactly; optimizing the power factor instead would neglect it. In the Boltzmann limit $L_1/L_0\to\langle\varepsilon\rangle/k_BT+(E_{\rm edge}-\mu)/k_BT$.

\section{Numerical methods and reproducibility}\label{app:num}
{\sloppy
The code is available at \url{https://github.com/kpleo/twist_theory}. Its folders hold the scripts named below: \texttt{models/} (model calculations), \texttt{blackphosphorus/} (transport of Sec.~\ref{sec:dft} and the transparent-energy kernels), \texttt{firstprinciples/} (workflow and analysis of the first-principles runs), \texttt{kit/} (Sec.~\ref{sec:validation}) and \texttt{figures/}. The first-principles results that it reads are part of this Supplemental Material.
\begin{itemize}
\item \textbf{Interface tensor} (\texttt{models/\allowbreak interface7.py}): Gauss--Legendre quadrature with 480 nodes on $[0,\sqrt{r_c^2-d^2}]$, converged to $10^{-13}$; transforms of the repulsive and attractive Morse parts and their first two $d$ derivatives evaluated analytically under the integral, at every channel momentum and interface spacing.
\item \textbf{Chains} (\texttt{models/\allowbreak chain7.py}): recursive Green functions; lead surface Green functions by decimation with an imaginary part $10^{-13}$ of $\omega^2$ relative to the band scale (the in-plane bands of the aligned lead are extremely flat, width $\propto p^2|U|$; with $10^{-9}$ single channels lose up to 3\% over 256 layers and the aligned total 0.02\%, \texttt{eta\_check7.py}); consistency tests: aligned chain transparent to $10^{-8}$, recursion equal to the dense Caroli formula to $10^{-14}$ (\texttt{test\_chain7.py}).
\item \textbf{Paired runs} (\texttt{models/\allowbreak pair\_v7.py}): 42 configurations (aligned or random twists; five spectra at 0.03, 0.045, 0.06, 0.09\,\AA), 16 sequences each, $N=4$--256; electrons 1127 channels $\times$ 96 energies; phonons 120 channels ($|\bp|\le0.90$\,\AA$^{-1}$) $\times$ 48 frequencies, three stiffness families; per-sample channel spectra stored for all calibration configurations.
\item \textbf{Branch exponents and scans} (\texttt{models/\allowbreak phys7.py}): C$_2$, C$_4$, C$_6$ and flexural C$_2$ layers; $z$, in-plane and full $3\times3$ models; $N\le2048$, 4 sequences; channel domain inside the zone; stability gate on every sequence.
\item \textbf{Analysis}: \texttt{analyze\_v7.py} (electrons), \texttt{analyze7\_ph.py} (hold-out tests, bootstrap), \texttt{thermo\_v7.py} (Fermi--Dirac), \texttt{thermo\_beta7.py} ($\beta_{\rm eq}$ thresholds), \texttt{boot7.py} (sampling floors), \texttt{incoherent7\_drude.py}, \texttt{domain\_check7.py} and \texttt{eta\_check7.py} (supporting checks), \texttt{registry7.py}, \texttt{transparent7.py}, \texttt{index7.py}, \texttt{formfactor7.py}, \texttt{incoherent7.py}.
\item \textbf{Driver}: \texttt{reproduce.sh} runs every step in order and stops on the first failure (exit codes propagated, the stability gate included), in three stages: \texttt{models} (the model calculations, about 17\,h on 2 cores), \texttt{bp} (the kit self-tests, the analysis of the first-principles results and the transport of Sec.~\ref{sec:dft}, about 5\,h; \texttt{FAST=1} reuses the long transport results) and \texttt{figures} (the tables and every figure of the letter and of this Supplemental Material). The first-principles runs are cluster jobs and are not part of it.
\item \textbf{Matrix index}: \texttt{models/\allowbreak matrix8.py} (face-resolved random models: C$_n$-, time-reversal- and face-symmetric coefficients up to sixth order, an interior block that keeps the multiplet normalized, L\"owdin normalization at every $\mathbf q$, lab-gauge chain; Born rate between open channels with degenerate crossings grouped and the velocity diagonalized in each group; transfer-matrix Lyapunov exponents with QR re-orthogonalization every 8 layers and the clean-chain estimator floor), \texttt{matrix8\_table.py} (Table~\ref{tab:kappa}), \texttt{index\_rule8.py} (Proposition~\ref{prop:matrix} and the parity lemma), \texttt{validate8.py}, \texttt{validate8\_extra.py}, \texttt{validate8\_bc.py}, \texttt{validate8\_e.py}, \texttt{validate8\_parse.py} (Sec.~\ref{sec:matrixval}), \texttt{c3\_face\_check.py} (explicit C$_3$ doublet), \texttt{multiprofile8.py} (two interface profiles, Sec.~\ref{sec:multiprofile}), \texttt{tb8.py} (Slater--Koster examples).
\item \textbf{Held-out families and thresholds}: \texttt{pair8.py} (families Q$_1$ and Q$_3$, same model, seeds and streams as \texttt{pair\_v7.py}; joint runs, then the spacing-only calibration runs with \texttt{--aligned-only}; a resumed run refuses to continue if the stored family, $\beta$, grid or code version differ, or if no code version is stored, and it keeps only complete configurations: requested sample count, one gate record per sample, channel means and, for calibration runs, per-sample spectra of the right shape; incomplete ones are recomputed), \texttt{interp8\_test.py} (five interpolation rules), \texttt{thermo8.py} (three-family thresholds, the selection test), \texttt{boot8.py} (paired and unpaired bootstrap), \texttt{loo8.py} (out-of-sample tests and five-family thresholds; rotation-only electrons and phonons share their configuration index), \texttt{thr\_table8.py}, \texttt{thermo8\_sens.py} (Bose weighting), \texttt{mu\_window8.py} (chemical-potential window), \texttt{inplane\_margin8.py}, \texttt{transparency8.py}.
\item \textbf{Grid sensitivity at the thresholds} (\texttt{models/\allowbreak grid10.py}, \texttt{grid10.log}): eight threshold conditions (alternating and curvature-correlated noise at 0.045\,\AA; $B=0.3$, 1; $N=128$, 256) rerun on a grid refined twice in every direction, 79 radii $\times$ 6 azimuths and 95 frequencies (instead of 40 $\times$ 3 and 48, nested in the production grid), same families, seeds and streams, stability gate on every refined sample; the production values are recomputed from the stored spectra by the same code path, so the differences are the grid effect alone. The largest changes are 0.004 in $\beta^*/\beta_e$ (break-even and gain over rotation alone) and 0.04\% in the rotation-only $g_{\rm ph}$:
\begin{center}\small
\begin{tabular}{lcc@{\quad}cc@{\quad}c}
\hline
condition & \multicolumn{2}{c}{break-even $\beta^*/\beta_e$} & \multicolumn{2}{c}{gain $\beta^*/\beta_e$} & $g_{\rm ph}$ change \\
 & production & refined & production & refined & \\
\hline
alt, $B=1$, $N=128$ & 0.000 & 0.000 & 0.513 & 0.514 & $-0.04\%$\\
alt, $B=1$, $N=256$ & 0.038 & 0.038 & 0.558 & 0.562 & $+0.02\%$\\
alt, $B=0.3$, $N=128$ & 0.000 & 0.000 & 0.529 & 0.529 & $-0.04\%$\\
alt, $B=0.3$, $N=256$ & 0.078 & 0.078 & 0.576 & 0.580 & $+0.02\%$\\
curv, $B=1$, $N=128$ & 0.453 & 0.450 & 0.742 & 0.740 & $-0.04\%$\\
curv, $B=1$, $N=256$ & 0.563 & 0.559 & 0.818 & 0.816 & $+0.02\%$\\
curv, $B=0.3$, $N=128$ & 0.542 & 0.540 & 0.828 & 0.827 & $-0.04\%$\\
curv, $B=0.3$, $N=256$ & 0.642 & 0.639 & 0.916 & 0.915 & $+0.02\%$\\
\hline
\end{tabular}
\end{center}
These shifts are an order of magnitude below the leave-one-out scale (rms 1.6--4\%), so the shared-grid integration error is not hidden inside it.
\item \textbf{Kit}: \texttt{kit/} (Sec.~\ref{sec:validation}); every tool has a self-test (\texttt{--selftest}); \texttt{sh kit/run\_selftests.sh} runs them and the end-to-end test, 12 in all, and all pass. \texttt{vk\_face\_from\_pw.py} reads VASP (scalar and spinor) and Quantum ESPRESSO wavefunctions or a generic array file, normalizes planar components as $\Omega^{-1/2}$ and forms the Bardeen coupling in eV; on real data its reader is checked against the code's charge density (Sec.~\ref{sec:dft}). Earlier versions of four of its tools stay with the scripts that used them: \texttt{models/\allowbreak kit/} (\texttt{tb8.py}, \texttt{checks10.py}) and \texttt{firstprinciples/\allowbreak kit/} (the first-principles analysis).
\item \textbf{First-principles workflow} (\texttt{firstprinciples/}): \texttt{twistdft.py} (inputs from the documented settings and the parent run, scientific gates, bilayer splittings), \texttt{twistpost.py} (reader check, faces and index, $\Gamma$ data, Kramers and run comparisons, spin--orbit reference), \texttt{twistbench.py} (run and benchmark table); every run in a fresh directory from an immutable job file with hashed inputs and scripts (the job scripts of the cluster are not included); \texttt{analyse\_dft10.py} and \texttt{analyse\_dft11.py} make the tables and the data of the letter's Fig.~3 from the fetched results.
\item \textbf{Figures}: \texttt{figures/\allowbreak make\_figures.py} makes the letter's Figs.~1--3 and every figure of this Supplement: those of Sec.~\ref{sec:dft} and Fig.~\ref{fig:umklapptail} itself, the others through \texttt{figures/\allowbreak supplement\_models.py} (Figs.~\ref{fig:matrix}, \ref{fig:matrixval}, \ref{fig:theoremtests}, \ref{fig:zero}, \ref{fig:phonon}, \ref{fig:transparent} and \ref{fig:outofsample}), \texttt{figures/\allowbreak spacing\_umklapp.py} (Figs.~\ref{fig:spacing} and \ref{fig:umklapp}) and \texttt{figures/\allowbreak predictions.py} (Figs.~\ref{fig:predict} and \ref{fig:thermo}); \texttt{figures/\allowbreak figstyle.py} draws the Supplement's figures in the style of the letter's at their printed size; \texttt{figures/\allowbreak make\_tables.py} writes the transport tables of Sec.~\ref{sec:dft}; all from saved results.
\end{itemize}
}